\documentclass[12pt]{article}
\usepackage{graphicx}
\usepackage{pdfpages}
\usepackage{adjustbox}
\usepackage{amsmath,amsfonts,amssymb,amsthm}
\usepackage{microtype}

\allowdisplaybreaks
\usepackage{caption}
\usepackage[ruled,vlined]{algorithm2e}
\usepackage{bbm}
\usepackage{breakcites}
\usepackage{tabularx}
\usepackage{enumitem}
\usepackage{natbib} 
\usepackage{authblk}
\usepackage{graphicx}
\usepackage{subcaption}
\usepackage{hyperref}
\usepackage{ulem}
\usepackage{mathtools}
\usepackage{stix}
\usepackage{color}
\usepackage[title,titletoc]{appendix}

\DeclarePairedDelimiter\ceil{\lceil}{\rceil}
\DeclarePairedDelimiter\floor{\lfloor}{\rfloor}

\theoremstyle{definition}
\newtheorem{definition}{Definition}[section]
\newtheorem{assumption}{Assumption}
\newtheorem{example}{Example}

\newtheorem{theorem}{Theorem}[section]
\newtheorem{corollary}{Corollary}[section]
\newtheorem{lemma}[theorem]{Lemma}

\newtheorem{proposition}[theorem]{Proposition}

\newcommand{\blind}{1}

\newcommand{\saa}{\mathrm{saa}}
\newcommand{\mt}{\mathrm{T}}

\newcommand{\p}{\mathrm{P}}
\newcommand{\mv}{\mathrm{v}}
\newcommand{\g}{\mathrm{G}}
\newcommand{\f}{\mathrm{F}}

\newcommand{\st}{\text{s.t.}}
\newcommand{\I}{\mathbb{I}}
\newcommand{\E}{\mathbb{E}}
\newcommand{\dist}{\text{dist}}
\newcommand{\F}{\mathrm{F}}

\begin{document}

\def\spacingset#1{\renewcommand{\baselinestretch}%
{#1}\small\normalsize} \spacingset{1}

%%%%%%%%%%%%%%%%%%%%%%%%%%%%%%%%%%%%%%%%%%%%%%%%%%%%%%%%%%%%%%%%%%%%%%%%%%%%%%

\if1\blind
{
  \title{\bf A General Framework for Optimal Group Sequential Testing via Mixed-Integer Linear Programming}
  \author{Dae Woong Ham$^*$, Stefanus Jasin$^\dagger$, and Xuejun Zhao$^\diamond$}
  \maketitle
} \fi

\if0\blind
{
  \bigskip
  \bigskip
  \bigskip
  \begin{center}
    {\LARGE\bf A General Framework for Optimal Group Sequential Testing via Mixed-Integer Linear Programming}
\end{center}
  \medskip
} \fi

\bigskip
\begin{abstract}
Group sequential tests (GSTs) allow researchers
to test a hypothesis at $K$ interim analyses as
data accumulates, while controlling type-1 and
type-2 error rates across all stages. The
classical approach specifies the rejection
cutoffs through a deterministic alpha-spending
function---commonly Pocock, O'Brien-Fleming, or
Lan-DeMets---which controls error rates by
construction but is not derived from any
optimality principle. An alternative line of work
derives optimal cutoffs by reformulating GST
design as a Bayes sequential decision problem,
but requires the user to specify a prior, loss,
and sampling cost, and applies only when the
running test statistic is sufficient. We propose
S-MILP, which computes the optimal cutoffs
directly by combining sample average
approximation of the constrained design problem
with a mixed-integer linear program reformulation
solvable to global optimality. We
establish finite-sample convergence guarantees
and show through simulation that S-MILP reduces
expected sample size by 5--10\% relative to
Pocock and Lan-DeMets, and 25--35\% relative to
O'Brien-Fleming. We also find that the optimal solution typically aggressively spends the type-1 error budget early, shedding insight to a long-standing debate in the GST community. Finally, we apply S-MILP to a recent
clinical trial of an acute kidney injury
intervention. 
\end{abstract}

\noindent%
{\it Keywords:} Uniform type-1 error control, optimization theory, alpha-spending function, sample average approximation, multiple testing
\vfill
\footnotesize
\qquad
\\
$^*$ University of Michigan Ross School of Business, Ann Arbor, MI, U.S.A. daewoong@umich.edu
\\
$^\dagger$ University of Michigan Ross School of Business, Ann Arbor, MI, U.S.A.
\\
$^\diamond$University of North Carolina at Charlotte, U.S.A
\normalsize
\newpage
\spacingset{1.9}

% DON'T change the spacing!
\section{Introduction}

\label{section:intro}

Sequential hypothesis testing is a principled framework for
performing inference on data that arrive over time. It is widely
used in clinical trials, where evidence accumulates as patients
are enrolled, and in online digital experimentation, where firms
such as Spotify and Netflix continuously evaluate product changes
against live user traffic \citep{spotify_blog1, ham_DB,
clinical_trials_GST}. The defining feature of sequential testing
is that it allows a researcher to reach a valid conclusion using
fewer samples than a fixed-sample test would require. The
motivation is both ethical and economic: in clinical trials,
continued enrollment under an inferior treatment exposes patients
to avoidable harm and wastes a limited supply of eligible
participants \citep{clinical_trials_GST, JRSSB_GST}; in online
experimentation, prolonged exposure to a degraded product imposes
direct costs on real users \citep{mypaper, spotify_blog1,
michael_paper, michael_paper2}. In both settings, the value of
stopping early once the evidence is conclusive is substantial.

Two distinct frameworks for sequential testing have emerged in
the recent literature. The first is anytime-valid testing,
which controls the type-1 error uniformly over a potentially
unbounded sequence of analyses, allowing the researcher to
monitor the data continuously and stop at any data-dependent time
\citep{howard_nonasymp, ian_bandit, Ramesh_anytime, michael_paper,
ham_DB}. This framework is well suited to settings in which
sample size is effectively unconstrained and the researcher
wishes to monitor an experiment indefinitely; large-scale online
platforms, where user traffic is plentiful, are a natural
example. The second framework, which is the focus of the present
paper, is group sequential testing (GST). Here the total
sample size $N$ is fixed in advance by budget or feasibility
constraints, and the researcher conducts up to $K$ pre-specified
analyses at cumulative sample sizes $n_1, n_1 + n_2, \ldots,
\sum_{k=1}^K n_k = N$, with the option of stopping at any one of
them \citep{spotify_blog1, obrien_fleming_GST, pocock_GST,
Lan_DeMets_GST}. GST is the dominant framework in clinical
trials, where each additional patient is costly and a maximum
trial size is mandated by the protocol, and the goal is to stop
as early as possible while maintaining valid inference at the
final analysis.

A GST procedure is specified by a sequence of
rejection cutoffs $(\theta_k)_{k=1}^K$. At stage
$k$, the researcher compares a running test
statistic $S_k$ to $\theta_k$ and rejects $H_0$ if
$S_k \geq \theta_k$. The central design question is
how to choose $(\theta_k)_{k=1}^K$ so that the
type-1 error is controlled across all $K$ stages
simultaneously. There are infinitely many valid
choices: with $K=3$ and $\alpha = 0.05$, one could
spend the budget evenly ($0.05/3$ at each stage),
front-load it (allowing more aggressive early
rejection), or back-load it (reserving most of
the budget for the final analysis), all while
maintaining cumulative type-1 error at $0.05$. The
classical literature parameterizes this choice
through an alpha-spending function, with
the proposals of \citet{pocock_GST},
\citet{obrien_fleming_GST}, and
\citet{Lan_DeMets_GST} being the most widely used;
these remain the default in clinical-trial
practice and at large technology firms
\citep{spotify_blog1}.

Alpha-spending functions are heuristics. They
control type-1 error by construction, but they
are not derived from any optimality principle.
A separate strand of work
\citep{eales_jennison_1992, hampson_jennison_2013}
addresses optimality directly by considering a
Bayes sequential decision problem: the researcher
specifies a prior on the parameter, a loss for
incorrect decisions, and a cost of sampling, and
the resulting Bayes problem is solved by dynamic
programming, with a search over the loss
parameter to recover the prescribed error rates.
The approach yields exact optimal designs, but it
imposes two practical requirements. First, the
dynamic program is tractable only when the
running test statistic is a sufficient statistic,
which excludes unknown-variance $t$-tests and
many non-Gaussian settings. Second, different
choices of the prior, loss, and sampling cost
yield different optimal designs, and there is no
canonical default---a barrier that
\citet{eales_jennison_1992} themselves
acknowledge: ``medical researchers are,
understandably, reluctant to specify [decision
theory loss functions] for clinical trials.''

A complementary contribution is
\citet{JRSSB_GST}, who treats the binary-outcome
case. Discreteness reduces the design problem to
a finite linear program, which can be solved
exactly. The approach does not extend to
continuous outcomes, where the underlying
optimization becomes infinite-dimensional.

We propose a different route. Existing
alpha-spending functions specify the rejection
cutoffs $(\theta_k)_{k=1}^K$ through a
deterministic formula, with no claim of
optimality; the Bayes decision-theoretic approach
yields optimal cutoffs but only after specifying
a prior, a loss function, and a sampling cost,
and only in settings where the test statistic is
sufficient. We instead compute the optimal
$(\theta_k)_{k=1}^K$ directly. We work with the
frequentist optimization that defines GST
design to minimize the expected sample size subject to
type-1 and type-2 error constraints. Our approach approximates the error probabilities and expected
sample size by sample average approximation (SAA),
i.e., Monte Carlo averages over simulated
test-statistic paths under the null and the
alternative. We show that the resulting
finite-sample optimization can be reformulated as
a mixed-integer linear program (MILP), which can
be solved to global optimality by standard
commercial solvers. We refer to this two-step
procedure, i.e. SAA followed by MILP reformulation, 
as the S-MILP approach. 

The error constraints
appear directly as linear inequalities in the
MILP; no prior, no loss function, no sampling
cost, and no multiplier search is required. The
user specifies only the alternative $\mu_a$ at
which power is desired---a quantity that any
sample-size calculation already requires---and
S-MILP returns the cutoffs that minimize expected
sample size at that alternative. The framework
applies whenever the test-statistic distribution
can be sampled under the null and the alternative,
a condition that holds for $z$-tests, $t$-tests,
two-sample tests, one- and two-sided tests, and a
wide class of non-normal settings within a single
unified formulation. We establish finite-sample
convergence guarantees for S-MILP. In our
experiments (Section~\ref{sec:sims}), S-MILP
reduces expected sample size by approximately
5--10\% relative to Pocock and Lan-DeMets, and
by approximately 25--35\% relative to
O'Brien-Fleming.

\paragraph{Organization of the paper.}
Section~\ref{sec:z_test} formulates our method
for the base case of known-variance $z$-tests.
Section~\ref{sec:t-tests} extends the formulation
to unknown-variance $t$-tests, and
Section~\ref{sec:convergence-gen} further extends
it to general continuous test statistics that need
not be normally distributed.

% Section~\ref{sec:futility} treats symmetric
% designs, in which the trial may stop early either
% to reject $H_0$ or to accept it when the evidence
% is strongly favorable to the null. 

Throughout
Sections~\ref{sec:z_test}--\ref{sec:convergence-gen},
we establish finite-sample convergence guarantees
showing that the optimal value of
our method converges to that
of the original oracle problem.
Section~\ref{sec:sims} compares our method with
the alpha-spending designs of Pocock,
Lan-DeMets, and O'Brien-Fleming through
simulation. Section~\ref{sec:application} applies
the method to a clinical trial of an acute kidney
injury intervention, and
Section~\ref{sec:conclusion} concludes the paper.

\paragraph{Notation.} Let $[n] = \{1, \ldots, n\}$.
We use $\mathbb{I}(E) = 1$ if the event $E$ holds and $0$
otherwise. We write $\overset{d}{=}$ for equality
in distribution and $\overset{d}{\Rightarrow}$
for convergence in distribution. The cumulative
distribution function of a standard normal random
variable, evaluated at $x \in \mathbb{R}$, is
denoted $\Phi(x)$.

\section{Sequential Analysis for $z$-tests}
\label{sec:z_test}

We begin with the case in which the data are
i.i.d.\ Normal, $X_i\sim \mathcal{N}(\mu, \sigma^2)$,
the variance $\sigma^2$ is known, and the maximum
sample size is $N$; these assumptions are relaxed
in Sections~\ref{sec:t-tests}--\ref{sec:convergence-gen}.
Let $n_k$ denote the number of samples collected
in stage $k\in[K]$, fixed and known prior to data
collection, and let $n_{1:k} = \sum^k_{i=1}n_i$
denote the cumulative sample size through stage
$k$. The total budget is
$\sum_{i=1}^K n_i = n_{1:K} = N$. Let
$\bar X_{1:k} = (1/n_{1:k})\sum^{n_{1:k}}_{i=1}X_i$
denote the sample mean through stage $k$, and let
$S_k$ denote the test statistic at stage $k$
(specified shortly). 

We consider testing the
following null and alternative hypotheses:
\begin{eqnarray}\label{equ:hypotheses}
    H_0: \mu = 0 \quad \mbox{ and } \quad
        H_a: \mu = \mu_a.
\end{eqnarray}
We remark that testing the null at $\mu = 0$ is without loss of
generality, since it can be replaced by any point
null. We focus on the case $\mu_a > 0$; the case
$\mu_a < 0$ is analyzed analogously.

We choose to optimize the expected stopping time,
which is the standard criterion in GST (also used
in \citet{JRSSB_GST}): the goal of GST is to stop
the experiment as early as possible while still
reaching a valid decision. As mentioned above, in
GST one must control the type-1 error across all
$K$ stages simultaneously. Formally, a valid GST
must satisfy
\begin{equation}
\label{eq:type1_control}
\Pr(S_k\geq \theta_k, \,\text{for some } k\in[K]|H_0)\leq \alpha,
\end{equation}
where $\{S_k \geq \theta_k\}$ is the rejection
event at stage $k$, with $\theta_k$ being the
stage-specific rejection cutoff. In words,
the
probability of falsely rejecting the null at any
of the $K$ stages, under $H_0$, is at most
$\alpha$.

As discussed in Section~\ref{section:intro}, there
are infinitely many choices of
$(\theta_k)^K_{k=1}$ that satisfy
\eqref{eq:type1_control}, and the GST literature
has primarily studied different families of
$\theta_k$, equivalently different alpha-spending
functions. Our decision variables are likewise the
rejection cutoffs $(\theta_k)^K_{k=1}$. However,
instead of specifying them through a deterministic
alpha-spending function (examples in
Section~\ref{sec:sims}), we directly optimize a
desired objective subject to
\eqref{eq:type1_control} as a constraint. To the
best of our knowledge, this optimization-based
approach has not been attempted in the GST
literature, except under special cases such as
binary outcomes \citep{JRSSB_GST}.

Many GST procedures additionally come with a
power guarantee \citep{JRSSB_GST, GST_book}. We
therefore add a type-2 error constraint at level
$\beta$ under the alternative $\mu = \mu_a$.
Researchers interested only in type-1 error
control may set $\beta = 1$, which makes the
type-2 constraint vacuous. Let
$\theta = (\theta_k)^K_{k=1}$ and restrict
$\theta$ to a compact box
$C = [\underline{\theta}, \bar\theta]^K$, with
$\underline{\theta}$ chosen sufficiently small
and $\bar\theta$ sufficiently large; this is
without loss of generality. The resulting
optimization problem is given by:
    \begin{equation}\label{equ:main-1}
        \begin{split}
            \min_{\theta\in C} &\,n_1 + \sum^K_{k=2}n_k\Pr(S_i \leq \theta_i, \forall i\in[k-1]|H_a)\\
            \st&\,\Pr(S_k\geq \theta_k, \,\text{for some } k\in[K]|H_0)\leq \alpha,\\
            &\,\Pr(S_k\geq \theta_k, \,\text{for some } k\in[K]|H_a)\geq 1-\beta.
        \end{split}
    \end{equation}
The objective is the expected sample size under
the alternative, the first constraint enforces
type-1 error control as in
\eqref{eq:type1_control}, and the second
constraint requires type-2 error to be at most
$\beta$ (researchers interested only in type-1 error control can ignore the second constraint by setting $\beta = 1$). The formulation \eqref{equ:main-1}
covers one-sample and two-sample tests, in both
one-sided and two-sided forms, by appropriately
specifying $S_k$. For one-sample tests, we set
$S_k = \bar X_{1:k}/(\sigma/\sqrt{n_{1:k}})$
(one-sided) or
$S_k = |\bar X_{1:k}/(\sigma/\sqrt{n_{1:k}})|$
(two-sided). For two-sample tests, let $X_{1,i}$
and $X_{2,i}$ denote i.i.d.\ samples from the
two populations with variances $\sigma^2_1$ and
$\sigma^2_2$. Let $n_{j,k}$ be the number of
samples in stage $k$ from population $j\in\{1,2\}$,
and let $n_{j,1:k} = \sum^k_{i=1}n_{j,i}$. Then
\[
S_k = \frac{\bar X_{1,1:k} - \bar X_{2,1:k}}
{\sqrt{\sigma^2_1/n_{1,1:k} + \sigma^2_2/n_{2,1:k}}}
\]
(for one-sided tests) or
\[
S_k = \frac{|\bar X_{1,1:k} - \bar X_{2,1:k}|}
{\sqrt{\sigma^2_1/n_{1,1:k} + \sigma^2_2/n_{2,1:k}}}
\]
(for two-sided tests).

Problem~\eqref{equ:main-1} is equivalent to:
    \begin{equation}\label{equ:main}
        \begin{split}
            \min_{\theta\in C} &\,n_1 + \sum^K_{k=2}n_k\Pr(S_i < \theta_i, \forall i\in[k-1]|H_a)\\
            \st&\,\Pr(S_k\leq \theta_k, \,\forall k\in[K]|H_0)\geq 1 - \alpha,\\
            &\,\Pr(S_k< \theta_k, \,\forall k\in[K]|H_a)\leq \beta.
        \end{split}
    \end{equation}
We thus focus on problem~\eqref{equ:main} for the remainder of this section\footnote{The objective and second constraint of problem~\eqref{equ:main} use strict inequality $<$ instead of $\leq$, to ensure that the optimal solution for the MILP reformulation of \eqref{equ:main} (i.e. \eqref{equ:known-sigma-milp}), whenever feasible, exists. Since $(S_k)^K_{k=1}$ has a continuous distribution under both $H_0$ and $H_a$, the distinction between $<$ and $\leq$ is immaterial in problems~\eqref{equ:main-1} and \eqref{equ:main}.}.

\subsection{Sample Average Approximation (SAA)}\label{sec:milp-reformulation}

Problem~\eqref{equ:main} is generally
intractable. We address this in two steps. First,
we construct a tractable approximation of
\eqref{equ:main} using the sample average
approximation (SAA) approach. Second, we reformulate the
SAA problem as a mixed-integer linear program
(MILP) that can be solved by standard commercial
solvers \citep{shapiro2021lectures,
pagnoncelli2009sample}. We refer to this two-step
procedure as the ``S-MILP approach.''

% Without loss of generality we let $\sigma^2 = 1$. Then $S_k\sim \set{N}(\sqrt{n_{1:k}}\mu, 1)$. Denote $Z_k \sim N(0,1)$ as a generic standard normal random variable. Then problem~\eqref{equ:main} has the same set of optimal solutions and optimal value as:
%     \begin{subequations}\label{equ:true}
%         \begin{align}
%             \min_{\theta_1,\cdots, \theta_K\geq 0} &\,n_1 + \sum^K_{k=2}n_k\Pr(Z_i \leq \theta_i, \forall i\in[k-1])\\
%             \st&\,\Pr(Z_k\leq \theta_k, \,\forall k\in[K])\geq 1- \alpha,\label{equ:true-c1}\\
%             &\,\Pr(Z_k\leq \theta_k - \sqrt{n_{1:k}}\delta, \,\forall k\in[K])\leq \beta.\label{equ:true-c2}
%         \end{align}
%     \end{subequations}

We replace the probability terms in
\eqref{equ:main} by Monte Carlo averages.
Specifically, we generate $M$ i.i.d.\ sample
paths $\{S^m\}^M_{m=1}$, where each
$S^m = (S^m_1, \dots, S^m_K)$ is drawn from the
distribution of the test statistics
$S = (S_k)^K_{k=1}$ under $H_0$. For instance, in
the one-sided $z$-test with
$S_k = \bar X_{1:k}/(\sigma/\sqrt{n_{1:k}})$,
each $S^m$ is sampled from the multivariate
Gaussian with mean $0$ and covariance matrix
$\Sigma$ with $\Sigma_{k,k} = 1$. Similarly, we
generate $M$ i.i.d.\ sample paths
$\{S^m_a\}^M_{m=1}$ from the distribution of $S$
under $H_a$, where $S^m_a = (S^m_{a,k})_{k\in[K]}$.
For each probability term $\Pr(S\in A\mid H_0)$
and $\Pr(S\in A\mid H_a)$ in \eqref{equ:main}, we
replace it by the SAA surrogate
$(1/M)\sum^M_{m=1}\I(S^m\in A)$ and
$(1/M)\sum^M_{m=1}\I(S^m_a\in A)$, respectively.
This yields the SAA problem:
    \begin{subequations}\label{equ:SAA}
        \begin{align}
            \min_{\theta\in C} &\,n_1 + (1/M)\sum^M_{m=1}\sum^K_{k=2}n_k\I(S^m_{a,i} < \theta_i, \forall i\in[k-1])\label{equ:saa-obj}\\
            \st&\,(1/M)\sum^M_{m=1}\I(S^m_k\leq \theta_k, \,\forall k\in[K])\geq 1- \alpha,\label{equ:saa-c1}
            \\
            &\,(1/M)\sum^M_{m=1}\I(S^m_{a,k}<\theta_k, \,\forall k\in[K])\leq \beta.\label{equ:saa-c2}
        \end{align}
    \end{subequations}
Problem~\eqref{equ:SAA} can then be reformulated as an MILP problem, which can be efficiently solved using available commercial solvers \citep{gurobi, cplex}. The reformulation proceeds by introducing auxiliary binary variables that serve as linear surrogates for the indicator functions appearing in~\eqref{equ:SAA}, namely $\I(S^m_k\leq \theta_k, \forall k\in[K])$ in the type-1 constraint~\eqref{equ:saa-c1} and $\I(S^m_{a,k}<\theta_k, \forall k\in[K])$ together with $\I(S^m_{a,i}<\theta_i, \forall i\in[k-1])$ in the type-2 constraint~\eqref{equ:saa-c2} and objective~\eqref{equ:saa-obj}, respectively. Each such indicator is a discontinuous function of the decision variable $\theta$, which prevents direct solution by continuous optimization methods; replacing each indicator with a binary variable tied to $\theta$ through linear inequalities yields a problem that commercial MILP solvers can handle efficiently.

Concretely, for each sample path $m\in[M]$, we introduce a binary variable $w^m\in\{0,1\}$ that serves as a surrogate for $1-\I(S^m_k\leq \theta_k, \forall k\in[K])$, i.e., the indicator that the null sample path $S^m$ crosses its threshold at some stage. We require $w^m = 1$ if $S^m_k > \theta_k$ for some $k\in[K]$, which will be enforced by constraint~\eqref{equ:known-sigma-milp-c1} in the MILP below. Aggregating $w^m$ across sample paths in constraint~\eqref{equ:known-sigma-milp-c2} then yields a linear expression for the empirical type-1 error, which can be constrained to be at most $\alpha$.

Next, for each $k\in[K]$ and $m\in[M]$, we introduce binary variables $\rho^m_k\in\{0,1\}$ that serve as single-stage surrogates for the events $\{S^m_{a,k}<\theta_k\}$ on the alternative sample path: we require $\rho^m_k = 1$ if $S^m_{a,k} < \theta_k$, enforced by constraint~\eqref{equ:known-sigma-milp-c3}. The variables $\rho^m_k$ encode single-stage events, but the type-2 error constraint and the objective involve $\I(S^m_{a,i}<\theta_i, \forall i\in[k])$, which is the \textit{conjunction} of such events across consecutive stages. To handle this, we introduce a third set of binary variables $\tau^m_k\in\{0,1\}$ that aggregate the $\rho^m_k$ across stages: we require $\tau^m_k = 1$ if $\rho^m_i = 1$ for all $i\in[k]$, enforced by constraint~\eqref{equ:known-sigma-milp-c4} through a standard ``AND'' linearization. Aggregating $\tau^m_k$ across sample paths in constraint~\eqref{equ:known-sigma-milp-c5} and in the objective~\eqref{equ:known-sigma-milp-ob} then yields linear expressions for the empirical type-2 error and the empirical expected sample size, respectively.

Based on the reasoning above, we can reformulate problem~\eqref{equ:SAA} into the following MILP:
    \begin{subequations}\label{equ:known-sigma-milp}
        \begin{align}
            \min&\,n_1 + (1/M)\sum^M_{m=1}\sum^K_{k=2}n_k\tau^m_{k-1}
            \label{equ:known-sigma-milp-ob}\\
            \st
            &\, S_k^m - \theta_k\leq (S_k^m - \underline{\theta})w^m, \forall k\in[K], m\in[M], 
            \label{equ:known-sigma-milp-c1}\\
            &\,(1/M)\sum^M_{m=1}(1-w^m)\geq 1-\alpha,
            \label{equ:known-sigma-milp-c2}\\
            &\,\theta_k-S^m_{a,k}\leq \rho^m_k(\bar\theta- S^m_{a,k}), \,\forall k\in[K], m\in[M],
            \label{equ:known-sigma-milp-c3}\\
            &\, \tau^m_k\geq \sum^k_{i=1}\rho^m_i - k + 1, \forall k\in[K],m\in[M], \label{equ:known-sigma-milp-c4}\\
            &\,(1/M)\sum^M_{m=1}\tau^m_K \leq \beta,
            \label{equ:known-sigma-milp-c5}\\
            &\, w^m\in\{0,1\}, \rho^m_k\in\{0,1\}, \tau^m_k\in\{0,1\}, 
            \forall k\in[K], m\in[M],\\
            &\,\underline{\theta}\leq \theta_k\leq \bar\theta,\,\forall k\in[K]. 
        \end{align}
    \end{subequations}

We now state a proposition that connects the MILP formulation in problem~\eqref{equ:known-sigma-milp} to problem~\eqref{equ:SAA}. 

\begin{proposition}\label{prop:milp-equivalence}
\eqref{equ:known-sigma-milp}, whenever feasible, has the same optimal value as \eqref{equ:SAA}. Formally, if $\hat\theta_k, \hat w^m,  \hat\rho^m_k, \hat\tau^m_k$ for $k\in[K], m\in[M]$ is an optimal solution to \eqref{equ:known-sigma-milp}, then $\{\hat\theta_k\}^K_{k=1}$ is an optimal solution to \eqref{equ:SAA}.
\end{proposition}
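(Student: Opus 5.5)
The plan is to compare the two problems directly. I will show that (i) every feasible point of \eqref{equ:known-sigma-milp} projects onto a $\theta$ that is feasible for \eqref{equ:SAA} with SAA objective no larger than the MILP objective, and (ii) every $\theta$ feasible for \eqref{equ:SAA} lifts to a feasible MILP point with the same objective value. Together these give equal optimal values whenever the MILP is feasible. They also show that the projection $\hat\theta$ of any MILP optimum is optimal for \eqref{equ:SAA}: its SAA value is at most the MILP optimum, which equals the SAA optimum.

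For (i), fix a feasible MILP solution $(\theta,w,\rho,\tau)$. There are three implications to check.
\begin{itemize}
\item From \eqref{equ:known-sigma-milp-c1}: if $w^m=0$ then $S^m_k\le\theta_k$ for all $k$. Hence $1-w^m\le \I(S^m_k\le\theta_k,\forall k)$, and \eqref{equ:known-sigma-milp-c2} implies \eqref{equ:saa-c1}.
\item From \eqref{equ:known-sigma-milp-c3}: if $S^m_{a,k}<\theta_k$, the left side is positive. Since $\bar\theta\ge\theta_k>S^m_{a,k}$, this forces $\rho^m_k=1$ (possibly $\bar\theta - S^m_{a,k}>0$ is needed; it holds here). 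So $\rho^m_k\ge \I(S^m_{a,k}<\theta_k)$.
\item From \eqref{equ:known-sigma-milp-c4} with binary $\tau$: $\tau^m_k\ge \prod_{i\le k}\rho^m_i\ge \I(S^m_{a,i}<\theta_i,\forall i\in[k])$.
\end{itemize}
Then \eqref{equ:known-sigma-milp-c5} gives \eqref{equ:saa-c2}. Since $n_k\ge0$, the MILP objective is at least the SAA objective \eqref{equ:saa-obj} evaluated at $\theta$.

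For (ii), given $\theta\in C$ feasible for \eqref{equ:SAA}, set
\[
w^m=1-\I(S^m_k\le\theta_k,\forall k),\qquad \rho^m_k=\I(S^m_{a,k}<\theta_k),\qquad \tau^m_k=\I(S^m_{a,i}<\theta_i,\forall i\le k).
\]
Checking feasibility case by case:
\begin{itemize}
\item \eqref{equ:known-sigma-milp-c1}: when $w^m=0$ both sides read $S^m_k-\theta_k\le 0$, which holds. When $w^m=1$ the constraint reduces to $\theta_k\ge\underline\theta$, which holds since $\theta\in C$.
\item \eqref{equ:known-sigma-milp-c3}: when $\rho^m_k=1$ it reduces to $\theta_k\le\bar\theta$. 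When $\rho^m_k=0$ we have $\theta_k\le S^m_{a,k}$, so the left side is nonpositive.
\item \eqref{equ:known-sigma-milp-c4}: this is exactly the AND linearization, satisfied by the product.
\item \eqref{equ:known-sigma-milp-c2} and \eqref{equ:known-sigma-milp-c5}: these become \eqref{equ:saa-c1} and \eqref{equ:saa-c2} verbatim.
\end{itemize}
The objective \eqref{equ:known-sigma-milp-ob} then equals \eqref{equ:saa-obj}.

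Combining (i) and (ii), the two optimal values coincide whenever the MILP is feasible. Feasibility of the MILP is equivalent to feasibility of the SAA problem by (i) and (ii). Since the SAA problem takes finitely many objective values (it is a combination of indicators), its optimum is attained.

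The main obstacle is the sign bookkeeping in the big-$M$ constraints \eqref{equ:known-sigma-milp-c1} and \eqref{equ:known-sigma-milp-c3}. The coefficients $S^m_k-\underline\theta$ and $\bar\theta-S^m_{a,k}$ must be positive for the implications in (i) to go the right way. This requires the box to satisfy $\underline\theta< \min S^m_k$ and $\bar\theta>\max S^m_{a,k}$, which is what ``sufficiently small/large'' must mean. I will state this explicitly, or handle the degenerate boundary cases separately, using strict versus weak inequalities as the paper's footnote suggests.
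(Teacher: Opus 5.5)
Your proposal is correct and follows the paper's proof essentially step for step. The paper likewise projects an optimal MILP point to an SAA-feasible $\hat\theta$ whose SAA objective is at most $v_{\mathrm{ip}}$, using $\hat w^m \ge 1-\I(\cdot)$, $\hat\rho^m_k\ge \I(S^m_{a,k}<\hat\theta_k)$ and $\hat\tau^m_k\ge\prod_{i\le k}\hat\rho^m_i$, and it lifts an SAA optimum through the indicator assignments to a MILP-feasible point with the same objective.

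The sign worry in your last paragraph is unnecessary, so you need no extra condition on the box. Because $w^m,\rho^m_k\in\{0,1\}$, constraints \eqref{equ:known-sigma-milp-c1} and \eqref{equ:known-sigma-milp-c3} reduce, whatever the signs of $S^m_k-\underline\theta$ and $\bar\theta-S^m_{a,k}$, as follows: when the binary is $1$ they become $\theta_k\ge\underline\theta$ and $\theta_k\le\bar\theta$, and when it is $0$ they become $S^m_k\le\theta_k$ and $\theta_k\le S^m_{a,k}$.
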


% {\color{red}
% According to \eqref{equ:known-sigma-milp-c1}, for each $m\in[M]$, 
% if $w^m = 0$, then we must have $S^m_k\leq \theta_k$ holding for all $k\in[K]$. Otherwise if $S^m_k> \theta_k$ for some $k\in[K]$, we can let $w^m = 1$, in which case  \eqref{equ:known-sigma-milp-c1} implies $\theta_k\geq \underline{\theta}$ and is always feasible. 
% \eqref{equ:known-sigma-milp-c2} ensures the percentage of samples satisfying $S^m_k\leq \theta_k$ for all $k\in[K]$ is no smaller than $1-\alpha$ (see \eqref{equ:saa-c1}). Similarly, according to \eqref{equ:known-sigma-milp-c3}, if $\rho^m_k = 0$, then we must have $S^m_{a,k}\geq \theta_k$; Otherwise we can let $\rho^m_k = 1$, in which case \eqref{equ:known-sigma-milp-c3} implies $\theta_k\leq \bar\theta$ for all $k\in[K]$ and is always feasible. According to \eqref{equ:known-sigma-milp-c4}, if $\tau^m_k = 1$, then we must have $\rho^m_i = 1$ for all $i\in[k]$, implying that $\tau^m_k\geq \I(S^m_{a,i} < \theta_i, \forall i\in[k-1])$. \eqref{equ:known-sigma-milp-c5} ensures the percentage of samples satisfying $S^m_{a,k}\leq \theta_k$ for all $k\in[K]$ is no greater than $\beta$ (see \eqref{equ:saa-c2}). Objective \eqref{equ:known-sigma-milp-ob} minimizes the expected number of samples, calculated as the total sum of percentage of samples satisfying $S^m_{a,i}< \theta_i, \forall i\in[k-1]$ across $k\in[K]$ (see \eqref{equ:saa-obj} for a similar argument). 
% }

\subsection{Finite Sample Convergence Analysis}\label{sec:convergence-main}

\sloppy
We now study the finite-sample convergence of the
SAA problem~\eqref{equ:SAA} to the original
problem~\eqref{equ:main}. We first introduce some
additional notation. Let $v^*(\alpha, \beta)$ and $\theta^*(\alpha, \beta) = (\theta_k^*(\alpha, \beta))^K_{k=1}$ be the optimal value and any optimal solution of the original desired problem~\eqref{equ:main}, respectively. If \eqref{equ:main} is feasible, the existence of $\theta^*(\alpha, \beta)$ is guaranteed since \eqref{equ:main} optimizes a continuous function over a compact space.
Let $\hat v_M(\alpha, \beta)$ and $ \hat\theta_M(\alpha, \beta) = (\hat\theta_{M,k}(\alpha, \beta))^K_{k=1}$ be the optimal value and any optimal solution of the SAA problem~\eqref{equ:SAA}, respectively. 
Let $\Theta(\alpha, \beta) = \{\theta: \Pr(S_k\leq \theta_k, \,\forall k\in[K]|H_0)\geq 1- \alpha, \Pr(S_k< \theta_k, \,\forall k\in[K]|H_a)\leq \beta\}$ be the feasible region of problem \eqref{equ:main}, and let $\Theta_M(\alpha, \beta)$ be the feasible region of the SAA problem \eqref{equ:SAA}. Let $S(\alpha, \beta)$ and $S_M(\alpha, \beta)$ be the set of optimal solutions to the true problem \eqref{equ:main} and the SAA problem \eqref{equ:SAA} respectively. For notational convenience, we also let $f(\theta) = n_1 + \sum^K_{k=2}n_k\Pr(S_i < \theta_i, \forall i\in[k-1]|H_a)$ be the objective, $g_0(\theta) = \Pr(S_k\leq \theta_k, \,\forall k\in[K]|H_0)$ and $g_a(\theta) = \Pr(S_k< \theta_k, \,\forall k\in[K]|H_a)$ be the LHS of the first and second constraint of \eqref{equ:main}, respectively.  

Without loss of generality, we assume that $\alpha$ and $\beta$ are properly chosen so that the SAA problem~\eqref{equ:SAA} is feasible for $M$ sufficiently large. Specifically, we assume that there exists some $\bar\epsilon> 0$ such that  $\Theta(\alpha - \bar\epsilon, \beta -\bar\epsilon)\neq \emptyset$. We begin with the following lemma that shows our feasible set for the SAA problem is non-degenerate:
\begin{lemma}\label{lem:SAA-feasibility}
    Suppose there exists some $\bar\epsilon> 0$ such that  $\Theta(\alpha - \bar\epsilon, \beta -\bar\epsilon)\neq \emptyset$. Then for $M\geq \log{(2/\eta)}/(2\bar\epsilon^2)$, with probability at least $1-\eta$, we have  $\Theta_M(\alpha, \beta)\neq \emptyset$. 
\end{lemma}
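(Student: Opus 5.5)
The plan is to fix one point $\theta^\circ\in\Theta(\alpha-\bar\epsilon,\beta-\bar\epsilon)$, which exists by assumption, and show that with high probability $\theta^\circ$ itself is feasible for the SAA problem~\eqref{equ:SAA}. This gives $\Theta_M(\alpha,\beta)\neq\emptyset$. Because $\theta^\circ$ is deterministic, no uniform-over-$\theta$ argument is needed; only two scalar concentration inequalities enter.

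First, I will record what membership in $\Theta(\alpha-\bar\epsilon,\beta-\bar\epsilon)$ means: $g_0(\theta^\circ)\geq 1-\alpha+\bar\epsilon$ and $g_a(\theta^\circ)\leq \beta-\bar\epsilon$. Next, I define the empirical counterparts
\[
\hat g_{0,M}(\theta^\circ)=\frac1M\sum_{m=1}^M \I(S^m_k\leq\theta^\circ_k,\ \forall k\in[K]),\qquad
\hat g_{a,M}(\theta^\circ)=\frac1M\sum_{m=1}^M \I(S^m_{a,k}<\theta^\circ_k,\ \forall k\in[K]).
\]
Each is an average of $M$ i.i.d.\ Bernoulli variables, with means $g_0(\theta^\circ)$ and $g_a(\theta^\circ)$. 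I will apply the one-sided Hoeffding inequality to each:
\[
\Pr\big(\hat g_{0,M}(\theta^\circ)< g_0(\theta^\circ)-\bar\epsilon\big)\leq e^{-2M\bar\epsilon^2},\qquad
\Pr\big(\hat g_{a,M}(\theta^\circ)> g_a(\theta^\circ)+\bar\epsilon\big)\leq e^{-2M\bar\epsilon^2}.
\]
A union bound shows that, with probability at least $1-2e^{-2M\bar\epsilon^2}$, both good events hold. On that event,
\[
\hat g_{0,M}(\theta^\circ)\geq 1-\alpha \quad\text{and}\quad \hat g_{a,M}(\theta^\circ)\leq\beta,
\]
which is exactly \eqref{equ:saa-c1}–\eqref{equ:saa-c2}. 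Since $\theta^\circ\in C$, it follows that $\theta^\circ\in\Theta_M(\alpha,\beta)$.

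Finally, the condition $M\geq\log(2/\eta)/(2\bar\epsilon^2)$ gives $2e^{-2M\bar\epsilon^2}\leq\eta$, which completes the argument.

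No step here is difficult. The only points needing care are bookkeeping:
\begin{itemize}
\item The strict versus non-strict inequalities in the indicators must match the definitions of $g_0$ and $g_a$ exactly, so that the expectations are precisely $g_0(\theta^\circ)$ and $g_a(\theta^\circ)$.
\item The null and alternative samples do not need to be independent of each other, because the union bound handles the two events regardless of their dependence.
\item One-sided Hoeffding bounds are what yield the constant $2$ inside $\log(2/\eta)$.
\end{itemize}
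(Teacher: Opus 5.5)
Your proposal is correct and follows essentially the same argument as the paper: fix one point of $\Theta(\alpha-\bar\epsilon,\beta-\bar\epsilon)$, apply a one-sided Hoeffding bound to each empirical constraint, and take a union bound to get $1-2\exp(-2M\bar\epsilon^2)\geq 1-\eta$. Your bookkeeping remarks (matching strict versus non-strict inequalities, and the union bound not needing independence between the null and alternative samples) are accurate.
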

The following Proposition~\ref{prop:feasible-set} gives a high probability bound on how well $\Theta_M(\alpha, \beta)$ approximates $\Theta(\alpha, \beta)$, i.e., how well our SAA formulation feasible set approximates the feasible set in the original problem~\eqref{equ:main}.

\begin{proposition}\label{prop:feasible-set}
(i) With probability at least $1-\eta$, we have: 
$$
\Theta\big(\alpha - \epsilon(\eta, M), \beta - \epsilon(\eta, M)\big) \, \subseteq \, \Theta_M(\alpha, \beta) \, \subseteq \, \Theta \big(\alpha + \epsilon(\eta, M), \beta+ \epsilon(\eta, M)\big),
$$ \vspace{-2mm}
with $\epsilon(\eta, M) = \sqrt{\big(c_1(K)\log(1/\eta) + c_2(K)\log(M) + c_3(K)\big)/M}$, where $c_1(K), c_2(K)$ and $c_3(K)$ are some constants that depend on $K$ and not $M$. 

(ii) For all $\epsilon > 0$, we have 
$$
\Pr\big(\Theta(\alpha - \epsilon, \beta - \epsilon)\subseteq \Theta_M(\alpha, \beta)\subseteq \Theta(\alpha + \epsilon, \beta+ \epsilon) \big)\geq 1-\tilde c_1(K)M^{\tilde c_2(K)}\exp{(-\tilde c_3(K)M\epsilon^2)}. 
$$ \vspace{-2mm}
where $\tilde c_1(K), \tilde c_2(K)$ and $\tilde c_3(K)$ are some constants that depend on $K$ and not $M$. 

\end{proposition}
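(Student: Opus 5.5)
The plan is to reduce both inclusions to a single uniform deviation bound. Write $\hat g_{0,M}(\theta) = (1/M)\sum_{m}\I(S^m_k\leq\theta_k,\forall k)$ and $\hat g_{a,M}(\theta)=(1/M)\sum_m\I(S^m_{a,k}<\theta_k,\forall k)$. Define the event
$$
E_\epsilon=\Big\{\sup_{\theta\in C}|\hat g_{0,M}(\theta)-g_0(\theta)|\leq\epsilon\Big\}\cap\Big\{\sup_{\theta\in C}|\hat g_{a,M}(\theta)-g_a(\theta)|\leq\epsilon\Big\}.
$$
On $E_\epsilon$ both inclusions are immediate.

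For the first inclusion, take $\theta\in\Theta(\alpha-\epsilon,\beta-\epsilon)$. Then $g_0(\theta)\geq 1-\alpha+\epsilon$, so $\hat g_{0,M}(\theta)\geq 1-\alpha$. Likewise $g_a(\theta)\leq\beta-\epsilon$, so $\hat g_{a,M}(\theta)\leq\beta$. Hence $\theta\in\Theta_M(\alpha,\beta)$.

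For the second inclusion, take $\theta\in\Theta_M(\alpha,\beta)$. Then $g_0(\theta)\geq\hat g_{0,M}(\theta)-\epsilon\geq 1-\alpha-\epsilon$ and $g_a(\theta)\leq\beta+\epsilon$, so $\theta\in\Theta(\alpha+\epsilon,\beta+\epsilon)$.

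So both parts (i) and (ii) follow once we have a tail bound on $\Pr(E_\epsilon^c)$ of the form in (ii). Part (i) is then obtained by setting that bound equal to $\eta$ and solving for $\epsilon$.

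The key step is the uniform deviation bound. The functions $\theta\mapsto\I(s\leq\theta)$ are indicators of lower orthants $\{s:s_k\leq\theta_k\ \forall k\}$ in $\mathbb{R}^K$; the strict-inequality version gives open orthants. These form a VC class of dimension $K$. One option is the VC inequality: $\Pr(\sup|\hat g-g|>\epsilon)\leq 8\,\mathcal S(2M)e^{-M\epsilon^2/32}$, with the growth function bounded by Sauer's lemma as $\mathcal S(2M)\leq(2eM/K)^K$. This gives $\tilde c_1(K)M^{\tilde c_2(K)}\exp(-\tilde c_3(K)M\epsilon^2)$ with $\tilde c_2=K$.

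Alternatively, I would use a direct discretization argument, which is the more elementary route I would actually write out. Because $S$ has a continuous distribution under each hypothesis, each marginal of $S_k$ is continuous. For each coordinate, choose a grid of about $K/\epsilon$ points at marginal quantile levels spaced by $\epsilon/(2K)$. Monotonicity of $\hat g$ and $g$ in each coordinate, together with the union bound $g(\theta')-g(\theta)\leq\sum_k\Pr(\theta_k< S_k\leq\theta'_k)$, sandwiches every $\theta$ between grid points whose $g$-values differ by at most $\epsilon/2$. Hoeffding's inequality at the $(2K/\epsilon)^K$ grid points, combined with a union bound, then gives $\Pr(E_\epsilon^c)\leq 4(2K/\epsilon)^K e^{-M\epsilon^2/2}$. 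This is polynomial in $1/\epsilon$ instead of in $M$.

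To match the stated form of (ii), I would split into two cases. For $\epsilon\geq M^{-1/2}$, the factor $(1/\epsilon)^K$ is at most $M^{K/2}$. For $\epsilon<M^{-1/2}$, the right-hand side can be made at least $1$ by choosing $\tilde c_1(K)\geq e^{\tilde c_3(K)}$, so the bound is trivial. Either route yields (ii).

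Then (i) follows by solving $\tilde c_1M^{\tilde c_2}e^{-\tilde c_3M\epsilon^2}=\eta$:
$$
\epsilon=\sqrt{\big(\tilde c_3^{-1}\log(1/\eta)+\tilde c_2\tilde c_3^{-1}\log M+\tilde c_3^{-1}\log\tilde c_1\big)/M}.
$$
Note that the event $E_\epsilon$ is a single event covering both $g_0$ and $g_a$. The two empirical processes are therefore bounded separately, and the probability $\eta$ is split as $\eta/2$ each; this only changes the constants $c_1(K),c_2(K),c_3(K)$.

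The main obstacle is the uniformity over the continuum $C$. The indicator functions are discontinuous in $\theta$, so a Lipschitz covering argument does not apply. This is why I rely on the combinatorial structure of orthants (VC/Sauer), or on monotonicity plus continuity of the marginals. A secondary point needing care is that the null and alternative samples are independent of each other but enter the same event $E_\epsilon$. This is handled by the union bound above, without needing any joint control of the two empirical processes.
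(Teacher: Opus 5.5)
Your proof is correct, but it takes a different route from the paper's.

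\textbf{The paper's route.} The paper never passes through a sup-norm deviation event. It proceeds in these steps:
\begin{enumerate}
\item It lays a uniform grid $\hat C(\Delta)$ of mesh $\Delta$ in $\theta$-space and applies Hoeffding plus a union bound on the grid (Lemma~\ref{lem:finite-feasibility}).
\item It uses bounded marginal densities on $C$ twice. First, this makes $g_0$ and $g_a$ Lipschitz (Lemma~\ref{lem:lipshitz-continuity}). Second, combined with another Hoeffding argument on slab counts, it controls the oscillation of $\hat g_{0,M}$ and $\hat g_{a,M}$ over each $\Delta$-cell (Lemma~\ref{lem:SAA-constraint-bound}).
\item It transfers points between feasible sets and nearby grid points (Lemma~\ref{lem:Delta-net}) and proves each inclusion separately (its Parts A and B).
\item It takes $\Delta=(\bar\theta-\underline{\theta})/\sqrt{M}$, so that $\log|\hat C(\Delta)|\approx (K/2)\log M$ produces the $\log M$ term.
\end{enumerate}

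\textbf{Your route.} You reduce everything to the single event $E_\epsilon$, which turns both inclusions into one-line consequences. Both of your bounds on $\Pr(E_\epsilon^c)$ are sound.
\begin{itemize}
\item The VC/Sauer route is fully distribution-free.
\item The quantile-level bracketing grid uses monotonicity of $\hat g$, so the within-cell empirical oscillation is absorbed automatically and no separate oscillation lemma is needed. It requires only continuous marginals, not bounded densities.
\item Your case split at $\epsilon=M^{-1/2}$ correctly converts $(1/\epsilon)^K$ into $M^{K/2}$.
\end{itemize}

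\textbf{What each buys.} Your argument is shorter. Its constants genuinely depend only on $K$, whereas the paper's $\epsilon(\eta,M)$ also carries the density bounds $l_{g_0},c_{g_0},l_{g_a},c_{g_a}$ and the box width $\bar\theta-\underline{\theta}$. It also shows that the bounded-density hypothesis in Assumption~\ref{ass:convergence-generalization}(i) is unnecessary for this proposition. The paper's machinery, in turn, is built so that the same Lipschitz and oscillation lemmas can be reused for the objective-value bounds and the $t$-test/pilot extensions. That said, your bracketing argument would carry over to $\hat f_M$ as well, since $\hat f_M$ is a weighted sum of orthant frequencies.
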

Proposition~\ref{prop:feasible-set} then implies that, $\hat\theta_M(\alpha, \beta)$, any optimal solution to the SAA in problem \eqref{equ:SAA}, is feasible to the original problem \eqref{equ:main} with high probability, as stated in  
Corollary~\ref{prop:prob-feasibility}. 

\begin{corollary}\label{prop:prob-feasibility}
    \sloppy With probability at least $1-\eta$, we have $\hat\theta_M(\alpha, \beta)\in \Theta(\alpha+\epsilon(\eta, M), \beta+\epsilon(\eta, M))$.  
\end{corollary}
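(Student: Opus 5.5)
The corollary follows almost immediately from the right-hand inclusion in Proposition~\ref{prop:feasible-set}(i). Fix $\eta\in(0,1)$ and $M$, and let $E$ be the event on which
$$
\Theta\big(\alpha-\epsilon(\eta,M),\beta-\epsilon(\eta,M)\big)\subseteq\Theta_M(\alpha,\beta)\subseteq\Theta\big(\alpha+\epsilon(\eta,M),\beta+\epsilon(\eta,M)\big).
$$
By Proposition~\ref{prop:feasible-set}(i), $\Pr(E)\ge 1-\eta$. The plan is to show that on $E$ every optimal solution of \eqref{equ:SAA} lies in the enlarged true feasible set. This needs only the second inclusion.

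First I would settle the interpretation of $\hat\theta_M(\alpha,\beta)$. By definition it is any element of $S_M(\alpha,\beta)$, which is a subset of $\Theta_M(\alpha,\beta)$. If the SAA problem is infeasible, there is no $\hat\theta_M$ and the statement is vacuous; it can be read as holding on the event where an SAA optimum exists. Lemma~\ref{lem:SAA-feasibility} guarantees that this event itself has high probability when $M$ is large, but the corollary does not need it. When the SAA problem is feasible, an optimum exists. The objective \eqref{equ:saa-obj} takes only finitely many values, so the infimum is attained by some feasible point. By Proposition~\ref{prop:milp-equivalence}, such a point can also be computed via \eqref{equ:known-sigma-milp}.

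Then, on $E$, we have $\hat\theta_M(\alpha,\beta)\in S_M(\alpha,\beta)\subseteq\Theta_M(\alpha,\beta)\subseteq\Theta(\alpha+\epsilon(\eta,M),\beta+\epsilon(\eta,M))$. Hence
$$
\Pr\big(\hat\theta_M(\alpha,\beta)\in\Theta(\alpha+\epsilon(\eta,M),\beta+\epsilon(\eta,M))\big)\ge\Pr(E)\ge1-\eta.
$$

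The only subtle point, and the one I would state explicitly, is that $\hat\theta_M$ is data-dependent. A pointwise concentration bound for a fixed $\theta$ would therefore not suffice. This is exactly why the set inclusion of Proposition~\ref{prop:feasible-set} matters: it holds uniformly over $\theta\in C$ on the single event $E$, and this uniformity is what underlies the $\log M$ term in $\epsilon(\eta,M)$. Because $E$ does not depend on which optimizer is chosen, the conclusion holds simultaneously for all elements of $S_M(\alpha,\beta)$. No further estimate is needed.
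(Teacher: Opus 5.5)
Your proposal is correct and is the same argument as the paper's, which simply notes that $\hat\theta_M(\alpha,\beta)\in\Theta_M(\alpha,\beta)$ and applies the right-hand inclusion of Proposition~\ref{prop:feasible-set}(i) on its probability-$(1-\eta)$ event. Your extra remarks are sound clarifications that the paper leaves implicit: an SAA optimizer exists whenever the SAA problem is feasible, and the uniform set inclusion is what handles the data-dependence of $\hat\theta_M$.
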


It also follows from Proposition~\ref{prop:feasible-set} that we can construct confidence intervals for $v^*(\alpha, \beta)$ using the SAA solution: 
\begin{proposition}\label{prop:confidence-interval}
Let $\eta \in(0,1)$. 
There exists some threshold $\bar M(\bar\epsilon, \eta)> 0$ (see \eqref{proof-equ:ci-M-bar}) such that for all $M\geq \bar M(\bar\epsilon, \eta)$, with probability at least $1-9\eta$, we have $\Theta_M(\alpha - \epsilon(\eta, M), \beta -\epsilon(\eta, M))\neq\emptyset$ and 
\[
v^*(\alpha, \beta)\in\big[f\big(\hat\theta_M(\alpha + \epsilon(\eta, M), \beta+ \epsilon(\eta, M))\big) - \epsilon_f(\eta, M), f\big(\hat\theta_M(\alpha - \epsilon(\eta, M), \beta - \epsilon(\eta, M))\big)\big],
\]
where $\epsilon_f(\eta, M) = (\tilde c_1(K, n) + \tilde c_2(K, n)\log(1/\eta) + \tilde c_3(K, n)\log(M))/\sqrt{M}$ and $\tilde c_1(K, n), \tilde c_2(K, n)$ and $\tilde c_3(K, n)$ are constants that can possibly depend on $K$ and $n$ but not $M$ (see detailed expressions for $\epsilon(\eta, M)$ in Equation~\eqref{equ-appendix:epsilon} and $\epsilon_f(\eta, M)$ in Equation~\eqref{equ-appendix:epsilon-f}). 
\end{proposition}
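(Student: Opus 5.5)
We prove the two inclusions of the confidence interval separately. Throughout we write $\epsilon=\epsilon(\eta,M)$ and work on the intersection of a small number of high-probability events, each of probability at least $1-\eta$. The union bound then produces the $1-9\eta$.

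\emph{Upper bound.} Take $\bar M(\bar\epsilon,\eta)$ large enough that $2\epsilon(\eta,M)\le\bar\epsilon$ for all $M\ge\bar M$. This is possible because $\epsilon(\eta,M)\to 0$ as $M\to\infty$.

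First, nonemptiness. Apply Proposition~\ref{prop:feasible-set}(i) at levels $(\alpha-\epsilon,\beta-\epsilon)$. This is legitimate because the proposition holds for any levels: its proof is a uniform deviation bound over the class of orthant events in $\mathbb{R}^K$, which does not depend on the levels. We obtain
\[
\emptyset\neq\Theta(\alpha-\bar\epsilon,\beta-\bar\epsilon)\subseteq\Theta(\alpha-2\epsilon,\beta-2\epsilon)\subseteq\Theta_M(\alpha-\epsilon,\beta-\epsilon),
\]
so $\Theta_M(\alpha-\epsilon,\beta-\epsilon)\neq\emptyset$ and $\hat\theta_M(\alpha-\epsilon,\beta-\epsilon)$ exists.

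Second, feasibility for the true problem. On the same uniform event, the right inclusion gives $\Theta_M(\alpha-\epsilon,\beta-\epsilon)\subseteq\Theta(\alpha,\beta)$. Hence $\hat\theta_M(\alpha-\epsilon,\beta-\epsilon)$ is feasible for \eqref{equ:main}, and therefore
\[
v^*(\alpha,\beta)\le f\bigl(\hat\theta_M(\alpha-\epsilon,\beta-\epsilon)\bigr).
\]
In fact a single uniform-convergence event, namely $\sup_\theta|\hat g_0-g_0|\le\epsilon$ and $\sup_\theta|\hat g_a-g_a|\le\epsilon$, yields the inclusions at every level simultaneously. That is why the constants do not blow up.

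\emph{Lower bound.} By Proposition~\ref{prop:feasible-set}(i) at levels $(\alpha+\epsilon,\beta+\epsilon)$, we have $\Theta(\alpha,\beta)\subseteq\Theta_M(\alpha+\epsilon,\beta+\epsilon)$. So $\theta^*(\alpha,\beta)$ is feasible for the inflated SAA problem, and
\[
\hat f_M\bigl(\hat\theta_M(\alpha+\epsilon,\beta+\epsilon)\bigr)\le\hat f_M\bigl(\theta^*(\alpha,\beta)\bigr),
\]
where $\hat f_M$ denotes the SAA objective \eqref{equ:saa-obj}.

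Now transfer this to $f$. It suffices to show that, with probability at least $1-\eta$,
\[
\sup_{\theta\in C}|\hat f_M(\theta)-f(\theta)|\le\epsilon_f/2.
\]
Then
\[
f\bigl(\hat\theta_M(\alpha+\epsilon,\beta+\epsilon)\bigr)-\epsilon_f\le\hat f_M\bigl(\hat\theta_M(\alpha+\epsilon,\beta+\epsilon)\bigr)-\epsilon_f/2\le\hat f_M(\theta^*)-\epsilon_f/2\le f(\theta^*)=v^*(\alpha,\beta).
\]

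\emph{The main step: uniform deviation of $\hat f_M$ from $f$.} We have
\[
\hat f_M(\theta)-f(\theta)=\sum_{k=2}^K n_k\Bigl(\tfrac1M\textstyle\sum_m\mathbb{I}(S^m_a\in A_{k-1}(\theta))-\Pr(S_a\in A_{k-1}(\theta))\Bigr),
\]
where $A_{k-1}(\theta)=\{s: s_i<\theta_i,\ \forall i\in[k-1]\}$ is an open lower orthant. Open lower orthants in $\mathbb{R}^K$ form a VC class of dimension at most $K$. Applying Vapnik--Chervonenkis (or DKW-type multivariate) bounds to each of the $K-1$ terms, with failure probability $\eta$ each (or $\eta/K$, absorbed into the constants), gives a deviation of order
\[
\sqrt{\bigl(K\log M+\log(1/\eta)\bigr)/M}
\]
per term. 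Multiplying by $n_k$ and summing over $k$ produces the constants $\tilde c_i(K,n)$.

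The paper states $\epsilon_f$ with linear $\log(1/\eta)$ and $\log M$ terms over $\sqrt M$. Our square-root form is bounded by that expression once $M\ge 3$ and $\eta$ is small, since $\sqrt{x}\le 1+x$.

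I expect the main obstacle to be bookkeeping rather than mathematics. One has to make sure the feasible-set inclusions are invoked at the perturbed levels on a common event, and count the events: null uniformity, alternative uniformity, objective uniformity, Lemma~\ref{lem:SAA-feasibility}, and the inclusions at two levels. Then choose $\bar M$ (solving $2\epsilon(\eta,M)\le\bar\epsilon$, and also $M\ge\log(2/\eta)/(2\bar\epsilon^2)$) so that the total failure probability stays within $9\eta$.

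A further subtlety is existence of $\hat\theta_M(\alpha+\epsilon,\beta+\epsilon)$. This follows because $\Theta_M$ at the larger levels contains $\Theta_M(\alpha-\epsilon,\beta-\epsilon)\neq\emptyset$, and the SAA objective takes finitely many values. Combining the two bounds on the stated event proves the proposition.
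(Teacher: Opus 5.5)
Your proof is correct, and it reaches the lower endpoint and nonemptiness by a different route than the paper.

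\textbf{Upper endpoint.} You do what the paper does: the right inclusion of Proposition~\ref{prop:feasible-set} at levels $(\alpha-\epsilon,\beta-\epsilon)$ places $\hat\theta_M(\alpha-\epsilon,\beta-\epsilon)$ in $\Theta(\alpha,\beta)$.

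\textbf{Nonemptiness.} The paper uses the pointwise Hoeffding argument of Lemma~\ref{lem:SAA-feasibility} with margin $\bar\epsilon/2$, which is why \eqref{proof-equ:ci-M-bar} carries the extra condition $2\exp(-M\bar\epsilon^2/4)\le\eta$. You get nonemptiness for free from the left inclusion at the same shifted level together with monotonicity of $\Theta(\cdot,\cdot)$ in its levels.

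\textbf{Lower endpoint.} This is the real difference. The paper telescopes $f(\hat\theta_M(\alpha+\epsilon,\beta+\epsilon))-v^*(\alpha,\beta)$ into five pieces:
\begin{itemize}
\item projection onto the grid $\hat C(\Delta)$ with Lipschitz continuity of $f$ (Lemma~\ref{lem:lipshitz-continuity});
\item Hoeffding plus a union bound over the grid;
\item the local-oscillation bound for $\hat f_M$ in Lemma~\ref{lem:SAA-constraint-bound}(iii), which relies on bounded marginal densities;
\item the inclusion $\Theta(\alpha,\beta)\subseteq\Theta_M(\alpha+\epsilon,\beta+\epsilon)$;
\item pointwise Hoeffding at $\theta^*$.
\end{itemize}
You keep the inclusion step and replace the other four by a single VC/multivariate-DKW bound on $\sup_{\theta}|\hat f_M(\theta)-f(\theta)|$ over open lower orthants.

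\textbf{What each route buys.} Yours is shorter and needs no density bound for this step. It also uses only three events, so it actually delivers $1-3\eta$, comfortably inside the stated $1-9\eta$. The paper's route stays within elementary Hoeffding-plus-union-bound tools and gives constants explicit in $l_f$ and $c_f$. It also builds a single discretization machinery that is reused almost verbatim for the $t$-test, pilot-study (conditional on $\hat\sigma_0$) and symmetric-design extensions.

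\textbf{One correction.} Your justification for applying Proposition~\ref{prop:feasible-set} at shifted levels describes its proof as a uniform VC bound over orthants; it is actually the grid argument. The conclusion you draw is still right, because $\epsilon(\eta,M)$ in \eqref{equ-appendix:epsilon} does not depend on $(\alpha,\beta)$. The paper itself instantiates the proposition at shifted levels in exactly this way.
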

% Additionally, the SAA optimal solution is also an ``approximately'' optimal solution for the original problem, where the approximately optimality is defined in the following Definition~\ref{def:approx-optimal-solution}.
% \begin{definition}\label{def:approx-optimal-solution}
%     $\hat\theta$ is an $(\upsilon, \epsilon)$-approximately optimal solution if 
%     $$
%     g_0(\hat\theta)\geq 1-\alpha - \upsilon,\,
%     g_a(\hat\theta)\leq \beta + \upsilon, \text{ and } f(\hat\theta) \leq v^*(\alpha, \beta) + \epsilon.
%     $$
% \end{definition}
% \dwh{1-2 sentences that explain the intuition of Definition 2.1?} Let $S(\alpha + \upsilon, \beta+ \upsilon; \epsilon)$ denote the set of all $(\upsilon, \epsilon)$-approximately optimal solutions to the true problem. Corollary~\ref{prop:prob-bound-solution-optimality} gives a high probability upper bound on that the SAA optimal solution being an approximately optimal solution. Corollary~\ref{prop:prob-bound-solution-optimality} follows directly from Propositions~\ref{prop:feasible-set} and \ref{prop:confidence-interval}, and thus we omit its proof. 
% \begin{corollary}\label{prop:prob-bound-solution-optimality}
%     With probability at least $1-\eta$, $\hat\theta_M(\alpha + \epsilon(\eta/2, M), \beta+\epsilon(\eta/2, M)) \in S(\alpha + 2\epsilon(\eta/2, M), \beta+ 2\epsilon(\eta/2, M); \epsilon_f(\eta/2, M))$.  
% \end{corollary}

We now turn to the deviation between the
performance of the SAA optimal solution and the
optimal value of the original
problem~\eqref{equ:main}, i.e.,
$|f(\hat\theta_M(\alpha, \beta)) - v^*(\alpha,
\beta)|$. In Proposition~\ref{prop:objective-bound},
we provide a high-probability upper bound on this
quantity, which can be interpreted as the gap
between the \textit{actual} expected sample size achieved by the
SAA solution and the optimal expected sample size
in~\eqref{equ:main}. The bound requires regularity
on $v^*(\alpha, \beta)$ at $(\alpha, \beta)$, for
which we invoke the Mangasarian-Fromovitz
Constraint Qualification (MFCQ), a standard
condition in parametric programming. To introduce the definition of MFCQ, we consider the following standard nonlinear optimization problem:
\begin{equation}
\label{eq:MFCQ_form}
 \min_{\theta\in\mathbb{R}^n}\{h_0(\theta): h_j(\theta)\leq 0, \,\forall j\in[J]\}.    
\end{equation}
Let $J_0(\theta) = \{j\in[J]: h_j(\theta) = 0\}$ be the set of active constraints at $\theta$. 
\begin{definition}(\citealt[Definition 2.4]{still2018lectures})\label{def:MFCQ}
We say that the Mangasarian-Fromovitz Constraint Qualification (MFCQ) is satisfied for problem~\eqref{eq:MFCQ_form} at $\theta\in\mathbb{R}^n$ if there exists a vector $\xi\in\mathbb{R}^n$ such that $\nabla h_j(\theta)\xi< 0$ for all $j\in J_0(\theta)$. 

\end{definition}
 
With this definition, we are now ready to state our final proposition that characterizes the magnitude of $|f(\hat\theta_M(\alpha, \beta)) - v^*(\alpha, \beta)|$:
\begin{proposition}\label{prop:objective-bound}
    \sloppy 
    Suppose MFCQ holds at any $\theta^*(\alpha, \beta)\in S(\alpha, \beta)$ for problem~\eqref{equ:main}. Let $\eta\in(0,1)$.
    Then, there exists some threshold $\bar M_f(\bar\epsilon, \eta)> 0$ (see $\bar M_f(\bar\epsilon, \eta)$ in \eqref{proof-equ:objective-M-bar}) such that for all $M\geq \bar M_f(\bar\epsilon, \eta)$, with probability at least $1-\eta$, we have $\Theta_M(\alpha, \beta)\neq \emptyset$ and 
    $$
    |f(\hat\theta_M(\alpha, \beta)) - v^*(\alpha, \beta)|\leq \sqrt{\big(c_{\mathrm{v,1}}(K, n) + c_{\mathrm{v},2}(K, n)\log(1/\eta) + c_{\mathrm{v, 3}}(K, n)\log(M)\big)/M}
    $$
    where $c_{\mathrm{v,1}}(K, n), c_{\mathrm{v,2}}(K, n)$ and $c_{\mathrm{v,3}}(K, n)$ are constants that depend on $K$ and $n$ and not $M$.

    % (ii) For any $\epsilon>0$, we have
    % $$
    % \Pr\big(\Theta_M(\alpha, \beta)\neq \emptyset,\text{ and }|f(\hat\theta_M(\alpha, \beta)) - v^*(\alpha, \beta)|\leq \epsilon \big) \geq 1-\tilde c_{\mathrm{v},1}(K,n)M^{\tilde c_{\mathrm{v},2}(K,n)}\exp(-\tilde c_{\mathrm{v},3}(K,n)M\epsilon^2)
    % $$
    % where $\tilde c_{\mathrm{v,1}}(K, n), \tilde c_{\mathrm{v,2}}(K, n)$ and $\tilde c_{\mathrm{v,3}}(K, n)$ are positive constants that depend on $K$ and $n$ and not $M$.
     
\end{proposition}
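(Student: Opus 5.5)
The proof sandwiches $f(\hat\theta_M(\alpha,\beta))$ around $v^*(\alpha,\beta)$ using the feasible-set inclusions of Proposition~\ref{prop:feasible-set}(i). These inclusions are combined with a uniform deviation bound for the empirical objective and with local Lipschitz continuity of the value function $v^*(\cdot,\cdot)$ near $(\alpha,\beta)$, and the MFCQ assumption supplies that Lipschitz continuity.

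\textbf{Step 1: a good event.} Write $\hat f_M(\theta)=n_1+(1/M)\sum_m\sum_{k\ge2}n_k\I(S^m_{a,i}<\theta_i,\forall i\in[k-1])$ for the empirical objective. I would work on the intersection of three events, each of probability at least $1-\eta/3$:
\begin{itemize}
\item[(a)] the sandwich $\Theta(\alpha-\epsilon,\beta-\epsilon)\subseteq\Theta_M(\alpha,\beta)\subseteq\Theta(\alpha+\epsilon,\beta+\epsilon)$ with $\epsilon=\epsilon(\eta/3,M)$;
\item[(b)] $\sup_{\theta\in C}|\hat f_M(\theta)-f(\theta)|\le \epsilon_f$;
\item[(c)] nonemptiness of $\Theta_M(\alpha,\beta)$, from Lemma~\ref{lem:SAA-feasibility}.
\end{itemize}
For (b), each summand is an indicator of a lower orthant in $\mathbb{R}^{k-1}$, a VC class of dimension at most $K$. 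A VC/DKW-type inequality and a union bound over $k$ then give $\epsilon_f\lesssim N\sqrt{(c\log(1/\eta)+c'\log M)/M}$, by the same argument that produced $\epsilon(\eta,M)$ in Proposition~\ref{prop:feasible-set}.

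\textbf{Step 2: the sandwich.} On the good event, $\hat\theta_M\in\Theta(\alpha+\epsilon,\beta+\epsilon)$, so $f(\hat\theta_M)\ge v^*(\alpha+\epsilon,\beta+\epsilon)$. For the upper bound, take $\theta^*_-\in S(\alpha-\epsilon,\beta-\epsilon)$. This set is nonempty once $\epsilon\le\bar\epsilon$, which is one of the requirements defining $\bar M_f$. Since $\theta^*_-\in\Theta_M(\alpha,\beta)$,
\[
f(\hat\theta_M)\le \hat f_M(\hat\theta_M)+\epsilon_f\le \hat f_M(\theta^*_-)+\epsilon_f\le f(\theta^*_-)+2\epsilon_f=v^*(\alpha-\epsilon,\beta-\epsilon)+2\epsilon_f .
\]
Hence $|f(\hat\theta_M)-v^*(\alpha,\beta)|\le \max_\pm|v^*(\alpha\pm\epsilon,\beta\pm\epsilon)-v^*(\alpha,\beta)|+2\epsilon_f$.

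\textbf{Step 3 (main obstacle): local Lipschitz continuity of $v^*$.} I need $|v^*(\alpha',\beta')-v^*(\alpha,\beta)|\le L(|\alpha'-\alpha|+|\beta'-\beta|)$ for $(\alpha',\beta')$ in a neighbourhood of $(\alpha,\beta)$.

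Write the constraints as $h_1=1-\alpha-g_0\le0$, $h_2=g_a-\beta\le0$, and the box constraints. These are $C^1$, since $g_0$ and $g_a$ are Gaussian orthant probabilities that are smooth in $\theta$, and the parameters enter only as right-hand-side shifts.

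\emph{Loosened constraints.} The direction $v^*(\alpha+\delta,\beta+\delta)\ge v^*(\alpha,\beta)-L\delta$ is the harder one. I would invoke the standard stability result for parametric programs under MFCQ, e.g.\ Still's lectures or Bonnans–Shapiro, together with compactness of $S(\alpha,\beta)$. This gives upper semicontinuity of the solution set, so perturbed optima $\theta'$ lie near $S(\alpha,\beta)$. MFCQ at the nearby optimum then gives a uniformly bounded KKT multiplier set, and the value function is Lipschitz with constant bounded by the multipliers.

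\emph{Tightened constraints.} For the reverse direction, MFCQ at $\theta^*$ gives a direction $\xi$ with $\nabla h_j(\theta^*)\xi<0$ on active constraints. Then $\theta^*+t\xi$ with $t=O(\delta)$ is feasible for $(\alpha-\delta,\beta-\delta)$, so $v^*(\alpha-\delta,\beta-\delta)\le f(\theta^*+t\xi)\le v^*+O(\delta)$, using that $f$ is Lipschitz on $C$.

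Uniformity of the constants over the compact set $S(\alpha,\beta)$ is handled by a finite-cover argument. The threshold $\bar M_f$ is chosen so that $\epsilon(\eta/3,M)$ falls below both $\bar\epsilon$ and the Lipschitz neighbourhood radius.

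\textbf{Step 4: conclusion.} Combining Steps 2 and 3 gives a bound $L\epsilon+2\epsilon_f$. Both terms are of order $\sqrt{(c+c'\log(1/\eta)+c''\log M)/M}$, and absorbing constants into $c_{\mathrm v,i}(K,n)$ yields the statement.
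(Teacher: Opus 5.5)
Your proposal is correct and follows the paper's proof. It uses the same two-sided sandwich from Proposition~\ref{prop:feasible-set}(i), the same comparison $\hat f_M(\hat\theta_M)\le \hat f_M(\theta^*(\alpha-\epsilon,\beta-\epsilon))$, and the same MFCQ-based local Lipschitz continuity of $v^*$; the paper simply cites this last fact as Lemma~\ref{lem:optimal-value-lipshitz} rather than sketching it. The only real difference is that you control $\hat f_M - f$ with one uniform VC/DKW bound over $C$, whereas the paper chains several steps (grid projection with Lipschitz $f$, Hoeffding on the grid, the bounded-density oscillation bound of Lemma~\ref{lem:SAA-constraint-bound}(iii), and a pointwise Hoeffding bound at $\theta^*(\alpha-\epsilon,\beta-\epsilon)$), and both routes give the same $\sqrt{\log M/M}$ rate.
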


Proposition~\ref{prop:objective-bound} shows the deviation of the optimal objective value using the SAA approach vanishes at rate $O(1/\sqrt{M})$, up to logarithmic factors. We remark that Proposition~\ref{prop:objective-bound} requires MFCQ to hold at all $\theta^*(\alpha, \beta)\in S(\alpha, \beta)$. We show that this is a very mild assumption. First, it is straightforward to see that if the researcher is interested in only type-1 error control (that is, by setting $\beta = 1$), then this condition is vacuous. Besides, in Lemma~\ref{lem:MFCQ}, we show that the MFCQ holds for all $\theta\in \mathbb{R}^K$, possibly except for a zero Lebesgue measure. 
    
\begin{lemma}\label{lem:MFCQ}
    $\{\theta\in\mathbb{R}^K: \text{MFCQ fails at }\theta \text{ for \eqref{equ:main}}\}$ has a zero Lebesgue measure. 
\end{lemma}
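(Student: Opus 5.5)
The plan is to show that the set where MFCQ fails is contained in a level set of a $C^1$ function whose gradient never vanishes on that level set. Such a level set is a $(K-1)$-dimensional submanifold and therefore has Lebesgue measure zero.

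Write the constraints of \eqref{equ:main} in the form \eqref{eq:MFCQ_form}, with $h_1(\theta) = 1-\alpha - g_0(\theta)$ and $h_2(\theta) = g_a(\theta) - \beta$. The box $C$ only adds constraints whose active set lies on $\partial C$, which is Lebesgue-null; we either work on $\mathbb{R}^K$ as in the statement or discard $\partial C$ at the end. By Definition~\ref{def:MFCQ}, MFCQ can fail at $\theta$ only if $J_0(\theta)\neq\emptyset$. In particular $1\in J_0(\theta)$ or $2\in J_0(\theta)$, so the failure set is contained in $\{g_0 = 1-\alpha\}\cup\{g_a=\beta\}$. We will show that each of these two level sets is null, which is cruder than needed but suffices.

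\textbf{Step 1: smoothness and positive partial derivatives.} Under $H_0$ and $H_a$, the vector $(S_k)_{k=1}^K$ in the one-sided case is a nondegenerate Gaussian, since $\Sigma$ is a positive definite correlation matrix of partial sums. Hence $g_0$ and $g_a$ are Gaussian CDFs (with $<$ and $\le$ equivalent), and they are $C^1$ with
\[
\partial_{\theta_k} g_0(\theta) = \phi_k(\theta_k)\,\Pr\big(S_j\le\theta_j,\ \forall j\neq k \,\big|\, S_k=\theta_k, H_0\big),
\]
where $\phi_k$ is the marginal density of $S_k$. The formula for $g_a$ is analogous. The conditional law is again a nondegenerate Gaussian, so this probability is strictly positive. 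Therefore every component of $\nabla g_0(\theta)$ and of $\nabla g_a(\theta)$ is strictly positive for all $\theta$.

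\textbf{Step 2: the two-sided case.} For $S_k = |\cdot|$ we have $g_0(\theta)=0$ whenever some $\theta_k\le 0$. Activity of the first constraint forces $g_0(\theta)=1-\alpha>0$, so every $\theta_k>0$. On that open orthant $g_0(\theta)=\Pr(-\theta_k\le Z_k\le\theta_k,\ \forall k)$ is $C^1$, and the same conditioning argument gives strictly positive partials. The same holds for $g_a$ on $\{g_a=\beta\}$ when $\beta>0$. If $\beta=0$, the level set $\{g_a=0\}=\{\min_k\theta_k\le 0\}$ is not null, but there $\nabla g_a=0$ and we must handle it separately. In that case $g_0=0<1-\alpha$, so $\theta\notin\Theta(\alpha,\beta)$ and constraint~1 is not active; the constraint is then best dropped as vacuous, and I would flag this degenerate case explicitly.

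\textbf{Step 3: implicit function theorem.} For $c\in(0,1)$ and $g\in\{g_0,g_a\}$, the set $\{\theta: g(\theta)=c\}$ lies where $\nabla g\neq 0$, since $\partial_{\theta_K} g>0$ there. By the implicit function theorem it is locally the graph of a $C^1$ function $\theta_K=\psi(\theta_1,\dots,\theta_{K-1})$. Covering it by countably many such charts, using second countability of $\mathbb{R}^K$, shows that it is a countable union of $C^1$ graphs, each of Lebesgue measure zero by Fubini. The failure set is contained in the union of two such null sets, which proves the lemma.

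One could sharpen the result by noting, via Gordan's alternative, that with both constraints active MFCQ fails only when $\nabla g_a$ is a positive multiple of $\nabla g_0$; this refinement is not needed. The main obstacle is Step~1: verifying strict positivity of the conditional probabilities, and hence nonvanishing gradients, uniformly in $\theta$, together with the edge cases of the two-sided statistics and $\beta\in\{0,1\}$ noted in Step~2.
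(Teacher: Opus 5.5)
Your proof is correct, and it takes a genuinely different route from the paper.

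\textbf{The paper's route.} The paper reduces the lemma to showing that the set where $\nabla g_0(\theta)$ and $\nabla g_a(\theta)$ are linearly dependent is null. It locates a point on the diagonal $\theta=(x,\dots,x)$, via an $x\to\infty$ limit of ratios of partial derivatives, at which the $2\times 2$ minor $\det(l_0(\theta),l_a(\theta))$ built from $\partial_{\theta_1},\partial_{\theta_2}$ is nonzero. It then uses the fact that a real-analytic function that is not identically zero has a null zero set.

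\textbf{Your route.} You observe that MFCQ holds vacuously when $J_0(\theta)=\emptyset$. So the failure set lies in $\{g_0=1-\alpha\}\cup\{g_a=\beta\}\cup\bigcup_k\{\theta_k\in\{\underline\theta,\bar\theta\}\}$, and each level set is null because the constraint functions are strictly increasing in $\theta_K$. Your implicit-function-theorem charts are fine, but Fubini alone is shorter: each line parallel to the $\theta_K$-axis meets such a level set in at most one point. Also, only pointwise positivity of the partials is needed, not uniform positivity.

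\textbf{What your route buys.} It needs neither analyticity nor an asymptotic construction, so it applies to any statistic whose constraint functions are strictly increasing in one coordinate. It is also cleaner in the two-sided case. Off the positive orthant $g_0\equiv g_a\equiv 0$, so both gradients vanish on a set of positive measure. There the paper's intermediate claims (analyticity of the determinant on $\mathbb{R}^K$, nullity of the dependence set) fail as written. The lemma survives only because no constraint is active there, which is exactly your observation. Your argument also shows how weak the lemma is as stated. Every optimal solution has an active constraint (for $K\ge 2$, $\partial_{\theta_1}f>0$), so it lies in the null set you exhibit. Hence neither version by itself guarantees MFCQ at $\theta^*(\alpha,\beta)$.

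\textbf{What the paper's route buys.} It establishes a parameter-free structural fact: generic linear independence of the two constraint gradients. This is what matters at interior points where both error constraints bind, by the Gordan argument you mention.

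\textbf{A correction on the degenerate cases.} For a two-sided test with $\beta=0$, the lemma is genuinely false as stated. Constraint~2 is active with $\nabla g_a=0$ on the open set $\{\min_k\theta_k<0\}$, whatever happens to constraint~1, so calling the constraint vacuous does not repair this. The right remark is that the paper's standing assumption $\Theta(\alpha-\bar\epsilon,\beta-\bar\epsilon)\neq\emptyset$ forces $\beta\ge\bar\epsilon>0$. You should also exclude the symmetric case $\alpha=1$ explicitly, where constraint~1 is active with zero gradient off the orthant. Your Step~2 silently uses $1-\alpha>0$.
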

We note that Lemma~\ref{lem:MFCQ} holds for one-sample, two-sample, one-sided, and two-sided z-tests. 
To gain more intuition on the MFCQ condition for Proposition~\ref{prop:objective-bound}, we show in Appendix~\ref{appendix-sec:MFCQ-example} a toy example for which the MFCQ fails to hold at the optimal solution of the true problem, which consequently does not lead to the optimal value of the SAA converging  to the optimal value for the true problem. Lastly, we note that Proposition~\ref{prop:objective-bound} implies that the expected sample size using our SAA solution $\hat\theta_M(\alpha, \beta)$ converges to the optimal expected sample size resulting from solving the original problem \eqref{equ:main}. Nevertheless, the readers may be interested in knowing whether the SAA optimal solution converges to the original optimal solution of \eqref{equ:main}. We can provably show this is true and refer the interested readers to Appendix~\ref{appendix-sec:solution-convergence}.

\section{Sequential Analysis for $t$-tests}\label{sec:t-tests}
In this section, we generalize the previous $z$-test setting. Specifically, we still assume the outcomes are distributed normally, $X_i\sim\mathcal{N}(\mu, \sigma^2)$, but instead do not assume $\sigma^2$ (typically unknown) is known. Under this unknown variance case, we conventionally use the $t$-test test statistic $T_k = \bar X_{1:k}/(\hat \sigma_{1:k}/\sqrt{n_{1:k}})$ for stage $k$, where $\hat \sigma_{1:k} = \sqrt{\sum^{n_{1:k}}_{i=1}(X_i - \bar X_{1:k})^2/(n_{1:k}-1)}$ is the sample standard deviation estimate using samples collected from the first $k$ stages (note $T_k$ can be computed from samples $\{X_i\}^{n_{1:k}}_{i=1}$ directly
without knowledge of the unknown $\sigma$ while the $z$-test statistic $S_k$ required $\sigma$). Then the design problem for the group sequential testing becomes:
\begin{equation}\label{equ:main-unknown-sigma}
        \begin{split}
            \min_{\theta\in C} &\,n_1 + \sum^K_{k=2}n_k\Pr(T_i < \theta_i, \forall i\in[k-1]|H_a)\\
            \st&\,\Pr(T_k\leq \theta_k, \,\forall k\in[K]|H_0)\geq 1 - \alpha,\\
            &\,\Pr(T_k< \theta_k, \,\forall k\in[K]|H_a)\leq \beta.
        \end{split}
    \end{equation}
Let $T(n)$ denote the Student $t$ distribution with $n$ degrees of freedom. Let $T(n, \mu)$ be the noncentral $t$-distribution with $n$ degrees of freedom and noncentrality parameter $\mu$. Then, under $H_0$, we have 
\begin{equation*}
    T_k = \frac{\bar X_{1:k}}{\hat\sigma_{1:k}/\sqrt{n_{1:k}}} = \frac{\sqrt{n_{1:k}}\bar X_{1:k}/\sigma}{\sqrt{\frac{(n_{1:k} - 1)\hat\sigma^2_{1:k}}{\sigma^2}/(n_{1:k} - 1)}} 
    \overset{d}{=} \frac{N(0,1)}{\sqrt{\chi^2_{n_{1:k}-1}/(n_{1:k}-1)}}
    \sim T(n_{1:k} - 1). 
\end{equation*}
Similarly, we also need the distribution of our test statistic under the alternative. Therefore, for any non-degenerate alternative hypothesis ($\mu_a \neq 0$), we have that 
$$
T_k\overset{d}{=}\frac{N(\sqrt{n_{1:k}}\delta,1)}{\sqrt{\chi^2_{n_{1:k}-1}/(n_{1:k}-1)}}
\sim T(n_{1:k} - 1, \sqrt{n_{1:k}}\delta), 
$$
where $\delta := \mu/\sigma$ (assuming $\delta\neq 0$) is the standardized effect size. 

Unfortunately, we are unable to make progress using our SAA approach since the distribution of $T_k$ cannot be sampled under the alternative hypothesis $H_a$ in Equations~\eqref{equ:hypotheses}. As mentioned in Section~\ref{sec:milp-reformulation}, our procedure depends on sample average approximation. One key requirement is to be able to sample our test statistics under the alternative hypothesis (formalized in Section~\ref{sec:convergence-gen}). However, the above equation shows we must sample from a noncentral $t$-distribution with noncentrality parameter $\sqrt{n_{1:k}}\delta$. Unfortunately, under the alternative hypothesis $H_a$ we only know $\mu = \mu_a$ and do not know $\delta = \mu_a/\sigma$. We present two fixes for this. 

\subsection{Standardized Effect Size Hypothesis Testing}
The first is a simple fix, in which we change our null and alternative hypotheses to directly specify the standardized effect size $\delta$. Formally, we consider testing the following hypotheses:
\begin{eqnarray}\label{equ:hypotheses-t}
        H_0': \delta = 0 \quad \mbox{ and } \quad
        H_a': \delta = \delta_a.
\end{eqnarray}
Under this alternative, we can simulate i.i.d. samples from $T = (T_k)^K_{k=1}$ under $H_0'$ and $H_a'$ respectively. Furthermore, this ``workaround'' is standard in the literature as $t$-test based power calculations run into a similar problem \citep{Lachin1981_ttest, cohen1988_ttest}. The interpretation of the hypothesis test, however, is now about the standardized effect size, not the absolute effect size. For later reference, we present the version of \eqref{equ:main-unknown-sigma} with $(H_0, H_a)$ replaced by $(H'_0, H'_a)$:
\begin{equation}\label{equ:main-unknown-sigma-1}
        \begin{split}
            \min_{\theta\in C} &\,n_1 + \sum^K_{k=2}n_k\Pr(T_i < \theta_i, \forall i\in[k-1]|H'_a)\\
            \st&\,\Pr(T_k\leq \theta_k, \,\forall k\in[K]|H'_0)\geq 1 - \alpha,\\
            &\,\Pr(T_k< \theta_k, \,\forall k\in[K]|H'_a)\leq \beta.
        \end{split}
    \end{equation}

With this fix, for each sample path $m\in[M]$, we simulate $X^m_i\sim \mathcal{N}(0,1)$ for each $i\in[n_{1:K}]$ under $H_0'$, and we simulate $X^m_i(\delta_a)\sim \mathcal{N}(\delta_a,1)$ for each $i\in[n_{1:K}]$ under $H_a'$. Let $T^m_k$ and $T^m_k(\delta_a)$ be the $t$-statistics computed using $(X^m_i)^{n_{1:k}}_{i=1}$ and $(X^m_i(\delta_a))^{n_{1:k}}_{i=1}$ respectively. Then $(T^m_k)^K_{k=1}\overset{d}{=} (T_k)^K_{k=1}$ under $H_0'$ and $(T^m_k(\delta_a))^K_{k=1}\overset{d}{=} (T_k)^K_{k=1}$ under $H_a'$, respectively, and the SAA problem for \eqref{equ:main-unknown-sigma-1} is
\begin{subequations}\label{equ:SAA-unknown-sigma}
        \begin{align}
            \min_{\theta\in C} &\,n_1 + (1/M)\sum^M_{m=1}\sum^K_{k=2}n_k\I(T^m_i(\delta_a) < \theta_i, \forall i\in[k-1])\\
            \st&\,(1/M)\sum^M_{m=1}\I(T^m_k\leq \theta_k, \,\forall k\in[K])\geq 1- \alpha,\label{equ:saa-t-c1}
            \\
            &\,(1/M)\sum^M_{m=1}\I(T^m_k(\delta_a)<\theta_k, \,\forall k\in[K])\leq \beta.\label{equ:saa-t-c2}
        \end{align}
    \end{subequations}
Following the same reasoning in problem~\eqref{equ:known-sigma-milp}, we see the MILP reformulation for \eqref{equ:SAA-unknown-sigma} follows directly by replacing $S^m_k$ with $T^m_k$ and $S^m_{a,k}$ with $T^m_k(\delta_a)$, respectively. Thus we omit the exact MILP reformulation here for brevity.

Let $v^*(\alpha, \beta; \delta_a)$ and $\theta^*(\alpha, \beta; \delta_a) = (\theta_k^*(\alpha, \beta; \delta_a))^K_{k=1}$ be the optimal value and any optimal solution of problem~\eqref{equ:main-unknown-sigma-1}, respectively.
Let $\hat v_M(\alpha, \beta; \delta_a)$ and $ \hat\theta_{M}(\alpha, \beta; \delta_a)$ be the optimal value and any optimal solution of the SAA problem~\eqref{equ:SAA-unknown-sigma}, respectively. 
Let $\Theta(\alpha, \beta; \delta_a)$ be the feasible region of true problem \eqref{equ:main-unknown-sigma-1}, and let $\Theta_M(\alpha, \beta; \delta_a)$ be the feasible region of the SAA problem \eqref{equ:SAA-unknown-sigma}. Without loss of generality, we continue to suppose that there exists some $\bar\epsilon> 0$ such that  $\Theta(\alpha - \bar\epsilon, \beta -\bar\epsilon; \delta_a)\neq \emptyset$.
We also let $f(\theta; \delta_a) = n_1 + \sum^K_{k=2}n_k\Pr(T_i < \theta_i, \forall i\in[k-1]|H'_a)$ be the objective. We are now ready to state the main Proposition~\ref{prop:convergence-t} that shows the SAA feasible region approximates the true feasible region and that the objective value of the SAA solution converges to that of the original $t$-test problem.

\begin{proposition}\label{prop:convergence-t}
    
(i) With probability at least $1-\eta$, we have 
\[
\Theta\big(\alpha - \epsilon_\mt(\eta, M), \beta - \epsilon_\mt(\eta, M); \delta_a \big) \subseteq \Theta_M(\alpha, \beta; \delta_a) \subseteq \Theta\big(\alpha + \epsilon_\mt(\eta, M), \beta+ \epsilon_\mt(\eta, M); \delta_a \big),
\]
with $\epsilon_\mt(\eta, M) = \sqrt{\big(c^\mt_1(K)\log(1/\eta) + c^\mt_2(K)\log(M) + c^\mt_3(K)\big)/M}$, where $c^\mt_1(K), c^\mt_2(K)$ and $c^\mt_3(K)$ are some constants that depend on $K$ and not $M$. 

% Alternatively, for all $\epsilon > 0$, we have $
% \Pr(\Theta(\alpha - \epsilon, \beta - \epsilon; \delta_a)\subseteq \Theta_M(\alpha, \beta; \delta_a)\subseteq \Theta(\alpha + \epsilon, \beta+ \epsilon; \delta_a))\geq 1-\tilde c^\mt_1(K)M^{\tilde c^\mt_2(K)}\exp{(-\tilde c^\mt_3(K)M\epsilon^2)} 
% $,
% where $\tilde c^\mt_1(K), \tilde c^\mt_2(K)$ and $\tilde c^\mt_3(K)$ are some constants that depend on $K$ and not $M$. z

(ii) Suppose MFCQ holds at any optimal solution of problem~\eqref{equ:main-unknown-sigma-1}. Then there exists $\bar M_\mt > 0$ such that for all $M\geq \bar M_\mt$, 
with probability at least $1-\eta$, we have $\Theta_M(\alpha, \beta; \delta_a)\neq \emptyset$ and 
\[
|f(\hat\theta_M(\alpha, \beta; \delta_a); \delta_a) - v^*(\alpha, \beta; \delta_a)|\leq \sqrt{\big(c^\mt_{\mv,1}(K) + c^\mt_{\mv,2}(K)\log(1/\eta) + c^\mt_{\mv,3}(K)\log(M)\big)/M}, 
\]
where $c^\mt_{\mathrm{v,1}}(K), c^\mt_{\mathrm{v,2}}(K)$ and $c^\mt_{\mathrm{v,3}}(K)$ are positive constants that depend on $K$ and not $M$. 

% Alternatively, for all $\epsilon > 0$, we have $
% \Pr(|f(\hat\theta_M(\alpha, \beta; \delta_a); \delta_a) - v^*(\alpha, \beta; \delta_a)|\leq\epsilon)\geq 1-\tilde c^\mt_{\mv,1}(K)M^{\tilde c^\mt_{\mv,2}(K)}\exp{(-\tilde c^\mt_{\mv,3}(K)M\epsilon^2)} 
% $,
% where $\tilde c^\mt_{\mv,1}(K), \tilde c^\mt_{\mv,2}(K)$ and $\tilde c^\mt_{\mv,3}(K)$ are some constants that depend on $K$ and not $M$. 

\end{proposition}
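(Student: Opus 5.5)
The plan is to reduce Proposition~\ref{prop:convergence-t} to the machinery already used for the $z$-test case (Propositions~\ref{prop:feasible-set} and \ref{prop:objective-bound}). The key observation is that those arguments never use Gaussianity of $S$ in their probabilistic part. They only use three facts. First, the SAA quantities are empirical averages of indicators of orthant-type sets $\{s: s_k\le\theta_k,\ \forall k\}$ or $\{s: s_i<\theta_i,\ i\le k\}$ indexed by $\theta\in C$. Second, the paths are i.i.d.\ draws from the true law. Third, for the optimality gap, the functions $f,g_0,g_a$ are sufficiently smooth, with Lipschitz-type regularity so that MFCQ yields a local error bound. The $t$-statistic vector $T=(T_k)_{k=1}^K$ under $H_0'$ and $H_a'$ satisfies the first two by construction, since $(T^m_k)$ and $(T^m_k(\delta_a))$ are exact i.i.d.\ copies. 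So the task is to verify the third.

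For part (i), I will rerun the uniform deviation argument. The class $\{s\mapsto \I(s_k\le\theta_k\ \forall k\in[k'])\}_{\theta\in C,\,k'\in[K]}$ (and its strict-inequality analogue) consists of lower orthants in $\mathbb{R}^K$, with VC dimension at most $K$. By the VC/DKW-type inequality, with probability at least $1-\eta$,
\[
\sup_\theta|\hat g_0(\theta)-g_0(\theta)|\vee\sup_\theta|\hat g_a(\theta)-g_a(\theta)|\le \epsilon_\mt(\eta,M).
\]
Here the $\log M$ term comes from the growth function $M^{K}$, and a union bound over the two sample families gives the $K$-dependent constants. Given this uniform bound, the two inclusions follow immediately. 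If $\theta\in\Theta(\alpha-\epsilon_\mt,\beta-\epsilon_\mt;\delta_a)$, then $\hat g_0(\theta)\ge g_0(\theta)-\epsilon_\mt\ge 1-\alpha$, and similarly for $g_a$. The reverse inclusion follows symmetrically. This argument is distribution-free, so it transfers verbatim.

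For part (ii), I will follow the proof of Proposition~\ref{prop:objective-bound}. The steps in order are:
\begin{enumerate}
\item Lemma~\ref{lem:SAA-feasibility} (which is also distribution-free) gives nonemptiness of $\Theta_M$ for $M\ge\bar M_\mt$.
\item Part (i) gives $v^*(\alpha+\epsilon_\mt,\beta+\epsilon_\mt;\delta_a)-\epsilon_\mt' \le \hat v_M\le v^*(\alpha-\epsilon_\mt,\beta-\epsilon_\mt;\delta_a)+\epsilon'_\mt$, where $\epsilon'_\mt$ is the uniform SAA error of the objective. By the same VC argument, $\epsilon'_\mt$ is of order $N\epsilon_\mt$.
\item I will show $v^*(\cdot,\cdot;\delta_a)$ is locally Lipschitz at $(\alpha,\beta)$ under MFCQ. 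Then $|v^*(\alpha\pm\epsilon,\beta\pm\epsilon)-v^*(\alpha,\beta)|=O(\epsilon)$.
\item Combining these with Corollary~\ref{prop:prob-feasibility}, I will bound $f(\hat\theta_M)$ against $\hat v_M$ via the uniform objective deviation.
\end{enumerate}

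The main obstacle is step 3, which needs $f,g_0,g_a$ to be continuously differentiable on $C$ with bounded gradients so that standard parametric-programming stability (Robinson/MFCQ-based error bounds, as cited in \citealt{still2018lectures}) applies. For $z$-tests this came from Gaussian densities. Here I will argue it directly. Conditional on the chi-square denominators, each $T_k$ is a scaled Gaussian, so $(T_1,\dots,T_K)$ has a joint density that is smooth and bounded, with bounded marginal densities on the compact box (the noncentral $t$ densities are bounded). Then $\partial g_0/\partial\theta_k$ equals the marginal density of $T_k$ at $\theta_k$ times a conditional orthant probability. This is continuous in $\theta$ by dominated convergence, and likewise for $g_a$ and $f$. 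Once $C^1$ regularity and bounded gradients are established, the remainder of the proof of Proposition~\ref{prop:objective-bound} applies unchanged, with constants depending on $K$ (and on $n$ through the fixed design), yielding the stated $\sqrt{(c_1+c_2\log(1/\eta)+c_3\log M)/M}$ rate.
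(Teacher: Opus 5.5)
Your proposal is correct, but it takes a different route from the paper.

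\textbf{The paper's route.} The paper proves Proposition~\ref{prop:convergence-t} by rerunning the grid argument of Propositions~\ref{prop:feasible-set} and~\ref{prop:objective-bound}. It has three ingredients:
\begin{itemize}
\item Hoeffding plus a union bound on the net $\hat C(\Delta)$ with $\Delta=(\bar\theta-\underline\theta)/\sqrt{M}$ (Lemma~\ref{lem:finite-feasibility});
\item Lipschitz continuity of $g_0,g_a,f$, to pass from grid points to all of $C$;
\item a second Hoeffding bound on the within-cell oscillation of the empirical functions (Lemma~\ref{lem:SAA-constraint-bound}).
\end{itemize}
The only $t$-specific input is that noncentral $t$ densities are bounded by $1/\sqrt{2\pi}$ uniformly in $\delta$ (Lemma~\ref{lem:lipshitz-continuity-t}). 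That bound supplies both the Lipschitz constants and the oscillation control.

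\textbf{Your route.} You get uniform deviation over all of $C$ at once from the VC bound for orthants, and this bound is distribution-free. This has three consequences:
\begin{itemize}
\item Part (i) needs no density bound at all.
\item Nonemptiness of $\Theta_M(\alpha,\beta;\delta_a)$ holds on the same event once $\epsilon_{\mathrm{T}}(\eta,M)\le\bar\epsilon$, so your separate step~1 is redundant.
\item With $\epsilon=\epsilon_{\mathrm{T}}(\eta,M)$, part (ii) reduces to the sandwich $v^*(\alpha+\epsilon,\beta+\epsilon)\le f(\hat\theta_M)\le \hat v_M+\epsilon'\le v^*(\alpha-\epsilon,\beta-\epsilon)+2\epsilon'$, followed by Lemma~\ref{lem:optimal-value-lipshitz}. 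This replaces the paper's five-term decomposition A1--A5.
\end{itemize}

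\textbf{What each buys.} The paper's approach uses only Hoeffding, gives explicit constants in terms of density and Lipschitz bounds, and parallels the $z$-test proof. Yours is shorter and more general. It shows that regularity of the test-statistic law is needed only for the MFCQ-based Lipschitz continuity of $v^*$, so Assumption~\ref{ass:convergence-generalization}(i) is unnecessary for the feasible-set inclusions. The cost is relying on VC theory as a black box, with typically larger absolute constants; the rate is the same.

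\textbf{One point on your step 3.} The claim that $(T_1,\dots,T_K)$ has a bounded joint density is stronger than you need, and it is not obvious for small stage sizes. The $C^1$ and bounded-gradient requirement only uses three things:
\begin{itemize}
\item the representation $\partial g_0/\partial\theta_k=p_k(\theta_k)\Pr(T_{-k}\le\theta_{-k}\mid T_k=\theta_k)$ (and its analogues for $g_a$ and $f$);
\item the uniform marginal bound of Lemma~\ref{lem:lipshitz-continuity-t};
\item continuity of the conditional orthant probabilities.
\end{itemize}
The paper asserts this regularity with no more detail than you give, so your sketch is on par with it.
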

    
We leave the proof of Proposition~\ref{prop:convergence-t} to Appendix~\ref{appendix:t-test-effect size}. 
We remind the readers that our first fix for the unknown $\sigma$ case is to simply specify $H_0',H_a'$ in terms of the standardized effect size $\delta$. In the 
scenario where the statistician/practitioner is interested in estimating and constructing alternatives based on the mean difference $\mu_a$ instead of the standardized effect size $\delta_a$, the distribution of the $t$-statistics under $H_a$
cannot be fully determined. However, with this simple fix we recover all the main results from Section~\ref{sec:convergence-main}, i.e., our optimal objective value using the S-MILP approach converges to the optimal value of the original problem~\eqref{equ:main-unknown-sigma-1}. In the next subsection, we study an alternative approach: using the pilot study to collect samples and estimate the unknown $\sigma$.

\subsection{Pilot Study Approach}\label{sec:pilot study}
In this section, we return to solving the original null and alternative hypothesis presented in Equations~\eqref{equ:hypotheses} when $\sigma$ is unknown.

In this case, we do not use a ``simple'' fix like the aforementioned section, i.e., we can no longer sample the $t$-test statistic under the alternative (since $\delta$ is unknown). Formally, in this case the researcher still wants to test $\mu$ directly using $H_0$ and $H_a$ (as opposed to the standardized effect size). To make progress, we require additional ``pilot study'' samples to estimate an initial $\hat\sigma_0$ to use as a ``plug-in''. Formally, before the group sequential test begins, we conduct a pilot study to collect $n_0$ i.i.d. data $X_{-1},\cdots, X_{-n_0}$ (i.i.d. with respect to $X_1, \cdots, X_{n_{1:K}}$ as well) and estimate $\sigma$ using the typical sample standard deviation: 
$$
\hat\sigma_0 = \sqrt{\frac{\sum^{n_0}_{i=1}(X_{-i} - \bar X_{-1:n_0})^2}{n_0 - 1}}, 
$$
where $\bar X_{-1:n_0} = \sum^{n_0}_{i=1}X_{-i}/n_0$ is the sample mean for the pilot data. 
Then we can use the estimated $\hat\sigma_0$ as the surrogate for the true $\sigma$ to calculate the type-1 and type-2 errors under the sequential $t$-tests. We thus formulate the design problem for sequential $t$-tests as:
\begin{subequations}\label{equ:main-unknown-sigma-pilot}
        \begin{align}
            \min_{\theta\in C} &\,n_1 + \sum^K_{k=2}n_k\Pr(T_i < \theta_i, \forall i\in[k-1]|H_a = \mu_a, \sigma = \hat\sigma_0)\label{equ:equ:main-unknown-sigma-pilot-obj}\\
            \st&\,\Pr(T_k\leq \theta_k, \,\forall k\in[K]|H_0)\geq 1 - \alpha,\label{equ:equ:main-unknown-sigma-pilot-c1}\\
            &\,\Pr(T_k< \theta_k, \,\forall k\in[K]|H_a, \sigma = \hat\sigma_0)\leq \beta,\label{equ:equ:main-unknown-sigma-pilot-c2}
        \end{align}
    \end{subequations}
where we recall $T_k = \bar X_{1:k}/(\hat\sigma_{1:k}/\sqrt{n_{1:k}})$ with $\hat \sigma_{1:k} = \sqrt{\sum^{n_{1:k}}_{i=1}(X_i - \bar X_{1:k})^2/(n_{1:k}-1)}$. In Equation~\eqref{equ:equ:main-unknown-sigma-pilot-c1}, we calculate the type-1 error 
conditional on that 
$X_i\sim  \mathcal{N}(0, 1)$ under $H_0$; In Equation~\eqref{equ:equ:main-unknown-sigma-pilot-c2}, we calculate the type-2 error 
conditional on that $X_i\sim \mathcal{N}(\mu_a, \hat\sigma^2_0)$ under $H_a$, where we have used $\hat\sigma_0$ as a ``plug-in'' for $\sigma$. 

Then we can again simulate $M$ sample paths as follows: for each $m\in[M]$, we simulate i.i.d. $X^m_i\sim \mathcal{N}(0,1)$ under $H_0$, and $X^m_i(\mu_a/\hat\sigma_0)\sim \mathcal{N}(\mu_a,\hat\sigma^2_0)$ under $H_a$ for each $i\in[n_{1:K}]$. Let $T^m_k$ and $T^m_k(\mu_a/\hat\sigma_0)$ be the $t$-statistics computed using $(X^m_i)^{n_{1:k}}_{i=1}$ and $(X^m_i(\mu_a/\hat\sigma_0))^{n_{1:k}}_{i=1}$ respectively. Then we have that $(T^m_k)^K_{k=1}\overset{d}{=} (T_k)^K_{k=1}$ under $H_0$ and $(T^m_k(\mu_a/\hat\sigma_0))^K_{k=1}\overset{d}{=} (T_k)^K_{k=1}$ under $H_a$ and $\sigma = \hat\sigma_0$, respectively, so the SAA analog for \eqref{equ:main-unknown-sigma-pilot} is
\begin{subequations}\label{equ:SAA-unknown-sigma-pilot}
        \begin{align}
            \min_{\theta\in C} &\,n_1 + (1/M)\sum^M_{m=1}\sum^K_{k=2}n_k\I(T^m_i(\mu_a/\hat\sigma_0) < \theta_i, \forall i\in[k-1])\\
            \st&\,(1/M)\sum^M_{m=1}\I(T^m_k\leq \theta_k, \,\forall k\in[K])\geq 1- \alpha,\label{equ:saa-t-pilot-c1}
            \\
            &\,(1/M)\sum^M_{m=1}\I(T^m_k(\mu_a/\hat\sigma_0)<\theta_k, \,\forall k\in[K])\leq \beta.\label{equ:saa-t-pilot-c2}
        \end{align}
    \end{subequations}
We can easily get the MILP reformulation for the above problem by replacing $S^m_k$ with $T^m_k$ and $S^m_{a,k}$ with $T^m_k(\mu_a/\hat\sigma_0)$ in \eqref{equ:known-sigma-milp}, respectively. 

Keen readers may have noticed that the optimization problem in \eqref{equ:main-unknown-sigma-pilot} (specifically \eqref{equ:equ:main-unknown-sigma-pilot-obj} and \eqref{equ:equ:main-unknown-sigma-pilot-c2}) conditions on $\sigma = \hat\sigma_0$, which is an approximation to the original problem~\eqref{equ:main-unknown-sigma} (where we do not have this condition). Formally, since $\hat\sigma_0\neq \sigma$ with probability one, the optimal design derived from solving \eqref{equ:main-unknown-sigma-pilot} differs from the ``true'' optimal design if we had been able to know the standardized effect size (see problem~\eqref{equ:main-unknown-sigma}). 
% \begin{equation}\label{equ:main-unknown-sigma-nominal}
%         \begin{split}
%             \min_{\theta\in C} &\,n_1 + \sum^K_{k=2}n_k\Pr(T_i \leq \theta_i, \forall i\in[k-1]|H_a = \mu_a, \sigma = \sigma_0)\\
%             \st&\,\Pr(T_k\leq \theta_k, \,\forall k\in[K]|H_0)\geq 1 - \alpha,\\
%             &\,\Pr(T_k\leq \theta_k, \,\forall k\in[K]|H_a = \mu_a, \sigma = \sigma_0)\leq \beta,
%         \end{split}
%     \end{equation}
Ex ante, the optimal value and optimal solutions of \eqref{equ:main-unknown-sigma-pilot} are random variables which depend on the samples $\{X_{-i}\}^{n_0}_{i=1}$ collected from the pilot study. Therefore, the decision quality of \eqref{equ:main-unknown-sigma-pilot} and also the SAA analog \eqref{equ:SAA-unknown-sigma-pilot} depends on the number of pilot samples as well. However, we show in the following Propositions that this plug-in approach has desirable theoretical properties to approximate the original problem.

Before stating these results, we define a few more notations. Let $\hat\delta_{n_0} = \mu_a/\hat\sigma_0$. Then $\Theta_M(\alpha, \beta; \hat\delta_{n_0})$ is the set of feasible solutions to problem~\eqref{equ:SAA-unknown-sigma-pilot} and $\hat\theta_M(\alpha, \beta; \hat\delta_{n_0})\in \Theta_M(\alpha, \beta; \hat\delta_{n_0})$ is any optimal solution of problem~\eqref{equ:SAA-unknown-sigma-pilot}. With some abuse of notation, we let $\Theta(\alpha, \beta; \delta)$ be the feasible region, and $f(\theta; \delta)$ be the objective, and $v^*(\alpha, \beta; \delta)$ be the optimal value of true problem \eqref{equ:main-unknown-sigma}, respectively.

\begin{proposition}\label{prop:pilot-convergence}
    
(i) Let $\eta\in(0,1)$. There exists $\tilde n_\p$ such that for all $n_0\geq \tilde n_\p$, 
with probability at least $1-\eta$, we have
\[
\Theta\big(\alpha - \epsilon_\p(\eta, M, n_0), \beta - \epsilon_\p(\eta, M, n_0); \delta\big) \subseteq \Theta_M(\alpha, \beta; \hat\delta_{n_0}) \subseteq \Theta\big(\alpha + \epsilon_\p(\eta, M, n_0), \beta+ \epsilon_\p(\eta, M, n_0); \delta\big),
\]
with $\epsilon_\p(\eta, M, n_0) = \sqrt{\big(c^\p_1(K)\log(1/\eta) + c^\p_2(K)\log(Mn_0) + c^\p_3(K)\big)\cdot(1/n_0 + 1/M)}$, where $c^\p_1(K), c^\p_2(K)$ and $c^\p_3(K)$ are some constants that depend on $K$ and not $M$ or $n_0$. 

% Alternatively, for all $\epsilon > 0$, we have $
% \Pr(\Theta(\alpha - \epsilon, \beta - \epsilon; \delta_a)\subseteq \Theta_M(\alpha, \beta; \hat\delta_{n_0})\subseteq \Theta(\alpha + \epsilon, \beta+ \epsilon; \delta_a))\geq 1-\tilde c^\p_1(K)(Mn_0)^{\tilde c^\p_2(K)}\exp{(-\tilde c^\p_3(K)Mn_0\epsilon^2/(M+n_0))} 
% $,
% where $\tilde c^\p_1(K), \tilde c^\p_2(K)$ and $\tilde c^\p_3(K)$ are some constants that depend on $K$ and not $M$ nor $n_0$. 

(ii) Suppose MFCQ holds at any optimal solution of problem~\eqref{equ:main-unknown-sigma}. There exist $\bar M_\p$ and $\bar n_\p$ such that for all $M\geq \bar M_\p$ and $n_0\geq \bar n_\p$, 
with probability at least $1-\eta$, we have $\Theta_M(\alpha, \beta; \hat\delta_{n_0})\neq\emptyset$ and
\begin{align*}
&|f(\hat\theta_M(\alpha, \beta; \hat\delta_{n_0}); \delta) - v^*(\alpha, \beta; \delta)|\\
\leq &\sqrt{\big(c^\p_{\mv,1}(K) + c^\p_{\mv,2}(K)\log(1/\eta) + c^\p_{\mv,3}(K)\log(Mn_0)\big)\cdot(1/n_0 + 1/M)}, 
\end{align*}
where $c^\p_{\mathrm{v,1}}(K), c^\p_{\mathrm{v,2}}(K)$ and $c^\p_{\mathrm{v,3}}(K)$ are positive constants that depend on $K$ and not $M$ or $n_0$. 

% Alternatively, for all $\epsilon > 0$, we have $
% \Pr(|f(\hat\theta_M(\alpha, \beta; \hat\delta_{n_0}); \delta_a) - v^*(\alpha, \beta; \delta_a)|\leq \epsilon)\geq 1-\tilde c^\p_{1, \mv}(K)(Mn_0)^{\tilde c^\p_{\mv,2}(K)}\exp{(-\tilde c^\p_{\mv,3}(K)Mn_0\epsilon^2/(M+n_0))} 
% $,
% where $\tilde c^\p_{\mv,1}(K), \tilde c^\p_{\mv,2}(K)$ and $\tilde c^\p_{\mv,3}(K)$ are some constants that depend on $K$ and not $M$ or $n_0$. 

\end{proposition}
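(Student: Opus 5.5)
The plan is to split the total error into a Monte Carlo part and a plug-in part, then control each separately. Write $g_0(\theta)$ for the type-1 acceptance probability (which does not depend on $\sigma$ or $\hat\sigma_0$). Write $g_a(\theta;\delta')$ and $f(\theta;\delta')$ for the type-2 probability and the objective when the alternative data are $\mathcal{N}(\delta',1)$-scaled; by scale invariance of $T_k$ they depend on $(\mu_a,\sigma)$ only through $\delta'$. Let $\hat g_{a,M}(\theta;\delta')$ and $\hat f_M(\theta;\delta')$ denote the SAA averages. Conditional on the pilot data, $\hat\delta_{n_0}$ is fixed. Proposition~\ref{prop:convergence-t}(i), applied with $\delta_a=\hat\delta_{n_0}$, then gives with conditional probability at least $1-\eta/2$
\[
\Theta\big(\alpha-\epsilon_\mt,\beta-\epsilon_\mt;\hat\delta_{n_0}\big)\subseteq\Theta_M(\alpha,\beta;\hat\delta_{n_0})\subseteq\Theta\big(\alpha+\epsilon_\mt,\beta+\epsilon_\mt;\hat\delta_{n_0}\big).
\]
Its constants depend only on $K$, since the underlying uniform (VC-type) bound over orthant indicator classes does not depend on the sampling distribution. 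The bound therefore holds unconditionally too.

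The second step is a Lipschitz bound in $\delta'$:
\[
\sup_{\theta}|g_a(\theta;\delta')-g_a(\theta;\delta)|\le L|\delta'-\delta|,
\]
and similarly for $f$ with constant $L\,n_{1:K}$, for $\delta'$ in a neighborhood of $\delta$. I would prove this by coupling: take $X_i=\delta'+Z_i$ with common $Z_i$. The joint event $\{T_i<\theta_i,\ \forall i\le k\}$ then has probability bounded in total variation by $\mathrm{TV}(\mathcal{N}(\delta'\mathbf 1,I),\mathcal{N}(\delta\mathbf 1,I))\le \sqrt{n_{1:K}}\,|\delta'-\delta|/2$ (Pinsker), uniformly in $\theta$. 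This is cleaner than differentiating noncentral $t$ densities. Next I control $|\hat\delta_{n_0}-\delta|=|\mu_a|\,|1/\hat\sigma_0-1/\sigma|$. Since $(n_0-1)\hat\sigma_0^2/\sigma^2\sim\chi^2_{n_0-1}$, the Laurent--Massart inequality gives $|\hat\sigma_0^2/\sigma^2-1|\le c\sqrt{\log(1/\eta)/n_0}$ with probability at least $1-\eta/2$ once $n_0\ge\tilde n_\p$. This threshold also keeps $\hat\sigma_0$ bounded away from $0$, so $|\hat\delta_{n_0}-\delta|\le c'\sqrt{\log(1/\eta)/n_0}$. Combining the two steps gives
\[
\Theta(\alpha\mp\epsilon,\beta\mp\epsilon;\hat\delta_{n_0})\ \text{sandwiched between}\ \Theta(\alpha\mp(\epsilon+L|\hat\delta-\delta|),\beta\mp(\epsilon+L|\hat\delta-\delta|);\delta).
\]
The type-1 part is unchanged, and this gives (i) after a union bound, using $\epsilon_\mt+c\sqrt{\log(1/\eta)/n_0}\le\epsilon_\p$. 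The $\log(Mn_0)$ appears only as slack.

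For (ii) I would mimic the proof of Proposition~\ref{prop:objective-bound}. On the event of (i), feasibility of $\Theta_M$ follows once $\epsilon_\p\le\bar\epsilon$, which fixes $\bar M_\p,\bar n_\p$. For the upper bound on $f(\hat\theta_M;\delta)$, take $\tilde\theta$ optimal for $\Theta(\alpha-\epsilon_\p,\beta-\epsilon_\p;\delta)\subseteq\Theta_M$. Then chain: $f(\hat\theta_M;\delta)$ is within $L|\hat\delta-\delta|$ of $f(\hat\theta_M;\hat\delta)$; that is within a uniform SAA deviation of $\hat f_M(\hat\theta_M;\hat\delta)$; that is at most $\hat f_M(\tilde\theta;\hat\delta)$ by optimality; and that returns to $f(\tilde\theta;\delta)$ plus the same two errors. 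Finally, $f(\tilde\theta;\delta)=v^*(\alpha-\epsilon_\p,\beta-\epsilon_\p;\delta)$. The lower bound uses $\hat\theta_M\in\Theta(\alpha+\epsilon_\p,\beta+\epsilon_\p;\delta)$, so $f(\hat\theta_M;\delta)\ge v^*(\alpha+\epsilon_\p,\beta+\epsilon_\p;\delta)$.

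The main obstacle is the last ingredient, local Lipschitz continuity of $v^*(\cdot,\cdot;\delta)$ in $(\alpha,\beta)$. I would invoke the same parametric-programming stability result used for Proposition~\ref{prop:objective-bound}. Under MFCQ at every optimal solution, together with continuous differentiability of $g_0$, $g_a(\cdot;\delta)$ and $f(\cdot;\delta)$ (which holds because the $t$-statistics have smooth joint densities), $|v^*(\alpha\pm\epsilon,\beta\pm\epsilon;\delta)-v^*(\alpha,\beta;\delta)|\le C\epsilon$ for small $\epsilon$. Collecting terms, squaring, and merging constants gives the stated rate $\sqrt{(\cdots)(1/n_0+1/M)}$.
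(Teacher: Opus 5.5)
Your proposal is correct and has the same architecture as the paper. You condition on the pilot data to get a Monte Carlo sandwich with respect to $\hat\delta_{n_0}$ (the paper's Lemma~\ref{lem:feasible-set-pilot}). You transfer it to the true $\delta$ through a Lipschitz-in-$\delta$ bound plus Laurent--Massart (Lemmas~\ref{lem:pilot-sample-convergence} and \ref{lem:pilot-feasible-set-2}). For (ii) you run the chain of Proposition~\ref{prop:objective-bound} with the MFCQ-based Lipschitz continuity of $v^*$. Your upper-bound chain is the paper's $A_1$--$A_7$ decomposition with $A_2$--$A_4$ collapsed into one uniform deviation. You differ in two ingredients.

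\emph{Dependence on $\delta$.} You use $|P_{\delta'}(A)-P_\delta(A)|\le\mathrm{TV}\le\sqrt{n_{1:K}}\,|\delta'-\delta|/2$ via Pinsker. This is global and uniform over every event of the data vector, hence over $\theta$. The paper instead applies the mean value theorem with a pointwise-finite $\partial g_a/\partial\delta$. That tacitly requires a bound uniform over $\theta\in C$ and along the random segment to $\hat\delta_{n_0}$, which the paper does not supply. The paper also adds a union bound over a $\theta$-grid, which is where its $\log n_0$ comes from, so you are right that this term is slack. The common-$Z$ coupling you mention plays no role; the bound is simply the total-variation inequality for a fixed event.

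\emph{Uniformity of the Monte Carlo constants in the random $\hat\delta_{n_0}$.} You appeal to distribution-free VC bounds for orthant classes. That is valid, but it is not how Proposition~\ref{prop:convergence-t} is proved in the paper. There the proof is discretization-based, with constants built from marginal density bounds and $\theta$-Lipschitz constants. Uniformity in $\delta$ comes from Lemma~\ref{lem:lipshitz-continuity-t}: noncentral $t$ densities are bounded by $1/\sqrt{2\pi}$ for every $\delta$. So you should do one of the following:
\begin{itemize}
\item run the VC argument directly, in which case you need neither Proposition~\ref{prop:convergence-t} nor the grid, and the uniform deviation of $\hat f_M$ used in (ii) comes from the same bound;
\item or cite that density bound to justify applying Proposition~\ref{prop:convergence-t} at a data-dependent $\delta_a$.
\end{itemize}

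Your route is shorter, gives explicit global constants, and drops the $\log n_0$ term. The paper's route stays within the elementary Hoeffding-plus-grid machinery it reuses from the $z$-test case.
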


We leave the detailed statements for Proposition~\ref{prop:pilot-convergence} and all the corresponding proofs to Appendix~\ref{sec-appendix:t-test-pilot}. To summarize, we have similar convergence guarantees as above. Specifically, Proposition~\ref{prop:pilot-convergence} implies that the expected sample size corresponding to the SAA optimal solution of \eqref{equ:SAA-unknown-sigma-pilot} converges to the counterpart of the optimal design \eqref{equ:main-unknown-sigma} for sequential $t$-tests as $M\to \infty$ and additionally as our pilot sample size  $n_0\to\infty$.

\section{Convergence Analysis Generalization}
% \label{sec:generalization_extension}
% \subsection{Convergence Analysis Generalization}
\label{sec:convergence-gen}
In this section, we generalize our results and provide sufficient conditions for similar results to hold. We still consider the optimization framework listed in problem~\eqref{equ:main}. However, we aim to relax any normality assumption of our outcomes and further talk about general test statistics (with or without knowledge of $\sigma$). Specifically, we use any general continuous test statistics $Q = (Q_k)^K_{k=1}$ (which are not necessarily normally distributed). Let $G_0$ and $G_a$ denote the distributions of any general continuous test statistic $Q$ under $H_0$ and $H_a$, respectively. With a slight abuse of notation we still assume each outcome is distributed i.i.d. with finite first two moments ($|\mu|, \sigma^2 < \infty$), and thus $H_0, H_a$ refer to the first moment of these i.i.d outcomes. Thus in the remainder of this subsection, we refer to problem~\eqref{equ:main} as an analog with summary statistics $S$ replaced by $Q$. 

As shown in the above sections, we require generating i.i.d. samples from $G_0$ and $G_a$ for our SAA approach to approximate the type-1 and type-2 error constraints. We further require that $G_0$ and $G_a$ have density functions that are upper bounded almost everywhere on $C$. Formally,
\begin{assumption}\label{ass:convergence-generalization}
   (i) The densities of $G_0$ and $G_a$ are upper bounded almost everywhere on $C$. (ii) It is possible to generate i.i.d. samples from $G_0$ and $G_a$ under the null and alternative hypothesis, respectively. (iii) MFCQ holds at any optimal solution of problem~\eqref{equ:main} with $Q\sim G_0$ under $H_0$ and $Q\sim G_a$ under $H_a$. 
\end{assumption}
Under Assumption~\ref{ass:convergence-generalization}, we can generalize the results in Section ~\ref{sec:convergence-main} to other tests. With $Q\sim G_0$ under $H_0$ and $Q\sim G_a$ under $H_a$, 
let $\Theta_\g(\alpha, \beta)$ be the feasible set of problem~\eqref{equ:main} and $\Theta^\g_M(\alpha, \beta)$ be the feasible set of problem~\eqref{equ:SAA}, respectively; let $f_\g(\theta) = n_1 + \sum^K_{k=2}n_k\Pr(Q_i < \theta_i, \forall i\in[k-1]|H_a)$ and $v^*_\g(\alpha, \beta)$ be the optimal value of \eqref{equ:main}. Finally, let $\hat\theta^\g_M(\alpha, \beta)$ be any optimal solution to SAA problem~\eqref{equ:SAA}.

\begin{proposition}\label{prop:convergence-generalization}

Suppose Assumptions~\ref{ass:convergence-generalization}(i)-(ii) hold. 
(i) With probability at least $1-\eta$, we have 
\[
\Theta_\g(\alpha - \epsilon_\g(\eta, M), \beta - \epsilon_\g(\eta, M))\subseteq \Theta^\g_M(\alpha, \beta)\subseteq \Theta_\g(\alpha + \epsilon_\g(\eta, M), \beta+ \epsilon_\g(\eta, M)),
\]
with $\epsilon_\g(\eta, M) = \sqrt{\big(c^\g_1(K)\log(1/\eta) + c^\g_2(K)\log(M) + c^\g_3(K)\big)/M}$, where $c^\g_1(K), c^\g_2(K)$ and $c^\g_3(K)$ are some constants that depend on $K$ and not $M$. 

% Alternatively, for all $\epsilon > 0$, we have $
% \Pr(\Theta_\g(\alpha - \epsilon, \beta - \epsilon)\subseteq \Theta^\g_M(\alpha, \beta)\subseteq \Theta_\g(\alpha + \epsilon, \beta+ \epsilon))\geq 1-\tilde c^\g_1(K)M^{\tilde c^\g_2(K)}\exp{(-\tilde c^\g_3(K)M\epsilon^2)} 
% $,
% where $\tilde c^\g_1(K), \tilde c^\g_2(K)$ and $\tilde c^\g_3(K)$ are some constants that depend on $K$ and not $M$. 

(ii) Suppose Assumption~\ref{ass:convergence-generalization}(iii) holds as well. There exists some threshold $\bar M_\g> 0$ such that for all $M\geq \bar M_\g$, with probability at least $1-\eta$, we have $\Theta^\g_M(\alpha, \beta)\neq \emptyset$ and
\[
|f_\g(\hat\theta^\g_M(\alpha, \beta)) - v^*_\g(\alpha, \beta)|\leq \sqrt{\big(c^\g_{\mv,1}(K) + c^\g_{\mv,2}(K)\log(1/\eta) + c^\g_{\mv,3}(K)\log(M)\big)/M}, 
\]
where $c^\g_{\mathrm{v,1}}(K), c^\g_{\mathrm{v,2}}(K)$ and $c^\g_{\mathrm{v,3}}(K)$ are positive constants that depend on $K$ and not $M$. 

% Alternatively, for all $M\geq \bar M$ and any $\epsilon>0$,
% $
% \Pr(|f_\g(\hat\theta^\g_M(\alpha, \beta)) - v_\g^*(\alpha, \beta)|\leq \epsilon) \geq 1-\tilde c^\g_{\mathrm{v},1}(K,n)M^{\tilde c^\g_{\mathrm{v},2}(K,n)}\exp(-\tilde c_{\mathrm{v},3}(K,n)M\epsilon^2)
% $
% where $\tilde c^\g_{\mathrm{v,1}}(K, n), \tilde c^\g_{\mathrm{v,2}}(K, n)$ and $\tilde c^\g_{\mathrm{v,3}}(K, n)$ are positive constants that depend on $K$ and $n$ and not $M$.

\end{proposition}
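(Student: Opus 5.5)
The plan is to repeat the arguments behind Proposition~\ref{prop:feasible-set} and Proposition~\ref{prop:objective-bound}, checking at each step that only Assumption~\ref{ass:convergence-generalization} is used, not normality. For part (i), I would reduce everything to a uniform deviation bound over the function classes $\mathcal{F}_0=\{q\mapsto \I(q_k\le\theta_k,\forall k\in[K]):\theta\in C\}$ and $\mathcal{F}_{a,j}=\{q\mapsto \I(q_i<\theta_i,\forall i\in[j]):\theta\in C\}$ for $j\in[K]$. These are indicators of lower orthants in $\mathbb{R}^j$, so their VC dimension is at most $K$, independent of the distributions $G_0,G_a$. I would then discretize $C$ by a grid of mesh $1/M$, giving at most $(M(\bar\theta-\underline\theta))^K$ points, and apply Hoeffding together with a union bound on the grid; this accounts for the $c_2(K)\log M$ term. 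Next comes the bracketing step, which is the only place Assumption~\ref{ass:convergence-generalization}(i) enters. For $\theta$ between neighbouring grid points $\theta^-\le\theta\le\theta^+$, monotonicity of the indicators sandwiches both the empirical and the true probabilities between their values at $\theta^\pm$. The true probabilities at $\theta^-$ and $\theta^+$ differ by at most $\sum_k \Pr(Q_k\in[\theta^-_k,\theta^+_k])\le KL/M$, where $L$ is the a.e.\ density bound on $C$ (using the bounded marginal density). This yields $\sup_{\theta\in C}|g^M_0(\theta)-g_0(\theta)|\le \epsilon_\g(\eta,M)$, with the analogous bound for $g_a$, simultaneously with probability $1-\eta$. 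The two inclusions then follow exactly as before: if $g_0(\theta)\ge 1-\alpha+\epsilon$ then $g_0^M(\theta)\ge 1-\alpha$, and conversely.

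For part (ii), I would follow the proof of Proposition~\ref{prop:objective-bound}. First, Lemma~\ref{lem:SAA-feasibility} carries over verbatim, since it uses only Hoeffding at a single fixed point of $\Theta_\g(\alpha-\bar\epsilon,\beta-\bar\epsilon)$; taking $M$ large so that $\epsilon_\g\le\bar\epsilon$ gives feasibility. Second, the same uniform bound applied to $\mathcal{F}_{a,j}$ gives $\sup_\theta|f^M_\g(\theta)-f_\g(\theta)|\le N\epsilon_\g$. Combining this with (i) yields
\[
v^*_\g(\alpha+\epsilon_\g,\beta+\epsilon_\g)-N\epsilon_\g\le f_\g(\hat\theta^\g_M)\le v^*_\g(\alpha-\epsilon_\g,\beta-\epsilon_\g)+2N\epsilon_\g,
\]
where the upper bound comes from comparing $f^M_\g(\hat\theta^\g_M)\le f^M_\g(\theta^*_{-})$ for a minimizer $\theta^*_-$ of the tightened problem.

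The remaining and main obstacle is a local Lipschitz bound $|v^*_\g(\alpha\pm\epsilon,\beta\pm\epsilon)-v^*_\g(\alpha,\beta)|\le L_v\epsilon$ for small $\epsilon$. Here I would invoke Assumption~\ref{ass:convergence-generalization}(iii) together with the standard stability theorem for parametric programs under MFCQ, as cited via \citet{still2018lectures}: MFCQ at every optimal solution, plus compactness of $C$, gives Lipschitz behavior of the optimal value under right-hand-side perturbations. Concretely, I would move a minimizer along the MFCQ direction $\xi$ to regain $\epsilon$-slack in the active constraints at cost $O(\epsilon)$ in $f_\g$. This needs $f_\g,g_0,g_a$ to be $C^1$ near optimal solutions, which is implicit in stating MFCQ, whereas the density bound alone only gives Lipschitz continuity. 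If that smoothness is unavailable, I would instead use a Clarke-gradient version of the argument. Substituting $\epsilon=\epsilon_\g(\eta,M)$ then gives the stated $\sqrt{\log M/M}$ rate, with the constants absorbing $N$, $L$ and $L_v$.
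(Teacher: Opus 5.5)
Your proposal is correct. At the top level it is the paper's argument: the paper proves this proposition only by noting that Lemmas~\ref{lem:finite-feasibility}--\ref{lem:Delta-net} carry over under Assumption~\ref{ass:convergence-generalization}(i), and Lemma~\ref{lem:optimal-value-lipshitz} (Lemma 6.2 of \citet{still2018lectures}) under Assumption~\ref{ass:convergence-generalization}(iii). It then reruns the proofs of Propositions~\ref{prop:feasible-set} and \ref{prop:objective-bound}.

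The difference is in the key technical step. The paper first obtains inclusions on the grid $\hat C(\Delta)$. It then transfers them off the grid using the oscillation bound of Lemma~\ref{lem:SAA-constraint-bound}, which needs a second Hoeffding/union-bound step for the empirical slab frequencies, together with the $\Delta$-net Lemma~\ref{lem:Delta-net}. It controls the objective through a five-term telescoping sum.

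You instead sandwich $\theta$ between grid points $\theta^-\le\theta\le\theta^+$ and use monotonicity of both the empirical and the true functions. This gives $\sup_{\theta\in C}|\hat g_{0,M}(\theta)-g_0(\theta)|\le\epsilon$ (and likewise for $g_a$ and $f_{\g}$) with a single concentration step. Both inclusions and the two-sided sandwich on $f_{\g}(\hat\theta^{\g}_M)$ then follow in one line each.

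Your route is shorter and uses only the marginal density bound, never differentiability. The paper's proof of Lemma~\ref{lem:lipshitz-continuity} goes through the mean value theorem, which Assumption~\ref{ass:convergence-generalization}(i) alone does not justify. So your remark that $C^1$ smoothness enters only implicitly through (iii) is well taken.

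Your VC aside can be pushed further. Open or closed lower orthants in $\mathbb{R}^K$ have VC dimension $K$. The VC inequality, with growth function at most $(M+1)^K$, therefore gives part (i) at the stated $\sqrt{(c_1\log(1/\eta)+c_2\log M+c_3)/M}$ rate with no density assumption at all.

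Finally, your explicit ``move along $\xi$'' sketch only yields the tightening direction $v^*_{\g}(\alpha-\epsilon,\beta-\epsilon)\le v^*_{\g}(\alpha,\beta)+L_v\epsilon$. The relaxation direction, which your lower bound needs, relies on minimizers of the relaxed problems accumulating at points of the optimal set where MFCQ holds; this is why MFCQ is imposed at every optimal solution. Since you, like the paper, cite the stability theorem for both directions, this is not a gap.
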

While Proposition~\ref{prop:convergence-generalization} is a strong guarantee under general conditions, we remark that Assumption~\ref{ass:convergence-generalization} (ii) is not trivial. Knowing the distributions of $G_0$ and $G_a$ often requires knowing the distribution of the outcome. Furthermore, just knowing the outcome distribution may still not be enough to sample from the test statistic distribution under the null and alternative hypothesis (as shown in Section~\ref{sec:t-tests}). However, in large samples, $G_0$ and $G_a$ typically converge to a normal distribution. In this case, all of our aforementioned results will hold asymptotically (with respect to $n$), but we omit this technicality. 

Lastly, we have one additional generalization, where we allow analysts to stop early not only when rejecting $H_0$ but also when accepting $H_0$. In this case, the optimization problem requires another vector of cutoff values $\theta_k$ to accept the null hypothesis. This new setting is often referred to as symmetric designs and allows early stopping in each of the $K$ group tests in support of $H_0$ \citep{GST_book}. Although this is a more complicated setting, we use the same SAA and MILP approach and have the same convergence guarantees as those listed in Sections~\ref{sec:z_test}-~\ref{sec:t-tests}. Therefore, for brevity we move the full formalization and results of this extension to Appendix~\ref{sec:futility}.

\section{Simulations}
\label{sec:sims}
\subsection{Optimal Sample Size Reduction}
\label{subsec:sim1}
To showcase our optimal S-MILP approach to group sequential hypothesis testing, we compare our method with three popularly used group sequential hypothesis tests: 1) Lan-DeMets, 2) Pocock, and 3) O'Brien-Fleming tests \citep{Lan_DeMets_GST, pocock_GST, obrien_fleming_GST}. All three methods uniformly control type-1 error at level $\alpha$ but each with different alpha-spending functions. As mentioned in Section~\ref{section:intro}, much of the group sequential hypothesis testing literature focuses on specifying a deterministic alpha-spending function that determines the $\theta$ cutoff. Formally, an alpha-spending function $\alpha(k)$ specifies at each time $k = 1, 2, \dots, K$ the cumulative type-1 error under the null hypothesis. For example, suppose $\alpha = 0.05$ and $K = 5$. A common Lan-DeMets alpha-spending function is to equally space out $\alpha$ into $K$ intervals, i.e., $\alpha(k= 1)_{\text{Lan-DeMets}} = 0.05/5 = 0.01$, i.e., by the first check the analyst has a cumulative type-1 error of 0.01, $ \alpha(k= 2)_{\text{Lan-DeMets}}  = 0.02, \dots, \alpha(k= 5)_{\text{Lan-DeMets}} = 0.05$, where by the final check $k = K$ one has spent all the ``alpha-budget'' $\alpha(k=K) = 0.05$. The three alpha-spending functions we compare are the following: 
\begin{equation}
\label{eq:alpha_spending}
\begin{aligned}
   \text{Lan-DeMets: } & \alpha(k)_{\text{Lan-DeMets}} = \alpha k/K, \\
   \text{Pocock: } & \alpha(k)_{\text{Pocock}} = \alpha\ln(1 + (e-1)k/K),\\
      \text{O'Brien-Fleming: } & \alpha(k)_{\text{O'Brien-Fleming}} =1 - \Phi(z_{1-\alpha}/\sqrt{k/K}). \\
\end{aligned}    
\end{equation}

For simplicity, we showcase our method in the base case under the $z$-test scenario analyzed in Section~\ref{sec:z_test}. To facilitate a fair comparison, we use the same $z$-test based test statistic $S_k$ for all methods and numerically compute $\theta_k$ values for each GST test given either by the alpha-spending function above or by our optimal S-MILP approach. We test the (one-sided) hypotheses in Equation~\eqref{equ:hypotheses}, where the true signal is $\mu_a = 0.5$ and $\sigma = 1$. We set the final sample $N = 100$ with equally divided buckets, e.g., for $K = 4$, $n_{1:1} = 25$ and $n_{1:2} = 50$. Given the $\theta$ cutoffs, for each method we calculate the expected sample size under the respective cutoffs. 
% Formally, we record the sample size when we first reject the null for each method. If no rejection is made by the final check $k = K$, then the final sample size $N$ is recorded. 
% The Monte-Carlo parameter is set at 1000 while the SAA parameter $M = 3000$. 
We set $\alpha = 0.05$ and ignore the type-2 error constraint (as competitive GST methods only control type-1 error) by setting a degenerate $\beta = 0.95$. For SAA, we set the number of sample paths to be $M = 2000$, and use $\alpha = 0.045$ as a conservative buffer against SAA error (see Corollary~\ref{prop:prob-feasibility}) and then validate the resulting boundaries for the true type-1 error. We also show the results for $K = 3, 5, 7$. 

% To further ensure the compliance to the error constraints, we replicate the above procedure for $100$ times, and then get the optimal $\hat\theta_M(0.045, 0.95)$ that meets the error constraints and gives the smallest expected sample size. \xz{I add the full details for the experiment setting here. Let me know if you think some details are not necessarily needed David. }

\begin{table}[!t]
\begin{center}
\begin{adjustbox}{max width=\textwidth}
\begin{tabular}{l|ccc|} 
& \multicolumn{3}{c|}{}  \\
Method & $K = 3$ & $K = 5$ & $K = 7$ \\ 
\hline
S-MILP: True ESS & 38.18 & 30.20 & 28.47 \\
S-MILP: Cumulative Type-1 Error & 0.05 & 0.05 & 0.05 \\
\hline
Lan-DeMets ESS & 41.36 & 34.75 & 32.46 \\
S-MILP Improvement Relative to Lan-DeMets & 7.69\% & 13.09\% & 12.29\% \\
\hline
Pocock ESS & 40.14 & 33.12 & 30.80 \\
S-MILP Improvement Relative to Pocock & 4.89\% & 8.80\% & 7.57\% \\
\hline
O'Brien-Fleming ESS & 50.16 & 45.22 & 42.19 \\
S-MILP Improvement Relative to O'Brien-Fleming & 23.89\% & 33.21\% & 32.52\% \\
\end{tabular}
\end{adjustbox}
\caption{Comparison of expected sample size (ESS) between the S-MILP model and other popular GST methods outlined in Equation~\eqref{eq:alpha_spending}. We set the desired type-1 error at $\alpha = 0.05$ and use $M = 2000$ for our S-MILP approach. Each $n_i$ is spaced out equally with final sample size $N = 100$. We set the signal $\mu_a = 0.5$ and use the $z$-test based statistic (Section~\ref{sec:z_test}) for all tests. Relative improvement is computed as $(\text{Benchmark ESS} - \text{S-MILP ESS}) / \text{Benchmark ESS}$.}
\label{tab:ess_comparison}
\end{center}
\end{table}

Table~\ref{tab:ess_comparison} shows the expected sample size of our approach (labeled as ``S-MILP'') compared to the three commonly used GST methods outlined in Equation~\eqref{eq:alpha_spending}. First, the second row shows that our S-MILP approach achieves the desired actual type-1 error of $\alpha = 0.05$ (which is evaluated as $\Pr(S_k\geq \hat\theta_M(\alpha, \beta), \text{ for some }k\in[K]|H_0)$ through analytic multivariate normal integration), allowing a fair comparison with the other methods. Our S-MILP approach also, as guaranteed by our optimization procedure, achieves the lowest expected sample size to rejection, i.e., our S-MILP is optimal, thus more efficient than other methods, in rejecting the null hypothesis as early as possible. For example, the last row of Table~\ref{tab:ess_comparison} shows our S-MILP approach is 33\% more efficient than the commonly used O'Brien-Fleming GST method when $K = 5$ or $K = 7$. The improvements relative to Lan-DeMets and Pocock are more modest but still enjoy an approximately 5-10\% reduction in expected sample size. 

\subsection{Alpha-Spending Budget Comparison}
Although Section~\ref{subsec:sim1} shows our S-MILP optimization approach to GST more efficiently rejects the null hypothesis or equivalently saves samples, it still leaves readers wondering how the optimal alpha-spending function obtained by our S-MILP approach compares to the popularly used ones outlined in Equation~\eqref{eq:alpha_spending}. As mentioned throughout this paper, the GST literature has focused on specifying different alpha-spending functions, each spending $\alpha$ differently over the $K$ intervals. We show how the optimization-based approach answers this long-standing debate.

\begin{figure}[t]
\begin{center}
\includegraphics[width=14cm]{"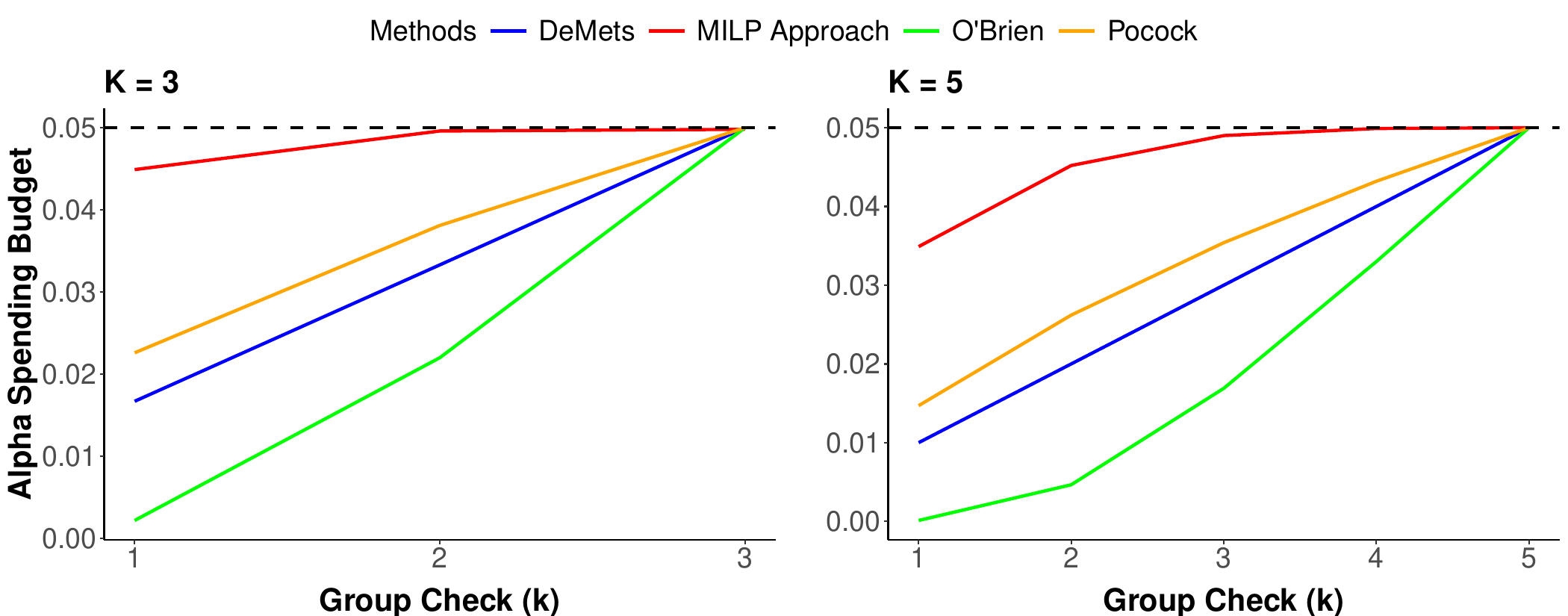"}
\caption{Plot of alpha-spending budgets for all methods. The simulation setting is the same as that in Table~\ref{tab:ess_comparison}. The red line denotes the alpha-spending budget of our S-MILP approach computed by the optimal $\theta$ cutoff values obtained from solving problem~\eqref{equ:known-sigma-milp}. The blue, orange, and green lines are deterministically determined by Equations~\eqref{eq:alpha_spending}, respectively.}
\label{fig:alpha_spending}
\end{center}
\end{figure}

% \begin{table}[!t]
% \begin{center}
% \begin{adjustbox}{max width=\textwidth}
% \begin{tabular}{l|ll|} 
% & \multicolumn{2}{c|}{Alpha Spending Budget ($\alpha = 0.05$)}  \\
% Different Methods & $K = 3$ & $K = 5$  \\ 
%  \hline
% S-MILP Optimization Approach & (0.0449, 0.0496, 0.0498) & (0.0349, 0.0452, 0.0490, 0.0499,	0.05) \\
% Lan-DeMets  & (0.0167, 0.0333, 0.0500) & (0.01, 0.02, 0.03, 0.04, 0.05)\\
% Pocock & (0.0226, 0.0381, 0.05) & (0.0147, 0.0262, 0.0354, 0.0432, 0.05)\\
% O'Brien-Fleming & (0.00219,	0.0220,	0.05) & (0.000118, 0.00465, 0.0169, 0.0330,	0.05)
% \end{tabular}
% \end{adjustbox}
% \caption{Same simulation setting as \xz{Do we put $K = 7$ here as well?}}
% \label{tab:sim2}
% \end{center}
% \end{table}

Figure~\ref{fig:alpha_spending} shows the alpha-spending budget of our S-MILP Optimization Approach compared to the three (deterministic) alpha-spending functions outlined in Equation~\eqref{eq:alpha_spending} under the same simulation setting as that in Section~\ref{subsec:sim1}. For brevity, we omit the $K = 7$ results. As a reminder, all typical GST approaches deterministically specify an alpha-spending function, while our approach is the only one that computes the alpha-spending function (equivalently the $\theta$ cutoffs) through an optimization procedure. In other words, the blue, orange, and green lines are deterministically (not computed or optimized over) determined by Equations~\eqref{eq:alpha_spending}, respectively. The red line is our S-MILP approach induced alpha spending budget computed through the optimal $\theta$ cutoffs from solving problem~\eqref{equ:known-sigma-milp}.

First, we note that Lan-DeMets (blue line) and Pocock's (orange line) alpha-spending budgets differ than that of O'Brien-Fleming's (green line) by more aggressively spending early than later, i.e., Lan-DeMets and Pocock are more likely to reject earlier than O'Brien-Fleming (while O'Brien-Fleming has a higher likelihood to reject later). On the other hand, our S-MILP optimization approach clearly spends more of the alpha-budget even earlier than Pocock and Lan-Demets. For example, at $K = 3$ our approach already spends $0.0449$ of the $\alpha = 0.05$ budget by only $k = 1$. On the other hand, Pocock (the more aggressive early spender) only spends half $0.0226$ of the $\alpha = 0.05$ budget. This pattern is repeated even at $K = 5$ as evidenced by the red line being visibly higher than all other methods. To summarize, our S-MILP optimization approach optimizes the expected sample size by trying to increase the likelihood of rejecting early as much as possible with a relatively (relative to other popularly used GST methods) aggressive early spending behavior. We believe this sheds critical insights into the GST literature and what alpha-spending functions are ``optimal'' to minimize expected sample size.

\section{Application to Medication-Targeted Alerts for Acute Kidney Injury (AKI)}
\label{sec:application}
Acute kidney injury (AKI) is common among hospitalized individuals and is associated with risky health outcomes such as morbidity and in some cases mortality when left untreated. 
Therefore, it becomes even more critical to determine if an AKI intervention is effective or ineffective as soon as possible for patient health. 

\citet{wilson2023randomized} is one such study from the National Institutes of Health that utilizes a randomized controlled clinical trial to study the effects of an AKI intervention. Specifically, the study aims to determine whether an automated clinical decision support system (equivalent to a heavily trained AI agent) can help reduce the progression of AKI among patients. In this study, patients are prescribed one or more potentially nephrotoxic medications. The intervention is a randomized real-time electronic health record pop-up alert that prompts the providers to consider discontinuing potentially nephrotoxic medications. The control (non-intervention) is the usual care group who were randomized to a silent alert. 

We focus on the subpopulation of patients with a proton pump inhibitor (PPI) as this was the subgroup of interest in the original paper that found some statistically significant effect \citep{wilson2023randomized}. Specifically, there are 3298 patients who received a PPI, of which 1644 patients are in the control group and 1654 are in the intervention group. The primary outcome is a composite health indicator that captures whether the patient has a progression of acute kidney injury, dialysis, or even death within 14 days. The authors found a 4\% reduction in negative health outcomes in the treatment group (27\%) versus the control group (31\%) with a $p$-value of 0.02 using all $N = 3298$ samples. We show the benefits of our method in this clinical trial by hypothetically rerunning the experiment ``sequentially'' and seeing if we can get the same statistically significant conclusion with fewer samples. 

For simplicity, when running the GST, we take the first $1600$ patients in control and the first $1600$ patients in treatment. We run the GST for $K = 8$ stages, with equal sample sizes. That is, $n_{1:1} = 400$ (200 from the treatment and 200 from control), $n_{1:2} = 800$ (400 from the treatment and 400 from the control), $\dots$, and $n_{1:8} = 3200$. We use our optimal S-MILP approach as well as the O'Brien-Fleming GST approach for comparison (For a two-sided test, the alpha-spending function of O'Brien-Fleming is $\alpha(k)_{\text{O'Brien-Fleming}} =2 - 2\Phi(z_{1-\alpha/2}/\sqrt{k/K})$). Since the outcome is binary, we use a two-proportion $z$-test with some modifications. First, we consider a Normal approximation of the binary outcomes, with a two-proportion z-test. Specifically, we test the following null and alternative, where $p_x$ and $p_y$ are the population average of the primary negative health outcome for the control and treatment, respectively: 
\begin{eqnarray*}
        H_0: p_x = p_y \quad \mbox{ and } \quad
        H_a: p_x \neq p_y. 
\end{eqnarray*}
Then, we use the classical two sample proportion $z$-test: 
$$
S_k = \frac{\hat p_{x,k} - \hat p_{y,k}}{\sqrt{2\hat p_{x,k}(1-\hat p_{x,k})/n_{1:k} + 2\hat p_{y,k}(1-\hat p_{y,k})/n_{1:k}}},
$$
where $\hat p_{x,k}$ and $\hat p_{y,k}$ are the sample mean for the control and treatment group calculated from samples up until stage $k$, respectively. 
Under $H_0$, $S = (S_k)^K_{k=1}$ is asymptotically distributed as $\mathcal{N}(0, \Sigma)$, where $\Sigma_{k,k} = 1$ for all $k\in[K]$ and $\Sigma_{i,j} = \sqrt{\min\{i,j\}/\max\{i,j\}}$ (since we split sample sizes equally among all stages). Under $H_a$, $S_k$ is asymptotically distributed as $\mathcal{N}(\mu_a(p_x, p_y), \Sigma)$, where the $k$-th element of $\mu_a(p_x, p_y)$ is $\sqrt{n_{1:k}}(p_x - p_y)/\sqrt{2p_x(1-p_x) + 2p_y(1-p_y)}$. As our SAA approach requires sampling $S_k$ under $H_a$, we specify both the values of $p_x$ and $p_y$ under our alternative. We do this by using the sample means estimated from all samples in the dataset. Finally, we reject $H_0$ in stage $k\in[K]$ if $|S_k|\geq \theta_k$, where $\theta_k$ is obtained from our S-MILP formulation or the O'Brien-Fleming $\alpha$-spending budget function. We set $\alpha = 0.05$ and also ignore the type-2 error constraint.

We perform two analyses that showcase our method. In the first approach, we assume the order in which the dataset was labeled by the initial authors was the sequence in which the patients arrived or were tested (\citet{wilson2023randomized} publicly shares the data through a clean CSV file). For this approach, we replicate the real experiment assuming the fixed ordering, i.e., at $K = 1$ we test our hypothesis on the first 200 patients that appear on rows 1-200 on both the treatment and control dataset. With this approach, our S-MILP approach stops by $K = 2$ while O'Brien-Fleming stops at $K = 3$, showing our method would have saved 400 samples over O'Brien-Fleming while reaching the same statistically significant conclusion. 

In our second approach, we aim to be robust to the sequence of the units. To do this, we randomly permute the patient order. Then, for each permutation we repeat our previous procedure. With $2000$ permutations, we calculate the average number of patients needed before making a statistically significant conclusion. Our S-MILP approach leads to an average of $2491$ samples before rejecting the null while O'Brien-Fleming's method requires an average of $2666$ samples before rejection. We remark that our approach still reaches the same statistically significant conclusion earlier than O'Brien-Fleming's approach by $175$ patients. Furthermore, the actual sample size of the study was 3298 patients. This shows how our method can reach the same statistically significant conclusion $807$ patients earlier than the original study, hopefully allowing these $807$ patients to all receive the desirable intervention as opposed to being kept in the experiment. 

\section{Concluding remarks}
\label{sec:conclusion}

This paper provides a general-purpose framework
for computing optimal group sequential designs.
Existing methods either specify rejection cutoffs
through deterministic alpha-spending functions
\citep{obrien_fleming_GST, Lan_DeMets_GST,
pocock_GST}, with no optimality guarantees, or
derive optimal cutoffs by reformulating the
design problem as a Bayes sequential decision
problem and solving it by dynamic programming
\citep{eales_jennison_1992,
hampson_jennison_2013}, which requires the user
to specify a prior, a loss function, and a
sampling cost, and limits applicability to
settings in which the running test statistic is
sufficient. Our S-MILP approach computes the
optimal cutoffs directly by solving the
constrained design problem, requiring only the
alternative at which power is desired---a
quantity any sample-size calculation already
needs. The framework accommodates $z$-tests,
$t$-tests, two-sample tests, one- and two-sided
tests, symmetric designs with futility, and a
wide class of non-normal settings within a single
unified formulation, with finite-sample
convergence guarantees for all variants.

The simulations in Section~\ref{sec:sims} reveal
a substantive empirical finding beyond the
efficiency comparisons. Across all settings we
examined, the S-MILP optimal design spends the
type-1 budget aggressively early---more so than
Pocock or Lan-DeMets, and far more so than
O'Brien-Fleming. This pattern speaks to a
long-standing debate in the GST literature on
which alpha-spending profiles are most efficient.
Our results suggest that, when the goal is
minimizing expected sample size to rejection,
front-loading the alpha-budget is preferable to
back-loading it. We view a structural
characterization of this finding---identifying
conditions under which aggressive early spending
provably dominates---as an important direction
for future theoretical work.

Two practical limitations deserve mention.
First, the method is computationally heavier than
classical alpha-spending designs. Our SAA step
requires $M$ simulated paths from the
test-statistic distribution under both the null
and the alternative, and the resulting MILP must
be solved over these $M$ samples. For the
clinical-trial application of
Section~\ref{sec:application}, with $K = 5$ and
$M \approx 2000$, the optimization took
approximately 30 minutes on a standard computer
using a commercial solver. This is a one-time
design-stage cost, paid before the trial begins,
and we expect it to decrease substantially with
parallelization and recent advances in MILP
solvers \citep{luedtke2010integer}.

Second, the method computes the design that is
optimal at a specific alternative $\mu_a$. If the
true effect differs substantially from $\mu_a$,
the optimality guarantee no longer holds; this is
a feature shared with all existing
optimal-design methods, including the Bayes-DP
approaches of \citet{eales_jennison_1992} and
successors. A formal robustness theory is left
to future work.

Several extensions are natural. The framework
extends in principle to adaptive designs in which
group sizes are chosen based on observed data, to
multi-arm trials with multiple treatment
comparisons, and to settings with delayed
responses where pipeline subjects must be
incorporated into the final analysis
\citep{hampson_jennison_2013}. Each extension
introduces additional constraint structure but
remains within the MILP paradigm. Computational
scaling to large $K$ is a related practical
question; current commercial solvers handle the
$K \leq 10$ regime well, but $K = 50$ or beyond
will require either algorithmic improvements or a
move to specialized decomposition methods
\citep{clautiaux2025last}.

\section*{Data Availability Statement}
All simulation code to generate figures and tables for Section~\ref{sec:sims} can be publicly shared. The application dataset is publicly available as open access in \citep{wilson2023randomized}.

\clearpage

\bibliographystyle{ormsv080}
\bibliography{Ref}

@article{eales_jennison_1992,
  author  = {Eales, J. D. and Jennison, C.},
  title   = {An improved method for deriving optimal
             one-sided group sequential tests},
  journal = {Biometrika},
  volume  = {79},
  number  = {1},
  pages   = {13--24},
  year    = {1992},
  doi     = {10.1093/biomet/79.1.13}
}

@article{hampson_jennison_2013,
  author  = {Hampson, L. V. and Jennison, C.},
  title   = {Group sequential tests for delayed
             responses (with discussion)},
  journal = {Journal of the Royal Statistical
             Society, Series B},
  volume  = {75},
  number  = {1},
  pages   = {3--54},
  year    = {2013},
  doi     = {10.1111/j.1467-9868.2012.01030.x}
}

@book{shapiro2021lectures,
  title={Lectures on stochastic programming: modeling and theory},
  author={Shapiro, Alexander and Dentcheva, Darinka and Ruszczynski, Andrzej},
  year={2021},
  publisher={SIAM}
}

@article{Lachin1981_ttest,
  title={Introduction to sample size determination and power analysis for clinical trials.},
  author={John M. Lachin},
  journal={Controlled clinical trials},
  year={1981},
  volume={2 2},
  pages={
          93-113
        }
}

@book{cohen1988_ttest,
  author = {Cohen, J.},
  publisher = {Lawrence Erlbaum Associates},
  title = {{Statistical Power Analysis for the Behavioral Sciences}},
  year = 1988
}

@article{wang2008sample,
  title={Sample average approximation of expected value constrained stochastic programs},
  author={Wang, Wei and Ahmed, Shabbir},
  journal={Operations Research Letters},
  volume={36},
  number={5},
  pages={515--519},
  year={2008},
  publisher={Elsevier}
}

@article{still2018lectures,
  title={Lectures on parametric optimization: An introduction},
  author={Still, Georg},
  journal={Optimization Online},
  pages={2},
  year={2018}
}

@article{mityagin2015zero,
  title={The zero set of a real analytic function},
  author={Mityagin, Boris},
  journal={arXiv preprint arXiv:1512.07276},
  year={2015}
}

@article{pagnoncelli2009sample,
  title={Sample average approximation method for chance constrained programming: theory and applications},
  author={Pagnoncelli, Bernardo K and Ahmed, Shabbir and Shapiro, Alexander},
  journal={Journal of optimization theory and applications},
  volume={142},
  number={2},
  pages={399--416},
  year={2009},
  publisher={Springer}
}

@inproceedings{michael_paper2,
author = {Lindon, Michael and Sanden, Chris and Shirikian, Vach\'{e}},
title = {Rapid Regression Detection in Software Deployments through Sequential Testing},
year = {2022},
isbn = {9781450393850},
publisher = {Association for Computing Machinery},
address = {New York, NY, USA},
url = {https://doi.org/10.1145/3534678.3539099},
doi = {10.1145/3534678.3539099},
pages = {3336–3346},
numpages = {11},
location = {Washington DC, USA},
series = {KDD '22}
}

@article{mypaper,
  title={Using Machine Learning to Test Causal Hypotheses in Conjoint Analysis},
  author={Ham, Dae Woong and Imai, Kosuke and Janson, Lucas},
  doi = {10.48550/arXiv.2201.08343},
  year={2022}
}

@misc{ian_bandit,
  doi = {10.48550/ARXIV.2210.10768},
  
  url = {https://arxiv.org/abs/2210.10768},
  
  author = {Waudby-Smith, Ian and Wu, Lili and Ramdas, Aaditya and Karampatziakis, Nikos and Mineiro, Paul},
  
  title = {Anytime-valid off-policy inference for contextual bandits},
  
  publisher = {arXiv},
  
  year = {2022},
  
  copyright = {arXiv.org perpetual, non-exclusive license}
}

@inproceedings{
michael_paper,
title={Anytime-Valid Inference For Multinomial Count Data},
author={Michael Lindon and Alan Malek},
booktitle={Advances in Neural Information Processing Systems},
editor={Alice H. Oh and Alekh Agarwal and Danielle Belgrave and Kyunghyun Cho},
year={2022},
url={https://openreview.net/forum?id=a4zg0jiuVi}
}

@inproceedings{GST_book,
  title={Group Sequential and Confirmatory Adaptive Designs in Clinical Trials},
  author={Gernot Wassmer and Werner Brannath},
  year={2016}
}

@article{Ramesh_anytime,
author = {Johari, Ramesh and Pekelis, Leonid and Walsh, David},
year = {2015},
month = {12},
pages = {},
title = {Always Valid Inference: Bringing Sequential Analysis to A/B Testing}
}

@article{howard_nonasymp,
  title={Time-uniform, nonparametric, nonasymptotic confidence sequences},
  author={Steven R. Howard and Aaditya Ramdas and Jon D. McAuliffe and Jasjeet S. Sekhon},
  journal={The Annals of Statistics},
  year={2020}
}

@article{Lan_DeMets_GST,
 ISSN = {00063444},
 URL = {http://www.jstor.org/stable/2336502},
 author = {K. K. Gordon Lan and David L. DeMets},
 journal = {Biometrika},
 number = {3},
 pages = {659--663},
 publisher = {[Oxford University Press, Biometrika Trust]},
 title = {Discrete Sequential Boundaries for Clinical Trials},
 urldate = {2026-01-22},
 volume = {70},
 year = {1983}
}

@article{obrien_fleming_GST,
 ISSN = {0006341X, 15410420},
 URL = {http://www.jstor.org/stable/2530245},
 author = {Peter C. O'Brien and Thomas R. Fleming},
 journal = {Biometrics},
 number = {3},
 pages = {549--556},
 publisher = {[Wiley, International Biometric Society]},
 title = {A Multiple Testing Procedure for Clinical Trials},
 urldate = {2026-01-22},
 volume = {35},
 year = {1979}
}

@article{pocock_GST,
 ISSN = {00063444, 14643510},
 URL = {http://www.jstor.org/stable/2335684},
 author = {Stuart J. Pocock},
 journal = {Biometrika},
 number = {2},
 pages = {191--199},
 publisher = {[Oxford University Press, Biometrika Trust]},
 title = {Group Sequential Methods in the Design and Analysis of Clinical Trials},
 urldate = {2026-01-22},
 volume = {64},
 year = {1977}
}

@article{JRSSB_GST,
    author = {Silva, Ivair R. and Kulldorff, Martin and Katherine Yih, W.},
    title = {Optimal Alpha Spending for Sequential Analysis with Binomial Data},
    journal = {Journal of the Royal Statistical Society Series B: Statistical Methodology},
    volume = {82},
    number = {4},
    pages = {1141-1164},
    year = {2020},
    month = {06},
    issn = {1369-7412},
    doi = {10.1111/rssb.12379},
    url = {https://doi.org/10.1111/rssb.12379},
    eprint = {https://academic.oup.com/jrsssb/article-pdf/82/4/1141/49323656/jrsssb_82_4_1141.pdf},
}

@misc{ham_DB,
      title={Design-Based Confidence Sequences: A General Approach to Risk Mitigation in Online Experimentation}, 
      author={Dae Woong Ham and Iavor Bojinov and Michael Lindon and Martin Tingley},
      year={2023},
      eprint={2210.08639},
      archivePrefix={arXiv},
      primaryClass={stat.ME},
      url={https://arxiv.org/abs/2210.08639}, 
}

@misc{spotify_blog1,
      title={Choosing a Sequential Testing Framework — Comparisons and Discussions}, 
      author={Marten Schultzberg and Sebastian Ankargren},
      year={2023},
      url={https://engineering.atspotify.com/2023/03/choosing-sequential-testing-framework-comparisons-and-discussions}, 
}

@book{clinical_trials_GST,
title = "Group sequential tests with applications to clinical trials",
author = "Christopher Jennison and Turnbull, \{Bruce W.\}",
year = "1999",
month = sep,
day = "15",
language = "English",
isbn = "9780849303166",
series = "Chapman \&amp; Hall/CRC Interdisciplinary Statistics",
publisher = "Chapman \& Hall",
address = "UK United Kingdom",
}

@misc{gurobi,
  author = {{Gurobi Optimization, LLC}},
  title = {{Gurobi Optimizer Reference Manual}},
  year = 2026,
  url = "https://www.gurobi.com"
}

@misc{cplex,
  author       = {{IBM}},
  title        = {{IBM ILOG CPLEX Optimization Studio}},
  year         = {2025},
  howpublished = {\url{https://www.ibm.com/products/ilog-cplex-optimization-studio}},
  note         = {Version 22.1}
}

@article{laurent2000adaptive,
  title={Adaptive estimation of a quadratic functional by model selection},
  author={Laurent, Beatrice and Massart, Pascal},
  journal={Annals of statistics},
  pages={1302--1338},
  year={2000},
  publisher={JSTOR}
}

@book{billingsley2013convergence,
  title={Convergence of probability measures},
  author={Billingsley, Patrick},
  year={2013},
  publisher={John Wiley \& Sons}
}

@article{walker1968note,
  title={A note on the asymptotic distribution of sample quantiles},
  author={Walker, AM},
  journal={Journal of the Royal Statistical Society Series B: Statistical Methodology},
  volume={30},
  number={3},
  pages={570--575},
  year={1968},
  publisher={Oxford University Press}
}

@article{wilson2023randomized,
  title={A randomized clinical trial assessing the effect of automated medication-targeted alerts on acute kidney injury outcomes},
  author={Wilson, F Perry and Yamamoto, Yu and Martin, Melissa and Coronel-Moreno, Claudia and Li, Fan and Cheng, Chao and Aklilu, Abinet and Ghazi, Lama and Greenberg, Jason H and Latham, Stephen and others},
  journal={Nature communications},
  volume={14},
  number={1},
  pages={2826},
  year={2023},
  publisher={Nature Publishing Group UK London}
}

@article{clautiaux2025last,
  title={Last fifty years of integer linear programming: A focus on recent practical advances},
  author={Clautiaux, Fran{\c{c}}ois and Ljubi{\'c}, Ivana},
  journal={European Journal of Operational Research},
  volume={324},
  number={3},
  pages={707--731},
  year={2025},
  publisher={Elsevier}
}

@article{luedtke2010integer,
  title={An integer programming approach for linear programs with probabilistic constraints},
  author={Luedtke, James and Ahmed, Shabbir and Nemhauser, George L},
  journal={Mathematical programming},
  volume={122},
  number={2},
  pages={247--272},
  year={2010},
  publisher={Springer}
}

\clearpage 

\begin{center}
{\large\bf SUPPLEMENTARY MATERIAL}
\end{center}

\begin{appendices}

\section{Additional Materials for Section~\ref{sec:milp-reformulation}}\label{sec-appendix:MILP-reformulation}
\subsection{Proof of Proposition~\ref{prop:milp-equivalence}}
\begin{proof}[Proof of Proposition~\ref{prop:milp-equivalence}]
Let $v_{\mathrm{ip}}$ be the optimal value of \eqref{equ:known-sigma-milp} and let $v_{\mathrm{saa}}$ be the optimal value of \eqref{equ:SAA}. We first show $v_{\mathrm{ip}}\geq v_{\mathrm{saa}}$ and then $v_{\mathrm{saa}}\geq v_{\mathrm{ip}}$.

    Let $\hat\theta_k, \hat w^m, \hat\rho^m_k, \hat\tau^m_k$ for $k\in[K], m\in[M]$ be an optimal solution to \eqref{equ:known-sigma-milp}. We show that $\hat\theta_k$ for $k\in[K]$ is a feasible solution to \eqref{equ:SAA} and thus $v_{\mathrm{ip}}\geq v_{\mathrm{saa}}$.
    According to \eqref{equ:known-sigma-milp-c1}, we have $\hat w^m\geq 1-\I(S_k^m - \hat\theta_k\leq 0, \forall k\in[K])$. According to \eqref{equ:known-sigma-milp-c2}, $1-\alpha\leq (1/M)\sum_{m\in[M]}(1-\hat w^m)\leq (1/M)\sum_{m\in[M]}\I(S_k^m - \hat \theta_k\leq 0, \forall k\in[K])$. Thus $(\hat\theta_k)^K_{k=1}$ satisfies \eqref{equ:saa-c1}. \eqref{equ:known-sigma-milp-c3} implies $\hat\rho^m_k\geq \I(S^m_{a,k}<\hat\theta_k)$ for all $k\in[K]$ and \eqref{equ:known-sigma-milp-c4} implies $\hat\tau^m_k\geq \prod^k_{i=1}\I(S^m_{a,i}<\hat\theta_i) = \I(S^m_{a,i}<\hat\theta_i,\forall i\in[k])$. According to \eqref{equ:known-sigma-milp-c5}, $\beta\geq (1/M)\sum^M_{m=1}\hat\tau^m_K\geq (1/M)\sum^M_{m=1}\I(S^m_{a,k}< \hat\theta_k, \forall k\in[K])$. Thus $(\hat\theta_k)^K_{k=1}$ is feasible to \eqref{equ:saa-c2}.  
    Now according to \eqref{equ:known-sigma-milp-ob}, $v_{\mathrm{ip}} = n_1 + (1/M)\sum^M_{m=1}\sum^K_{k=2}n_k\hat\tau^m_{k-1}\geq n_1+(1/M)\sum^{M}_{m=1}\sum^K_{k=2}n_k\I(S^m_{a,i}<\hat\theta_i,\forall i\in[k-1])\geq v_{\mathrm{saa}}$, where the last inequality follows since $(\hat\theta_k)^K_{k=1}$ is feasible to \eqref{equ:SAA}.

    On the other hand, let $\tilde\theta_k$ for $k\in[K]$ be an optimal solution to \eqref{equ:SAA}. Then, let $\tilde w^m = 1-\I(S_k^m - \tilde\theta_k\leq 0, \forall k\in[K])$, $\tilde\rho^m_k = \I(S^m_{a,k}<\tilde\theta_k)$ and $\tilde\tau^m_k = \I(S^m_{a,i}<\tilde\theta_i ,\forall i\in[k])$ for all $m\in[M]$ and $k\in[K]$. Then \eqref{equ:known-sigma-milp-c1}, \eqref{equ:known-sigma-milp-c3} and \eqref{equ:known-sigma-milp-c4} are feasible. \eqref{equ:known-sigma-milp-c2} is feasible according to \eqref{equ:saa-c1}, and \eqref{equ:known-sigma-milp-c5} is feasible according to \eqref{equ:saa-c2}. As a result, $\tilde\theta_k, \tilde w^m, \tilde\rho^m_k, \tilde\tau^m_k$ for $k\in[K], m\in[M]$ is feasible to \eqref{equ:known-sigma-milp}. Then $v_{\mathrm{ip}} \leq n_1+(1/M)\sum^M_{m=1}\sum^K_{k=2}n_k\tilde\tau^m_{k-1} = n_1+(1/M)\sum^M_{m=1}\sum^K_{k=2}n_k\I(S^m_{a,i}<\tilde\theta_i,\forall i\in[k-1]) = v_{\mathrm{saa}}$. 

\end{proof}
% \subsection{Proof of Lemma~\ref{lem:compactness}}
% \begin{proof}[Proof of Lemma~\ref{lem:compactness}]
%     It suffices to show that the feasible set $\Theta$ of problem~\eqref{equ:true} is bounded.  
%     Let $\hat\theta = (\hat\theta_k)^K_{k=1}\in\Theta$. 
%     Let $z_{1-\alpha}$ be the $1-\alpha$ quantile of a standard Normal distribution. 
%     According to \eqref{equ:true-c1}, we must have $\hat\theta_k\geq z_{1-\alpha}$ for all $k\in[K]$, since 
%     $$
%     1-\alpha\leq \Pr(Z_k\leq \hat\theta_k, \,\forall k\in[K])\leq \Pr(Z_k\leq \hat\theta_k), \,\forall k\in[K]. 
%     $$
%     On the other hand, letting $\bar\theta_k = ax\{\theta_k: \beta\geq \Pr(Z_{k'}\leq z_{1-\alpha} - \sqrt{n_{1:k'}}\delta, \,\forall k'\neq k, \text{ and } Z_k\leq \hat\theta_k - \sqrt{n_{1:k}}\delta)\}$ for all $k\in[K]$, we have 
%     $$
%     \beta\geq \Pr(Z_k\leq \theta^*_k - \sqrt{n_{1:k}}\delta, \,\forall k\in[K]) \geq \Pr(Z_{k'}\leq z_{1-\alpha} - \sqrt{n_{1:k'}}\delta, \,\forall k'\neq k, \text{ and } Z_k\leq \theta^*_k - \sqrt{n_{1:k}}\delta), \forall k\in[K],  
%     $$
%     where the second inequality follows since we have shown that $\theta^*_k\geq z_{1-\alpha}$ for each $k\in[K]$. Thus we must also have $\theta^*_k\leq \bar\theta_k$. 

%     Arguments in (ii) follow straightforwardly, as we must have $|f(\theta)|\leq n_{1:K}$ for all $\theta\inathbb{R}^K$, and $Z = (Z_k)^K_{k=1}$ has a multivariate normal distribution.
% \end{proof}

\section{Additional Materials for Section~\ref{sec:convergence-main}}\label{appendix-sec:main-sample-convergence}
Recall that $v^*(\alpha, \beta)$ and $\theta^*(\alpha, \beta) = (\theta_k^*(\alpha, \beta))^K_{k=1}$ are the optimal value and any optimal solution of problem~\eqref{equ:main}, respectively. $\hat v_M(\alpha, \beta)$ and $ \hat\theta_M(\alpha, \beta) = (\hat\theta_{M,k}(\alpha, \beta))^K_{k=1}$ are the optimal value and any optimal solution of the SAA problem~\eqref{equ:SAA}, respectively. $\Theta(\alpha, \beta)$ is the feasible region of true problem \eqref{equ:main}, and $\Theta_M(\alpha, \beta)$ is the feasible region of the SAA problem \eqref{equ:SAA}. $S(\alpha, \beta)$ and $S_M(\alpha, \beta)$ are the sets of optimal solutions to the true problem \eqref{equ:main} and the SAA problem \eqref{equ:SAA} respectively. $f(\theta) = n_1 + \sum^K_{k=2}n_k\Pr(S_i \leq \theta_i, \forall i\in[k-1]|H_a)$ is the objective, $g_0(\theta) = \Pr(S_k\leq \theta_k, \,\forall k\in[K]|H_0)$ is the LHS of the first constraint and $g_a(\theta) = \Pr(S_k\leq \theta_k, \,\forall k\in[K]|H_a)$ is the LHS of the second constraint of \eqref{equ:main}, respectively. 

Additionally, in this section we let $\hat f_M(\theta) = n_1 + (1/M)\sum^M_{m=1}\sum^K_{k=2}n_k\I(S^m_{a,i} < \theta_i, \forall i\in[k-1])$ be the objective of the SAA problem \eqref{equ:SAA}, $\hat g_{0,M}(\theta) = (1/M)\sum^M_{m=1}\I(S^m_k\leq \theta_k, \,\forall k\in[K])$ and $\hat g_{a,M}(\theta) = (1/M)\sum^M_{m=1}\I(S^m_{a,k}<\theta_k, \,\forall k\in[K])$ be the LHS of the first constraint and second constraint of SAA problem \eqref{equ:SAA}, respectively.

\subsection{Proof of Lemma~\ref{lem:SAA-feasibility}}

\begin{proof}[Proof of Lemma~\ref{lem:SAA-feasibility}]
    As $\Theta(\alpha - \bar\epsilon, \beta -\bar\epsilon)\neq\emptyset$, we let $\hat\theta$ be arbitrary such that
    $\hat\theta\in \Theta(\alpha - \bar\epsilon, \beta -\bar\epsilon)$. Then,
    \begin{equation*}
        \begin{split}
    &\Pr\big(\Theta_M(\alpha, \beta)\neq\emptyset\big)\\
    \geq &\Pr\big(\hat\theta\in \Theta_M(\alpha, \beta)\big)\\
    = & \Pr\big(\hat g_{0,M}(\hat\theta)\geq 1-\alpha \text{ and }\hat g_{a, M}(\hat\theta)\leq \beta\big)\\
    \geq & \Pr\big(\hat g_{0,M}(\hat\theta)\geq g_0(\hat\theta)-\bar\epsilon \text{ and }\hat g_{a,M}(\hat\theta)\leq g_a(\hat\theta)+\bar\epsilon\big)\\
    \geq & 1- \Pr\big(\hat g_{0,M}(\hat\theta)\leq g_0(\hat\theta)-\bar\epsilon\big) - \Pr\big(\hat g_{a, M}(\hat\theta)\geq g_a(\hat\theta)+\bar\epsilon\big)\\
    \geq & 1-2\exp(-2M\bar\epsilon^2), 
        \end{split}
    \end{equation*}
\sloppy 
where the last inequality follows from Hoeffding's inequality. The argument then follows by letting $2\exp(-2M\bar\epsilon^2) = \eta$. 
\end{proof}
\subsection{Proofs of Proposition~\ref{prop:feasible-set} and Corollary~\ref{prop:prob-feasibility}}

Let $\hat C(\Delta) = \{\underline{\theta}, \underline{\theta} + \Delta, \underline{\theta} + 2\Delta, \cdots, \bar\theta\}^K$ be a discretization of $C$. Then we have $|\hat C(\Delta)|\leq \ceil{(\bar\theta - \underline{\theta})/\Delta}^K$. By construction of $\hat C(\Delta)$, for any $\hat\theta\in C$, there exists some $\tilde\theta\in \hat C(\Delta)$ such that $\Vert \tilde\theta - \hat\theta\Vert_\infty\leq \Delta$.

Lemma~\ref{lem:finite-feasibility} derives a high probability bound on the set of all feasible solutions of \eqref{equ:main} and \eqref{equ:SAA} restricted to $\hat C(\Delta)$. Let $\epsilon_1(\eta, M) = \sqrt{\log(4|\hat C(\Delta)|/\eta)/(2M)}$. 
\begin{lemma}\label{lem:finite-feasibility}
    With probability at least $1 - \eta$, we have $\Theta(\alpha - \epsilon_1(\eta, M), \beta -\epsilon_1(\eta, M))\cap \hat C(\Delta)\subseteq \Theta_M(\alpha, \beta)\cap \hat C(\Delta)\subseteq \Theta(\alpha + \epsilon_1(\eta, M), \beta+ \epsilon_1(\eta, M))\cap \hat C(\Delta)$. 
\end{lemma}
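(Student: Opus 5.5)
The plan is a union bound over the finite grid $\hat C(\Delta)$, combined with Hoeffding's inequality at each grid point. For each fixed $\tilde\theta\in\hat C(\Delta)$, both $\hat g_{0,M}(\tilde\theta)$ and $\hat g_{a,M}(\tilde\theta)$ are averages of $M$ i.i.d.\ indicators with means $g_0(\tilde\theta)$ and $g_a(\tilde\theta)$, respectively. The two-sided Hoeffding inequality gives
\[
\Pr\big(|\hat g_{0,M}(\tilde\theta)-g_0(\tilde\theta)|>\epsilon\big)\leq 2\exp(-2M\epsilon^2),
\]
and the same bound holds for $\hat g_{a,M}$.

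I would take a union bound over the $|\hat C(\Delta)|$ grid points and the two constraint functions. This gives $4|\hat C(\Delta)|$ one-sided, or equivalently $2|\hat C(\Delta)|$ two-sided, deviation events, each of probability at most $2\exp(-2M\epsilon^2)$ or the corresponding one-sided bound. Setting $\epsilon=\epsilon_1(\eta,M)=\sqrt{\log(4|\hat C(\Delta)|/\eta)/(2M)}$ makes the total failure probability at most $4|\hat C(\Delta)|\exp(-2M\epsilon_1^2)=\eta$. On the complementary event $E$, we have $|\hat g_{0,M}(\tilde\theta)-g_0(\tilde\theta)|\leq\epsilon_1$ and $|\hat g_{a,M}(\tilde\theta)-g_a(\tilde\theta)|\leq\epsilon_1$ simultaneously for every $\tilde\theta\in\hat C(\Delta)$.

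Both inclusions are then deterministic consequences of $E$. For the left inclusion, take $\tilde\theta\in\Theta(\alpha-\epsilon_1,\beta-\epsilon_1)\cap\hat C(\Delta)$, so $g_0(\tilde\theta)\geq1-\alpha+\epsilon_1$ and $g_a(\tilde\theta)\leq\beta-\epsilon_1$. On $E$ this yields $\hat g_{0,M}(\tilde\theta)\geq1-\alpha$ and $\hat g_{a,M}(\tilde\theta)\leq\beta$, hence $\tilde\theta\in\Theta_M(\alpha,\beta)$. For the right inclusion, take $\tilde\theta\in\Theta_M(\alpha,\beta)\cap\hat C(\Delta)$. Then $g_0(\tilde\theta)\geq\hat g_{0,M}(\tilde\theta)-\epsilon_1\geq1-(\alpha+\epsilon_1)$ and $g_a(\tilde\theta)\leq\beta+\epsilon_1$.

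No step is a real obstacle. The only care needed is bookkeeping of the strict versus non-strict inequalities: $g_a$ uses $<$, which matches the indicators $\I(S^m_{a,k}<\theta_k)$, so Hoeffding applies to exactly the right mean. The other point is counting the events so that the constant $4$ inside the log matches the stated $\epsilon_1$.
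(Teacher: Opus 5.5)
Your proof is correct and is essentially the paper's argument. The paper cites Proposition~1 of Wang and Ahmed (2008). That result is exactly a per-point Hoeffding (sub-Gaussian, variance proxy at most $1/4$) tail bound, union-bounded over the finite grid $\hat C(\Delta)$ and the two constraints, giving $1-4|\hat C(\Delta)|\exp(-2M\epsilon^2)$ and the same choice of $\epsilon_1(\eta,M)$. You have carried out that cited result by hand, and your bookkeeping (four one-sided events per grid point, the strict inequality in $g_a$) is right.
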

\begin{proof}[Proof of Lemma~\ref{lem:finite-feasibility}]
    According to Proposition 1 of \citet{wang2008sample}, 
    \begin{align*}
        &\Pr\bigg(\Theta(\alpha - \epsilon, \beta -\epsilon)\cap \hat C(\Delta)\subseteq \Theta_M(\alpha, \beta)\cap \hat C(\Delta)\subseteq \Theta(\alpha + \epsilon, \beta+ \epsilon)\cap \hat C(\Delta)\bigg)\\
    \geq& 1-2|\hat C(\Delta)|\exp(-M\epsilon^2/2\sigma_{g_0}^2) - 2|\hat C(\Delta)|\exp(-M\epsilon^2/2\sigma_{g_a}^2),
    \end{align*}
    where $\sigma_{g_0}^2 = \max_{\theta\in C}\text{Var}(\I(S^m_k\leq \theta_k, \forall k\in[K]) - g_0(\theta))\leq 1/4$ and $\sigma_{g_a}^2 = \max_{\theta\in C}\text{Var}(\I(S^m_{a,k}< \theta_k, \forall k\in[K]) - g_a(\theta))\leq 1/4$. Thus we have
    $$
    \Pr\bigg(\Theta(\alpha - \epsilon, \beta -\epsilon)\cap \hat C(\Delta)\subseteq \Theta_M(\alpha, \beta)\cap \hat C(\Delta)\subseteq \Theta(\alpha + \epsilon, \beta+ \epsilon)\cap \hat C(\Delta)\bigg)
    \geq 1- 4|\hat C(\Delta)|\exp(-2M\epsilon^2).
    $$
By letting $4|\hat C(\Delta)|\exp(-2M\epsilon_1(\eta, M)^2) = \eta$, we have $\epsilon_1(\eta, M) = \sqrt{\log(4|\hat C(\Delta)|/\eta)/(2M)}$. 
\end{proof}

\begin{lemma}\label{lem:lipshitz-continuity}
$f(\theta), g_0(\theta)$ and $g_a(\theta)$ are increasing in $\theta\in C$. In addition, 
    there exist positive constants $l_f, l_{g_0}$ and $l_{g_a}$ such that the following holds:
    \begin{align*}
        |f(\hat\theta) - f(\tilde\theta)|\leq l_f\Vert\hat\theta - \tilde\theta\Vert_1, \forall \hat\theta, \tilde\theta\in C,\\
        |g_0(\hat\theta) - g_0(\tilde\theta)|\leq l_{g_0}\Vert\hat\theta - \tilde\theta\Vert_1, \forall \hat\theta, \tilde\theta\in C,\\
        |g_{a}(\hat\theta) - g_a(\tilde\theta)|\leq l_{g_a}\Vert\hat\theta - \tilde\theta\Vert_1, \forall \hat\theta, \tilde\theta\in C. 
    \end{align*}
\end{lemma}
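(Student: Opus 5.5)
Monotonicity is immediate and comes first. Each of $f$, $g_0$, $g_a$ is a nonnegative combination of probabilities of orthant events $\{S_i<\theta_i,\ \forall i\in I\}$ or $\{S_i\le\theta_i,\ \forall i\in I\}$ for index sets $I\subseteq[K]$. Such an event only grows when any coordinate $\theta_i$ increases. Hence each function is nondecreasing in every coordinate; this is the sense of ``increasing'' used here.

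For the Lipschitz bounds, I will prove one generic inequality for a $K$-dimensional random vector $S$ whose one-dimensional marginals $S_i$ have densities $p_i$ bounded on $[\underline\theta,\bar\theta]$. The inequality is: for any $I\subseteq[K]$ and $\hat\theta,\tilde\theta\in C$,
\[
\bigl|\Pr(S_i\le\hat\theta_i,\forall i\in I)-\Pr(S_i\le\tilde\theta_i,\forall i\in I)\bigr|\le\sum_{i\in I}\Pr\bigl(S_i\in[\min(\hat\theta_i,\tilde\theta_i),\max(\hat\theta_i,\tilde\theta_i)]\bigr).
\]
To prove it, I change one coordinate at a time, a telescoping over $i\in I$. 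At each step the two events differ only on the set where $S_i$ lies between $\hat\theta_i$ and $\tilde\theta_i$, so the step changes the probability by at most the mass of that set. The same bound holds with strict inequalities, since the marginals are continuous and boundary points carry zero mass. Each term on the right is then at most $\bar p_i|\hat\theta_i-\tilde\theta_i|$, where $\bar p_i=\sup_{x\in[\underline\theta,\bar\theta]}p_i(x)$.

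In the $z$-test setting each $S_k$ is a unit-variance normal, possibly with a mean shift under $H_a$. In the two-sided case $S_k=|Z|$, whose density is bounded by $2\phi(0)$. So every $\bar p_i\le 2/\sqrt{2\pi}$ under both $H_0$ and $H_a$. This gives $l_{g_0}=l_{g_a}=2/\sqrt{2\pi}$ with respect to $\|\cdot\|_1$. For $f$, each term $n_k\Pr(S_i<\theta_i,\forall i\in[k-1]\mid H_a)$ is $n_k$-Lipschitz with that same constant. Summing over $k$ gives $l_f=(2/\sqrt{2\pi})\sum_{k=2}^K n_k\le 2N/\sqrt{2\pi}$.

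The only real point requiring care is the telescoping step. The joint law of $S$ may be degenerate: the covariance $\Sigma$ is nonsingular here, but the argument should not depend on that. This is why I bound each step by a marginal probability rather than by anything involving the joint density. Beyond that, the remaining work is routine verification that the marginal densities are bounded uniformly over $C$ in every test variant.
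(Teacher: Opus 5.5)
Your proof is correct, but it takes a different route from the paper's. The paper leaves monotonicity implicit and gets the Lipschitz bounds from smoothness. It sets $l_h=\max_{k\in[K],\,\theta\in C}|\partial h(\theta)/\partial\theta_k|$ for $h\in\{f,g_0,g_a\}$, which is finite because the partials are continuous on the compact box $C$. It then applies the multivariate mean value theorem on the segment joining $\tilde\theta$ and $\hat\theta$, so $|h(\hat\theta)-h(\tilde\theta)|=|\nabla h(\theta_0)\cdot(\hat\theta-\tilde\theta)|\le l_h\|\hat\theta-\tilde\theta\|_1$.

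You instead telescope one coordinate at a time and bound each step by the probability that $S_i$ falls between $\hat\theta_i$ and $\tilde\theta_i$. This is the same band device the paper uses for the empirical functions in Lemma~\ref{lem:SAA-constraint-bound}.

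The paper's version is a one-liner, and its constants (the suprema of the partials) are sharper than yours. However, it tacitly needs $f,g_0,g_a$ to be $C^1$ on all of $C$. That relies on a nondegenerate joint law and can fail at kinks: for the two-sided statistic $S_k=|\tilde S_k|$ with $\underline\theta<0$, $g_0$ has a kink across $\theta_k=0$.

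Your argument needs only bounded marginal densities on $[\underline\theta,\bar\theta]$ and yields explicit constants. It also transfers verbatim to the $t$-test and to general statistics under Assumption~\ref{ass:convergence-generalization}(i), which is exactly the extension the paper asserts without proof.

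The two routes ultimately rest on the same estimate. When the paper bounds the partials in the $t$-test analog, it uses $\partial g/\partial\theta_k=p_k(\theta_k)\Pr(S_{-k}\le\theta_{-k}\mid S_k=\theta_k)\le p_k(\theta_k)$. That is the marginal density bound that drives your telescoping.
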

\begin{proof}[Proof of Lemma~\ref{lem:lipshitz-continuity}]
    Let $l_f = \max_{k\in[K], \hat\theta\in C}|\partial f(\hat\theta)/\partial \theta_k|$, $l_{g_0} = \max_{k\in[K], \hat\theta\in C}|\partial g_0(\hat\theta)/\partial \theta_k|$, and $l_{g_a} = \max_{k\in[K], \hat\theta\in C}|\partial g_a(\hat\theta)/\partial \theta_k|$. Then we have $l_f, l_{g_0}, l_{g_a}<\infty$. The result follows by applying the multivariate mean value theorem, which says that for a continuous and differentiable function $h: C \to \mathbb{R}$, there exists a point $\theta_0\in C$ such that $h(\hat\theta) - h(\tilde\theta) = \nabla h(\theta_0)\cdot (\hat\theta - \tilde\theta)$, and so 
    $|h(\hat\theta) - h(\tilde\theta)| \leq \max_{k\in[K], \theta_0\in C}|\partial h(\theta_0)/\partial \theta_k| \Vert\hat\theta - \tilde\theta\Vert_1$. 
    
\end{proof}

\begin{lemma}\label{lem:SAA-constraint-bound}
There exist constants $c_{g_0}, c_{g_a}, c_f <\infty$ such that the following arguments hold:

(i) With probability at least $1-\eta$, 
\begin{align*}
    \max\big\{\hat g_{0,M}(\tilde\theta) - \min_{\theta: \Vert\theta - \tilde\theta\Vert_\infty\leq\Delta}\hat g_{0,M}(\theta), 
    \max_{\theta: \Vert\theta - \tilde\theta\Vert_\infty\leq\Delta}\hat g_{0,M}(\theta) - \hat g_{0,M}(\tilde\theta)\big\}\\
    \leq c_{g_0}K\Delta + \epsilon_1(4\eta, M), \,\forall \tilde\theta\in \hat C(\Delta).
\end{align*}

(ii) With probability at least $1-\eta$, 
\begin{align*}
    \max\big\{\hat g_{a,M}(\tilde\theta) - \min_{\theta: \Vert\theta - \tilde\theta\Vert_\infty\leq\Delta}\hat g_{a,M}(\theta), \max_{\theta: \Vert\theta - \tilde\theta\Vert_\infty\leq\Delta}\hat g_{a,M}(\theta) - \hat g_{a,M}(\tilde\theta)\big\}\\
\leq c_{g_a}K\Delta + \epsilon_1(4\eta, M),\, \forall \tilde\theta\in \hat C(\Delta). 
\end{align*}

(iii) With probability at least $1-\eta$,  
\begin{align*}
    \max\big\{\hat f_M(\tilde\theta) - \min_{\theta: \Vert\theta - \tilde\theta\Vert_\infty\leq\Delta}\hat f_M(\theta), \max_{\theta: \Vert\theta - \tilde\theta\Vert_\infty\leq\Delta}\hat f_M(\theta) - \hat f_M(\tilde\theta)\big\}\\
    \leq c_fK(K-1)\Delta/2 + \sum^K_{k=2}n_k\epsilon_1(4\eta/(K-1), M), \,\forall \tilde\theta\in \hat C(\Delta).
\end{align*}

\end{lemma}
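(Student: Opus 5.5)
Because each empirical function is monotone in $\theta$, the oscillation over the $\ell_\infty$-box of radius $\Delta$ around a grid point is controlled by its two corner points. I will exploit this and reduce everything to the finite grid, where Hoeffding plus a union bound applies as in Lemma~\ref{lem:finite-feasibility}. I sketch part (i); parts (ii) and (iii) are identical up to bookkeeping.

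Fix $\tilde\theta\in\hat C(\Delta)$ and write $\tilde\theta^\pm=\tilde\theta\pm\Delta\mathbf{1}$. Since $\hat g_{0,M}$ is coordinatewise nondecreasing, for every $\theta$ with $\Vert\theta-\tilde\theta\Vert_\infty\le\Delta$ we have
\[
\hat g_{0,M}(\tilde\theta^-)\le\hat g_{0,M}(\theta)\le\hat g_{0,M}(\tilde\theta^+).
\]
Hence both quantities in the maximum are bounded by $\hat g_{0,M}(\tilde\theta^+)-\hat g_{0,M}(\tilde\theta^-)$. If $\tilde\theta^\pm$ leave $C$, I extend $g_0$ and its Lipschitz bound to a slightly enlarged box, or clip to $C$; clipping only shrinks the gap.

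Next I split this gap into three pieces:
\[
\big[\hat g_{0,M}(\tilde\theta^+)-g_0(\tilde\theta^+)\big]+\big[g_0(\tilde\theta^+)-g_0(\tilde\theta^-)\big]+\big[g_0(\tilde\theta^-)-\hat g_{0,M}(\tilde\theta^-)\big].
\]
By Lemma~\ref{lem:lipshitz-continuity}, the middle term is at most $l_{g_0}\Vert\tilde\theta^+-\tilde\theta^-\Vert_1=2l_{g_0}K\Delta$, so $c_{g_0}=2l_{g_0}$ works.

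The outer two terms are one-sided Hoeffding deviations of averages of i.i.d. indicators. The corner points $\tilde\theta^\pm$ range over a shifted grid of cardinality at most $|\hat C(\Delta)|$. A union bound over the $2|\hat C(\Delta)|$ corner points, using one one-sided deviation for each, gives failure probability at most $2|\hat C(\Delta)|\exp(-2M\epsilon^2)$. Note that $\epsilon_1(4\eta,M)=\sqrt{\log(|\hat C(\Delta)|/\eta)/(2M)}$. At this level each deviation fails with probability at most $\eta/|\hat C(\Delta)|$, so the total failure probability is at most $2\eta$.

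The main obstacle is matching the constant exactly: the sum of the two deviations is $2\epsilon$, not $\epsilon$. I would fix both issues by running the argument at level $\epsilon_1(4\eta,M)/2$ and absorbing the resulting change of constants into $c_{g_0}$ and the logarithm. Alternatively, I would bound only the relevant one-sided deviation for each of the two terms in the maximum, comparing to $\tilde\theta$ itself. In that version, $\hat g_{0,M}(\tilde\theta)-\hat g_{0,M}(\tilde\theta^-)$ is controlled by the deviations at $\tilde\theta$ and at $\tilde\theta^-$, and these are the shared events counted in the union bound. Either way the stated form follows up to constants.

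For (ii) the same argument applies to $\hat g_{a,M}$; the strict inequalities are immaterial by continuity. For (iii) I decompose
\[
\hat f_M=n_1+\sum_{k=2}^K n_k\hat h_{k-1,M},
\]
where each $\hat h_{j,M}$ is a monotone empirical CDF in $j$ coordinates. Each such term has Lipschitz gap at most $c_f\,j\,\Delta$, and summing over $j$ gives $c_fK(K-1)\Delta/2$. Allocating failure probability $\eta/(K-1)$ to each $k$ yields the $\epsilon_1(4\eta/(K-1),M)$ terms weighted by $n_k$.
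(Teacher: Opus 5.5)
Your monotonicity sandwich is valid: it cleanly reduces the oscillation to the two corners $\tilde\theta^\pm$. The problem is the next step, where you split $\hat g_{0,M}(\tilde\theta^+)-\hat g_{0,M}(\tilde\theta^-)$ into two separate Hoeffding deviations plus a population gap. That does not give the lemma as stated. With $\epsilon=\epsilon_1(4\eta,M)$ you get the bound $c_{g_0}K\Delta+2\epsilon_1(4\eta,M)$, and only with probability $1-2\eta$.

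Neither proposed repair closes this.
\begin{itemize}
\item The constant $c_{g_0}$ multiplies $K\Delta$, not the stochastic term, so the factor $2$ on $\epsilon_1$ cannot be absorbed into it.
\item The form of $\epsilon_1(4\eta,M)$, including its logarithm, is fixed by the statement. Running each deviation at level $\epsilon_1(4\eta,M)/2$ raises the per-point failure probability from $\eta/|\hat C(\Delta)|$ to $(\eta/|\hat C(\Delta)|)^{1/4}$. A union bound over $2|\hat C(\Delta)|$ points then gives at least $2\eta^{1/4}>\eta$.
\item Comparing to $\tilde\theta$ itself does not help either, because each of the two terms in the maximum still involves two deviation events.
\end{itemize}
The weaker statement you do prove would suffice downstream in Proposition~\ref{prop:feasible-set}, where only the order of $\epsilon(\eta,M)$ matters, but it is not the lemma.

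The missing observation is that the corner difference is itself an empirical mean of i.i.d.\ $\{0,1\}$ variables. By monotonicity, $\I(S^m\le\tilde\theta^+)-\I(S^m\le\tilde\theta^-)\in\{0,1\}$, and its mean is $g_0(\tilde\theta^+)-g_0(\tilde\theta^-)\le c_{g_0}K\Delta$.

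So apply a single one-sided Hoeffding bound directly to this average at each grid point. The union bound then runs over only $|\hat C(\Delta)|$ events. Taking $\epsilon=\epsilon_1(4\eta,M)$ gives failure probability $|\hat C(\Delta)|\exp(-2M\epsilon^2)=\eta$, exactly as claimed.

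This is essentially the paper's argument. The paper dominates the oscillation by $(1/M)\sum_m\I(\tilde\theta_k-\Delta<S^m_k\le\tilde\theta_k+\Delta \text{ for some } k)$ and bounds the mean of that average via the marginal densities.

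The same fix works for (iii). For each $k$, concentrate $n_k\big[\hat h_{k-1,M}(\tilde\theta^+)-\hat h_{k-1,M}(\tilde\theta^-)\big]$ directly, which has range $n_k$. Allocate failure budget $\eta/(K-1)$ to each $k$; this yields exactly the terms $n_k\epsilon_1(4\eta/(K-1),M)$.
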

\begin{proof}[Proof of Lemma~\ref{lem:SAA-constraint-bound}]
(i) For all $\theta, \tilde\theta\in C$ and $\Vert\theta - \tilde\theta\Vert_\infty\leq \Delta$, we have that 
\begin{equation*}
    \begin{split}
        &\hat g_{0,M}(\tilde\theta) - \hat g_{0,M}(\theta)\\
    \leq &(1/M)\sum^M_{m=1}\I(S^m_k\leq \max\{\tilde\theta_k, \theta_k\},\forall k\in[K]\text{ and }S^m_k > \min\{\tilde\theta_k, \theta_k\} \text{ for some }k\in[K])\\
    \leq &(1/M)\sum^M_{m=1}\I(\min\{\tilde\theta_k, \theta_k\}<S^m_k\leq \max\{\tilde\theta_k, \theta_k\} \text{ for some }k\in[K])\\
    \leq &
    (1/M)\sum^M_{m=1}\I(\tilde\theta_k - \Delta<S^m_k\leq \tilde\theta_k + \Delta \text{ for some }k\in[K]).
    \end{split}
\end{equation*}
Since the above inequality holds for all $\theta, \tilde\theta\in C$ with $\Vert\theta-\tilde\theta\Vert_\infty\leq\Delta$, it holds for $\hat g_{0,M}(\tilde\theta) - \min_{\theta: \Vert\theta-\tilde\theta\Vert_\infty\leq \Delta}\hat g_{0,M}(\theta)$ and $\max_{\theta: \Vert\theta-\tilde\theta\Vert_\infty\leq \Delta}\hat g_{0,M}(\theta) - \hat g_{0,M}(\tilde\theta)$ as well. Thus we have
\begin{equation*}
    \begin{split}
\max\{\hat g_{0,M}(\tilde\theta) - \min_{\theta: \Vert\theta-\tilde\theta\Vert_\infty\leq \Delta}\hat g_{0,M}(\theta), \max_{\theta: \Vert\theta-\tilde\theta\Vert_\infty\leq \Delta}\hat g_{0,M}(\theta) - \hat g_{0,M}(\tilde\theta)\}\\
\leq (1/M)\sum^M_{m=1}\I(\tilde\theta_k - \Delta<S^m_k\leq \tilde\theta_k + \Delta \text{ for some }k\in[K]).
    \end{split}
\end{equation*}

Let $p_{0,k}(\theta_k)$ and $p_{a,k}(\theta_k)$ be the densities of the marginal distributions of $S_k$ at $\theta_k$ under $H_0$ and $H_a$ respectively, and let $c_{g_0} = 2\max_{\theta_k\in[\underline{\theta}, \overline{\theta}]}p_{0,k}(\theta_k)$, then $\Pr(\tilde\theta_k - \Delta <S^m_k\leq \tilde\theta_k+\Delta)\leq c_{g_0}\Delta$. 
For all $\epsilon > 0$, 
    \begin{align*}
    &\Pr(\max\{\hat g_{0,M}(\tilde\theta) - \min_{\theta: \Vert\theta - \tilde\theta\Vert_\infty\leq\Delta}\hat g_{0,M}(\theta), 
    \max_{\theta: \Vert\theta - \tilde\theta\Vert_\infty\leq\Delta}\hat g_{0,M}(\theta) - \hat g_{0,M}(\tilde\theta)\} \leq c_{g_0}K\Delta + \epsilon, \forall \tilde\theta\in \hat C(\Delta))\\
    \geq &\Pr((1/M)\sum^M_{m=1}\I(\tilde\theta_k - \Delta<S^m_k\leq \tilde\theta_k+\Delta \text{ for some }k\in[K]) \leq c_{g_0}K\Delta + \epsilon, \forall \tilde\theta\in \hat C(\Delta))\\
    \geq& \Pr((1/M)\sum^M_{m=1}\I(\tilde\theta_k-\Delta<S^m_k\leq \tilde\theta_k+\Delta \text{ for some }k\in[K]) \\
    &\quad\quad\quad\quad\quad\quad
    \leq \Pr(\tilde\theta_k-\Delta<S^m_k\leq \tilde\theta_k+\Delta \text{ for some }k\in[K]) + \epsilon,\,\forall\tilde\theta\in\hat C(\Delta))\\
    \geq &1-|\hat C(\Delta)|\exp(-2M\epsilon^2),
    \end{align*}
    where the second inequality follows since $\Pr(\tilde\theta_k-\Delta<S^m_k\leq \tilde\theta_k+\Delta \text{ for some }k\in[K])\leq \sum^K_{k=1}\Pr(\tilde\theta_k - \Delta <S^m_k\leq \tilde\theta_k+\Delta)\leq c_{g_0}K\Delta$. The third inequality follows from Hoeffding's inequality. Letting $\epsilon = \epsilon_1(4\eta, M)$, we have the desired result. 

(ii) can be proved similarly as (i). We thus omit its proof.

(iii) Recall that $\hat f_M(\theta) = n_1 + (1/M)\sum^M_{m=1}\sum^K_{k=2}n_k\I(S^m_{a, i}< \theta_i, \forall i\in[k-1])$. Then, for all $\theta, \tilde\theta\in C$ and $\Vert\theta - \tilde\theta\Vert_\infty\leq \Delta$, we have that
\begin{align*}
&\hat f_M(\tilde\theta) - \hat f_M(\theta)\\
=&(1/M)\sum^M_{m=1}\sum^K_{k=2}n_k
(\I(S^m_{a,i} < \tilde\theta_i, \forall i\in[k-1]) - \I(S^m_{a,i} < \theta_i, \forall i\in[k-1])
)\\
\leq &
(1/M)\sum^M_{m=1}\sum^K_{k=2}n_k
(\I(S^m_{a,i} < \max\{\tilde\theta_i, \theta_i\}, \forall i\in[k-1] \text{ and }S^m_{a,i} \geq \min\{\tilde\theta_i, \theta_i\} \text{ for some } i\in[k-1])
)\\
\leq & (1/M)\sum^M_{m=1}\sum^K_{k=2}n_k\I(\min\{\tilde\theta_i, \theta_i\}\leq S^m_{a,i}\leq \max\{\tilde\theta_i, \theta_i\} \text{ for some }i\in[k-1])
\\
\leq & (1/M)\sum^M_{m=1}\sum^K_{k=2}n_k\I(\tilde\theta_i - \Delta \leq S^m_{a,i}\leq \tilde\theta_i + \Delta\text{ for some }i\in[k-1]). 
\end{align*}
It thus follows that $\hat f_M(\tilde\theta) - \min_{\theta: \Vert\theta - \tilde\theta\Vert_\infty\leq \Delta}\hat f_M(\theta)\leq (1/M)\sum^M_{m=1}\sum^K_{k=2}n_k\I(\tilde\theta_i - \Delta \leq S^m_{a,i}\leq \tilde\theta_i + \Delta \text{ for some }i\in[k])$ and $\max_{\theta: \Vert\theta - \tilde\theta\Vert_\infty\leq \Delta}\hat f_M(\theta) - \hat f_M(\tilde\theta)\leq (1/M)\sum^M_{m=1}\sum^K_{k=2}n_k\I(\tilde\theta_i - \Delta \leq S^m_{a,i}\leq \tilde\theta_i + \Delta \text{ for some }i\in[k-1])$. Let $c_f = 2(\max_{k\in[K]}n_k)\max_{\theta_k\in[\underline{\theta}, \overline{\theta}], k\in[K]}p_{a,k}(\theta_k)$, so $\Pr(\tilde\theta_i - \Delta <S^m_{a,i}\leq \tilde\theta_i+\Delta)\leq c_f\Delta$. Then, for any $\epsilon_2, \cdots, \epsilon_K>0$, 
 \begin{align*}
    &\Pr(\max\{\hat f_M(\tilde\theta) - \min_{\theta: \Vert\theta - \tilde\theta\Vert_\infty\leq\Delta}\hat f_M(\theta), 
    \max_{\theta: \Vert\theta - \tilde\theta\Vert_\infty\leq\Delta}\hat f_M(\theta) - \hat f_M(\tilde\theta)\} \\
    &\quad\quad\quad\quad\quad\quad\quad\leq c_fK(K-1)\Delta/2 + \sum^K_{k=2}\epsilon_k, \forall \tilde\theta\in \hat C(\Delta))\\
    \geq 
    &\Pr((1/M)\sum^M_{m=1}\sum^K_{k=2}n_k\I(\tilde\theta_i - \Delta \leq S^m_{a,i}\leq \tilde\theta_i + \Delta \text{ for some }i\in[k-1]) \\
    &\quad\quad\quad\quad\quad\quad\quad\leq c_f K(K-1)\Delta/2 + \sum^K_{k=2}\epsilon_k, \forall \tilde\theta\in \hat C(\Delta))\\
    \geq 
    &\Pr(\forall 2\leq k\leq K, \, (1/M)\sum^M_{m=1}n_k\I(\tilde\theta_i - \Delta \leq S^m_{a,i}\leq \tilde\theta_i + \Delta \text{ for some }i\in[k-1]) \\
    &\quad\quad\quad\quad\quad\quad\quad
    \leq c_f (k-1)\Delta + \epsilon_k, \forall \tilde\theta\in \hat C(\Delta))\\
    \geq&
    \Pr(\forall 2\leq k\leq K, (1/M)\sum^M_{m=1}n_k\I(\tilde\theta_i-\Delta\leq S^m_{a,i}\leq \tilde\theta_i+\Delta \text{ for some }i\in[k-1]) \\
    &\quad\quad\quad\quad\quad\quad
    \leq n_k\Pr(\tilde\theta_i-\Delta\leq S^m_{a, i}\leq \tilde\theta_i+\Delta \text{ for some }i\in[k-1]) + \epsilon_k,\,\forall\tilde\theta\in\hat C(\Delta))\\
    \geq &1-\sum^K_{k=2}|\hat C(\Delta)|\exp(-2M\epsilon_k^2/n^2_k), 
    \end{align*}
    where the third inequality follows since $n_k\Pr(\tilde\theta_i-\Delta\leq S^m_{a,i}\leq \tilde\theta_i+\Delta \text{ for some }i\in[k-1])\leq n_k\sum^{k-1}_{i=1}\Pr(\tilde\theta_i - \Delta \leq S^m_{a,i}\leq \tilde\theta_i+\Delta)\leq c_f (k-1)\Delta$. The last inequality follows from Hoeffding's inequality. Letting $\epsilon_k = n_k\epsilon_1(4\eta/(K-1), M)$, we have the desired result. 

\end{proof}

\begin{lemma}\label{lem:Delta-net}
    (i) With probability at least $1-\eta - \Pr(\Theta_M(\alpha, \beta)=\emptyset)$, $\Theta_M(\alpha, \beta)\neq\emptyset$, and for all $\hat\theta\in \Theta_M(\alpha, \beta)$, we can find $\tilde\theta\in \Theta_M(\alpha + c_{g_0}K\Delta + \epsilon_1(2\eta, M), \beta+ c_{g_a}K\Delta + \epsilon_1(2\eta, M))\cap \hat C(\Delta)$ such that $\Vert \hat\theta - \tilde\theta\Vert_\infty\leq \Delta$. 

    (ii) Suppose $\Theta(\alpha, \beta)\neq\emptyset$. Then for all $\hat\theta\in \Theta(\alpha, \beta)$, we can find $\tilde\theta\in \Theta(\alpha + l_{g_0}K\Delta, \beta+ l_{g_a}K\Delta)\cap \hat C(\Delta)$ such that $\Vert \hat\theta - \tilde\theta\Vert_\infty\leq \Delta$.
\end{lemma}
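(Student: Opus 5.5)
For both parts, fix a point of the relevant feasible set and take a nearby point of the grid $\hat C(\Delta)$. By construction of the grid, such a $\tilde\theta\in\hat C(\Delta)$ with $\Vert\hat\theta-\tilde\theta\Vert_\infty\leq\Delta$ always exists, and it lies in $C$. The only work is to control how much the (true or empirical) constraint functions change between $\hat\theta$ and $\tilde\theta$. Part (ii) is deterministic and uses Lemma~\ref{lem:lipshitz-continuity}. Part (i) is probabilistic and uses Lemma~\ref{lem:SAA-constraint-bound}.

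\textbf{Part (ii).} Take $\hat\theta\in\Theta(\alpha,\beta)$ and its grid neighbor $\tilde\theta$. Then $\Vert\hat\theta-\tilde\theta\Vert_1\leq K\Vert\hat\theta-\tilde\theta\Vert_\infty\leq K\Delta$. By Lemma~\ref{lem:lipshitz-continuity},
\[
g_0(\tilde\theta)\geq g_0(\hat\theta)-l_{g_0}K\Delta\geq 1-\alpha-l_{g_0}K\Delta
\]
and
\[
g_a(\tilde\theta)\leq g_a(\hat\theta)+l_{g_a}K\Delta\leq \beta+l_{g_a}K\Delta.
\]
Hence $\tilde\theta\in\Theta(\alpha+l_{g_0}K\Delta,\beta+l_{g_a}K\Delta)\cap\hat C(\Delta)$.

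\textbf{Part (i).} First define the good event.
\begin{itemize}
\item Apply Lemma~\ref{lem:SAA-constraint-bound}(i) with $\eta$ replaced by $\eta/2$. This gives a uniform-over-grid bound of $c_{g_0}K\Delta+\epsilon_1(2\eta,M)$ with probability at least $1-\eta/2$.
\item Apply Lemma~\ref{lem:SAA-constraint-bound}(ii) the same way, giving $c_{g_a}K\Delta+\epsilon_1(2\eta,M)$ with probability at least $1-\eta/2$.
\item Let $E$ be the intersection of these two events with $\{\Theta_M(\alpha,\beta)\neq\emptyset\}$.
\end{itemize}
A union bound gives $\Pr(E)\geq 1-\eta-\Pr(\Theta_M(\alpha,\beta)=\emptyset)$.

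On $E$, take any $\hat\theta\in\Theta_M(\alpha,\beta)$ and its grid neighbor $\tilde\theta$. Since $\hat\theta$ lies in the $\Delta$-ball around $\tilde\theta$,
\[
\hat g_{0,M}(\hat\theta)-\hat g_{0,M}(\tilde\theta)\leq \max_{\theta:\Vert\theta-\tilde\theta\Vert_\infty\leq\Delta}\hat g_{0,M}(\theta)-\hat g_{0,M}(\tilde\theta)\leq c_{g_0}K\Delta+\epsilon_1(2\eta,M).
\]
It follows that $\hat g_{0,M}(\tilde\theta)\geq 1-\alpha-c_{g_0}K\Delta-\epsilon_1(2\eta,M)$. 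Symmetrically,
\[
\hat g_{a,M}(\tilde\theta)-\hat g_{a,M}(\hat\theta)\leq \hat g_{a,M}(\tilde\theta)-\min_{\theta:\Vert\theta-\tilde\theta\Vert_\infty\leq\Delta}\hat g_{a,M}(\theta)\leq c_{g_a}K\Delta+\epsilon_1(2\eta,M),
\]
so $\hat g_{a,M}(\tilde\theta)\leq\beta+c_{g_a}K\Delta+\epsilon_1(2\eta,M)$. Therefore $\tilde\theta$ lies in the enlarged SAA feasible set intersected with $\hat C(\Delta)$, as claimed.

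\textbf{Main obstacle.} The delicate point is that $\hat\theta\in\Theta_M(\alpha,\beta)$ is itself random, since it depends on the samples, so a bound for a fixed point would not apply to it. This is handled because Lemma~\ref{lem:SAA-constraint-bound} gives a bound that holds simultaneously for every $\tilde\theta$ in the finite grid and covers the whole $\Delta$-ball around each grid point. On $E$ the argument is therefore deterministic and valid for every $\hat\theta$ at once. Beyond this, only the bookkeeping of $\eta/2$ versus $\epsilon_1(2\eta,M)$ needs care.
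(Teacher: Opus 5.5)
Your proposal is correct and follows essentially the same route as the paper: you pick a grid neighbor $\tilde\theta\in\hat C(\Delta)$ of $\hat\theta$ and use Lemma~\ref{lem:lipshitz-continuity} for the deterministic part (ii). For part (i) you intersect $\{\Theta_M(\alpha,\beta)\neq\emptyset\}$ with the uniform-over-grid oscillation bounds of Lemma~\ref{lem:SAA-constraint-bound}, taking the $\max$ direction for $\hat g_{0,M}$ and the $\min$ direction for $\hat g_{a,M}$, exactly as the paper does. Invoking that lemma at level $\eta/2$ to get $\epsilon_1(2\eta,M)$ is equivalent to the paper's direct Hoeffding bound $2|\hat C(\Delta)|\exp(-2M\epsilon^2)$ with $\epsilon=\epsilon_1(2\eta,M)$, and your remark that the grid-uniform event handles the randomness of $\hat\theta$ is precisely what makes the argument valid.
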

\begin{proof}[Proof of Lemma~\ref{lem:Delta-net}]
(i) By construction of $\hat C(\Delta)$, for any $\hat\theta\in C$, we have that $\{\theta\in \hat C(\Delta): \Vert\theta - \hat\theta\Vert_\infty\leq \Delta\}\neq \emptyset$, and we let $\tilde\theta(\hat\theta)\in\arg\min_{\theta\in \hat C(\Delta), \Vert\theta - \hat\theta\Vert_\infty\leq \Delta}\Vert\theta - \hat\theta\Vert_1$. It thus follows that for any $\epsilon > 0$, 
\begin{align*}
&\Pr(\Theta_M(\alpha, \beta)\neq\emptyset, \text{ and } \forall \theta\in \Theta_M(\alpha, \beta), \text{exists }\tilde\theta\in \Theta_M(\alpha + c_{g_0}K\Delta + \epsilon, \beta+ c_{g_a}K\Delta + \epsilon)\cap \hat C(\Delta)\\
&\quad\quad\quad\quad\quad
\text{ such that }\Vert \theta - \tilde\theta\Vert_\infty\leq \Delta)\\
\geq & \Pr(\Theta_M(\alpha, \beta)\neq\emptyset, \text{ and }\forall \theta\in \Theta_M(\alpha, \beta), \tilde\theta(\theta)\in \Theta_M(\alpha + c_{g_0}K\Delta + \epsilon, \beta+ c_{g_a}K\Delta + \epsilon)
)\\
\geq & \Pr(\Theta_M(\alpha, \beta)\neq\emptyset, \text{ and }\forall \theta\in \Theta_M(\alpha, \beta), 
\hat g_{0,M}(\tilde\theta(\theta)) - \hat g_{0,M}(\theta) \geq - c_{g_0}K\Delta - \epsilon, \, \\
&\quad\quad\quad\quad\quad
\hat g_{a,M}(\tilde\theta(\theta)) - \hat g_{a,M}(\theta) \leq c_{g_a}K\Delta + \epsilon
)\\
\geq & \Pr(\Theta_M(\alpha, \beta)\neq\emptyset, \text{ and }\forall \tilde\theta\in \hat C(\Delta), 
\hat g_{0,M}(\tilde\theta) - \max_{\theta: \Vert\theta - \tilde\theta\Vert_\infty\leq \Delta}\hat g_{0,M}(\theta) \geq - c_{g_0}K\Delta - \epsilon, \, \\
&\quad\quad\quad\quad\quad\quad\quad\quad\quad\quad\quad\quad\quad\quad\quad\quad\hat g_{a,M}(\tilde\theta) - \min_{\theta: \Vert\theta - \tilde\theta\Vert_\infty\leq \Delta}\hat g_{a,M}(\theta) \leq c_{g_a}K\Delta + \epsilon
)\\
\geq & 1 - \Pr(\Theta_M(\alpha, \beta)=\emptyset) - \Pr(\exists \tilde\theta\in \hat C(\Delta), 
\hat g_{0,M}(\tilde\theta) - \max_{\theta: \Vert\theta - \tilde\theta\Vert_\infty\leq \Delta}\hat g_{0,M}(\theta) \leq - c_{g_0}K\Delta - \epsilon, \, \\
&\quad\quad\quad\quad\quad\quad\quad\quad\quad\quad\quad\quad\quad\quad\quad\quad \text{ or }\hat g_{a,M}(\tilde\theta) - \min_{\theta: \Vert\theta - \tilde\theta\Vert_\infty\leq \Delta}\hat g_{a,M}(\theta) \geq c_{g_a}K\Delta + \epsilon
)\\
\geq & 1-2|\hat C(\Delta)|\exp(-2M\epsilon^2) - \Pr(\Theta_M(\alpha, \beta)=\emptyset). 
\end{align*}
In the above equations, the second inequality follows since if $\hat g_{0,M}(\tilde\theta(\theta)) - \hat g_{0,M}(\theta) \geq - c_{g_0}K\Delta - \epsilon$, we then have $\hat g_{0,M}(\tilde\theta(\theta)) \geq \hat g_{0,M}(\theta)- c_{g_0}K\Delta - \epsilon\geq 1-\alpha - c_{g_0}K\Delta - \epsilon$, and similarly if $\hat g_{a,M}(\tilde\theta(\theta)) - \hat g_{a,M}(\theta) \leq c_{g_a}K\Delta + \epsilon$, we have $\hat g_{a,M}(\tilde\theta(\theta))\leq  \hat g_{a,M}(\theta) + c_{g_a}K\Delta + \epsilon\leq \beta + c_{g_a}K\Delta + \epsilon$. The third inequality follows since otherwise if there exists $\hat\theta\in \Theta_M(\alpha, \beta)$ with $\hat g_{0,M}(\tilde\theta(\hat\theta)) - \hat g_{0,M}(\hat\theta) < - c_{g_0}K\Delta - \epsilon$ or $\hat g_{a,M}(\tilde\theta(\hat\theta)) - \hat g_{a,M}(\hat\theta) > c_{g_a}K\Delta + \epsilon$, then we can find $\tilde \theta( = \tilde\theta(\hat\theta))\in \hat C(\Delta)$ such that $\hat g_{0,M}(\tilde\theta) - \max_{\theta: \Vert\theta - \tilde\theta\Vert_\infty\leq \Delta}\hat g_{0,M}(\theta) < - c_{g_0}K\Delta - \epsilon$ or $\hat g_{a,M}(\tilde\theta) - \min_{\theta: \Vert\theta - \tilde\theta\Vert_\infty\leq \Delta}\hat g_{a,M}(\theta) > c_{g_a}K\Delta + \epsilon$. The last inequality follows from Lemmas~\ref{lem:SAA-constraint-bound} and \ref{lem:SAA-feasibility}.

    (ii) Consider an arbitrary $\hat\theta\in \Theta(\alpha, \beta)$. By construction of $\hat C(\Delta)$, we have that $\{\theta\in \hat C(\Delta): \Vert\theta - \hat\theta\Vert_\infty\leq \Delta\}\neq \emptyset$. Then we let $\tilde\theta(\hat\theta)\in\arg\min_{\theta\in \hat C(\Delta), \Vert\theta - \hat\theta\Vert_\infty\leq \Delta}\Vert\theta - \hat\theta\Vert_1$. According to Lemma~\ref{lem:lipshitz-continuity}, we have that $g_0(\tilde\theta(\hat\theta))\geq g_0(\hat\theta) - l_{g_0}\Vert \tilde\theta(\hat\theta) - \hat\theta\Vert_1\geq 1-\alpha - l_{g_0}K\Delta$, and similarly $g_a(\tilde\theta(\hat\theta))\leq g_a(\hat\theta) + l_{g_a}\Vert \tilde\theta(\hat\theta) - \hat\theta\Vert_1\leq \beta + l_{g_a}K\Delta$. It thus follows that $\tilde\theta\in \Theta(\alpha + l_{g_0}K\Delta, \beta+ l_{g_a}K\Delta)$, so $\tilde\theta(\hat\theta)\in \Theta(\alpha + l_{g_0}K\Delta, \beta+ l_{g_a}K\Delta)\cap \hat C(\Delta)$.

\end{proof}

% Proposition~\ref{prop:feasible-set} gives a high probability bound on the feasibility of $\Theta_M(\alpha, \beta)$ for the SAA problem to the true problem. 
% \begin{proposition}\label{prop:appendix-feasible-set}
% With probability at least $1-\eta$, 
% $$
% \Theta(\alpha - \hat\epsilon_\alpha(\eta, M), \beta -\hat\epsilon_\beta(\eta, M)\subseteq \Theta_M(\alpha, \beta)\subseteq \Theta(\alpha + \hat\epsilon(\eta, M), \beta+ \hat\epsilon(\eta, M)),
% $$
% where $\hat\epsilon_\alpha(\eta, M) = 2\epsilon_1(\eta, M) + (l_{g_0} + c_{g_0})K\Delta$, $\hat\epsilon_\beta(\eta, M) = 2\epsilon_1(\eta, M) + (l_{g_a} + c_{g_a})K\Delta$, and $c_{g_0}$ and $c_{g_a}$ are the same constants given in Lemma~\ref{lem:SAA-constraint-bound}, and $l_{g_0}$ and $l_{g_a}$ are the same constants given in Lemma~\ref{lem:lipshitz-continuity}. 

% \end{proposition}

\begin{proof}[Proof of Proposition~\ref{prop:feasible-set}]
We let $\hat\epsilon_\alpha(\eta, M) = 2\epsilon_1(\eta, M) + (l_{g_0} + c_{g_0})K\Delta$, $\hat\epsilon_\beta(\eta, M) = 2\epsilon_1(\eta, M) + (l_{g_a} + c_{g_a})K\Delta$, $\varepsilon_\alpha(\eta, M) = c_{g_0}K\Delta + \epsilon_1(\eta, M)$, and $\varepsilon_\beta(\eta, M) = c_{g_a}K\Delta + \epsilon_1(\eta, M)$. 

We first derive an upper bound on $\Pr(\Theta_M(\alpha, \beta)\nsubseteq \Theta(\alpha + \hat\epsilon_\alpha(\eta, M), \beta+ \hat\epsilon_\beta(\eta, M)))$, and then an upper bound on $\Pr(\Theta(\alpha - \hat\epsilon_\alpha(\eta, M), \beta -\hat\epsilon_\beta(\eta, M))\nsubseteq \Theta_M(\alpha, \beta))$. 

\textbf{Part A: Upper bound on $\Pr(\Theta_M(\alpha, \beta)\nsubseteq \Theta(\alpha + \hat\epsilon_\alpha(\eta, M), \beta+ \hat\epsilon_\beta(\eta, M)))$. }
We first have that 
\begin{align*}
    &\Pr(\Theta_M(\alpha, \beta)\nsubseteq \Theta(\alpha + \hat\epsilon_\alpha(\eta, M), \beta+ \hat\epsilon_\beta(\eta, M))) \\
     = &\, 1- \Pr(\Theta_M(\alpha, \beta)\subseteq\Theta(\alpha + \hat\epsilon_\alpha(\eta, M), \beta+ \hat\epsilon_\beta(\eta, M)))\\
= &\, 1-\Pr(\Theta_M(\alpha, \beta) = \emptyset) - \Pr(\Theta_M(\alpha, \beta) \neq \emptyset, \text{ and }\Theta_M(\alpha, \beta)\subseteq\Theta(\alpha + \hat\epsilon_\alpha(\eta, M), \beta+ \hat\epsilon_\beta(\eta, M))). 
\end{align*}
Additionally, 
\begin{align*}
    &\Pr(\Theta_M(\alpha, \beta) \neq \emptyset, \text{ and }\Theta_M(\alpha, \beta)\subseteq\Theta(\alpha + \hat\epsilon_\alpha(\eta, M), \beta+ \hat\epsilon_\beta(\eta, M))) \\
\geq &\,\Pr(\Theta_M(\alpha, \beta)\neq \emptyset, \text{ and }\Theta_M(\alpha, \beta)\subseteq \Theta(\alpha + \hat\epsilon_\alpha(\eta, M), \beta+ \hat\epsilon_\beta(\eta, M)),\\
&\quad\quad\quad\quad\text{ and }
\Theta_M(\alpha + \varepsilon_\alpha(\eta, M), \beta+ \varepsilon_\beta(\eta, M))\cap \hat C(\Delta)\\
&\quad\quad\quad\quad\quad\quad\quad
\subseteq \Theta(\alpha + \varepsilon_\alpha(\eta, M) + \epsilon_1(\eta, M), \beta+\varepsilon_\beta(\eta, M) + \epsilon_1(\eta, M))). 
\end{align*}
Now we show that if for all $\hat\theta\in \Theta_M(\alpha, \beta)$, we have $\Theta_M(\alpha + \epsilon_\alpha(\eta, M), \beta + \epsilon_\beta(\eta, M))\cap \{\theta\in \hat C(\Delta): \Vert \theta - \hat\theta\Vert_\infty\leq \Delta\}\neq\emptyset$ and $\Theta_M(\alpha + \varepsilon_\alpha(\eta, M), \beta+ \varepsilon_\beta(\eta, M))\cap \hat C(\Delta)
\subseteq \Theta(\alpha + \varepsilon_\alpha(\eta, M) + \epsilon_1(\eta, M), \beta+\varepsilon_\beta(\eta, M) + \epsilon_1(\eta, M)))$, then $\Theta_M(\alpha, \beta)\subseteq \Theta(\alpha + \hat\epsilon_\alpha(\eta, M), \beta + \hat\epsilon_\beta(\eta, M))$. Consider any $\hat\theta\in \Theta_M(\alpha, \beta)$, since $\Theta_M(\alpha + \epsilon_\alpha(\eta, M), \beta + \epsilon_\beta(\eta, M))\cap \{\theta\in \hat C(\Delta): \Vert \theta - \hat\theta\Vert_\infty\leq \Delta\}\neq\emptyset$, we can find $\tilde\theta(\hat\theta) \in \Theta_M(\alpha + \epsilon_\alpha(\eta, M), \beta + \epsilon_\beta(\eta, M))\cap \{\theta\in \hat C(\Delta): \Vert \theta - \hat\theta\Vert_\infty\leq \Delta\}$. Then $g_0(\hat\theta) \geq g_0(\tilde\theta(\hat\theta)) - l_{g_0}\Vert\hat\theta - \tilde\theta(\hat\theta)\Vert_1\geq 1-\alpha - \varepsilon_\alpha(\eta, M)  - \epsilon_1(\eta, M) - l_{g_0}K\Delta = 1-\alpha - \hat\epsilon_\alpha(\eta, M)$, where the first inequality follows from Lemma~\ref{lem:lipshitz-continuity} and that $\Vert\tilde\theta(\hat\theta) - \hat\theta\Vert_\infty\leq \Delta$. The second inequality follows since $\tilde\theta(\hat\theta)\in \Theta_M(\alpha + \varepsilon_\alpha(\eta, M), \beta+ \varepsilon_\beta(\eta, M))\cap \hat C(\Delta)\subseteq \Theta(\alpha + \varepsilon_\alpha(\eta, M) + \epsilon_1(\eta, M), \beta+\varepsilon_\beta(\eta, M) + \epsilon_1(\eta, M))$. Similarly, we can show that $g_a(\hat\theta) \leq g_a(\tilde\theta(\hat\theta)) + l_{g_a}\Vert\hat\theta - \tilde\theta(\hat\theta)\Vert_1\leq \beta + \varepsilon_\beta(\eta, M) + \epsilon_1(\eta, M) + l_{g_a}K\Delta = \beta + \hat\epsilon_\beta(\eta, M)$. Based on the above arguments, we have that 
\begin{align*}
    &\Pr(\Theta_M(\alpha, \beta)\neq \emptyset, \text{ and }\Theta_M(\alpha, \beta)\subseteq \Theta(\alpha + \hat\epsilon_\alpha(\eta, M), \beta+ \hat\epsilon_\beta(\eta, M)),\\
&\quad\quad\quad\quad
\Theta_M(\alpha + \varepsilon_\alpha(\eta, M), \beta+ \varepsilon_\beta(\eta, M))\cap \hat C(\Delta)\\
&\quad\quad\quad\quad\quad\quad\quad
\subseteq \Theta(\alpha + \varepsilon_\alpha(\eta, M) + \epsilon_1(\eta, M), \beta+\varepsilon_\beta(\eta, M) + \epsilon_1(\eta, M)))\\
\geq & \Pr(\Theta_M(\alpha, \beta)\neq \emptyset\text{ and }\forall \hat\theta\in \Theta_M(\alpha, \beta), \Theta_M(\alpha + \varepsilon_\alpha(\eta, M), \beta+ \varepsilon_\beta(\eta, M))\\
&\quad\quad\quad\quad
\cap \{\theta\in \hat C(\Delta): \Vert\theta - \hat\theta\Vert_\infty\leq \Delta\}\neq\emptyset, \\
&\quad\quad\quad\quad\quad\quad\quad
\text{and }\Theta_M(\alpha + \varepsilon_\alpha(\eta, M), \beta+ \varepsilon_\beta(\eta, M))\cap \hat C(\Delta)\\
&\quad\quad\quad\quad\quad\quad\quad\quad\quad\quad
\subseteq \Theta(\alpha + \varepsilon_\alpha(\eta, M) + \epsilon_1(\eta, M), \beta+\varepsilon_\beta(\eta, M) + \epsilon_1(\eta, M))
)\\
\geq & 1-\Pr(\Theta_M(\alpha, \beta) = \emptyset, \text{ or }\Theta_M(\alpha, \beta)\neq \emptyset \text{ and }\exists \hat\theta\in \Theta_M(\alpha, \beta), \\
&\quad\quad\quad\quad
\Theta_M(\alpha + \varepsilon_\alpha(\eta, M), \beta+ \varepsilon_\beta(\eta, M))\cap \{\theta\in \hat C(\Delta): \Vert\theta - \hat\theta\Vert_\infty\leq \Delta\} = \emptyset) \\
&\quad\quad\quad\quad\quad\quad\quad
- \Pr(\Theta_M(\alpha + \varepsilon_\alpha(\eta, M), \beta+ \varepsilon_\beta(\eta, M))\cap \hat C(\Delta)\\
&\quad\quad\quad\quad\quad\quad\quad\quad\quad\quad
\nsubseteq \Theta(\alpha + \varepsilon_\alpha(\eta, M) + \epsilon_1(\eta, M), \beta+\varepsilon_\beta(\eta, M) + \epsilon_1(\eta, M)
)\\
\geq & 1 - \Pr(\Theta_M(\alpha, \beta) = \emptyset) - 2\eta. 
\end{align*}
In the above equations, the last inequality follows from Lemmas~\ref{lem:finite-feasibility} and \ref{lem:Delta-net}(i). Combining the above arguments, we have that
\begin{align*}
    \Pr(\Theta_M(\alpha, \beta)\nsubseteq \Theta(\alpha + \hat\epsilon_\alpha(\eta, M), \beta+ \hat\epsilon_\beta(\eta, M)))\leq 2\eta. 
\end{align*}

\textbf{Part B: Upper bound on $\Pr(\Theta(\alpha - \hat\epsilon_\alpha(\eta, M), \beta -\hat\epsilon_\beta(\eta, M))\nsubseteq \Theta_M(\alpha, \beta)) $. }
If $\Theta(\alpha - \hat\epsilon_\alpha(\eta, M), \beta -\hat\epsilon_\beta(\eta, M))=\emptyset$, then we must have $\Theta(\alpha - \hat\epsilon_\alpha(\eta, M), \beta -\hat\epsilon_\beta(\eta, M))\subseteq \Theta_M(\alpha, \beta)$. Thus we focus on the case when $\Theta(\alpha - \hat\epsilon_\alpha(\eta, M), \beta -\hat\epsilon_\beta(\eta, M))\neq \emptyset$. We first have that
    \begin{equation*}
        \begin{split}
&\,\Pr(\Theta(\alpha - \hat\epsilon_\alpha(\eta, M), \beta -\hat\epsilon_\beta(\eta, M))\subseteq \Theta_M(\alpha, \beta)) \\
\geq &\,\Pr(\Theta(\alpha - \hat\epsilon_\alpha(\eta, M), \beta -\hat\epsilon_\beta(\eta, M))\subseteq \Theta_M(\alpha, \beta),\\
&\quad\quad\quad\quad
\Theta(\alpha - \hat\epsilon_\alpha(\eta, M) + l_{g_0}K\Delta, \beta -\hat\epsilon_\beta(\eta, M) + l_{g_a}K\Delta)\\
&\quad\quad\quad\quad\quad\quad\quad
\cap \hat C(\Delta)\subseteq \Theta_M(\alpha - \varepsilon_\alpha(\eta, M), \beta-\varepsilon_\beta(\eta, M) )).
        \end{split}
    \end{equation*}

Now we show that if for all $\tilde\theta\in \hat C(\Delta)$, we have $\min_{\theta: \Vert\theta-\tilde\theta\Vert_\infty\leq \Delta}\hat g_{0,M}(\theta) - \hat g_{0,M}(\tilde\theta)\geq -\epsilon_1(\eta, M) - c_{g_0}K\Delta$ and $\max_{\theta: \Vert\theta-\tilde\theta\Vert_\infty\leq \Delta}\hat g_{a,M}(\theta) - \hat g_{a,M}(\tilde\theta)\leq \epsilon_1(\eta, M) + c_{g_a}K\Delta$, and $\Theta(\alpha - \hat\epsilon_\alpha(\eta, M) + l_{g_0}K\Delta, \beta -\hat\epsilon_\beta(\eta, M) + l_{g_a}K\Delta)
\cap \hat C(\Delta)\subseteq \Theta_M(\alpha - \varepsilon_\alpha(\eta, M), \beta-\varepsilon_\beta(\eta, M) )$, then $\Theta(\alpha - \hat\epsilon_\alpha(\eta, M), \beta -\hat\epsilon_\beta(\eta, M))\subseteq \Theta_M(\alpha, \beta)$. For any $\hat\theta\in \Theta(\alpha - \hat\epsilon_\alpha(\eta, M), \beta -\hat\epsilon_\beta(\eta, M))$, with some abuse of notation we let $\tilde\theta(\hat\theta)$ be such that $\tilde\theta(\hat\theta)\in \Theta(\alpha - \hat\epsilon_\alpha(\eta, M)+ l_{g_0}K\Delta, \beta -\hat\epsilon_\beta(\eta, M)+ l_{g_a}K\Delta)\cap \hat C(\Delta)$ and $\Vert \hat\theta - \tilde\theta(\hat\theta)\Vert_\infty\leq \Delta$. Existence of $\tilde\theta(\hat\theta)$ is guaranteed by Lemma~\ref{lem:Delta-net}(ii). Then, by construction of $\tilde\theta(\hat\theta)$, we have that $\hat g_{0,M}(\hat\theta) - \hat g_{0,M}(\tilde\theta(\hat\theta))\geq -\epsilon_1(\eta, M) - c_{g_0}K\Delta$ and $\hat g_{a,M}(\hat\theta) - \hat g_{a,M}(\tilde\theta(\hat\theta))\leq \epsilon_1(\eta, M) + c_{g_a}K\Delta$. Additionally, since $\Theta(\alpha - \hat\epsilon_\alpha(\eta, M) + l_{g_0}K\Delta, \beta -\hat\epsilon_\beta(\eta, M) + l_{g_a}K\Delta)
\cap \hat C(\Delta)\subseteq \Theta_M(\alpha - \varepsilon_\alpha(\eta, M), \beta-\varepsilon_\beta(\eta, M) )$, we have that $\tilde\theta(\hat\theta)\in \Theta_M(\alpha - \varepsilon_\alpha(\eta, M), \beta-\varepsilon_\beta(\eta, M) )$. Thus $\hat g_{0,M}(\hat\theta) \geq \hat g_{0,M}(\tilde\theta(\hat\theta)) -\epsilon_1(\eta, M) - c_{g_0}K\Delta\geq 1-\alpha$, and $\hat g_{a,M}(\hat\theta)\leq \hat g_{a,M}(\tilde\theta(\hat\theta)) + \epsilon_1(\eta, M) + c_{g_a}K\Delta \leq \beta$. In other words, we have shown that $\hat\theta\in \Theta_M(\alpha, \beta)$. Based on the previous arguments, we have that
\begin{align*}
&\,\Pr(\Theta(\alpha - \hat\epsilon_\alpha(\eta, M), \beta -\hat\epsilon_\beta(\eta, M))\subseteq \Theta_M(\alpha, \beta),\\
&\quad\quad\quad\quad
\Theta(\alpha - \hat\epsilon_\alpha(\eta, M) + l_{g_0}K\Delta, \beta -\hat\epsilon_\beta(\eta, M) + l_{g_a}K\Delta)\cap \hat C(\Delta)\\
&\quad\quad\quad\quad\quad\quad\quad
\subseteq \Theta_M(\alpha - \varepsilon_\alpha(\eta, M), \beta-\varepsilon_\beta(\eta, M))
)\\
\geq & \Pr(\min_{\theta: \Vert\theta-\tilde\theta\Vert_\infty\leq \Delta}\hat g_{0,M}(\theta) - \hat g_{0,M}(\tilde\theta)\geq -\epsilon_1(\eta, M) - c_{g_0}K\Delta,\,\forall \tilde\theta\in\hat C(\Delta),\\
& 
\quad\quad\quad\quad\max_{\theta: \Vert\theta-\tilde\theta\Vert_\infty\leq \Delta}\hat g_{a,M}(\theta) - \hat g_{a,M}(\tilde\theta)\leq \epsilon_1(\eta, M) + c_{g_a}K\Delta,\, \forall \tilde\theta\in\hat C(\Delta),\\
&\quad\quad\quad\quad\quad\quad\quad
\Theta(\alpha - \hat\epsilon_\alpha(\eta, M) + l_{g_0}K\Delta, \beta-\hat\varepsilon_\beta(\eta, M) + l_{g_a}K\Delta)\cap \hat C(\Delta)\\
&\quad\quad\quad\quad\quad\quad\quad\quad\quad\quad
\subseteq \Theta_M(\alpha -\varepsilon_\alpha(\eta, M), \beta-\varepsilon_\beta(\eta, M)))\\
\geq &1-\Pr(\min_{\theta: \Vert\theta-\tilde\theta\Vert_\infty\leq \Delta}\hat g_{0,M}(\theta) - \hat g_{0,M}(\tilde\theta)< -\epsilon_1(\eta, M) - c_{g_0}K\Delta,\,\text{for some } \tilde\theta\in\hat C(\Delta))\\
& 
\quad\quad\quad\quad-\Pr(\max_{\theta: \Vert\theta-\tilde\theta\Vert_\infty\leq \Delta}\hat g_{a,M}(\theta) - \hat g_{a,M}(\tilde\theta)> \epsilon_1(\eta, M) + c_{g_a}K\Delta,\, \text{for some } \tilde\theta\in\hat C(\Delta))\\
&\quad\quad\quad\quad\quad\quad\quad-\Pr(\Theta(\alpha - \hat\epsilon_\alpha(\eta, M) + l_{g_0}K\Delta, \beta -\hat\epsilon_\beta(\eta, M) + l_{g_a}K\Delta)\cap \hat C(\Delta)\\
&\quad\quad\quad\quad\quad\quad\quad\quad\quad\quad
\nsubseteq \Theta_M(\alpha - \varepsilon_\alpha(\eta, M), \beta-\varepsilon_\beta(\eta, M))
)\\
\geq &1- 2\eta, 
\end{align*}
where the last inequality follows from Lemmas~\ref{lem:finite-feasibility} and \ref{lem:SAA-constraint-bound}(i)-(ii). 

Combining the above arguments, 
$$
\Pr(\Theta(\alpha - \hat\epsilon_\alpha(\eta, M), \beta -\hat\epsilon_\beta(\eta, M))\subseteq \Theta_M(\alpha, \beta)\subseteq \Theta(\alpha + \hat\epsilon_\alpha(\eta, M), \beta+ \hat\epsilon_\beta(\eta, M)))\geq 1-4\eta. 
$$

\sloppy (i) The argument follows by letting $\Delta = (\bar\theta - \underline{\theta})/\sqrt{M}$, and 
\begin{equation}\label{equ-appendix:epsilon}
    \epsilon(\eta, M) = 2\epsilon_1(\eta/4, M) + \max\{l_{g_0} + c_{g_0}, l_{g_a}+c_{g_a}\}K(\bar\theta - \underline{\theta})/\sqrt{M},
\end{equation}
where recall $\epsilon_1(\eta, M) = \sqrt{\log(4\ceil{\sqrt{M}}^K/\eta)/(2M)}$, $l_{g_0}, l_{g_a}$ are constants given in Lemma~\ref{lem:lipshitz-continuity}, and $c_{g_0}, c_{g_a}$ are constants given in Lemma~\ref{lem:SAA-constraint-bound}.

(ii) The argument follows from solving for $\eta$ by letting $\epsilon(\eta, M) = \epsilon$.

\end{proof}
\begin{proof}[Proof of Corollary~\ref{prop:prob-feasibility}]
    The result directly follows from Proposition~\ref{prop:feasible-set}, as we have $\hat\theta_M(\alpha, \beta)\in \Theta_M(\alpha, \beta)$.  
\end{proof}

\subsection{Proof of Proposition~\ref{prop:confidence-interval}}
We have that
\begin{equation}\label{proof-equ:ci-A0}
    \begin{split}
        &\Pr\big(\Theta_M(\alpha - \epsilon(\eta, M), \beta -\epsilon(\eta, M))\neq\emptyset, \text{ and }v^*(\alpha, \beta)\in\big[f\big(\hat\theta_M(\alpha + \epsilon(\eta, M), \beta+ \epsilon(\eta, M))\big) \\
    &\quad\quad\quad- \epsilon_f(\eta, M), f\big(\hat\theta_M(\alpha - \epsilon(\eta, M), \beta - \epsilon(\eta, M))\big)\big]\big)\\
    \geq &1-\Pr\big(\Theta_M(\alpha - \epsilon(\eta, M), \beta -\epsilon(\eta, M))\neq\emptyset, \text{ and }v^*(\alpha, \beta)
    < f\big(\hat\theta_M(\alpha + \epsilon(\eta, M), \beta+ \epsilon(\eta, M))\big)\\
    &\quad\quad\quad
    - \epsilon_f(\eta, M)\big) \\
    &\quad- \Pr(\Theta_M(\alpha - \epsilon(\eta, M), \beta -\epsilon(\eta, M))\neq\emptyset, \text{ and } 
    v^*(\alpha, \beta)> 
    f(\hat\theta_M(\alpha - \epsilon(\eta, M), \beta - \epsilon(\eta, M)))
    )\\
    &\quad
    -\Pr(\Theta_M(\alpha - \epsilon(\eta, M), \beta -\epsilon(\eta, M))=\emptyset)
    \end{split}
\end{equation}
First of all, we have that 
\begin{equation}\label{proof-equ:ci-A1}
\begin{split}
    &\Pr(\Theta_M(\alpha - \epsilon(\eta, M), \beta -\epsilon(\eta, M))\neq\emptyset, \text{ and } v^*(\alpha, \beta)> 
    f(\hat\theta_M(\alpha - \epsilon(\eta, M), \beta - \epsilon(\eta, M))))\\
    \leq &\Pr(\Theta_M(\alpha - \epsilon(\eta, M), \beta -\epsilon(\eta, M))\neq\emptyset, \text{ and }\Theta_M(\alpha - \epsilon(\eta, M), \beta -\epsilon(\eta, M))\nsubseteq \Theta(\alpha, \beta))\\
    \leq &\Pr(\Theta_M(\alpha - \epsilon(\eta, M), \beta -\epsilon(\eta, M))\nsubseteq \Theta(\alpha, \beta)))\\
    \leq &\eta,
    \end{split}
\end{equation}
where the last inequality follows according to Proposition~\ref{prop:feasible-set}. 

We now derive the term $\Pr(\Theta_M(\alpha - \epsilon(\eta, M), \beta -\epsilon(\eta, M))\neq\emptyset, \text{ and }v^*(\alpha, \beta)
    < f(\hat\theta_M(\alpha + \epsilon(\eta, M), \beta+ \epsilon(\eta, M)))
    - \epsilon_f(\eta, M))$. With some abuse of notation, let $\hat\theta_M(\alpha, \beta;\hat C(\Delta))\in\arg\min_{\theta\in \hat C(\Delta)}\Vert \theta - \hat\theta_M(\alpha, \beta)\Vert_1$. We have that
\begin{equation}\label{proof-equ:ci-A2}
    \begin{split}
&f(\hat\theta_M(\alpha + \epsilon(\eta, M), \beta+ \epsilon(\eta, M))) - v^*(\alpha, \beta) \\
=& f(\hat\theta_M(\alpha + \epsilon(\eta, M), \beta+ \epsilon(\eta, M)))  - f(\hat\theta_M(\alpha + \epsilon(\eta, M), \beta+ \epsilon(\eta, M); \hat C(\Delta))) \\
&+ f(\hat\theta_M(\alpha + \epsilon(\eta, M), \beta+ \epsilon(\eta, M); \hat C(\Delta)))
- \hat f_M(\hat\theta_M(\alpha + \epsilon(\eta, M), \beta+ \epsilon(\eta, M); \hat C(\Delta))) \\
&+ \hat f_M(\hat\theta_M(\alpha + \epsilon(\eta, M), \beta+ \epsilon(\eta, M); \hat C(\Delta)))-\hat f_M(\hat\theta_M(\alpha + \epsilon(\eta, M), \beta+ \epsilon(\eta, M))) \\
&+ \hat f_M(\hat\theta_M(\alpha + \epsilon(\eta, M), \beta+ \epsilon(\eta, M))) -\hat f_M(\theta^*(\alpha, \beta)) \\
&+ \hat f_M(\theta^*(\alpha, \beta))- v^*(\alpha, \beta).
    \end{split}
\end{equation}

\textbf{A1. Upper bound on $f(\hat\theta_M(\alpha + \epsilon(\eta, M), \beta+ \epsilon(\eta, M)))  - f(\hat\theta_M(\alpha + \epsilon(\eta, M), \beta+ \epsilon(\eta, M); \hat C(\Delta)))$.}

By definition of $\hat\theta_M(\alpha, \beta; \hat C(\Delta))$ and $\hat C(\Delta)$, we must have $\Vert\hat\theta_M(\alpha, \beta) - \hat\theta_M(\alpha, \beta; \hat C(\Delta))\Vert_1\leq K\Delta/2$. Then, according to Lemma~\ref{lem:lipshitz-continuity}, we have that $f(\hat\theta_M(\alpha + \epsilon(\eta, M), \beta+ \epsilon(\eta, M)))  - f(\hat\theta_M(\alpha + \epsilon(\eta, M), \beta+ \epsilon(\eta, M); \hat C(\Delta)))\leq l_fK\Delta/2$. Thus
\begin{equation}\label{proof-equ:ci-A2-1}
\begin{split}
   &\Pr(\Theta_M(\alpha - \epsilon(\eta, M), \beta -\epsilon(\eta, M))\neq\emptyset, \text{ and }f(\hat\theta_M(\alpha + \epsilon(\eta, M), \beta+ \epsilon(\eta, M)))  \\
   &\quad\quad\quad\quad\quad
   - f(\hat\theta_M(\alpha + \epsilon(\eta, M), \beta+ \epsilon(\eta, M); \hat C(\Delta)))\leq l_fK\Delta/2)\\
   =& \Pr(\Theta_M(\alpha - \epsilon(\eta, M), \beta -\epsilon(\eta, M))\neq\emptyset).
\end{split}
\end{equation}

\textbf{A2. Upper bound on $f(\hat\theta_M(\alpha + \epsilon(\eta, M), \beta+ \epsilon(\eta, M); \hat C(\Delta)))
- \hat f_M(\hat\theta_M(\alpha + \epsilon(\eta, M), \beta+ \epsilon(\eta, M); \hat C(\Delta)))$.} Let $n_{2:K} = \sum^K_{k=2}n_k$. Then 
\begin{equation}\label{proof-equ:ci-A2-2}
    \begin{split}
& \Pr(\Theta_M(\alpha - \epsilon(\eta, M), \beta -\epsilon(\eta, M))\neq\emptyset, \text{ and } f(\hat\theta_M(\alpha+ \epsilon(\eta, M), \beta+ \epsilon(\eta, M); \hat C(\Delta))) \\
&\quad\quad\quad
- \hat f_M(\hat\theta_M(\alpha + \epsilon(\eta, M), \beta+ \epsilon(\eta, M); \hat C(\Delta)))\leq n_{2:K}\epsilon_1(\eta, M))\\
\geq &\Pr(\Theta_M(\alpha - \epsilon(\eta, M), \beta -\epsilon(\eta, M))\neq\emptyset, \text{ and }f(\theta)
- \hat f_M(\theta)\leq n_{2:K}\epsilon_1(\eta, M), \forall \theta\in \hat C(\Delta))\\
\geq & 1-\eta - \Pr(\Theta_M(\alpha - \epsilon(\eta, M), \beta -\epsilon(\eta, M))=\emptyset), 
    \end{split}
\end{equation}
where the first inequality follows since $\hat\theta_M(\alpha + \epsilon(\eta, M), \beta+ \epsilon(\eta, M); \hat C(\Delta))\in \hat C(\Delta)$, and the second inequality follows from Hoeffding's inequality. 

\textbf{A3. Upper bound on $\hat f_M(\hat\theta_M(\alpha + \epsilon(\eta, M), \beta+ \epsilon(\eta, M); \hat C(\Delta)))-\hat f_M(\hat\theta_M(\alpha + \epsilon(\eta, M), \beta+ \epsilon(\eta, M)))$.}

By definition of $\hat\theta_M(\alpha + \epsilon(\eta, M), \beta+ \epsilon(\eta, M); \hat C(\Delta))$, we must have $\Vert \hat\theta_M(\alpha + \epsilon(\eta, M), \beta+ \epsilon(\eta, M); \hat C(\Delta)) - \hat\theta_M(\alpha + \epsilon(\eta, M), \beta+ \epsilon(\eta, M))\Vert_\infty\leq \Delta$. Then, according to Lemma~\ref{lem:SAA-constraint-bound}(iii), we have that
\begin{equation}\label{proof-equ:ci-A2-3}
    \begin{split}
    &\Pr(\Theta_M(\alpha - \epsilon(\eta, M), \beta -\epsilon(\eta, M))\neq\emptyset, \text{ and }\hat f_M(\hat\theta_M(\alpha+\epsilon(\eta, M), \beta+\epsilon(\eta, M); \hat C(\Delta))) \\
    &\quad
    - \hat f_M(\hat\theta_M(\alpha+\epsilon(\eta, M), \beta+\epsilon(\eta, M)))
    \leq c_fK(K-1)\Delta/2 + n_{2:K}\epsilon_1(4\eta/(K-1), M))\\
    \geq&
     1-\eta - \Pr(\Theta_M(\alpha - \epsilon(\eta, M), \beta -\epsilon(\eta, M))=\emptyset). 
    \end{split}
\end{equation}

\textbf{A4. Upper bound on $\hat f_M(\hat\theta_M(\alpha+\epsilon(\eta, M), \beta + \epsilon(\eta, M))) - \hat f_M(\theta^*(\alpha, \beta ))$.}
\begin{equation}\label{proof-equ:ci-A2-4}
    \begin{split}
        &\Pr(\hat f_M(\hat\theta_M(\alpha+\epsilon(\eta, M), \beta+\epsilon(\eta, M))) - \hat f_M(\theta^*(\alpha , \beta))\leq 0)\\
        \geq & \Pr(\Theta(\alpha, \beta)\subseteq \Theta_M(\alpha+\epsilon(\eta, M), \beta+\epsilon(\eta, M)))\\
        \geq & 1-\eta. 
    \end{split}
\end{equation}

\textbf{A5. Upper bound on $\hat f_M(\theta^*(\alpha, \beta))- v^*(\alpha, \beta)$.}

According to Hoeffding's inequality, 
\begin{equation}\label{proof-equ:ci-A2-5}
    \Pr(\hat f_M(\theta^*(\alpha, \beta)) - v^*(\alpha, \beta)\leq n_{2:K}\epsilon_1(4|\hat C(\Delta)|\eta, M))\geq 1-\eta. 
\end{equation}

Combining Eqs.~\eqref{proof-equ:ci-A2}-\eqref{proof-equ:ci-A2-5}, and applying the union bound, 
\begin{equation}\label{proof-equ:ci-bound-A}
\begin{split}
    &\Pr\big(\Theta_M(\alpha - \epsilon(\eta, M), \beta -\epsilon(\eta, M))\neq\emptyset, f(\hat\theta_M(\alpha+\epsilon(\eta, M), \beta+\epsilon(\eta, M))) - v^*(\alpha, \beta)\\
    &\quad
    > l_fK\Delta/2 + n_{2:K}\epsilon_1(\eta, M) + c_fK(K-1)\Delta/2 + n_{2:K}\epsilon_1(\eta/(K-1)) + n_{2:K}\epsilon_1(2|\hat C(\Delta)|\eta, M)\big)\\
    \leq&\,
     4\eta+ 3\Pr(\Theta_M(\alpha - \epsilon(\eta, M), \beta -\epsilon(\eta, M))=\emptyset). 
\end{split}
\end{equation}
Additionally, we let $\Delta = (\bar\theta - \underline{\theta})/\sqrt{M}$, and
\begin{equation}\label{equ-appendix:epsilon-f}
    \epsilon_f(\eta, M) = l_fK\Delta/2 + n_{2:K}\epsilon_1(\eta, M) + c_fK(K-1)\Delta/2 + n_{2:K}\epsilon_1(\eta/(K-1)) + n_{2:K}\epsilon_1(2|\hat C(\Delta)|\eta, M),
\end{equation}
where recall that $l_f$ is the constant given in Lemma~\ref{lem:lipshitz-continuity}, $c_f$ is the constant given in Lemma~\ref{lem:SAA-constraint-bound},  $\epsilon_1(\eta, M) = \sqrt{\log(4\ceil{\sqrt{M}}^K/\eta)/(2M)}$, and $\epsilon_1(2|\hat C(\Delta)|\eta, M) = \sqrt{\log(2/\eta)/(2M)}$. Then, 
combining Eqs.~\eqref{proof-equ:ci-A0}, \eqref{proof-equ:ci-A1}, and \eqref{proof-equ:ci-bound-A}, we have that
\begin{align*}
        &\Pr\big(\Theta_M(\alpha - \epsilon(\eta, M), \beta -\epsilon(\eta, M))\neq\emptyset, \text{ and }v^*(\alpha, \beta)\in\big[f\big(\hat\theta_M(\alpha + \epsilon(\eta, M), \beta+ \epsilon(\eta, M))\big) \\
    &\quad\quad\quad- \epsilon_f(\eta, M), f\big(\hat\theta_M(\alpha - \epsilon(\eta, M), \beta - \epsilon(\eta, M))\big)\big]\big)\\
    \geq &1-5\eta - 4\Pr(\Theta_M(\alpha - \epsilon(\eta, M), \beta -\epsilon(\eta, M))=\emptyset),
\end{align*}

Finally, by letting 
\begin{equation}\label{proof-equ:ci-M-bar}
    \bar M(\bar\epsilon, \eta) = \min\{M: \epsilon(\eta, M)\leq \bar\epsilon/2, \, 2\exp(-M\bar\epsilon^2/4)\leq \eta\},
\end{equation}
we have that $\Pr(\Theta_M(\alpha - \epsilon(\eta, M), \beta -\epsilon(\eta, M))=\emptyset)\leq \eta$, and the argument follows.

\subsection{Proof of Proposition~\ref{prop:objective-bound}}
We give the following Lemma stating that if MFCQ holds at any $\theta^*(\alpha, \beta)\in S(\alpha, \beta)$, $v^*(\alpha, \beta)$ is lipschitz continuous in a neighborhood of $(\alpha, \beta)$. Lemma~\ref{lem:optimal-value-lipshitz} directly follows from Lemma 6.2 of \citet{still2018lectures}. We thus omit its proof. 
\begin{lemma}\label{lem:optimal-value-lipshitz}[Lemma 6.2 of \citet{still2018lectures}]
    Suppose MFCQ holds at any $\theta^*(\alpha, \beta)\in S(\alpha, \beta)$. There exist constants $L_v\geq 0$ and $\epsilon_v > 0$ such that 
    $$
    |v^*(\alpha - \epsilon, \beta -\epsilon) - v^*(\alpha, \beta )|\leq L_v|\epsilon|, \, \forall \epsilon\in [-\epsilon_v, \epsilon_v].
    $$
\end{lemma}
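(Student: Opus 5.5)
The plan is to view \eqref{equ:main} as a right-hand-side perturbed program and prove the two one-sided Lipschitz bounds separately. For $\epsilon$ near $0$, write the feasible set $\Theta(\alpha-\epsilon,\beta-\epsilon)$ as $\{\theta: h_j(\theta,\epsilon)\le 0,\ j\in[J]\}$. Here $h_1(\theta,\epsilon)=1-\alpha+\epsilon-g_0(\theta)$ and $h_2(\theta,\epsilon)=g_a(\theta)-\beta+\epsilon$. The box constraints defining $C$ enter as further affine $h_j$ that do not depend on $\epsilon$. I assume MFCQ is understood for this full constraint system, including the box constraints.

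Three regularity facts are used throughout. By Lemma~\ref{lem:lipshitz-continuity}, $f$, $g_0$ and $g_a$ are Lipschitz on $C$. For the Gaussian $z$-test statistics they are moreover $C^1$ with uniformly continuous gradients on the compact set $C$. Finally, $\partial_\epsilon h_j\in\{0,1\}$, so the perturbation enters each constraint linearly with a bounded coefficient.

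\textbf{Upper bound $v^*(\alpha-\epsilon,\beta-\epsilon)\le v^*(\alpha,\beta)+L|\epsilon|$.} Fix $\theta^*\in S(\alpha,\beta)$ and let $\xi$ be an MFCQ direction at $\theta^*$. Set $\gamma=\min_{j\in J_0(\theta^*)}(-\nabla h_j(\theta^*)\xi)>0$.

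For the active constraints, a first-order Taylor expansion gives
\[
h_j(\theta^*+t\xi,\epsilon)\le h_j(\theta^*,0)-t\gamma+o(t)+|\epsilon|\le -t\gamma/2+|\epsilon|
\]
for all small $t$. Choosing $t=2|\epsilon|/\gamma$ therefore makes every active constraint nonpositive. The inactive constraints are strictly negative at $\theta^*$, so they remain nonpositive for small $t$ and $\epsilon$ by continuity.

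Hence $\theta^*+t\xi$ is feasible for the perturbed problem. This gives
\[
v^*(\alpha-\epsilon,\beta-\epsilon)\le f(\theta^*+t\xi)\le v^*(\alpha,\beta)+l_f\|\xi\|_1\,\frac{2}{\gamma}\,|\epsilon| .
\]

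\textbf{Lower bound $v^*(\alpha,\beta)\le v^*(\alpha-\epsilon,\beta-\epsilon)+L|\epsilon|$.} Let $\theta_\epsilon$ be optimal for the perturbed problem; it exists because the perturbed feasible set is nonempty by the upper-bound step and $C$ is compact. I will apply the same construction in reverse: push $\theta_\epsilon$ along an MFCQ direction so that it becomes feasible for the unperturbed problem. For this I need a single direction $\xi$ and constants $\gamma,t_0$ that work uniformly for all points near $S(\alpha,\beta)$.

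\emph{Step 1: $\theta_\epsilon$ approaches $S(\alpha,\beta)$.} Take any sequence $\epsilon_n\to0$. By compactness, $\theta_{\epsilon_n}$ has a limit point $\bar\theta$, which is feasible for \eqref{equ:main} by continuity of $g_0$ and $g_a$. By the upper bound, $f(\bar\theta)\le v^*(\alpha,\beta)$, so $\bar\theta\in S(\alpha,\beta)$. Hence $\mathrm{dist}(\theta_\epsilon,S(\alpha,\beta))\to0$.

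\emph{Step 2: uniform MFCQ near $S(\alpha,\beta)$.} MFCQ is an open condition: if $\nabla h_j(\bar\theta)\xi<0$ for all $j\in J_0(\bar\theta)$, then for $\theta$ near $\bar\theta$ the constraints active or nearly active at $\theta$ are a subset of $J_0(\bar\theta)$. By continuity of the gradients, the same $\xi$ gives $\nabla h_j(\theta)\xi<-\gamma/2$ for those constraints. Since $S(\alpha,\beta)$ is compact, finitely many such neighborhoods cover it.

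\emph{Step 3: conclude.} For $|\epsilon|$ small, $\theta_\epsilon$ lies in one of these neighborhoods. Then $\theta_\epsilon+(2|\epsilon|/\gamma)\xi$ is feasible for the unperturbed problem, and the Lipschitz bound on $f$ gives the lower inequality. Taking $L_v$ as the maximum of the constants from both steps and $\epsilon_v$ as the smallest admissible radius completes the proof; this is exactly the argument behind Lemma 6.2 of \citet{still2018lectures}.

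\textbf{Main obstacle.} The delicate part is Step 2 of the lower bound, namely the uniformity of the MFCQ constants over a neighborhood of the possibly non-singleton solution set $S(\alpha,\beta)$. I would handle it by contradiction: suppose sequences $\theta_n\to\bar\theta\in S(\alpha,\beta)$ admit no uniform $(\xi,\gamma,t_0)$. Passing to a subsequence with a fixed active set and invoking the openness of MFCQ at $\bar\theta$ then gives a contradiction.
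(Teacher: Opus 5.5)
Your argument is correct, and it reconstructs the argument the paper delegates to a citation. The paper gives no proof of its own; it says only that the lemma follows directly from Lemma 6.2 of Still's lecture notes. You instead reconstruct that standard argument: shift an optimal point of one problem a distance $O(|\epsilon|)$ along an MFCQ direction into the feasible set of the other, then apply the Lipschitz bound on $f$ from Lemma~\ref{lem:lipshitz-continuity}.

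The citation is shorter, but your self-contained version makes explicit what it uses silently:
\begin{itemize}
\item MFCQ is imposed on the full constraint system, including the box constraints of $C$, as the paper itself does in the proof of Lemma~\ref{lem:MFCQ}.
\item $f,g_0,g_a$ are $C^1$ on the compact box, so $S(\alpha,\beta)$ is compact and your finite-cover uniformity in Step~2 goes through.
\item $\epsilon$ enters each constraint additively with unit coefficient.
\end{itemize}

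Your version also exposes an asymmetry the citation hides. Since $\Theta(\alpha-\epsilon,\beta-\epsilon)$ shrinks as $\epsilon$ grows, the upper bound is nontrivial only for $\epsilon>0$, and it needs MFCQ at just one optimal point. MFCQ on all of $S(\alpha,\beta)$ is needed only for the lower bound with $\epsilon<0$. Finally, your argument yields an explicit constant of the form $L_v=2l_f\|\xi\|_1/\gamma$, where $\gamma$ is a uniform MFCQ margin.
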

% \begin{proof}[Proof of Lemma~\ref{lem:lipshitz-continuity}]
% Let $h_1(\theta) = 1-\alpha - g_0(\theta)$, $h_2(\theta) = g_a(\theta) - \beta$, $h_i(\theta) = \theta_{i-2} - \bar\theta$ for $i=3, \cdots, k+2$, $h_i(\theta) = -\theta_{i-2-k}$, for all $i = k+3, \cdots, 2k+2$. Then problem~\eqref{equ:main-1} can be equivalently written as:
% $$
% \min_{\theta\in\mathbb{R}^n}\{f(\theta): h_i(\theta)\leq 0, \,\forall i\in[2K+2]\}. 
% $$
% Then, according to Lemma 6.2 of \citet{still2018lectures}, since $f(\theta)$ and $h_i(\theta)$ for $i\in[2k+2]$ are continuously differentiable, 
% it suffices to show that MFCQ holds at any $\theta^*\in S(\alpha, \beta)$,  where MFCQ is stated as: there exists some $\xi$ such that $\nabla h_j(\theta^*)\xi<0$ for all $j\in J(\theta^*)$ where $J(\theta^*)$ is the set of active index set at $\theta^*$, i.e. $J(\theta^*) = \{j\in [2k+2]: h_j(\theta^*) = 0\}$. The MFCQ holds by assumption.

% \end{proof}

Let 
\begin{equation}\label{proof-equ:objective-M-bar}
    \bar M_f(\bar\epsilon, \eta) = \min\{M\geq 1: \epsilon(\eta, M) \leq \min\{\bar\epsilon, \epsilon_v\}, \epsilon(\eta, M)\leq \bar\epsilon/2, 2\exp(-M\bar\epsilon^2/4)\leq \eta\}.
\end{equation}
Then when $M\geq \bar M_f(\bar\epsilon, \eta)$, we have $\Theta(\alpha - \epsilon(\eta, M), \beta -\epsilon(\eta, M))\neq \emptyset$ and $\Pr(\Theta_M(\alpha, \beta)=\emptyset)\leq \eta$. 

\begin{proof}[Proof of Proposition~\ref{prop:objective-bound}]
We have: 
\begin{equation}\label{proof-equ:objective-bound}
    \begin{split}
|f(\hat\theta_M(\alpha, \beta)) - v^*(\alpha, \beta)|&\leq \max\{f(\hat\theta_M(\alpha, \beta)) - v^*(\alpha, \beta), v^*(\alpha, \beta) - f(\hat\theta_M(\alpha, \beta))\}. 
    \end{split}
\end{equation}

\textbf{A. Upper bound on $f(\hat\theta_M(\alpha, \beta)) - v^*(\alpha, \beta)$}. We have that when $M\geq \bar M_f(\bar\epsilon, \eta)$, $\Theta(\alpha - \epsilon(\eta, M), \beta -\epsilon(\eta, M))\neq \emptyset$, so $v^*(\alpha - \epsilon(\eta, M), \beta -\epsilon(\eta, M))<\infty$. Then, we have
\begin{equation}\label{proof-equ:objective-bound-A0}
\begin{split}
    &f(\hat\theta_M(\alpha, \beta)) - v^*(\alpha, \beta) \\
    = &f(\hat\theta_M(\alpha, \beta)) - v^*(\alpha - \epsilon(\eta, M), \beta -\epsilon(\eta, M))
    + v^*(\alpha - \epsilon(\eta, M), \beta -\epsilon(\eta, M))  - 
    v^*(\alpha, \beta). 
\end{split}
\end{equation}
Since $M\geq \bar M_f(\bar\epsilon, \eta)$, $\epsilon(\eta, M)\leq \epsilon_v$. Then, according to Lemma~\ref{lem:optimal-value-lipshitz}, there exists some constant $L_v$ such that
we have that 
\begin{equation}\label{proof-equ:objective-bound-A1}
    v^*(\alpha - \epsilon(\eta, M), \beta -\epsilon(\eta, M))  - 
    v^*(\alpha, \beta)\leq L_v\epsilon(\eta, M). 
\end{equation}

Now, we derive the upper bound on $f(\hat\theta_M(\alpha, \beta)) - v^*(\alpha - \epsilon(\eta, M), \beta -\epsilon(\eta, M))$. With some abuse of notation, let $\hat\theta_M(\alpha, \beta;\hat C(\Delta))\in\arg\min_{\theta\in \hat C(\Delta)}\Vert \theta - \hat\theta_M(\alpha, \beta)\Vert_1$. Then we have
\begin{equation}\label{proof-equ:objective-bound-A2}
    \begin{split}
&f(\hat\theta_M(\alpha, \beta)) - v^*(\alpha-\epsilon(\eta, M), \beta-\epsilon(\eta, M)) \\
=& f(\hat\theta_M(\alpha, \beta)) - f(\hat\theta_M(\alpha, \beta; \hat C(\Delta))) + f(\hat\theta_M(\alpha, \beta; \hat C(\Delta)))
- \hat f_M(\hat\theta_M(\alpha, \beta; \hat C(\Delta))) \\
&\quad+ \hat f_M(\hat\theta_M(\alpha, \beta; \hat C(\Delta)))
-\hat f_M(\hat\theta_M(\alpha, \beta)) + \hat f_M(\hat\theta_M(\alpha, \beta)) -\hat f_M(\theta^*(\alpha-\epsilon(\eta, M), \beta-\epsilon(\eta, M))) \\
&\quad+ \hat f_M(\theta^*(\alpha-\epsilon(\eta, M), \beta-\epsilon(\eta, M)))- v^*(\alpha-\epsilon(\eta, M), \beta-\epsilon(\eta, M)).
    \end{split}
\end{equation}

\textbf{A1. Upper bound on $f(\hat\theta_M(\alpha, \beta)) - f(\hat\theta_M(\alpha, \beta; \hat C(\Delta)))$.}

By definition of $\hat\theta_M(\alpha, \beta; \hat C(\Delta))$ and $\hat C(\Delta)$, we must have $\Vert\hat\theta_M(\alpha, \beta) - \hat\theta_M(\alpha, \beta; \hat C(\Delta))\Vert_1\leq K\Delta/2$. Then, according to Lemma~\ref{lem:lipshitz-continuity}, we have that $f(\hat\theta_M(\alpha, \beta)) - f(\hat\theta_M(\alpha, \beta; \hat C(\Delta)))\leq l_f\Vert \hat\theta_M(\alpha, \beta;\hat C(\Delta)) - \hat\theta_M(\alpha, \beta)\Vert_1\leq l_fK\Delta/2$. Then
\begin{equation}\label{proof-equ:objective-bound-A2-1}
\Pr(\Theta_M(\alpha, \beta)\neq\emptyset, \text{ and }f(\hat\theta_M(\alpha, \beta)) - f(\hat\theta_M(\alpha, \beta; \hat C(\Delta)))\leq l_fK\Delta/2) = \Pr(\Theta_M(\alpha, \beta)\neq\emptyset)
    .
\end{equation}

\textbf{A2. Upper bound on $f(\hat\theta_M(\alpha, \beta; \hat C(\Delta))) - \hat f_M(\hat\theta_M(\alpha, \beta; \hat C(\Delta)))$.} Let $n_{2:K} = \sum^K_{k=2}n_k$. Then
\begin{equation}\label{proof-equ:objective-bound-A2-2}
    \begin{split}
& \Pr(\Theta_M(\alpha, \beta)\neq\emptyset,\text{ and }f(\hat\theta_M(\alpha, \beta; \hat C(\Delta))) - \hat f_M(\hat\theta_M(\alpha, \beta; \hat C(\Delta)))\leq n_{2:K}\epsilon_1(4\eta, M))\\
\geq &\Pr(\Theta_M(\alpha, \beta)\neq\emptyset,\text{ and }f(\theta) - \hat f_M(\theta)\leq n_{2:K}\epsilon_1(4\eta, M), \forall \theta\in \hat C(\Delta))\\
\geq & 1-\eta - \Pr(\Theta_M(\alpha, \beta)=\emptyset), 
    \end{split}
\end{equation}
where the first inequality follows since $\hat\theta_M(\alpha, \beta; \hat C(\Delta))\in \hat C(\Delta)$, and the second inequality follows from Hoeffding's inequality and the union bound. 

\textbf{A3. Upper bound on $\hat f_M(\hat\theta_M(\alpha, \beta; \hat C(\Delta))) - \hat f_M(\hat\theta_M(\alpha, \beta))$.}

By definition of $\hat\theta_M(\alpha, \beta; \hat C(\Delta))$, we must have $\Vert \hat\theta_M(\alpha, \beta; \hat C(\Delta)) - \hat\theta_M(\alpha, \beta)\Vert_\infty\leq \Delta$. Then, according to Lemma~\ref{lem:SAA-constraint-bound}(iii), for $K\geq 2$ we have that
\begin{align}\label{proof-equ:objective-bound-A2-3}
    &\Pr(\Theta_M(\alpha, \beta)\neq\emptyset,\text{ and }\hat f_M(\hat\theta_M(\alpha, \beta; \hat C(\Delta))) - \hat f_M(\hat\theta_M(\alpha, \beta))\\
    &\quad\quad\quad
    \leq c_fK(K-1)\Delta/2 + n_{2:K}\epsilon_1(4\eta/(K-1), M))\geq 1-\eta - \Pr(\Theta_M(\alpha, \beta)=\emptyset). 
\end{align}

\textbf{A4. Upper bound on $\hat f_M(\hat\theta_M(\alpha, \beta)) - \hat f_M(\theta^*(\alpha - \epsilon(\eta, M), \beta -\epsilon(\eta, M)))$.}
\begin{equation}\label{proof-equ:objective-bound-A2-4}
    \begin{split}
        &\Pr(\hat f_M(\hat\theta_M(\alpha, \beta)) - \hat f_M(\theta^*(\alpha - \epsilon(\eta, M), \beta -\epsilon(\eta, M)))\leq 0)\\
        \geq & \Pr(\Theta(\alpha - \epsilon(\eta, M), \beta -\epsilon(\eta, M))\subseteq \Theta_M(\alpha, \beta))\\
        \geq & 1-\eta. 
    \end{split}
\end{equation}

\textbf{A5. Upper bound on $\hat f_M(\theta^*(\alpha - \epsilon(\eta, M), \beta - \epsilon(\eta, M)))- v^*(\alpha - \epsilon(\eta, M), \beta - \epsilon(\eta, M))$.}

According to Hoeffding's inequality, 
\begin{equation}\label{proof-equ:objective-bound-A2-5}
    \Pr(\hat f_M(\theta^*(\alpha- \epsilon(\eta, M), \beta- \epsilon(\eta, M))) - v^*(\alpha- \epsilon(\eta, M), \beta- \epsilon(\eta, M))\leq n_{2:K}\epsilon_1(4|\hat C(\Delta)|\eta, M))\geq 1-\eta. 
\end{equation}

Combining Eqs.~\eqref{proof-equ:objective-bound-A0}-\eqref{proof-equ:objective-bound-A2-5}, and applying the union bound, 
\begin{equation}\label{proof-equ:objective-bound-A}
\begin{split}
    &\Pr\big(\Theta_M(\alpha, \beta)\neq\emptyset,\text{ and }f(\hat\theta_M(\alpha, \beta)) - v^*(\alpha, \beta)> L_v\epsilon(\eta, M) + l_fK\Delta/2 + n_{2:K}\epsilon_1(\eta, M) \\
    &\quad\quad+ c_fK(K-1)\Delta/2 + n_{2:K}\epsilon_1(\eta/(K-1)) + n_{2:K}\epsilon_1(2|\hat C(\Delta)|\eta, M)\big)\\
    \leq & \Pr(\Theta_M(\alpha, \beta)\neq\emptyset,\text{ and }v^*(\alpha - \epsilon(\eta, M), \beta -\epsilon(\eta, M))  - 
    v^*(\alpha, \beta)> L_v\epsilon(\eta, M)) \\
    &\quad\quad+ \Pr(\Theta_M(\alpha, \beta)\neq\emptyset,\text{ and }f(\hat\theta_M(\alpha, \beta)) - f(\hat\theta_M(\alpha, \beta; \hat C(\Delta)))> l_fK\Delta/2) \\
    &\quad\quad + \Pr(\Theta_M(\alpha, \beta)\neq\emptyset,\text{ and }f(\hat\theta_M(\alpha, \beta; \hat C(\Delta))) - \hat f_M(\hat\theta_M(\alpha, \beta; \hat C(\Delta)))> n_{2:K}\epsilon_1(\eta, M))\\
    &\quad\quad + \Pr(\Theta_M(\alpha, \beta)\neq\emptyset,\text{ and }\hat f_M(\hat\theta_M(\alpha, \beta; \hat C(\Delta))) - \hat f_M(\hat\theta_M(\alpha, \beta))\\
    &\quad\quad\quad\quad\quad\quad\quad\quad
    > c_fK(K-1)\Delta/2 + n_{2:K}\epsilon_1(\eta/(K-1), M))\\
    &\quad\quad + \Pr(\hat f_M(\hat\theta_M(\alpha, \beta)) - \hat f_M(\theta^*(\alpha - \epsilon(\eta, M), \beta -\epsilon(\eta, M)))> 0)\\
    &\quad\quad + \Pr(\hat f_M(\theta^*(\alpha- \epsilon(\eta, M), \beta- \epsilon(\eta, M))) - v^*(\alpha- \epsilon(\eta, M), \beta- \epsilon(\eta, M))> n_{2:K}\epsilon_1(2|\hat C(\Delta)|\eta, M))\\
    \leq & 2\eta + 3\Pr(\Theta_M(\alpha, \beta)=\emptyset). 
\end{split}
\end{equation}

\textbf{B. Upper bound on $v^*(\alpha, \beta) - f(\hat\theta_M(\alpha, \beta))$}. Since $\Theta(\alpha + \epsilon(\eta, M), \beta+ \epsilon(\eta, M))\neq \emptyset$, we have
\begin{equation*}
\begin{split}
    &v^*(\alpha, \beta)- f(\hat\theta_M(\alpha, \beta)) \\
    = &v^*(\alpha, \beta) - v^*(\alpha + \epsilon(\eta, M), \beta+ \epsilon(\eta, M))
    + v^*(\alpha + \epsilon(\eta, M), \beta+ \epsilon(\eta, M))  - f(\hat\theta_M(\alpha, \beta)).
\end{split}
\end{equation*}

Since $M\geq \bar M_f(\bar\epsilon, \eta)$, we have that $\epsilon(\eta, M)\leq \epsilon_v$. Then, according to Lemma~\ref{lem:optimal-value-lipshitz}, 
\begin{equation*}\label{proof-equ:objective-bound-B1}
    v^*(\alpha, \beta) - v^*(\alpha + \epsilon(\eta, M), \beta+ \epsilon(\eta, M))
\leq L_v \epsilon(\eta, M).
\end{equation*}

According to Proposition~\ref{prop:feasible-set}, we have that 
\begin{equation*}\label{proof-equ:objective-bound-B2}
\begin{split}
&\Pr(\Theta_M(\alpha, \beta)\neq\emptyset, \text{ and }v^*(\alpha + \epsilon(\eta, M), \beta+ \epsilon(\eta, M))> f(\hat\theta_M(\alpha , \beta)))\\
\leq &\Pr(\Theta_M(\alpha, \beta)\neq\emptyset, \text{ and }\Theta_M(\alpha, \beta)\nsubseteq \Theta(\alpha + \epsilon(\eta, M), \beta+ \epsilon(\eta, M)))\\
\leq & \eta. 
\end{split}
\end{equation*}
It thus follows that
\begin{equation}\label{proof-equ:objective-bound-B}
    \Pr(\Theta_M(\alpha, \beta)\neq\emptyset, \text{ and }v^*(\alpha, \beta) - f(\hat\theta_M(\alpha, \beta))> L_v\epsilon(\eta, M))\leq \eta. 
\end{equation}

Let $\tilde \epsilon_f(\eta, M) = L_v\epsilon(\eta, M) + l_fK\Delta/2 + n_{2:K}\epsilon_1(\eta, M)+c_fK(K-1)\Delta/2 + n_{2:K}\epsilon_1(\eta/(K-1)) + n_{2:K}\epsilon_1(2|\hat C(\Delta)|\eta, M)$.
Combining Eqs.~\eqref{proof-equ:objective-bound-A} and \eqref{proof-equ:objective-bound-B}, we have that
\begin{equation*}
    \begin{split}
&\Pr(\Theta_M(\alpha, \beta)\neq\emptyset, \text{ and }|f(\hat\theta_M(\alpha, \beta)) - v^*(\alpha, \beta)|\leq  \tilde \epsilon_f(\eta, M)) \\
\geq &1-\Pr(\Theta_M(\alpha, \beta)\neq\emptyset, \text{ and }f(\hat\theta_M(\alpha, \beta)) - v^*(\alpha, \beta> \tilde \epsilon_f(\eta, M))
\\
 &\quad-\Pr(\Theta_M(\alpha, \beta)\neq\emptyset, \text{ and }-f(\hat\theta_M(\alpha, \beta)) + v^*(\alpha, \beta)\geq \tilde \epsilon_f(\eta, M))\\
&\quad
- \Pr(\Theta_M(\alpha, \beta)=\emptyset)\\
> & 1-3\eta - 4\Pr(\Theta_M(\alpha, \beta)=\emptyset). 
    \end{split}
\end{equation*}
Additionally, for $M\geq \bar M_f(\bar\epsilon, \eta)$, we have that $\Pr(\Theta_M(\alpha, \beta)=\emptyset)\leq \eta$. 

\sloppy Last, the argument follows by letting $\Delta = (\bar\theta - \underline{\theta})/\sqrt{M}$, and then $\tilde\epsilon_f(\eta, M) = \sqrt{c_{\mathrm{v,1}}(K, n) + c_{\mathrm{v},2}(K, n)\log(1/\eta) + c_{\mathrm{v, 3}}(K, n)\log(M)/M}$.

\end{proof}

\subsection{Proof of Lemma~\ref{lem:MFCQ}}
\begin{proof}[Proof of Lemma~\ref{lem:MFCQ}]
To prove this lemma, we first reformulate the original optimization problem \eqref{equ:main} into the standard form of an optimization problem \eqref{eq:MFCQ_form}. Specifically, we let $h_0(\theta) = f(\theta)$, $h_1(\theta) = 1-\alpha - g_0(\theta)$, $h_2(\theta) = g_a(\theta) - \beta$, $h_{i}(\theta) = \underline{\theta} - \theta_{i-2}$ for $3\leq i\leq 2+K$, and $h_{i}(\theta) = \theta_{i-2-K} - \bar\theta$ for $3+K\leq i\leq 2+2K$. 

We notice that the set $\{\theta\in\mathbb{R}^K: \theta_k = \underline{\theta}\text{ or }\theta_k = \bar\theta \text{ for some }k\in[K]\}$ has a zero Lebesgue measure. Thus $\{\theta\in \mathbb{R}^K: i\in J_0(\theta) \text{ for some }3\leq i\leq 2K+2\}$ has a zero Lebesgue measure. 
It thus suffices to show that $\{\theta\in\mathbb{R}^K: \nabla g_0(\theta)\text{ and }\nabla g_a(\theta) \text{ are linearly dependent}\}$ has a zero Lebesgue measure. 

We first consider the one-sided $z$ tests (either one-sample or two-sample), and show that there exists some $\theta\in\mathbb{R}^{K}$ such that $\nabla g_0(\theta)$ and $\nabla g_a(\theta)$ are linearly independent. We notice that under $H_0$, $S = (S_k)^K_{k=1}$ has a multivariate normal distribution with mean $0$ and covariance $\Sigma$, whose $(k,k')$-th element is $\Sigma_{k, k'} = \sqrt{\min\{n_{1:k}, n_{1:k'}\}/\max\{n_{1:k}, n_{1:k'}\}}$. Let $p_{0,k}(\theta_k)$ and $p_{a,k}(\theta_k)$ be the density functions of the marginal distributions of $S_k$ under $H_0$ and $H_a$. Let $S_{-k}$ and $\theta_{-k}$ be the vectors of $S$ and $\theta$ without the $k$-th element. Then we have for all $k\in [K]$, 
\begin{align*}
    \partial g_0(\theta)/\partial \theta_k = p_{0,k}(\theta_k)\Pr(S_{-k}\leq \theta_{-k}|S_k = \theta_k, H_0),\,\nabla g_0(\theta) = (\partial g_0(\theta)/\partial \theta_k)^K_{k=1},\\
    \partial g_a(\theta)/\partial \theta_k = p_{a,k}(\theta_k)\Pr(S_{-k}\leq \theta_{-k}|S_k = \theta_k, H_a),\,
    \nabla g_a(\theta) = (\partial g_a(\theta)/\partial \theta_k)^K_{k=1}.
\end{align*}
We first show that there exists some $x\in \mathbb{R}$ such that $\nabla g_0(\theta(x))$ and $\nabla g_a(\theta(x))$ are linearly independent for $\theta(x) = (x, \cdots, x)$. We let $l_0(\theta) = (\partial g_0(\theta)/\partial \theta_1, \partial g_0(\theta)/\partial \theta_2)^\top$ and $l_a(\theta) = (\partial g_a(\theta)/\partial \theta_1, \partial g_a(\theta)/\partial \theta_2)^\top$. It is then sufficient to show that $l_0(\theta(x))$ and $l_a(\theta(x))$ are linearly independent. Since $S_1$ and $S_2$ have the same marginal distributions under $H_0$, $p_{0,1}(x) = p_{0,2}(x)$, and thus
$$
\lim_{x\to\infty}\frac{\partial g_0(\theta(x))/\partial \theta_1}{\partial g_0(\theta(x))/\partial \theta_2} = \lim_{x\to\infty}\frac{\Pr(S_{-1}\leq \theta_{-1}(x)|S_1 = x, H_0)}{\Pr(S_{-2}\leq \theta_{-2}(x)|S_2 = x, H_0)} = 1,
$$
where the second equality follows since for $S\sim N(\mu, \Sigma)$ where $\mu = (\mu_k)^K_{k=1}$, $\Sigma_{k,k}= 1$ and $\Sigma_{k,k'}<1$ for $k\neq k'$, $S_{-1}|S_1 = x$ ($S_{-2}|S_2 = x$) is normally distributed with mean $\mu_{-1} + (\Sigma_{2,1}, \cdots, \Sigma_{K,1})^\top (x-\mu_1)< \mu_{-1} + x - \mu_1$ ($\mu_{-2} + (\Sigma_{1,2}, \Sigma_{3,2}, \cdots, \Sigma_{K,2})^\top (x-\mu_2)< \mu_{-2} + x - \mu_2$), and thus $\lim_{x\to\infty}\Pr(S_{-1}\leq \theta_{-1}(x)|S_1 = x, H_0) = 1$ ($\lim_{x\to\infty}\Pr(S_{-2}\leq \theta_{-2}(x)|S_2 = x, H_0) = 1$). 
In addition, 
$$
\frac{p_{a,1}(x)}{p_{a,2}(x)} = \exp(- (x - \sqrt{n_1}\delta)^2/(2\sigma^2) + (x - \sqrt{n_1+n_2}\delta)^2/(2\sigma^2))\neq 1, 
$$
and
$$
\lim_{x\to\infty}\frac{\Pr(S_{-1}\leq \theta_{-1}(x)|S_1 = x, H_a)}{\Pr(S_{-2}\leq \theta_{-2}(x)|S_2 = x, H_a)} = 1, 
$$
and thus 
$$
\lim_{x\to\infty}\frac{\partial g_a(\theta(x))/\partial \theta_1}{\partial g_a(\theta(x))/\partial \theta_2} = \lim_{x\to\infty}\frac{p_{a,1}(x)}{p_{a,2}(x)} \neq 1.
$$
Since both $\frac{\partial g_0(\theta(x))/\partial \theta_1}{\partial g_0(\theta(x))/\partial \theta_2}$ and  $\frac{\partial g_a(\theta(x))/\partial \theta_1}{\partial g_a(\theta(x))/\partial \theta_2}$ are continuous in $x$, it follows that there exists some $x\in\mathbb{R}$ such that $\frac{\partial g_0(\theta(x))/\partial \theta_1}{\partial g_0(\theta(x))/\partial \theta_2}\neq \frac{\partial g_a(\theta(x))/\partial \theta_1}{\partial g_a(\theta(x))/\partial \theta_2}$. So $\det((l_0(\theta(x)), l_a(\theta(x))))\neq 0$.

In the meanwhile, the determinant $\det((l_0(\theta), l_a(\theta)))$ is a real analytic function. In fact, 
\begin{equation*}
    \begin{split}
&\det((l_0(\theta), l_a(\theta))) \\
= &p_{0,1}(\theta_1)\Pr(S_{-1}\leq \theta_{-1}|S_1 = \theta_1, H_0)p_{a,2}(\theta_2)\Pr(S_{-2}\leq \theta_{-2}|S_2 = \theta_2, H_a) \\
&- p_{0,2}(\theta_2)\Pr(S_{-2}\leq \theta_{-2}|S_2 = \theta_2, H_0)p_{a,1}(\theta_1)\Pr(S_{-1}\leq \theta_{-1}|S_1 = \theta_1, H_a), 
    \end{split}
\end{equation*}
and $p_{0,k}(\theta_k), p_{a,k}(\theta_k), \Pr(S_{-k}\leq\theta_{-k}|S_k = \theta_k, H_0)$ and $\Pr(S_{-k}\leq\theta_{-k}|S_k = \theta_k, H_a)$ are all real analytic functions (since each $S_k$ has a normal distribution and $S_{-k}|S_k = \theta_k$ is multivariate normal under both $H_0$ and $H_a$). 

By previous argument, there exists some $\hat\theta$ such that $\det(l_0(\hat\theta), l_a(\hat\theta))\neq 0$, and that $\det(l_0(\theta), l_a(\theta))$ is a real analytic function. 
Then, according to Proposition~0 of \citet{mityagin2015zero}, which states that the zero set of a real analytic function that is not identically zero on $\mathbb{R}^K$ has a zero measure,
the set $\{\theta\in\mathbb{R}^K: \det((l_0(\theta), l_a(\theta))) = 0\}$ has a zero measure. As $\det((l_0(\theta), l_a(\theta))) = 0$ is a necessary condition for $\nabla g_0(\theta)$ to be linearly dependent with $\nabla g_a(\theta)$, we have that $\{\theta\in\mathbb{R}^K: \text{MFCQ fails at }\theta\}$ has a zero Lebesgue measure. 

Now, consider the two sided tests. For convenience, we let $S_k = |\tilde S_k|$ for each $k\in[K]$ where $\tilde S = (\tilde S_k)^K_{k=1}$ has multivariate normal distribution with mean $0$ and covariance $\Sigma$ under $H_0$, and mean $\mu = (\mu_k)^K_{k=1}$ and covariance $\Sigma$ under $H_a$. Then we have, 
\begin{align*}
    &\partial g_0(\theta)/\partial \theta_k = p_0(\theta_k)\Pr(S_{-k}\leq \theta_{-k}|\tilde S_k = \theta_k, H_0) + p_0(-\theta_k)\Pr(S_{-k}\leq \theta_{-k})|\tilde S_k = -\theta_k, H_0),
    \\
    &\partial g_a(\theta)/\partial \theta_k = p_{a,k}(\theta_k)\Pr(S_{-k}\leq \theta_{-k}|\tilde S_k = \theta_k, H_a) 
    + p_{a,k}(-\theta_k)\Pr(S_{-k}\leq \theta_{-k}|\tilde S_k = -\theta_k, H_a).
\end{align*}
Additionally,
\begin{align*}
&\lim_{x\to\infty}\frac{\partial g_0(\theta(x))/\partial \theta_1}{\partial g_0(\theta(x))/\partial \theta_2} =1, \text{ and }\lim_{x\to\infty}\frac{\partial g_a(\theta(x))/\partial \theta_1}{\partial g_a(\theta(x))/\partial \theta_2}
\neq 1.
\end{align*}
Thus following previous arguments, we can similarly conclude $\{\theta\in\mathbb{R}^K: \text{MFCQ fails at }\theta\}$ has a zero Lebesgue measure for two-sided tests as well. 

\end{proof}

\subsection{More on MFCQ Condition for Proposition~\ref{prop:objective-bound}}\label{appendix-sec:MFCQ-example}
To gain more intuition on the MFCQ condition for Proposition~\ref{prop:objective-bound}, 
in Example~\ref{example:MFCQ} below, we construct an original problem and its SAA surrogate, for which the MFCQ fails to hold at the optimal solution of the original problem, and that the optimal value of the SAA problem does not converge to the optimal value for the original problem.

\begin{example}\label{example:MFCQ}
    We consider the following problem:
    \begin{equation}\label{equ:example-true}
        \begin{split}
            \min_{\theta\in\mathbb{R}}&\,\{\theta: \Pr(X\leq |\theta|) \geq p_1,\Pr(Y\leq |\theta - 1/4|) \leq p_2\}.
        \end{split}
    \end{equation}
    and the SAA counterpart: 
    \begin{equation}\label{equ:example-SAA}
        \begin{split}
            \min_{\theta\in\mathbb{R}}&\,\{\theta: 1/M\sum^M_{m=1}\I(X_m\leq |\theta|) \geq p_1, 1/M\sum^M_{m=1}\I(Y_m\leq |\theta - 1/4|) \leq p_2\}.
        \end{split}
    \end{equation}
    where $X$ and $Y$ are independently distributed Uniform random variables on $[0,1]$. Let $(X_m, Y_m), m\in[M]$ be i.i.d. samples drawn from the distribution of $(X, Y)$. 
    Let $p_1 = 1/2$ and $p_2 = 3/4$. 
    Then the optimal value of \eqref{equ:example-true} is $v^* = -1/2$. We have that MFCQ fails to hold at $\theta^* = -1/2$, the optimal solution of \eqref{equ:example-true}. To see this, let $h_1(\theta) = -\Pr(X\leq |\theta|)$ and $h_2(\theta) = \Pr(Y\leq |\theta-1/4|)$. Then the set of active constraints at $\theta^*$,  $J_0(\theta^*) = \{1, 2\}$. Additionally, $(dh_1(\theta^*)/d\theta, dh_2(\theta^*)/d\theta) = (1, -1)$. Then it is straightforward to see that there does not exist $\xi\in\mathbb{R}$ such that $\xi dh_1(\theta^*)/d\theta < 0$ and $\xi dh_2(\theta^*)/d\theta< 0$.

    Let $\hat\theta_M$ be any optimal solution of \eqref{equ:example-SAA} and let $f(\theta) = \theta$. Then, for all $0< \epsilon\leq 1/4$, 
    $$
    \lim\sup_{M\to\infty}\Pr(f(\hat\theta_M)\in [-\epsilon + v^*, \epsilon + v^*])\leq 1/2.
    $$
    To see this, let $X_{(1)}\leq \cdots\leq X_{(M)}$ and $Y_{(1)}\leq \cdots\leq Y_{(M)}$ be the order statistics of $\{X_i\}^M_{i=1}$ and $\{Y_i\}^M_{i=1}$, respectively. Then we have 
    $$
    \hat\theta = \begin{cases}
        X_{(\ceil{M/2})}, &\text{if }X_{(\ceil{M/2})}> Y_{(\floor{3M/4} + 1)} - 1/4\text{ and }X_{(\ceil{M/2})}\leq Y_{(\floor{3M/4} + 1)} + 1/4,\\
        1/4 - Y_{(\floor{3M/4} + 1)},&\text{if }X_{(\ceil{M/2})}\leq Y_{(\floor{3M/4} + 1)} - 1/4.\\
    \end{cases}
    $$
    As $X_{(\ceil{M/2})}\geq 0$, we have that $\Pr(\hat\theta_M\in [-\epsilon + v^*, \epsilon + v^*]) = \Pr(1/4 - Y_{(\floor{3M/4} + 1)} \in [-\epsilon + v^*, \epsilon + v^*])\text{ and } X_{(\ceil{M/2})}\leq Y_{(\floor{3M/4} + 1)} - 1/4)\leq \Pr(X_{(\ceil{M/2})}\leq Y_{(\floor{3M/4} + 1)} - 1/4)$, where we note that $X_{(\ceil{M/2})}$ and $Y_{(\floor{3M/4} + 1)}$ are independent (for each $M$) with $X_{(\ceil{M/2})}\sim \text{Beta}(\ceil{M/2}, M+1 - \ceil{M/2})$ and $Y_{(\floor{3M/4} + 1)}\sim \text{Beta}(\floor{3M/4} + 1, M-\floor{3M/4})$. According to Central Limit Theorem for Sample Quantiles (Theorem 1 of \citet{walker1968note}), $\sqrt{M}(X_{(\ceil{M/2})} - 1/2)\overset{d}{\Rightarrow} \tilde X$ and $\sqrt{M}(Y_{(\floor{3M/4} + 1)}-3/4)\overset{d}{\Rightarrow}\tilde Y$ as $M\to\infty$, where $\tilde X\sim \mathcal{N}(0, 1/4)$ and $\tilde Y\sim \mathcal{N}(0, 3/16)$. 
    Additionally, according to Example 3.2 of \citet{billingsley2013convergence}, $\tilde X$ and $\tilde Y$ are independent and $(\sqrt{M}(X_{(\ceil{M/2})}-1/2), \sqrt{M}(Y_{(\floor{3M/4} + 1)}-3/4))\overset{d}{\Rightarrow}(\tilde X, \tilde Y)$. We thus have that 
    $\Pr(X_{(\ceil{M/2})}\leq Y_{(\floor{3M/4} + 1)} - 1/4)\to 1/2$ as $M\to\infty$ and $\lim\sup_{M\to\infty}\Pr(f(\hat\theta_M)\in [-\epsilon + v^*, \epsilon + v^*])\leq 1/2$. 
\end{example}

\subsection{Convergence of SAA Optimal Solution}\label{appendix-sec:solution-convergence}
\begin{proposition}\label{prop:solution-bound}
Suppose MFCQ holds at any $\theta^*(\alpha, \beta)\in S(\alpha, \beta)$ for problem~\eqref{equ:main}. Then, $\sup_{\theta\in S_M(\alpha, \beta)}\dist(\theta, S(\alpha, \beta))\to 0$ with probability $1$ as $M\to\infty$. 

% (ii)
% For all $M\geq \bar M_f(\bar\epsilon, \eta)$ and any $\epsilon>0$,
% $$
% \Pr\big(\inf_{\theta\in S_M(\alpha, \beta)}\dist(\theta, S(\alpha, \beta))\geq \epsilon\big)\leq c_{\s, 1}(K, n)M^{c_{\s, 2}(K,n)}\exp(-c_{\s, 3}(K, n)M\Gamma(\epsilon)), 
% $$
% where $c_{\s, 1}(K, n), c_{\s, 2}(K, n)$ and $c_{\s, 3}(K, n) > 0$ are positive constants that depend on $K$ and $n$ and not depend on $M$. $\Gamma(\epsilon)\geq 0$ is an increasing function in $\epsilon$, with $\Gamma(\epsilon)> 0$ for all $\epsilon > 0$.

\end{proposition}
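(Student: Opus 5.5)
The argument is a standard consistency proof by contradiction, built on compactness of $C$ and two ingredients already proved: Proposition~\ref{prop:feasible-set}(ii), which sandwiches the SAA feasible set between perturbed true feasible sets, and Lemma~\ref{lem:optimal-value-lipshitz}, which gives local Lipschitz continuity of $v^*$ under MFCQ.

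\textbf{Step 1: almost-sure sandwich.} Choose a deterministic sequence $\epsilon_M\downarrow 0$ slowly enough that
\[
\sum_M \tilde c_1(K)M^{\tilde c_2(K)}\exp(-\tilde c_3(K)M\epsilon_M^2)<\infty,
\]
for example $\epsilon_M=M^{-1/4}$. By Proposition~\ref{prop:feasible-set}(ii) and Borel--Cantelli, with probability one there is an $M_0$ such that for all $M\geq M_0$,
\[
\Theta(\alpha-\epsilon_M,\beta-\epsilon_M)\subseteq\Theta_M(\alpha,\beta)\subseteq\Theta(\alpha+\epsilon_M,\beta+\epsilon_M).
\]
The same Borel--Cantelli argument, applied with Hoeffding bounds on the grids $\hat C(\Delta_M)$ and combined with Lemma~\ref{lem:SAA-constraint-bound}(iii) and Lemma~\ref{lem:lipshitz-continuity}, gives almost surely $\sup_{\theta\in C}|\hat f_M(\theta)-f(\theta)|\to 0$. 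This is a uniform law for the objective. Alternatively, one can invoke a Glivenko--Cantelli argument for the class of orthant indicators.

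\textbf{Step 2: optimal values converge.} Fix a sample path in the probability-one event and take $M\ge M_0$. Since $\Theta(\alpha-\epsilon_M,\beta-\epsilon_M)\subseteq\Theta_M(\alpha,\beta)$, we get $\hat v_M(\alpha,\beta)\leq \hat f_M(\theta^*(\alpha-\epsilon_M,\beta-\epsilon_M))$. Feasibility of $\Theta(\alpha-\epsilon_M,\beta-\epsilon_M)$ for small $\epsilon_M$ follows from the standing assumption on $\bar\epsilon$. By the uniform law and Lemma~\ref{lem:optimal-value-lipshitz}, the right-hand side is at most $v^*(\alpha,\beta)+L_v\epsilon_M+o(1)$.

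For the reverse direction, any $\hat\theta_M\in S_M(\alpha,\beta)$ lies in $\Theta(\alpha+\epsilon_M,\beta+\epsilon_M)$. Hence
\[
f(\hat\theta_M)\geq v^*(\alpha+\epsilon_M,\beta+\epsilon_M)\geq v^*(\alpha,\beta)-L_v\epsilon_M.
\]
Combined with the uniform law, this yields $\hat v_M\to v^*$ and $f(\hat\theta_M)\to v^*$ uniformly over the choice of $\hat\theta_M\in S_M$.

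\textbf{Step 3: contradiction via compactness.} Suppose $\sup_{\theta\in S_M}\dist(\theta,S(\alpha,\beta))\not\to0$ on this path. Then there exist $\delta>0$, a subsequence, and points $\theta_{M_j}\in S_{M_j}$ with $\dist(\theta_{M_j},S)\geq\delta$. By compactness of $C$, pass to a further subsequence with $\theta_{M_j}\to\bar\theta$; the set $\{\theta:\dist(\theta,S)\ge\delta\}$ is closed, so $\dist(\bar\theta,S)\geq\delta$.

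Because $\theta_{M_j}\in\Theta(\alpha+\epsilon_{M_j},\beta+\epsilon_{M_j})$ and $g_0,g_a$ are continuous (Lemma~\ref{lem:lipshitz-continuity}), letting $j\to\infty$ gives $g_0(\bar\theta)\geq1-\alpha$ and $g_a(\bar\theta)\leq\beta$, so $\bar\theta\in\Theta(\alpha,\beta)$. Continuity of $f$ and Step 2 give $f(\bar\theta)=\lim_j f(\theta_{M_j})=v^*$. Therefore $\bar\theta\in S(\alpha,\beta)$, which contradicts $\dist(\bar\theta,S)\ge\delta$.

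\textbf{Main obstacle.} The delicate step is the lower bound on $f(\hat\theta_M)$ in Step 2. An SAA solution only satisfies the relaxed constraints, so it may beat $v^*(\alpha,\beta)$. The gap is controlled exactly by continuity of $v^*$ under constraint relaxation, which is where MFCQ (through Lemma~\ref{lem:optimal-value-lipshitz}) is essential; Example~\ref{example:MFCQ} shows the conclusion can fail without it. The remaining technical work is upgrading the high-probability bounds to almost-sure statements, which the summable tail in Proposition~\ref{prop:feasible-set}(ii) handles via Borel--Cantelli.
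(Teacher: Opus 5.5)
Your proof is correct, but it takes a different route from the paper's. The paper does not run the compactness argument itself. It cites the SAA consistency theorem of Shapiro et al.\ (restated as Proposition~\ref{prop:appendix-solution-convergence}) and checks its hypotheses qualitatively. Uniform almost-sure convergence of $\hat f_M$, $\hat g_{0,M}$, $\hat g_{a,M}$ comes from a uniform strong law. MFCQ is used only locally, at one $\theta^*\in S(\alpha,\beta)$, to produce strictly feasible points $\theta_M\to\theta^*$ that are eventually SAA-feasible. You instead reuse the paper's finite-sample machinery: Proposition~\ref{prop:feasible-set}(ii) with Borel--Cantelli gives an almost-sure sandwich of $\Theta_M(\alpha,\beta)$, and Lemma~\ref{lem:optimal-value-lipshitz} gives continuity of $v^*$. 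Your Step~3 is essentially the proof of the cited theorem. The paper's route is shorter and needs only a Slater-type point near $\theta^*$, not Lipschitz continuity of the value function. Yours is self-contained, and the outer inclusion $\Theta_M(\alpha,\beta)\subseteq\Theta(\alpha+\epsilon_M,\beta+\epsilon_M)$ explicitly guarantees that limit points of SAA-feasible points are truly feasible. The general theorem requires that hypothesis, but the paper's restatement does not list it. It does hold here, but your argument is the one that actually checks it.

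One correction to your closing remark. MFCQ is not what controls the lower bound $f(\hat\theta_M)\geq v^*(\alpha,\beta)-o(1)$. Your own Step~3 obtains that from compactness and continuity alone: the limit point $\bar\theta$ is feasible, so $f(\bar\theta)\geq v^*(\alpha,\beta)$. Hence the relaxation half of Lemma~\ref{lem:optimal-value-lipshitz} is not needed. MFCQ is needed for the upper bound $\hat v_M\leq v^*(\alpha,\beta)+o(1)$, i.e.\ continuity of $v^*$ under \emph{tightening}. That is exactly what fails in Example~\ref{example:MFCQ}. There the tightened problem has value $1/2+\epsilon$ for every $\epsilon>0$, versus $v^*=-1/2$. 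With probability tending to $1/2$, the SAA optimal value sits near $1/2$, above $v^*$ rather than below it.
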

In particular, Proposition~\ref{prop:solution-bound} says that for \textit{any} SAA optimal solution $\hat\theta_M(\alpha, \beta)$, it converges to some $\theta^*(\alpha, \beta)\in S(\alpha, \beta)$ with probability $1$ as $M\to\infty$. 
% Proposition~\ref{prop:solution-bound}(ii) gives the convergence rate on the shortest distance between solution sets $S_M(\alpha, \beta)$ and $S(\alpha, \beta)$. 

To prove Proposition~\ref{prop:solution-bound}, we need the following auxiliary result. 
\begin{proposition}[Theorem 5.5 of \citet{shapiro2021lectures}]\label{prop:appendix-solution-convergence}
    Suppose that: (a) The set $S(\alpha, \beta)$ of the optimal solutions of the original problem is nonempty and is contained in a compact set; (b) The function $f(\theta)$ is finite valued and continuous on $C$; (c) $\hat f_M(\theta)$ converges to $f(\theta)$ with probability $1$ as $M\to\infty$ uniformly in $\theta\in C$; (d) With probability $1$ for $M$ large enough the set $S_M(\alpha, \beta)$ is nonempty and $S_M(\alpha, \beta)\subseteq C$; (e) For some $\theta^*\in S(\alpha, \beta)$ there exists a sequence $\theta_M\in \Theta_M(\alpha, \beta)$ such that $\theta_M\to \theta^*$ as $M\to\infty$ with probability $1$. Then $\sup_{\theta\in S_M(\alpha, \beta)}\dist(\theta, S(\alpha, \beta))\to 0$ with probability $1$ as $M\to\infty$. 
\end{proposition}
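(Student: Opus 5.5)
The plan is to argue pathwise on a probability-one event, by contradiction, as in the standard consistency argument for SAA estimators. First fix the full-measure event $\Omega_0$ on which three things hold simultaneously: (c) $\sup_{\theta\in C}|\hat f_M(\theta)-f(\theta)|\to 0$; (d) $S_M(\alpha,\beta)$ is nonempty and contained in $C$ for all large $M$; and (e) a sequence $\theta'_M\in\Theta_M(\alpha,\beta)$ converges to some $\theta^*\in S(\alpha,\beta)$. All subsequent steps are deterministic statements on a fixed $\omega\in\Omega_0$.

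On $\omega\in\Omega_0$, suppose the conclusion fails. Then there are $\epsilon>0$, a subsequence $M_j$, and points $\theta_{M_j}\in S_{M_j}(\alpha,\beta)\subseteq C$ with $\dist(\theta_{M_j},S(\alpha,\beta))\geq\epsilon$. Because $C$ is compact, I pass to a further subsequence with $\theta_{M_j}\to\bar\theta\in C$. Since $\dist(\cdot,S(\alpha,\beta))$ is continuous, this gives $\dist(\bar\theta,S(\alpha,\beta))\geq\epsilon$. The goal is to show $\bar\theta\in S(\alpha,\beta)$, which is the contradiction.

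\emph{Upper bound on $f(\bar\theta)$.} Optimality of $\theta_{M_j}$ for the SAA problem, together with $\theta'_{M_j}\in\Theta_{M_j}(\alpha,\beta)$, gives $\hat f_{M_j}(\theta_{M_j})\leq\hat f_{M_j}(\theta'_{M_j})$. I split each side as
\[
\hat f_{M}(\vartheta_M)=\bigl(\hat f_M(\vartheta_M)-f(\vartheta_M)\bigr)+f(\vartheta_M),
\]
for $\vartheta_M\in\{\theta_{M_j},\theta'_{M_j}\}$. The first term tends to zero by uniform convergence (c). The second converges to $f(\bar\theta)$ and $f(\theta^*)$ respectively, by continuity of $f$ on $C$ from (b). Hence
\[
f(\bar\theta)=\lim_j\hat f_{M_j}(\theta_{M_j})\leq\lim_j\hat f_{M_j}(\theta'_{M_j})=f(\theta^*)=v^*(\alpha,\beta).
\]

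\emph{Feasibility of the limit point.} It remains to show $\bar\theta\in\Theta(\alpha,\beta)$; combined with the previous step this forces $f(\bar\theta)=v^*(\alpha,\beta)$, so $\bar\theta\in S(\alpha,\beta)$, contradicting $\dist(\bar\theta,S(\alpha,\beta))\geq\epsilon$. I expect this step to be the main obstacle. Hypotheses (a)--(e) as listed constrain only the objective and give one feasible approximating sequence; they do not say that limits of SAA-feasible points are truly feasible. Shapiro's theorem includes this as part of its condition (e), so I will supply it in our setting from Proposition~\ref{prop:feasible-set}(ii), on a further probability-one event.

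Concretely, choose $\epsilon_M\downarrow0$ slowly, e.g.\ $\epsilon_M=M^{-1/4}$. Then the failure probabilities $\tilde c_1(K)M^{\tilde c_2(K)}\exp(-\tilde c_3(K)M\epsilon_M^2)$ are summable in $M$. By Borel--Cantelli, almost surely $\Theta_M(\alpha,\beta)\subseteq\Theta(\alpha+\epsilon_M,\beta+\epsilon_M)$ for all large $M$. Intersecting this event with $\Omega_0$, we get $g_0(\theta_{M_j})\geq1-\alpha-\epsilon_{M_j}$ and $g_a(\theta_{M_j})\leq\beta+\epsilon_{M_j}$. Continuity of $g_0$ and $g_a$ (Lemma~\ref{lem:lipshitz-continuity}) then lets me pass to the limit, giving $g_0(\bar\theta)\geq1-\alpha$ and $g_a(\bar\theta)\leq\beta$, i.e.\ $\bar\theta\in\Theta(\alpha,\beta)$. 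Every $\omega$ in the intersected probability-one event yields this contradiction, which proves the claimed almost-sure convergence $\sup_{\theta\in S_M(\alpha,\beta)}\dist(\theta,S(\alpha,\beta))\to0$.
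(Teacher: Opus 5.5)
Your proof is correct. The paper gives no argument of its own here; it only cites Theorem 5.5 of Shapiro et al. Your pathwise compactness argument is the standard proof of that theorem: extract a convergent subsequence of SAA minimizers in $C$, use (b), (c) and the feasible sequence from (e) to get $f(\bar\theta)\le v^*(\alpha,\beta)$, then show $\bar\theta$ is feasible.

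What you add is the feasibility-of-limits step, and you are right that it is genuinely needed. In Shapiro's Theorem 5.5 this is a separate hypothesis (its condition (a)), not part of the approximating-sequence condition: any a.s.\ limit of points $\theta_M\in\Theta_M(\alpha,\beta)$ must lie in $\Theta(\alpha,\beta)$. The paper's restatement drops it.

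Read abstractly, the restated (a)--(e) do not imply the conclusion. Take $K=1$, $C=[0,1]$, $f(\theta)=\hat f_M(\theta)=\theta$, $\Theta(\alpha,\beta)=[1/2,1]$ and $\Theta_M(\alpha,\beta)=[0,1]$. All five conditions hold with $\theta_M\equiv 1/2$, yet $S_M=\{0\}$ stays at distance $1/2$ from $S=\{1/2\}$. So the claim holds only because of the specific structure of the SAA feasible sets, which is exactly what you exploit.

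Your Borel--Cantelli step works. With $\epsilon_M=M^{-1/4}$, the failure probabilities in Proposition~\ref{prop:feasible-set}(ii) are of order $M^{\tilde c_2(K)}e^{-\tilde c_3(K)\sqrt{M}}$, which are summable.

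A lighter route to the same step is the a.s.\ uniform convergence of $\hat g_{0,M}$ and $\hat g_{a,M}$ to $g_0$ and $g_a$ on $C$. The paper itself invokes this uniform law of large numbers when checking (e). It gives $g_0(\bar\theta)=\lim_j \hat g_{0,M_j}(\theta_{M_j})\ge 1-\alpha$ and likewise $g_a(\bar\theta)\le\beta$ directly, with no rate needed.

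The paper's proof of Proposition~\ref{prop:solution-bound} never verifies this closedness condition, so your step fills a gap there as well.
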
 
% \begin{lemma}[Lemma EC.3 of \citet{kannan2025data}]\label{lem:appendix-solution-bound}
%     Let $W\in\mathbb{R}^K$ be a nonempty and compact set and $h: W\to \mathbb{R}$ be a lower semi-continuous function. Define $W^* = \arg\min_{w\in W}h(w)$. Suppose there exists $\epsilon > 0$ and $\bar w\in W$ such that $\dist(\bar w, W^*)\geq \epsilon$. Then there exists increasing $\Gamma(\epsilon) > 0$ whenever $\epsilon > 0$ such that $h(\bar w) \geq \min_{w\in W}h(w) + \Gamma(\epsilon)$. 
% \end{lemma}

\begin{proof}[Proof of Proposition~\ref{prop:solution-bound}]
It suffices to verify that conditions (a)-(e) of Proposition~\ref{prop:appendix-solution-convergence} hold. It is straightforward to see that (a)-(b) hold. (c) holds according to Theorem 7.48 of \citet{shapiro2021lectures}. (d) holds since we can always find $\hat\theta\in \Theta(\alpha, \beta)$ such that $g_0(\hat\theta) > 1-\alpha$ and $g_a(\hat\theta) < \beta$, and since $\hat g_{0,M}(\hat\theta)$ and $\hat g_{a,M}(\hat\theta)$ converge to $g_0(\hat\theta)$ and $g_a(\hat\theta)$ with probability $1$ respectively, we have $\hat\theta\in \Theta_M(\alpha, \beta)$ with probability $1$ for $M$ large enough. 
Now it remains to verify (e) holds. Since MFCQ holds at all $\theta^*(\alpha, \beta)\in S(\alpha, \beta)$, there exists some $\theta^*(\alpha, \beta)\in S(\alpha, \beta)$ such that for all sufficiently small $\epsilon_M> 0$, there is $\theta_M\in \Theta(\alpha, \beta)$ with $\Vert \theta_M - \theta^*(\alpha, \beta)\Vert< \epsilon_M$ such that $\underline{\theta}\leq\theta_M\leq\bar\theta$, $g_0(\theta_M)> 1-\alpha$ and $g_a(\theta_M)< \beta$, which follows by the definition of MFCQ. Additionally, according to Theorem 7.48 of \citet{shapiro2021lectures}, $\hat g_{0,M}(\theta)$ and $\hat g_{a, M}(\theta)$ converge to $g_0(\theta)$ and $g_a(\theta)$ uniformly on $\theta\in C$ with probability $1$, respectively. Consequently $\hat g_{0,M}(\theta_M)> 1-\alpha$ and $\hat g_{a,M}(\theta_M)< \beta$ almost surely as $M\to\infty$. It follows that $\theta_M\in\Theta_M(\alpha, \beta)$ almost surely as $M\to \infty$. Letting $\epsilon_M\to 0$, we also have that $\theta_M\to \theta^*$, and thus (e) holds.

\end{proof}

\section{Additional Materials for Section~\ref{sec:t-tests}}\label{appendix:t-test}
For proofs in this section, we define some new notations. Recall $f(\theta; \delta) = n_1 + \sum^K_{k=2}n_k\Pr(T_i < \theta_i, \forall i\in[k-1]|H_a)$ is the objective of \eqref{equ:main-unknown-sigma}. Let $g_a(\theta; \delta) = \Pr(T_k< \theta_k, \,\forall k\in[K]|H_a)$ be the LHS of the second constraint of \eqref{equ:main-unknown-sigma}. 
As the LHS of first constraint of \eqref{equ:main-unknown-sigma} does not depend on $\delta$, we let $g_0(\theta) = \Pr(T_k\leq \theta_k, \,\forall k\in[K]|H_0)$ and $\hat g_{0,M}(\theta) = (1/M)\sum^M_{m=1}\I(T^m_k\leq \theta_k, \,\forall k\in[K])$ with some abuse of notation. 
Correspondingly, we let $f(\theta; \hat\delta_{n_0}) = n_1 + \sum^K_{k=2}n_k\Pr(T_i < \theta_i, \forall i\in[k-1]|H_a = \mu_a, \sigma = \hat\sigma_0)$ and $g_a(\theta; \hat\delta_{n_0}) = \Pr(T_k< \theta_k, \,\forall k\in[K]|H_a = \mu_a, \sigma = \hat\sigma_0)$. Finally, we let $\hat f_M(\theta; \hat\delta_{n_0}) = n_1 + (1/M)\sum^M_{m=1}\sum^K_{k=2}n_k\I(T^m_i(\mu_a/\hat\sigma_0) < \theta_i, \forall i\in[k-1])$ and $\hat g_{a, M}(\theta; \hat\delta_{n_0}) = (1/M)\sum^M_{m=1}\I(T^m_k(\mu_a/\hat\sigma_0)<\theta_k, \,\forall k\in[K])$. Additionally, we continue to let $\hat C(\Delta) = \{\underline{\theta}, \underline{\theta} + \Delta, \underline{\theta} + 2\Delta, \cdots, \bar\theta\}^K$ be a discretization of $C$, and $\epsilon_1(\eta, M) = \sqrt{\log(4|\hat C(\Delta)|/\eta)/(2M)}$. 

\subsection{Proof of Proposition~\ref{prop:convergence-t}}\label{appendix:t-test-effect size}

Proposition~\ref{prop:convergence-t} can be proved similarly as Propositions~\ref{prop:feasible-set} and \ref{prop:objective-bound}. Specifically, 
we notice that an analog of Lemma~\ref{lem:finite-feasibility} holds regardless of the distribution of the summary statistics. Analogs of Lemmas~\ref{lem:lipshitz-continuity} and \ref{lem:SAA-constraint-bound} hold because the density function of a $t$-distribution is upper bounded (see Lemma~\ref{lem:lipshitz-continuity-t}). An analog of Lemma~\ref{lem:Delta-net} holds as it follows from Analogs of Lemmas~\ref{lem:lipshitz-continuity} and \ref{lem:SAA-constraint-bound}.  Lemma~\ref{lem:optimal-value-lipshitz} holds since MFCQ holds at any optimal solution of \eqref{equ:main-unknown-sigma-1}. Then Proposition~\ref{prop:convergence-t} can be proved based on the preceding lemmas.

\subsection{Proof of Proposition~\ref{prop:pilot-convergence}}\label{sec-appendix:t-test-pilot}

Lemmas~\ref{lem:SAA-feasibility-pilot}, \ref{lem:feasible-set-pilot} and \ref{lem:pilot-feasible-set-2} are auxiliary to  prove Proposition~\ref{prop:pilot-convergence}.

\begin{lemma}\label{lem:SAA-feasibility-pilot}
    Suppose there exists some $0< \bar\epsilon_\p < 4|\delta|c^g_\delta$ such that  $\Theta(\alpha - \bar\epsilon_\p, \beta -\bar\epsilon_\p; \delta)\neq \emptyset$, where $c^g_\delta$ is a constant given in Lemma~\ref{lem:pilot-sample-convergence}. Then $\Pr(\Theta_M(\alpha, \beta; \hat\delta_{n_0})\neq \emptyset)\geq 1-2\exp(-M\bar\epsilon_\p^2/2) - 2\exp(-(n_0 - 1)(\bar\epsilon_\p/(8|\delta|c^g_\delta))^2)$.
\end{lemma}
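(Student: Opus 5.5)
Fix an arbitrary $\hat\theta\in\Theta(\alpha-\bar\epsilon_\p,\beta-\bar\epsilon_\p;\delta)$. Then $g_0(\hat\theta)\ge 1-\alpha+\bar\epsilon_\p$ and $g_a(\hat\theta;\delta)\le\beta-\bar\epsilon_\p$. We bound
$\Pr(\Theta_M(\alpha,\beta;\hat\delta_{n_0})\neq\emptyset)\ge\Pr(\hat\theta\in\Theta_M(\alpha,\beta;\hat\delta_{n_0}))$, exactly as in the proof of Lemma~\ref{lem:SAA-feasibility}. The null constraint does not involve $\hat\delta_{n_0}$. Hoeffding therefore gives $\hat g_{0,M}(\hat\theta)\ge g_0(\hat\theta)-\bar\epsilon_\p/2\ge 1-\alpha$, except with probability $\exp(-M\bar\epsilon_\p^2/2)$.

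For the alternative constraint, split the slack $\bar\epsilon_\p$ into two halves:
\[
\hat g_{a,M}(\hat\theta;\hat\delta_{n_0})-\beta\le \big[\hat g_{a,M}(\hat\theta;\hat\delta_{n_0})-g_a(\hat\theta;\hat\delta_{n_0})\big]+\big[g_a(\hat\theta;\hat\delta_{n_0})-g_a(\hat\theta;\delta)\big]-\bar\epsilon_\p .
\]
We handle the first bracket conditionally on the pilot data. Given $\hat\sigma_0$, the paths $T^m(\hat\delta_{n_0})$ are i.i.d. with law $g_a(\cdot;\hat\delta_{n_0})$. Hoeffding conditionally, then integrating over the pilot sample, bounds the event that the first bracket exceeds $\bar\epsilon_\p/2$ by $\exp(-M\bar\epsilon_\p^2/2)$. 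Together with the null term, this gives the $2\exp(-M\bar\epsilon_\p^2/2)$ part of the bound.

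For the second bracket, invoke Lemma~\ref{lem:pilot-sample-convergence}. I expect it to provide a Lipschitz bound in the effect size, $|g_a(\theta;\delta')-g_a(\theta;\delta)|\le c^g_\delta|\delta'-\delta|$, valid uniformly in $\theta\in C$ for $\delta'$ in a neighbourhood of $\delta$ such as $|\delta'-\delta|\le|\delta|/2$. Such a bound would come from differentiating the noncentral-$t$ joint law in its noncentrality parameter. It then suffices to show $|\hat\delta_{n_0}-\delta|\le\bar\epsilon_\p/(2c^g_\delta)$ with the stated probability.

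Write $\hat\delta_{n_0}-\delta=\delta(\sigma/\hat\sigma_0-1)$. On the event $|\hat\sigma_0^2/\sigma^2-1|\le t$ with $t\le 1/2$, a short calculation bounds $|\sigma/\hat\sigma_0-1|$ by roughly $t$ (for example $\le 2t$). Choose $t=\bar\epsilon_\p/(8|\delta|c^g_\delta)$. The hypothesis $\bar\epsilon_\p<4|\delta|c^g_\delta$ is exactly what guarantees $t<1/2$, which keeps us in the range where both this calculation and the local Lipschitz estimate apply. With this choice, $|\hat\delta_{n_0}-\delta|\le 2|\delta|t=\bar\epsilon_\p/(4c^g_\delta)$, comfortably within the requirement.

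Finally, $(n_0-1)\hat\sigma_0^2/\sigma^2\sim\chi^2_{n_0-1}$. The Laurent--Massart chi-square tail bounds, in the sub-Gaussian regime $t<1/2$, give $\Pr(|\chi^2_{n_0-1}/(n_0-1)-1|>t)\le 2\exp(-(n_0-1)t^2)$ up to constant adjustments. This is the remaining term $2\exp(-(n_0-1)(\bar\epsilon_\p/(8|\delta|c^g_\delta))^2)$. A union bound over the three bad events finishes the proof.

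The main obstacle is matching constants. This means tuning the relation between $t$ and the Laurent--Massart deviation, and the factor in $|\sigma/\hat\sigma_0-1|\lesssim t$, so that the exponent is exactly $(\bar\epsilon_\p/(8|\delta|c^g_\delta))^2$. It also depends on confirming that Lemma~\ref{lem:pilot-sample-convergence} is indeed a uniform-in-$\theta$ Lipschitz statement on the required neighbourhood of $\delta$.
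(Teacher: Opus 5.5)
Your proposal is correct and is essentially the paper's proof. Both fix $\hat\theta\in\Theta(\alpha-\bar\epsilon_\p,\beta-\bar\epsilon_\p;\delta)$, handle the null constraint by Hoeffding, split the alternative-constraint slack into a Monte Carlo half (Hoeffding conditional on the pilot data, then integrated out) and a pilot-estimation half, and finish with a union bound.

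The only difference is that the paper does not re-derive the pilot term through $\chi^2$ concentration. It invokes Lemma~\ref{lem:pilot-sample-convergence} directly at the fixed point $\hat\theta$; that lemma is a pointwise-in-$\theta$ high-probability bound rather than a uniform Lipschitz statement, and that is all you need. The paper does not track the resulting constants (the factor $2$ versus $4$, the $n_0$ requirement) any more carefully than your "up to constant adjustments."
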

\begin{proof}[Proof of Lemma~\ref{lem:SAA-feasibility-pilot}]
     As $\Theta(\alpha - \bar\epsilon_\p, \beta -\bar\epsilon_\p; \delta)\neq\emptyset$, we let $\hat\theta$ be arbitrary such that
    $\hat\theta\in \Theta(\alpha - \bar\epsilon_\p, \beta -\bar\epsilon_\p; \delta)$. Then,
    \begin{align*}
    &\Pr\big(\Theta_M(\alpha, \beta; \hat\delta_{n_0})\neq\emptyset\big)\\
    \geq &\Pr\big(\hat\theta\in \Theta_M(\alpha, \beta; \hat\delta_{n_0})\big)\\
    = & \Pr\big(\hat g_{0,M}(\hat\theta)\geq 1-\alpha \text{ and }\hat g_{a, M}(\hat\theta; \hat\delta_{n_0})\leq \beta\big)\\
    \geq & \Pr\big(\hat g_{0,M}(\hat\theta)\geq g_0(\hat\theta)-\bar\epsilon_\p, \hat g_{a,M}(\hat\theta; \hat\delta_{n_0})\leq g_a(\hat\theta; \hat\delta_{n_0})+\bar\epsilon_\p/2, \\
    &\quad\quad
    \text{ and }
    g_a(\hat\theta; \hat\delta_{n_0}) - g_a(\hat\theta; \delta)\leq \bar\epsilon_\p/2
    \big)\\
    \geq & 1- \Pr\big(\hat g_{0,M}(\hat\theta)\leq g_0(\hat\theta)-\bar\epsilon_\p\big) - \Pr\big(\hat g_{a, M}(\hat\theta; \hat\delta_{n_0})\geq g_a(\hat\theta; \hat\delta_{n_0})+\bar\epsilon_\p/2\big)\\
    &\quad\quad - \Pr\big(g_a(\hat\theta; \hat\delta_{n_0}) - g_a(\hat\theta; \delta)\geq \bar\epsilon_\p/2\big)\\
    \geq & 1-2\exp(-M\bar\epsilon_\p^2/2) - 2\exp(-(n_0 - 1)(\bar\epsilon_\p/(8|\delta|c^g_\delta))^2), 
    \end{align*}
    where the last inequality follows from Lemma~\ref{lem:pilot-sample-convergence} and that
    \begin{align*}
    &\Pr\big(|\hat g_{a, M}(\theta; \hat\delta_{n_0}) - g_a(\theta; \hat\delta_{n_0})|\leq \epsilon|\{X_{-i}\}^{n_0}_{i=1}\big)\\
     = &\Pr\big(|\hat g_{a, M}(\theta; \hat\delta_{n_0}) - \E[\hat g_{a, M}(\theta; \hat\delta_{n_0})|X_{-1}, \cdots, X_{-n_0}]|\leq \epsilon|\{X_{-i}\}^{n_0}_{i=1}\big)\\
     \geq &1 - 2\exp(-2M\epsilon^2), \text{ almost surely}, 
    \end{align*}
    and so by law of iterated expectation, $\Pr\big(|\hat g_{a, M}(\theta; \hat\delta_{n_0}) - g_a(\theta; \hat\delta_{n_0})|\leq \epsilon\big)\geq 1-2\exp(-2M\epsilon^2)$.
\end{proof}

\begin{lemma}\label{lem:feasible-set-pilot}
\begin{align*}
    \Pr\big(\Theta(\alpha - \epsilon_\p(\eta, M), \beta - \epsilon_\p(\eta, M); \hat\delta_{n_0})\subseteq \Theta_M(\alpha, \beta; \hat\delta_{n_0})
    \subseteq \Theta(\alpha + \epsilon_\p(\eta, M), \beta+ \epsilon_\p(\eta, M); \hat\delta_{n_0})\big)\\
    \geq 1-\eta, 
\end{align*}
where $\epsilon_\p(\eta, M) = \sqrt{\big(C^\p_1(K)\log(1/\eta) + C^\p_2(K)\log(M) + C^\p_3(K)\big)/M}$, and $C^\p_1(K), C^\p_2(K)$ and $C^\p_3(K)$ are some constants that depend on $K$ and not $M$. 

% (ii) For all $\epsilon > 0$, we have the inequality below holds almost surely,
% \begin{align*}
%     \Pr(\Theta(\alpha - \epsilon, \beta - \epsilon; \hat\delta_{n_0})\subseteq \Theta_M(\alpha, \beta; \hat\delta_{n_0})\subseteq \Theta(\alpha + \epsilon, \beta+ \epsilon; \hat\delta_{n_0})|X_{-1}, \cdots, X_{-n_0})\\
%     \geq 1-\tilde C^\p_1(K)M^{\tilde C^\p_2(K)}\exp{(-\tilde C^\p_3(K)M\epsilon^2)}. 
% \end{align*}
% where $\tilde C^\p_1(K), \tilde C^\p_2(K)$ and $\tilde C^\p_3(K)$ are some constants that depend on $K$ and not $M$. 

\end{lemma}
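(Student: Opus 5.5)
The plan is to argue conditionally on the pilot data $\{X_{-i}\}_{i=1}^{n_0}$ and then remove the conditioning by the law of iterated expectation. This is the same device used at the end of the proof of Lemma~\ref{lem:SAA-feasibility-pilot}.

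Given $\{X_{-i}\}_{i=1}^{n_0}$, the value $\hat\delta_{n_0}=\mu_a/\hat\sigma_0$ is a fixed number. The paths $(T^m_k(\hat\delta_{n_0}))_{k}$ are i.i.d.\ draws from the noncentral-$t$ path law indexed by $\hat\delta_{n_0}$. The null paths $(T^m_k)_k$ do not depend on the pilot data at all, and the two families are independent of each other. Conditionally, problem~\eqref{equ:SAA-unknown-sigma-pilot} is therefore exactly the SAA problem~\eqref{equ:SAA-unknown-sigma} with $\delta_a$ replaced by the fixed value $\hat\delta_{n_0}$. Likewise $\Theta(\cdot,\cdot;\hat\delta_{n_0})$ is the true feasible set of \eqref{equ:main-unknown-sigma-1} at that effect size.

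I would then rerun the chain of Lemmas~\ref{lem:finite-feasibility}, \ref{lem:lipshitz-continuity}, \ref{lem:SAA-constraint-bound} and \ref{lem:Delta-net} and the two-part argument (Parts A and B) from the proof of Proposition~\ref{prop:feasible-set} verbatim, conditionally. The ingredients are as follows.

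\begin{itemize}
\item Lemma~\ref{lem:finite-feasibility} holds for any distribution. It only uses Hoeffding's inequality with bounded indicators (variance at most $1/4$) and a union bound over $\hat C(\Delta)$, so it applies conditionally with the same $\epsilon_1(\eta,M)$.
\item Lemmas~\ref{lem:lipshitz-continuity} and \ref{lem:SAA-constraint-bound} need constants $l_{g_0},l_{g_a},c_{g_0},c_{g_a}$. These bound the marginal densities of $T_k$ under $H_0$ and under the alternative with parameter $\hat\delta_{n_0}$, and the Lipschitz moduli of $g_0$ and $g_a(\cdot;\hat\delta_{n_0})$ on $C$.
\end{itemize}

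With these in place, Parts A and B give, almost surely in the pilot data,
\[
\Pr\big(\Theta(\alpha-\epsilon_\p,\beta-\epsilon_\p;\hat\delta_{n_0})\subseteq\Theta_M(\alpha,\beta;\hat\delta_{n_0})\subseteq\Theta(\alpha+\epsilon_\p,\beta+\epsilon_\p;\hat\delta_{n_0})\,\big|\,\{X_{-i}\}_{i=1}^{n_0}\big)\geq 1-\eta .
\]
Here $\Delta=(\bar\theta-\underline\theta)/\sqrt M$ and $\epsilon_\p(\eta,M)$ has the form of \eqref{equ-appendix:epsilon}. Taking expectations over the pilot sample then yields the unconditional statement.

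The main obstacle is that the constants $C^\p_j(K)$ must not depend on the random $\hat\delta_{n_0}$. I therefore need bounds on the noncentral-$t$ marginal densities, and on the partial derivatives of $g_a(\cdot;\delta')$ over $C$, that hold uniformly in $\delta'\in\mathbb R$.

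I would get this from the mixture representation $T_k=Z/\sqrt{V/\nu}$ with $Z\sim\mathcal N(\sqrt{n_{1:k}}\delta',1)$ independent of $V\sim\chi^2_\nu$. Conditioning on $V$ gives the density of $T_k$ at $t$ as $\E[\sqrt{V/\nu}\,\phi(t\sqrt{V/\nu}-\sqrt{n_{1:k}}\delta')]\le \E\sqrt{V/\nu}/\sqrt{2\pi}\le 1/\sqrt{2\pi}$. This bound is uniform in $\delta'$ and $t$. Each partial derivative $\partial g_a/\partial\theta_k$ is at most the marginal density of $T_k$, so the Lipschitz constants are also uniform; the same holds for the objective through $\max_k n_k$.

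If one insists on the paper's style, one can instead restrict to the high-probability event $|\hat\delta_{n_0}-\delta|\le|\delta|/2$, controlled as in Lemma~\ref{lem:pilot-sample-convergence}. The uniform bound avoids this, and the constants depend only on $K$ and on $(\bar\theta-\underline\theta)$.
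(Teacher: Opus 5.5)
Your proposal is correct and follows the paper's proof essentially step for step. The paper also argues conditionally on the pilot sample and reruns Parts A and B of Proposition~\ref{prop:feasible-set} using conditional analogs of the supporting lemmas (Lemmas~\ref{lem:finite-feasibility-pilot}--\ref{lem:Delta-net-pilot}); it gets the $\delta$-uniform constants from the same mixture/Jensen bound $p_{T(n;\delta)}\le 1/\sqrt{2\pi}$ (Lemma~\ref{lem:lipshitz-continuity-t}), and it removes the conditioning by the law of iterated expectation with $\Delta=(\bar\theta-\underline{\theta})/\sqrt{M}$.
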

We leave the proof of Lemma~\ref{lem:feasible-set-pilot} to Appendix~\ref{sec-appendix:feasible-set-pilot}. 

\begin{lemma}\label{lem:pilot-feasible-set-2}
    With probability at least $1-\eta$, $\Theta(\alpha, \beta - \epsilon(\eta, n_0); \delta)\subseteq \Theta(\alpha, \beta; \hat\delta_{n_0})\subseteq \Theta(\alpha, \beta + \epsilon(\eta, n_0); \delta)$ where $\epsilon(\eta, n_0) = \sqrt{\big(C_1(K)\log(1/\eta) + C_2(K)\log(n_0) + C_3(K)\big)/n_0}$, where $C_1(K), C_2(K)$ and $C_3(K)$ are some constants that depend on $K$ and not $n_0$ nor $M$; 
    
%     Alternatively, for any $\epsilon > 0$, we have $\Pr(\Theta(\alpha - \epsilon, \beta - \epsilon; \delta)\subseteq \Theta_M(\alpha, \beta; \hat\delta_{n_0})\subseteq \Theta(\alpha + \epsilon, \beta+ \epsilon); \delta)\geq 1-\tilde C_1(K)n_0^{\tilde C_2(K)}\exp{(-\tilde C_3(K)n_0\epsilon^2)}$ 
% where $\tilde C_1(K), \tilde C_2(K)$ and $\tilde C_3(K)$ are some constants that depend on $K$ and not $M$. 

\end{lemma}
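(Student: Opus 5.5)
The plan is to reduce Lemma~\ref{lem:pilot-feasible-set-2} to a uniform-in-$\theta$ perturbation bound on the type-2 map $\delta\mapsto g_a(\theta;\delta)$. The first constraint $g_0(\theta)\geq 1-\alpha$ does not depend on $\delta$, so both $\Theta(\cdot,\cdot;\hat\delta_{n_0})$ and $\Theta(\cdot,\cdot;\delta)$ share it. It therefore suffices to show that, with probability at least $1-\eta$,
\[
\sup_{\theta\in C}\big|g_a(\theta;\hat\delta_{n_0})-g_a(\theta;\delta)\big|\leq \epsilon(\eta,n_0).
\]
Given this, the two inclusions follow directly. If $g_a(\theta;\delta)\leq\beta-\epsilon(\eta,n_0)$, then $g_a(\theta;\hat\delta_{n_0})\leq\beta$. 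If $g_a(\theta;\hat\delta_{n_0})\leq\beta$, then $g_a(\theta;\delta)\leq\beta+\epsilon(\eta,n_0)$.

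To get the uniform bound, I would use the representation $T_k=(Z_k+\sqrt{n_{1:k}}\,\delta')/\sqrt{V_k/(n_{1:k}-1)}$. Here $(Z_k,V_k)_k$ has a law that does not depend on the effect size $\delta'$. The step is to show that $\delta'\mapsto g_a(\theta;\delta')$ is Lipschitz, uniformly in $\theta\in C$, on a neighbourhood $[\delta/2,3\delta/2]$ of $\delta$. A coupling argument does this. Shifting $\delta'$ by $h$ moves each $T_k$ by $\sqrt{n_{1:k}}\,h/\sqrt{V_k/(n_{1:k}-1)}$. Hence the indicator $\I(T_k<\theta_k,\forall k)$ can change only if some $T_k$ lies in a window whose width is proportional to $|h|$ times a factor with finite moments. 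Bounding the window probability through the bounded density of the noncentral $t$ marginals (the analog of Lemma~\ref{lem:lipshitz-continuity} and of the constant $c^g_\delta$ in Lemma~\ref{lem:pilot-sample-convergence}) gives $|g_a(\theta;\delta')-g_a(\theta;\delta)|\leq c^g_\delta|\delta'-\delta|$, with $c^g_\delta$ depending only on $K$, $n$ and $\delta$. Differentiating the integral representation in $\delta'$ and bounding the derivative uniformly would work equally well.

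It then remains to control $|\hat\delta_{n_0}-\delta|=|\delta|\,|\sigma/\hat\sigma_0-1|$. Since $(n_0-1)\hat\sigma_0^2/\sigma^2\sim\chi^2_{n_0-1}$, the Laurent--Massart tail bound \citep{laurent2000adaptive} gives $|\hat\sigma_0^2/\sigma^2-1|\leq c\sqrt{\log(2/\eta)/(n_0-1)}$ with probability $1-\eta$. Once this deviation is below $1/2$, which holds for $n_0$ large, the bound transfers to $|\sigma/\hat\sigma_0-1|$ up to a constant factor. This is consistent with the rate $2\exp(-(n_0-1)(\epsilon/(8|\delta|c^g_\delta))^2)$ already invoked in Lemma~\ref{lem:SAA-feasibility-pilot}. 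Combining the two steps gives $\sup_\theta|g_a(\theta;\hat\delta_{n_0})-g_a(\theta;\delta)|\leq C\sqrt{\log(1/\eta)/n_0}$, which is of the stated form $\epsilon(\eta,n_0)$. The $\log n_0$ term is slack, absorbed into the constants $C_1(K),C_2(K),C_3(K)$.

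The main obstacle is the uniform Lipschitz bound in $\delta'$. The worry is that the random scaling $1/\sqrt{V_k}$ is unbounded near $V_k=0$. I would handle this either by truncating on $\{V_k\geq (n_{1:k}-1)/4\}$, whose complement has exponentially small probability that can be absorbed into the constants, or by noting that the noncentral-$t$ density and its $\delta'$-derivative are bounded on compact sets. The case with probability $\eta$ in which $\hat\sigma_0$ is far from $\sigma$ is simply excluded by the concentration event.
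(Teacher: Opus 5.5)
Your proof is correct. Its two ingredients are the same as in the paper's Lemma~\ref{lem:pilot-sample-convergence}: a Lipschitz bound for $\delta'\mapsto g_a(\theta;\delta')$, and the Laurent--Massart bound for $(n_0-1)\hat\sigma_0^2/\sigma^2\sim\chi^2_{n_0-1}$.

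You differ from the paper in how you get uniformity over $\theta$. The paper discretizes $C$ into $\hat C(\Delta)$ and applies Lemma~\ref{lem:pilot-sample-convergence} at every grid point with a union bound. This produces the $\log(4|\hat C(\Delta)|/\eta)$ term, and hence, with $\Delta=(\bar\theta-\underline{\theta})/\sqrt{n_0}$, the $\log n_0$ term. It then passes from the grid to all of $C$ through the $\theta$-Lipschitz bound of Lemma~\ref{lem:lipshitz-continuity-t}, at an extra cost of $2l^\p_{g_a}K\Delta$. You instead observe that the only randomness is the scalar $\hat\sigma_0$ and that the $\delta$-Lipschitz constant does not depend on $\theta$. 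A single $\chi^2$ event then gives $\sup_\theta|g_a(\theta;\hat\delta_{n_0})-g_a(\theta;\delta)|\leq C\sqrt{\log(1/\eta)/n_0}$ directly. The net, the union bound, the $\theta$-Lipschitz step and the $\log n_0$ term all become unnecessary. The paper's own proof of Lemma~\ref{lem:pilot-sample-convergence} already uses a $\theta$-free constant $\tilde c^g_\delta$ and a $\theta$-free concentration event. So its discretization is redundant here: it is carried over from the proof of Proposition~\ref{prop:feasible-set}, where the random SAA functions genuinely depend on $\theta$. Your route is shorter and slightly sharper.

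One correction concerns what you call the main obstacle.
\begin{itemize}
\item The truncation fix would not work as stated. $\Pr(V_k<(n_{1:k}-1)/4)$ is a fixed positive number determined by $n_{1:k}$; it does not shrink with $|h|$ or $n_0$, so it cannot be absorbed into $\epsilon(\eta,n_0)$.
\item Bounding a random window through the marginal density of $T_k$ is also not immediate, because the window width depends on $V_k$ and $T_k$ is correlated with $V_k$.
\item In fact there is no obstacle. For each $k$, $Z_k$ is independent of $V_k$ (independence of sample mean and sample variance for Gaussian data). Conditional on $V_k$, the crossing event is that $Z_k$ falls in an interval of deterministic length $\sqrt{n_{1:k}}|h|$, which has probability at most $\sqrt{n_{1:k}}|h|/\sqrt{2\pi}$.
\item Equivalently, write $g_a(\theta;\delta')=\Pr(X\in A_\theta)$ with $X\sim\mathcal{N}(\delta'\mathbf{1},I_{n_{1:K}})$ and $A_\theta=\{x:\ T_k(x)<\theta_k\ \forall k\}$. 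The total-variation distance between the two Gaussians then gives $|g_a(\theta;\delta')-g_a(\theta;\delta)|\leq\sqrt{n_{1:K}}\,|\delta'-\delta|/\sqrt{2\pi}$ for all $\theta$ and all $\delta,\delta'$.
\end{itemize}
This global constant removes any need for the neighbourhood $[\delta/2,3\delta/2]$ or for compactness. You still need $\hat\sigma_0$ bounded away from $0$ to turn the $\chi^2$ bound into a bound on $|\sigma/\hat\sigma_0-1|$, exactly as the paper does. Your other alternative, bounding the $\delta'$-derivative on compact sets, is also valid and is what the paper uses.
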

We leave the proof of Lemma~\ref{lem:pilot-feasible-set-2} to Appendix~\ref{sec-appendix:pilot-feasible-set-2}.

\begin{proof}[Proof of Proposition~\ref{prop:pilot-convergence}]

(i)
follows from Lemmas~\ref{lem:feasible-set-pilot} and \ref{lem:pilot-feasible-set-2}.

(ii) can be similarly proved as Proposition~\ref{prop:objective-bound}. In particular, according to Lemma~\ref{lem:optimal-value-lipshitz}, there exist constants $L^\p_v$ and $\epsilon_\p> 0$ such that $|v^*(\alpha, \beta; \delta) - v^*(\alpha - \epsilon, \beta - \epsilon; \delta)|\leq L^\p_v\epsilon$ for all $\epsilon \in[-\epsilon_\p, \epsilon_\p]$. Then, let $\bar n_\p\geq 64\log(4/\eta) + 1$ and $\bar M_\p$ be such that for all $n_0\geq \bar n_\p$ and $M\geq \bar M_\p$, $\epsilon_\p(\eta, M, n_0) \leq \min\{\bar\epsilon_\p, \epsilon_\p\}$ and $\Pr(\Theta_M(\alpha, \beta; \hat\delta_{n_0})\neq \emptyset)\geq 1-\eta$. 

% \begin{equation}\label{proof-equ:objective-M-bar-pilot}
%     \bar M_\p(\bar\epsilon_\p, \eta) = \min\{M\geq 1: \epsilon_\p(\eta, M, n_0) \leq \min\{\bar\epsilon_\p, \epsilon_\p\}, \epsilon_\p(\eta, M, n_0)\leq \bar\epsilon_\p/2, 2\exp(-M\bar\epsilon_\p^2/4)\leq \eta\}.
% \end{equation}
We have that: 
\begin{equation}\label{proof-equ:objective-bound}
    \begin{split}
|f(\hat\theta_M(\alpha, \beta; \hat\delta_{n_0}); \delta) - v^*(\alpha, \beta; \delta)|\\
\leq \max\{f(\hat\theta_M(\alpha, \beta;\hat\delta_{n_0}); \delta) - v^*(\alpha, \beta; \delta), v^*(\alpha, \beta; \delta) - f(\hat\theta_M(\alpha, \beta; \hat\delta_{n_0}); \delta)\}.
    \end{split}
\end{equation}

\textbf{A. Upper bound on $f(\hat\theta_M(\alpha, \beta; \hat\delta_{n_0});\delta) - v^*(\alpha, \beta;\delta)$}. We have that when $M\geq \bar M_\p$ and $n_0\geq \bar n_\p$, $\Theta(\alpha - \epsilon_\p(\eta, M, n_0), \beta -\epsilon_\p(\eta, M, n_0);\delta)\neq \emptyset$, so $v^*(\alpha - \epsilon_\p(\eta, M, n_0), \beta -\epsilon_\p(\eta, M, n_0);\delta)<\infty$. Then, we have
\begin{equation}\label{proof-equ:objective-bound-A-pilot}
\begin{split}
    &f(\hat\theta_M(\alpha, \beta;\hat\delta_{n_0});\delta) - v^*(\alpha, \beta;\delta) \\
    = &f(\hat\theta_M(\alpha, \beta;\hat\delta_{n_0});\delta) - v^*(\alpha - \epsilon_\p(\eta, M, n_0), \beta -\epsilon_\p(\eta, M, n_0); \delta)\\
    &\quad\quad+ v^*(\alpha - \epsilon_\p(\eta, M, n_0), \beta -\epsilon_\p(\eta, M, n_0);\delta)  - 
    v^*(\alpha, \beta;\delta). 
\end{split}
\end{equation}
Since $M\geq \bar M_\p$ and $n_0\geq \bar n_\p$, $\epsilon_\p(\eta, M, n_0)\leq \epsilon_\p$. Then, according to Lemma~\ref{lem:optimal-value-lipshitz},
we have that 
\begin{equation}\label{proof-equ:objective-bound-A1-pilot}
    v^*(\alpha - \epsilon_\p(\eta, M, n_0), \beta -\epsilon_\p(\eta, M, n_0);\delta)  - 
    v^*(\alpha, \beta;\delta)\leq L^\p_v\epsilon_\p(\eta, M, n_0). 
\end{equation}

Now, we derive the upper bound on $f(\hat\theta_M(\alpha, \beta;\hat\delta_{n_0});\delta) - v^*(\alpha - \epsilon_\p(\eta, M, n_0), \beta -\epsilon_\p(\eta, M, n_0);\delta)$. With some abuse of notation, let $\hat\theta_M(\alpha, \beta;\hat\delta_{n_0},\hat C(\Delta))\in\arg\min_{\theta\in \hat C(\Delta)}\Vert \theta - \hat\theta_M(\alpha, \beta;\hat\delta_{n_0})\Vert_1$. Then we have
\begin{align*}\label{proof-equ:objective-bound-A2}
&f(\hat\theta_M(\alpha, \beta;\hat\delta_{n_0});\delta) - v^*(\alpha-\epsilon_\p(\eta, M, n_0), \beta-\epsilon_\p(\eta, M, n_0);\delta) \\
=& \underbrace{f(\hat\theta_M(\alpha, \beta;\hat\delta_{n_0});\delta) -f(\hat\theta_M(\alpha, \beta;\hat\delta_{n_0});\hat\delta_{n_0})}_{A_1}\\
&
+\underbrace{f(\hat\theta_M(\alpha, \beta;\hat\delta_{n_0});\hat\delta_{n_0})
- f(\hat\theta_M(\alpha, \beta; \hat\delta_{n_0},\hat C(\Delta)); \hat\delta_{n_0})}_{A_2} \\
&
+ \underbrace{f(\hat\theta_M(\alpha, \beta; \hat\delta_{n_0},\hat C(\Delta));\hat\delta_{n_0})
- \hat f_M(\hat\theta_M(\alpha, \beta; \hat\delta_{n_0},\hat C(\Delta));\hat\delta_{n_0})}_{A_3} \\
&
+ \underbrace{\hat f_M(\hat\theta_M(\alpha, \beta; \hat\delta_{n_0},\hat C(\Delta));\hat\delta_{n_0})-\hat f_M(\hat\theta_M(\alpha, \beta;\hat\delta_{n_0});\hat\delta_{n_0})}_{A_4} \\
&
+ \underbrace{\hat f_M(\hat\theta_M(\alpha, \beta;\hat\delta_{n_0}); \hat\delta_{n_0}) -\hat f_M(\theta^*(\alpha-\epsilon_\p(\eta, M, n_0), \beta-\epsilon_\p(\eta, M, n_0); \delta);\hat\delta_{n_0})}_{A_5} \\
&+ \underbrace{\hat f_M(\theta^*(\alpha-\epsilon_\p(\eta, M, n_0), \beta-\epsilon_\p(\eta, M, n_0); \delta);\hat\delta_{n_0})
- f(\theta^*(\alpha-\epsilon_\p(\eta, M, n_0), \beta-\epsilon_\p(\eta, M, n_0); \delta);\hat\delta_{n_0})}_{A_6}\\
&+
\underbrace{f(\theta^*(\alpha-\epsilon_\p(\eta, M, n_0), \beta-\epsilon_\p(\eta, M, n_0); \delta);\hat\delta_{n_0}) - v^*(\alpha-\epsilon_\p(\eta, M, n_0), \beta-\epsilon_\p(\eta, M, n_0);\delta)}_{A_7}.
\end{align*}

For $A_1$, 
according to Lemma~\ref{lem:pilot-sample-convergence}, $\Pr(\Theta_M(\alpha, \beta; \hat\delta_{n_0})\neq \emptyset, \text{ and }f(\hat\theta_M(\alpha, \beta;\hat\delta_{n_0});\delta) -f(\hat\theta_M(\alpha, \beta;\hat\delta_{n_0});\hat\delta_{n_0})\leq c^f_\delta|\delta|
(2\sqrt{\log(4/\eta)/(n_0 - 1)} + 2\log(4/\eta)/(n_0 - 1)))\geq 1-\eta - \Pr(\Theta_M(\alpha, \beta; \hat\delta_{n_0})= \emptyset)$ for some constant $c^f_\delta$.

For $A_2$, since $\Vert\hat\theta_M(\alpha, \beta;\hat\delta_{n_0}) - \hat\theta_M(\alpha, \beta;\hat\delta_{n_0},\hat C(\Delta))\Vert_1\leq K\Delta/2$ almost surely, according to Lemma~\ref{lem:lipshitz-continuity-t}, we have that $f(\hat\theta_M(\alpha, \beta;\hat\delta_{n_0}); \hat\delta_{n_0}) - f(\hat\theta_M(\alpha, \beta; \hat\delta_{n_0},\hat C(\Delta)); \hat\delta_{n_0})\leq l^\p_f\Vert \hat\theta_M(\alpha, \beta;\hat\delta_{n_0},\hat C(\Delta)) - \hat\theta_M(\alpha, \beta;\hat\delta_{n_0})\Vert_1\leq l^\p_fK\Delta/2$ almost surely when $\Theta_M(\alpha, \beta; \hat\delta_{n_0})\neq \emptyset$.

For $A_3$, we have that $\Pr(\Theta_M(\alpha, \beta; \hat\delta_{n_0})\neq \emptyset, \text{ and }f(\hat\theta_M(\alpha, \beta; \hat\delta_{n_0},\hat C(\Delta));\hat\delta_{n_0})
- \hat f_M(\hat\theta_M(\alpha, \beta; \hat\delta_{n_0},\hat C(\Delta));\hat\delta_{n_0})\leq \epsilon_1(4\eta, M)|\{X_{-i}\}^{n_0}_{i=1})
\geq \Pr(\Theta_M(\alpha, \beta; \hat\delta_{n_0})\neq \emptyset, \text{ and }f(\theta; \hat\delta_{n_0}) - \hat f_M(\theta; \hat\delta_{n_0})\leq \epsilon_1(4\eta, M), \forall \theta\in \hat C(\Delta)|\{X_{-i}\}^{n_0}_{i=1})
\geq 1-\eta - \Pr(\Theta_M(\alpha, \beta; \hat\delta_{n_0})= \emptyset|\{X_{-i}\}^{n_0}_{i=1})$ almost surely, where the first inequality follows since $\hat\theta_M(\alpha, \beta; \hat\delta_{n_0}, \hat C(\Delta))\in \hat C(\Delta)$, and the second inequality follows from Hoeffding's inequality. 

For $A_4$, according to Lemma~\ref{lem:SAA-constraint-bound-pilot}(iii), we have that $\Pr(\Theta_M(\alpha, \beta; \hat\delta_{n_0})\neq \emptyset, \text{and }\hat f_M(\hat\theta_M(\alpha, \beta; \hat\delta_{n_0},\hat C(\Delta));\hat\delta_{n_0})-\hat f_M(\hat\theta_M(\alpha, \beta;\hat\delta_{n_0});\hat\delta_{n_0})\leq c^\p_fK(K-1)\Delta/2 + n_{2:K}\epsilon_1(4\eta/(K-1), M)|\{X_{-i}\}^{n_0}_{i=1})\geq 1-\eta - \Pr(\Theta_M(\alpha, \beta; \hat\delta_{n_0})=\emptyset|\{X_{-i}\}^{n_0}_{i=1})$ almost surely. 

For $A_5$, $\Pr(\hat f_M(\hat\theta_M(\alpha, \beta;\hat\delta_{n_0}); \hat\delta_{n_0}) -\hat f_M(\theta^*(\alpha-\epsilon_\p(\eta, M, n_0), \beta-\epsilon_\p(\eta, M, n_0); \delta);\hat\delta_{n_0})\leq 0)\geq \Pr(\Theta(\alpha - \epsilon_\p(\eta, M, n_0), \beta -\epsilon_\p(\eta, M, n_0); \delta)\subseteq \Theta_M(\alpha, \beta; \hat\delta_{n_0}))\geq 1-\eta$ according to Proposition~\ref{prop:pilot-convergence}(i).

For $A_6$, $\Pr(\hat f_M(\theta^*(\alpha-\epsilon_\p(\eta, M, n_0), \beta-\epsilon_\p(\eta, M, n_0); \delta);\hat\delta_{n_0})- f(\theta^*(\alpha-\epsilon_\p(\eta, M, n_0), \beta-\epsilon_\p(\eta, M, n_0); \delta);\hat\delta_{n_0})\leq \epsilon_1(4|\hat C(\Delta)|\eta, M)|\{X_{-i}\}^{n_0}_{i=1})\geq 1-\eta$ almost surely according to Hoeffding's inequality.

For $A_7$, $\Pr(f(\theta^*(\alpha-\epsilon_\p(\eta, M, n_0), \beta-\epsilon_\p(\eta, M, n_0); \delta);\hat\delta_{n_0}) - v^*(\alpha-\epsilon_\p(\eta, M, n_0), \beta-\epsilon_\p(\eta, M, n_0);\delta)\leq c^f_\delta|\delta|
(2\sqrt{\log(4/\eta)/(n_0 - 1)} + 2\log(4/\eta)/(n_0 - 1))\geq 1-\eta$ for some constant $c^f_\delta$ according to Lemma~\ref{lem:pilot-sample-convergence}.

Let $\epsilon(\eta, M, n_0) = L^\p_v\epsilon_\p(\eta, M, n_0) + c^f_\delta|\delta|
(2\sqrt{\log(4/\eta)/(n_0 - 1)} + 2\log(4/\eta)/(n_0 - 1)) + l^\p_fK\Delta/2 + \epsilon_1(\eta, M) + c^\p_fK(K-1)\Delta/2 + n_{2:K}\epsilon_1(4\eta/(K-1), M) + \epsilon_1(4|\hat C(\Delta)|\eta, M) + c^f_\delta|\delta|
(2\sqrt{\log(2/\eta)/(n_0 - 1)} + 2\log(2/\eta)/(n_0 - 1)$. 
Combining the above arguments, and applying the union bound, 
\begin{equation}\label{proof-equ:objective-bound-A-pilot}
\begin{split}
    &\Pr\big(\Theta_M(\alpha, \beta; \hat\delta_{n_0})\neq \emptyset, \text{ and }f(\hat\theta_M(\alpha, \beta;\hat\delta_{n_0});\delta) - v^*(\alpha, \beta;\delta) \leq \epsilon(\eta, M, n_0)\big)\\
    \geq &1 - 6\eta - 5\Pr(\Theta_M(\alpha, \beta; \hat\delta_{n_0})= \emptyset). 
\end{split}
\end{equation}

\textbf{B. Upper bound on $v^*(\alpha, \beta; \delta) - f(\hat\theta_M(\alpha, \beta; \hat\delta_{n_0}); \delta)$}. Since $\Theta(\alpha + \epsilon_\p(\eta, M, n_0), \beta+ \epsilon_\p(\eta, M, n_0); \delta)\neq \emptyset$,
we have that
\begin{equation*}
\begin{split}
    &v^*(\alpha, \beta; \delta)- f(\hat\theta_M(\alpha, \beta;\hat\delta_{n_0}); \delta) \\
    = &v^*(\alpha, \beta; \delta) - v^*(\alpha + \epsilon_\p(\eta, M, n_0), \beta+ \epsilon_\p(\eta, M, n_0); \delta)\\
    &\quad\quad
    + v^*(\alpha + \epsilon_\p(\eta, M, n_0), \beta+ \epsilon_\p(\eta, M, n_0); \delta)  - f(\hat\theta_M(\alpha, \beta;\hat\delta_{n_0}); \delta).
\end{split}
\end{equation*}

Since $M\geq \bar M_\p$ and $n_0\geq \bar n_\p$, we have that $\epsilon_\p(\eta, M, n_0)\leq \epsilon_\p$. Then, according to Lemma~\ref{lem:optimal-value-lipshitz}, 
\begin{equation*}\label{proof-equ:objective-bound-B1}
    v^*(\alpha, \beta; \delta) - v^*(\alpha + \epsilon_\p(\eta, M, n_0), \beta+ \epsilon_\p(\eta, M, n_0); \delta)
\leq L^\p_v \epsilon_\p(\eta, M, n_0).
\end{equation*}

Additionally, according to Proposition~\ref{prop:pilot-convergence}(i), we have that
\begin{equation*}\label{proof-equ:objective-bound-B2}
\begin{split}
    v^*(\alpha + \epsilon_\p(\eta, M, n_0), \beta+ \epsilon_\p(\eta, M, n_0); \delta) =& \min_{\theta\in \Theta(\alpha + \epsilon_\p(\eta, M, n_0), \beta+ \epsilon_\p(\eta, M, n_0); \delta)}f(\theta; \delta)\\
    \leq & \min_{\theta\in \Theta_M(\alpha, \beta; \hat\delta_{n_0})}f(\theta; \delta)\leq f(\hat\theta_M(\alpha , \beta; \hat\delta_{n_0}); \delta).
\end{split}
\end{equation*}
It thus follows that
\begin{equation}\label{proof-equ:objective-bound-B-pilot}
\begin{split}
    &\Pr(\Theta_M(\alpha, \beta; \hat\delta_{n_0})\neq \emptyset, \text{ and }
    v^*(\alpha, \beta) - f(\hat\theta_M(\alpha, \beta;\hat\delta_{n_0}))\leq L^\p_v\epsilon_\p(\eta, M, n_0))\\
    \geq &\Pr(\Theta_M(\alpha, \beta; \hat\delta_{n_0})\neq \emptyset, \text{ and }
    \Theta_M(\alpha, \beta; \hat\delta_{n_0})\subseteq
    \Theta(\alpha + \epsilon_\p(\eta, M, n_0), \beta+ \epsilon_\p(\eta, M, n_0);\delta)
    )\\
    \geq & 1-\eta - \Pr(\Theta_M(\alpha, \beta; \hat\delta_{n_0})= \emptyset). 
\end{split}
\end{equation}

Combining Eqs.~\eqref{proof-equ:objective-bound-A-pilot} and \eqref{proof-equ:objective-bound-B-pilot}, we have that
\begin{align*}
\Pr(\Theta_M(\alpha, \beta; \hat\delta_{n_0})\neq \emptyset, \text{ and }|f(\hat\theta_M(\alpha, \beta;\hat\delta_{n_0})) - v^*(\alpha, \beta)|\leq \epsilon(\eta, M, n_0)\\
\geq 1-7\eta-6\Pr(\Theta_M(\alpha, \beta; \hat\delta_{n_0})= \emptyset). 
\end{align*}
Finally, the argument follows by noting that for $M\geq \bar M_\p$ and $n_0\geq \bar n_\p$, $\Pr(\Theta_M(\alpha, \beta; \hat\delta_{n_0})= \emptyset)\leq \eta$.

\end{proof}

\subsection{Proof of Lemma~\ref{lem:feasible-set-pilot}}\label{sec-appendix:feasible-set-pilot}

The proof for Lemma~\ref{lem:feasible-set-pilot} proceeds similarly as that for Proposition~\ref{prop:feasible-set}. Lemma~\ref{lem:finite-feasibility-pilot} below is an analog of Lemma~\ref{lem:finite-feasibility}
\begin{lemma}\label{lem:finite-feasibility-pilot}
    $\Pr(\Theta(\alpha - \epsilon_1(\eta, M), \beta -\epsilon_1(\eta, M); \hat\delta_{n_0})\cap \hat C(\Delta)\subseteq \Theta_M(\alpha, \beta; \hat\delta_{n_0})\cap \hat C(\Delta)\subseteq \Theta(\alpha + \epsilon_1(\eta, M), \beta+ \epsilon_1(\eta, M); \hat\delta_{n_0})\cap \hat C(\Delta)|\{X_{-i}\}^{n_0}_{i=1})\geq 1-\eta$, almost surely. 
\end{lemma}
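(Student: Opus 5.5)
The plan is to argue conditionally on the pilot data $\{X_{-i}\}^{n_0}_{i=1}$ and then repeat the proof of Lemma~\ref{lem:finite-feasibility} verbatim. Once we condition, $\hat\delta_{n_0}=\mu_a/\hat\sigma_0$ is a fixed number. The null sample paths $\{T^m\}^M_{m=1}$ are generated independently of the pilot data. The alternative paths $\{T^m(\hat\delta_{n_0})\}^M_{m=1}$ are conditionally i.i.d.\ with conditional law equal to that of $(T_k)^K_{k=1}$ under $H_a$ with $\sigma=\hat\sigma_0$. It follows that, for every fixed $\theta$,
\[
\E\big[\hat g_{0,M}(\theta)\mid\{X_{-i}\}^{n_0}_{i=1}\big]=g_0(\theta),\qquad
\E\big[\hat g_{a,M}(\theta;\hat\delta_{n_0})\mid\{X_{-i}\}^{n_0}_{i=1}\big]=g_a(\theta;\hat\delta_{n_0}),
\]
almost surely. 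Here $\Theta(\cdot,\cdot;\hat\delta_{n_0})$ is the (conditionally deterministic) feasible set defined through $g_0$ and $g_a(\cdot;\hat\delta_{n_0})$.

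Next, I would apply Hoeffding's inequality pointwise on the finite grid $\hat C(\Delta)$, conditionally on the pilot data. For each $\tilde\theta\in\hat C(\Delta)$ and $\epsilon>0$, the terms $\I(T^m_k\le\tilde\theta_k,\forall k)$ are bounded in $[0,1]$ and conditionally i.i.d., and the same holds for $\I(T^m_k(\hat\delta_{n_0})<\tilde\theta_k,\forall k)$. Hence, almost surely,
\[
\Pr\big(|\hat g_{0,M}(\tilde\theta)-g_0(\tilde\theta)|\ge\epsilon \mid \{X_{-i}\}^{n_0}_{i=1}\big)\le 2e^{-2M\epsilon^2},
\]
and the same bound holds for $\hat g_{a,M}(\tilde\theta;\hat\delta_{n_0})$ against $g_a(\tilde\theta;\hat\delta_{n_0})$. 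This is exactly the conditional Hoeffding step already used in the proof of Lemma~\ref{lem:SAA-feasibility-pilot}. Equivalently, one may invoke Proposition~1 of \citet{wang2008sample} under the conditional measure, with variances bounded by $1/4$.

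A union bound over the $|\hat C(\Delta)|$ grid points and the two constraint functions gives, almost surely, a conditional probability of at least $1-4|\hat C(\Delta)|\exp(-2M\epsilon^2)$ that both constraint approximations are uniformly $\epsilon$-accurate on $\hat C(\Delta)$. On this event, the two inclusions follow directly. If $\tilde\theta\in\Theta(\alpha-\epsilon,\beta-\epsilon;\hat\delta_{n_0})\cap\hat C(\Delta)$, then $\hat g_{0,M}(\tilde\theta)\ge g_0(\tilde\theta)-\epsilon\ge 1-\alpha$ and $\hat g_{a,M}(\tilde\theta;\hat\delta_{n_0})\le\beta$, so $\tilde\theta\in\Theta_M(\alpha,\beta;\hat\delta_{n_0})$. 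The reverse inclusion is symmetric. Setting $4|\hat C(\Delta)|\exp(-2M\epsilon^2)=\eta$ recovers $\epsilon=\epsilon_1(\eta,M)$.

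The only delicate point is measurability and conditioning. One must verify that the conditional Hoeffding bound holds almost surely, which requires that $(T^m_k)$ is independent of the pilot data and that $(T^m_k(\hat\delta_{n_0}))$ is conditionally i.i.d. given it. This holds because the simulation draws fresh $\mathcal{N}(0,1)$ noise, scaled by $\hat\sigma_0$ and shifted by $\mu_a$, independently of $\{X_{-i}\}$. Neither the grid $\hat C(\Delta)$ nor the constant $\epsilon_1$ depends on the pilot data, so the resulting bound is uniform over pilot realizations.
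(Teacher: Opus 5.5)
Your proposal is correct and follows essentially the same route as the paper. You condition on the pilot data so that $\hat\delta_{n_0}$ is fixed and the simulated paths are conditionally i.i.d.\ with conditional means $g_0$ and $g_a(\cdot;\hat\delta_{n_0})$, then rerun the grid-wise Hoeffding argument of Lemma~\ref{lem:finite-feasibility} under the conditional measure; your explicit union bound over $\hat C(\Delta)$ and the two constraint functions simply unpacks the bound $1-4|\hat C(\Delta)|\exp(-2M\epsilon^2)$ that the paper obtains via Proposition~1 of \citet{wang2008sample}.
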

\begin{proof}[Proof of Lemma~\ref{lem:finite-feasibility-pilot}]
    The statement follows from Lemma~\ref{lem:finite-feasibility} and the fact that $\E[\hat g_{a, M}(\theta; \hat\delta_{n_0})|X_{-1}, \cdots, X_{-n_0}] = g_a(\theta; \hat\delta_{n_0})$,
    % and $\E[\hat f_M(\theta; \hat\delta_{n_0})|X_{-1}, \cdots, X_{-n_0}] = f(\theta; \hat\delta_{n_0})$
    which follows by the independence between $\{X_{-i}\}^{n_0}_{i=1}$ and $\{S^m\}^M_{m=1}$. As a result, we can  apply Hoeffding's inequality in the proof of Lemma~\ref{lem:finite-feasibility}. Specifically, in the proof of Lemma~\ref{lem:finite-feasibility}, we use the concentration inequality that for all $\epsilon > 0$, 
    \begin{align*}
    &\Pr\big(|\hat g_{a, M}(\theta; \hat\delta_{n_0}) - g_a(\theta; \hat\delta_{n_0})|\leq \epsilon|\{X_{-i}\}^{n_0}_{i=1}\big)\\
     = &\Pr\big(|\hat g_{a, M}(\theta; \hat\delta_{n_0}) - \E[\hat g_{a, M}(\theta; \hat\delta_{n_0})|X_{-1}, \cdots, X_{-n_0}]|\leq \epsilon|\{X_{-i}\}^{n_0}_{i=1}\big)\\
     \geq &1 - 2\exp(-2M\epsilon^2), \text{ almost surely}. 
    \end{align*}
\end{proof}

Lemma~\ref{lem:lipshitz-continuity-t} is an analog of Lemma~\ref{lem:lipshitz-continuity}. 
\begin{lemma}\label{lem:lipshitz-continuity-t}
For all $\delta\in\mathbb{R}$, there exist constants $l^\p_f, l^\p_{g_0}$ and $l^\p_{g_a}$ that do not depend on $\delta$ such that 
    \begin{align*}
        |f(\hat\theta; \delta) - f(\tilde\theta; \delta)|\leq l^\p_f\Vert\hat\theta - \tilde\theta\Vert_1, \forall \hat\theta, \tilde\theta\in C,\\
        |g_0(\hat\theta) - g_0(\tilde\theta)|\leq l^\p_{g_0}\Vert\hat\theta - \tilde\theta\Vert_1, \forall \hat\theta, \tilde\theta\in C,\\
        |g_{a}(\hat\theta; \delta) - g_a(\tilde\theta; \delta)|\leq l^\p_{g_a}\Vert\hat\theta - \tilde\theta\Vert_1, \forall \hat\theta, \tilde\theta\in C. 
    \end{align*}
\end{lemma}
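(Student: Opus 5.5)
We follow the template of Lemma~\ref{lem:lipshitz-continuity}: each of $f(\cdot;\delta)$, $g_0$, $g_a(\cdot;\delta)$ is a (weighted sum of) multivariate CDF of the $t$-statistic vector $T=(T_k)_{k=1}^K$, evaluated at $\theta$. The key point is that a change of a single coordinate $\theta_k$ can only alter the event $\{T_i<\theta_i,\forall i\}$ through the event that $T_k$ lies between the two values of $\theta_k$. Concretely, for $\hat\theta,\tilde\theta\in C$ we bound
\[
|g_a(\hat\theta;\delta)-g_a(\tilde\theta;\delta)|\le \sum_{k=1}^K \Pr\big(\min\{\hat\theta_k,\tilde\theta_k\}\le T_k\le \max\{\hat\theta_k,\tilde\theta_k\}\,\big|\,\delta\big)\le \sum_{k=1}^K \bar p_k(\delta)\,|\hat\theta_k-\tilde\theta_k|,
\]
where $\bar p_k(\delta)$ is the supremum of the marginal density of $T_k\sim T(n_{1:k}-1,\sqrt{n_{1:k}}\delta)$. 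The same telescoping argument, applied to each of the $K-1$ probability terms in $f$ and multiplied by $n_k$, gives $|f(\hat\theta;\delta)-f(\tilde\theta;\delta)|\le \big(\max_k n_k\big)(K-1)\max_k \bar p_k(\delta)\,\Vert\hat\theta-\tilde\theta\Vert_1$. The bound for $g_0$ is the special case $\delta=0$, using the central $t$ densities, which are bounded by $\Gamma(\nu/2+1/2)/(\sqrt{\nu\pi}\,\Gamma(\nu/2))$. This route avoids differentiability and the mean value theorem, and it only uses that each $T_k$ has a bounded density.

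The remaining and main step is uniformity in $\delta$: we must show $\sup_{\delta\in\mathbb{R}}\sup_x p_{\nu,\lambda}(x)<\infty$ for the noncentral $t$ density with $\nu=n_{1:k}-1\ge 1$ and $\lambda=\sqrt{n_{1:k}}\delta$. We write $T_k=(Z+\lambda)/\sqrt{V/\nu}$ with $Z\sim N(0,1)$ independent of $V\sim\chi^2_\nu$, and condition on $V$. Given $V=v$, $T_k$ is normal with standard deviation $\sqrt{\nu/v}$, so its conditional density is at most $\sqrt{v/\nu}/\sqrt{2\pi}$, uniformly in $\lambda$. Integrating over $V$ gives
\[
p_{\nu,\lambda}(x)\le \E\big[\sqrt{V/\nu}\big]/\sqrt{2\pi}\le 1/\sqrt{2\pi},
\]
by Jensen's inequality, since $\E[V/\nu]=1$. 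This bound is independent of $\delta$ and of $k$.

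Therefore we may take $l^\p_{g_0}=l^\p_{g_a}=1/\sqrt{2\pi}$ and $l^\p_f=(\max_k n_k)(K-1)/\sqrt{2\pi}$, none of which depends on $\delta$. Compactness of $C$ is not even needed. I expect the only delicate point to be justifying the conditional-density argument, namely that the density of a normal scale mixture is the mixture of the normal densities (Fubini). That step is routine.
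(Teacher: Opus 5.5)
Your proof is correct, and its key step is the same as the paper's. The $\delta$-free bound $p_{\nu,\lambda}(x)\le \E[\sqrt{V/\nu}]/\sqrt{2\pi}\le 1/\sqrt{2\pi}$, obtained by conditioning on $V$ and applying Jensen, is exactly the argument the paper uses.

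Where you differ is in turning this density bound into Lipschitz constants.

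\begin{itemize}
\item The paper writes $\partial g_a(\theta;\delta)/\partial\theta_k = p_k(\theta_k)\Pr(T_{-k}\le\theta_{-k}\mid T_k=\theta_k)\le p_k(\theta_k)$, does the same for $g_0$ and $f$, and then invokes the multivariate mean value theorem. That route requires differentiability of the joint CDF and the conditional-density form of its partial derivatives.
\item You instead bound $|\I(T<\hat\theta)-\I(T<\tilde\theta)|$ by the indicator that some $T_k$ lies between $\hat\theta_k$ and $\tilde\theta_k$, and then take a union bound.
\end{itemize}

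Your route needs only bounded marginal densities and no smoothness of the joint law. It also yields global Lipschitz constants directly. It mirrors the event-inclusion argument the paper itself uses for the empirical functions in Lemma~\ref{lem:SAA-constraint-bound}, and it fits Assumption~\ref{ass:convergence-generalization}(i) more naturally. The paper's derivative representation is not wasted, since it is what Lemma~\ref{lem:MFCQ} needs. For the present lemma, however, your argument is the more elementary one.

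Two small refinements:
\begin{itemize}
\item Your constant for $f$ can be sharpened to $n_{2:K}/\sqrt{2\pi}$; the paper states $n_{1:K}/\sqrt{2\pi}$.
\item You can avoid the Tonelli step on densities altogether by bounding $\Pr(a\le T_k\le b)=\E[\Pr(a\le T_k\le b\mid V)]\le (b-a)\,\E[\sqrt{V/\nu}]/\sqrt{2\pi}\le (b-a)/\sqrt{2\pi}$ directly.
\end{itemize}
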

\begin{proof}[Proof of Lemma~\ref{lem:lipshitz-continuity-t}]
    Let $p_{T(n;\delta)}$ be the density function of the $T(n;\delta)$ distribution. We first show that for all $t\in\mathbb{R}$ and all $\delta\in\mathbb{R}$, we have $p_{T(n;\delta)}(t)\leq 1/\sqrt{2\pi}$. Let $Z\sim\mathcal{N}(0,1)$ and $V\sim\chi^2_n$ be independent, and define $T=(Z+\delta)/\sqrt{V/n}$. Then $T\sim T(n;\delta)$. Let $\varphi(x)=(2\pi)^{-1/2}e^{-x^2/2}$ be the standard normal density. Let $p_{T|V=v}(t)$ be the conditional density function of $T$ conditional on $V = v > 0$, then $p_{T|V=v}(t)=\sqrt{v/n}\;\varphi(t\sqrt{v/n}-\delta)$ for all $t\in \mathbb{R}$. Let $p_V$ denote the density of $V$. Since the integrand is nonnegative, Tonelli's theorem yields $p_{T(n;\delta)}(t)
=\int_{0}^{\infty} p_{T\mid V=v}(t)\,p_V(v)\,dv
=\mathbb{E}[\sqrt{V/n}\;\varphi(t\sqrt{V/n}-\delta)]$.
Using $\sup_{x\in\mathbb{R}}\varphi(x)=\varphi(0)=(2\pi)^{-1/2}$, we obtain $p_{T(n;\delta)}(t)
\leq \E[\sqrt{V/n}]/\sqrt{2\pi}$. 
Since $x\mapsto \sqrt{x}$ is concave on $(0,\infty)$, Jensen's inequality gives $\E[\sqrt{V/n}]
\le \sqrt{\mathbb{E}[V/n]}
= \sqrt{\E[V]/n}$. 
For $V\sim\chi^2_n$, $\mathbb{E}[V]=n$, hence $\mathbb{E}[\sqrt{V/n}]\le 1$, and therefore $p_{T(n;\delta)}(t)\le 1/\sqrt{2\pi}$ for all $t\in\mathbb{R}$. 

Let $T_{-k}$ and $\theta_{-k}$ be the vectors of $T$ and $\theta$ without the $k$-th element. Now, the rest of the argument follows by noticing that $\partial g_0(\theta)/\partial \theta_k = p_{T(n_{1:k}-1, 0)}(\theta_k)\Pr(T_{-k}\leq \theta_{-k}|T_k = \theta_k, H_0) \leq p_{T(n_{1:k}-1, 0)}(\theta_k),\partial g_a(\theta; \delta)/\partial \theta_k = p_{T(n_{1:k}-1, \sqrt{n_{1:k}}\delta)}(\theta_k)\Pr(T_{-k}\leq \theta_{-k}|T_k = \theta_k, H_a)\leq p_{T(n_{1:k}-1, \sqrt{n_{1:k}}\delta)}(\theta_k)$, $\partial f(\theta; \delta)/\partial \theta_k \leq \sum^K_{j=k+1}n_jp_{T(n_{1:k}-1, \sqrt{n_{1:k}}\delta)}(\theta_k)\leq n_{1:K}/\sqrt{2\pi}$, and the mean value theorem.

\end{proof}
Lemma~\ref{lem:SAA-constraint-bound-pilot} is an analog of Lemma~\ref{lem:SAA-constraint-bound}. 
\begin{lemma}\label{lem:SAA-constraint-bound-pilot}
    There exist constants $c^\p_{g_0}, c^\p_{g_a}, c^\p_f$ such that the following arguments hold:

(i) With probability at least $1-\eta$,
\begin{align*}
    \max\Big\{\hat g_{0,M}(\tilde\theta) - \min_{\theta: \Vert\theta - \tilde\theta\Vert_\infty\leq\Delta}\hat g_{0,M}(\theta), 
    \max_{\theta: \Vert\theta - \tilde\theta\Vert_\infty\leq\Delta}\hat g_{0,M}(\theta) - \hat g_{0,M}(\tilde\theta)\Big\}\\
    \leq c^\p_{g_0}K\Delta + \epsilon_1(4\eta, M), \,\forall \tilde\theta\in \hat C(\Delta).
\end{align*}

(ii) We have the inequality below holds almost surely,
\begin{align*}
        \Pr\Big(\max\big\{\hat g_{a,M}(\tilde\theta; \hat\delta_{n_0}) - \min_{\theta: \Vert\theta - \tilde\theta\Vert_\infty\leq\Delta}\hat g_{a,M}(\theta; \hat\delta_{n_0}), \max_{\theta: \Vert\theta - \tilde\theta\Vert_\infty\leq\Delta}\hat g_{a,M}(\theta; \hat\delta_{n_0}) - \hat g_{a,M}(\tilde\theta; \hat\delta_{n_0})\big\}\\
\leq c^\p_{g_a}K\Delta + \epsilon_1(4\eta, M),\, \forall \tilde\theta\in \hat C(\Delta)| \{X_{-i}\}^{n_0}_{i=1}\Big)\geq 1-\eta, 
    \end{align*}

(iii) We have the inequality below holds almost surely,
\begin{align*}
        \Pr\Big(\max\big\{\hat f_M(\tilde\theta; \hat\delta_{n_0}) - \min_{\theta: \Vert\theta - \tilde\theta\Vert_\infty\leq\Delta}\hat f_M(\theta; \hat\delta_{n_0}), \max_{\theta: \Vert\theta - \tilde\theta\Vert_\infty\leq\Delta}\hat f_M(\theta; \hat\delta_{n_0}) - \hat f_M(\tilde\theta; \hat\delta_{n_0})\big\}\\
\leq c^\p_fK(K-1)\Delta/2 + \sum^K_{k=2}n_k\epsilon_1(4\eta/(K-1), M),\, \forall \tilde\theta\in \hat C(\Delta)| \{X_{-i}\}^{n_0}_{i=1}\Big)\geq 1-\eta.
    \end{align*}
\end{lemma}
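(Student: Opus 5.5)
The plan is to mirror the proof of Lemma~\ref{lem:SAA-constraint-bound}, treating $\hat g_{0,M}$ unconditionally and working with $\hat g_{a,M}(\cdot;\hat\delta_{n_0})$ and $\hat f_M(\cdot;\hat\delta_{n_0})$ conditionally on the pilot sample $\{X_{-i}\}^{n_0}_{i=1}$. Once we condition on the pilot data, $\hat\delta_{n_0}$ is a fixed number. The paths $(T^m_k(\hat\delta_{n_0}))^K_{k=1}$, $m\in[M]$, are then i.i.d. and independent of the pilot sample. So every Hoeffding step in the earlier proof goes through conditionally, exactly as in Lemma~\ref{lem:finite-feasibility-pilot}.

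\textbf{Step 1: deterministic reduction.} Fix $\tilde\theta\in\hat C(\Delta)$ and any $\theta$ with $\Vert\theta-\tilde\theta\Vert_\infty\le\Delta$. The indicator manipulations from Lemma~\ref{lem:SAA-constraint-bound} are purely pathwise and hold for any sample. They give
\[
\hat g_{0,M}(\tilde\theta)-\hat g_{0,M}(\theta)\le (1/M)\sum_{m=1}^M\I\big(\tilde\theta_k-\Delta<T^m_k\le\tilde\theta_k+\Delta\text{ for some }k\in[K]\big).
\]
The same bound holds for the reverse difference. The analogous bounds hold for $\hat g_{a,M}(\cdot;\hat\delta_{n_0})$, and for $\hat f_M(\cdot;\hat\delta_{n_0})$ with the weights $n_k$ and unions over $i\in[k-1]$.

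\textbf{Step 2: bound the expectations uniformly in $\delta$.} Here I use the density bound from the proof of Lemma~\ref{lem:lipshitz-continuity-t}: for every degree of freedom and every noncentrality parameter, $p_{T(n;\delta)}(t)\le 1/\sqrt{2\pi}$. A union bound over $k$ then gives
\[
\Pr\big(\tilde\theta_k-\Delta<T_k\le\tilde\theta_k+\Delta\text{ for some }k\big)\le 2K\Delta/\sqrt{2\pi}
\]
under $H_0$, and the same bound under the alternative with $\delta=\hat\delta_{n_0}$, whatever value $\hat\delta_{n_0}$ takes. Accordingly I set $c^\p_{g_0}=c^\p_{g_a}=2/\sqrt{2\pi}$ and $c^\p_f=2(\max_k n_k)/\sqrt{2\pi}$. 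This step matters most: the constants must not depend on $\hat\delta_{n_0}$, otherwise the conditional statement would not hold almost surely with deterministic constants. The uniform density bound is what guarantees this, and I expect it to be the only point needing care.

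\textbf{Step 3: concentration.} For (i), I apply Hoeffding's inequality to each $\tilde\theta\in\hat C(\Delta)$ and take a union bound over the grid. This gives failure probability at most $|\hat C(\Delta)|\exp(-2M\epsilon^2)$, and choosing $\epsilon=\epsilon_1(4\eta,M)$ yields the claim, exactly as before. For (ii), I do the same conditionally on the pilot data, so the bound holds almost surely. For (iii), conditionally on the pilot data, I apply Hoeffding's inequality separately for each $k=2,\dots,K$ with deviation $n_k\epsilon_1(4\eta/(K-1),M)$. I then sum the bounds $c^\p_f(k-1)\Delta$ to obtain $c^\p_fK(K-1)\Delta/2$, and a union bound over $k$ and over the grid completes the argument.
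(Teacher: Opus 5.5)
Your proposal is correct and matches the paper's proof: the paper also conditions on the pilot sample, so that Hoeffding's inequality applies to the paths $T^m_k(\hat\delta_{n_0})$, which are conditionally i.i.d. given that sample. It then uses the $\delta$-free density bound from Lemma~\ref{lem:lipshitz-continuity-t} to keep the constants independent of $\hat\delta_{n_0}$, and repeats the argument of Lemma~\ref{lem:SAA-constraint-bound}. Your explicit constants $c^\p_{g_0}=c^\p_{g_a}=2/\sqrt{2\pi}$ and $c^\p_f=2(\max_k n_k)/\sqrt{2\pi}$ simply make concrete what the paper leaves implicit.
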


\begin{proof}[Proof of Lemma~\ref{lem:SAA-constraint-bound-pilot}]
    (i) can be similarly proved as Lemma~\ref{lem:SAA-constraint-bound}. For (ii) and (iii), we use the concentration inequality that for any $\tilde\theta\in \hat C(\Delta)$, 
    \begin{align*}
        &\Pr\big((1/M)\sum^M_{m=1}\I(\tilde\theta_k-\Delta<T^m_k(\hat\delta_{n_0})\leq \tilde\theta_k+\Delta \text{ for some }k\in[K]) \\
    &\quad\quad\quad
    \leq \Pr(\tilde\theta_k-\Delta<T^m_k(\hat\delta_{n_0})\leq \tilde\theta_k+\Delta \text{ for some }k\in[K]|\{X_{-i}\}^{n_0}_{i=1}) + \epsilon|\{X_{-i}\}^{n_0}_{i=1}\big)\\
    \geq &1-\exp(-2M\epsilon^2), \text{ almost surely},
    \end{align*}
    and that there exists some constant $c^\p_a$ such that $\Pr(\tilde\theta_k-\Delta<T^m_k(\hat\delta_{n_0})\leq \tilde\theta_k+\Delta|\{X_{-i}\}^{n_0}_{i=1})\leq c^\p_a\Delta$ almost surely, since the density function of a noncentral Student $t$ distribution $T(n, \delta)$
    can be upper bounded by a constant that does not depend on $\delta$ (see Lemma~\ref{lem:lipshitz-continuity-t}). Then the rest of the proof follows similarly as that for Lemma~\ref{lem:SAA-constraint-bound}. We thus omit the details here.
\end{proof}

Lemma~\ref{lem:Delta-net-pilot} is an analog of Lemma~\ref{lem:Delta-net}. It can be similarly proved as Lemma~\ref{lem:Delta-net}, following from Lemmas~\ref{lem:lipshitz-continuity-t}-\ref{lem:SAA-constraint-bound-pilot}.

\begin{lemma}\label{lem:Delta-net-pilot}
    (i) $\Pr(\Theta_M(\alpha, \beta; \hat\delta_{n_0})\neq\emptyset, \text{ and }\forall \theta\in \Theta_M(\alpha, \beta; \hat\delta_{n_0}), \text{exists }\tilde\theta\in \Theta_M(\alpha + c^\p_{g_0}K\Delta + \epsilon_1(2\eta, M), \beta+ c^\p_{g_a}K\Delta + \epsilon_1(2\eta, M); \hat\delta_{n_0})\cap \hat C(\Delta)\text{ such that }\Vert \hat\theta - \tilde\theta\Vert_\infty\leq \Delta|\{X_{-i}\}^{n_0}_{i=1})\geq 1-\eta-\Pr(\Theta_M(\alpha, \beta; \hat\delta_{n_0})=\emptyset
|\{X_{-i}\}^{n_0}_{i=1})$, almost surely.

    (ii) For all $\delta$ and 
    $\hat\theta\in \Theta(\alpha, \beta; \delta)$ (given $\Theta(\alpha, \beta; \delta)\neq\emptyset$), we can find $\tilde\theta\in \Theta(\alpha + l^\p_{g_0}K\Delta, \beta+ l^\p_{g_a}K\Delta; \delta)\cap \hat C(\Delta)$ such that $\Vert \hat\theta - \tilde\theta\Vert_\infty\leq \Delta$.
\end{lemma}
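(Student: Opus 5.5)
The plan is to copy the proof of Lemma~\ref{lem:Delta-net} line by line, working conditionally on the pilot sample $\{X_{-i}\}^{n_0}_{i=1}$. Because the pilot data are independent of the simulated paths, once we condition on them $\hat\delta_{n_0}$ is a fixed number. The quantities $\hat g_{0,M}$ and $\hat g_{a,M}(\cdot;\hat\delta_{n_0})$ are then averages of $M$ i.i.d.\ indicators, with conditional means $g_0$ and $g_a(\cdot;\hat\delta_{n_0})$. So every Hoeffding step in the $z$-test proof still applies conditionally, almost surely. The only distributional input needed is the one from Lemma~\ref{lem:SAA-constraint-bound-pilot}. It uses the uniform density bound of Lemma~\ref{lem:lipshitz-continuity-t}, which does not depend on $\delta$, so the constants $c^\p_{g_0},c^\p_{g_a}$ are deterministic.

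\textbf{Part (i).} For each $\theta\in C$, I would take $\tilde\theta(\theta)$ to be a nearest point of $\hat C(\Delta)$ with $\Vert\theta-\tilde\theta(\theta)\Vert_\infty\le\Delta$; such a point exists by construction of $\hat C(\Delta)$. Now suppose $\theta\in\Theta_M(\alpha,\beta;\hat\delta_{n_0})$. Then $\tilde\theta(\theta)$ lies in the enlarged SAA feasible set as soon as both of the following hold:
\begin{align*}
\hat g_{0,M}(\tilde\theta(\theta)) - \hat g_{0,M}(\theta)&\ge -c^\p_{g_0}K\Delta-\epsilon,\\
\hat g_{a,M}(\tilde\theta(\theta);\hat\delta_{n_0}) - \hat g_{a,M}(\theta;\hat\delta_{n_0})&\le c^\p_{g_a}K\Delta+\epsilon.
\end{align*}
Both are implied by uniform-over-grid events of the form
\[
\hat g_{0,M}(\tilde\theta)-\max_{\Vert\theta-\tilde\theta\Vert_\infty\le\Delta}\hat g_{0,M}(\theta)\ge -c^\p_{g_0}K\Delta-\epsilon \quad \text{for all } \tilde\theta\in\hat C(\Delta),
\]
together with the analogous event for $g_a$. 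Lemma~\ref{lem:SAA-constraint-bound-pilot}(i)--(ii) controls these events conditionally. Specifically, the window-count argument gives
\[
\Pr\Big(\text{some } \tilde\theta\in\hat C(\Delta) \text{ fails} \,\Big|\, \{X_{-i}\}^{n_0}_{i=1}\Big)\le 2|\hat C(\Delta)|e^{-2M\epsilon^2},
\]
and choosing $\epsilon=\epsilon_1(2\eta,M)$ makes this at most $\eta$. Finally, I would intersect with the event $\{\Theta_M(\alpha,\beta;\hat\delta_{n_0})\neq\emptyset\}$ and apply a union bound. This gives the stated lower bound $1-\eta-\Pr(\Theta_M(\alpha,\beta;\hat\delta_{n_0})=\emptyset\mid\{X_{-i}\}^{n_0}_{i=1})$, almost surely.

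\textbf{Part (ii).} This part is deterministic. Fix $\delta$ and $\hat\theta\in\Theta(\alpha,\beta;\delta)$, and take the nearest grid point $\tilde\theta$, so that $\Vert\hat\theta-\tilde\theta\Vert_1\le K\Delta$. By Lemma~\ref{lem:lipshitz-continuity-t},
\[
g_0(\tilde\theta)\ge 1-\alpha-l^\p_{g_0}K\Delta \quad\text{and}\quad g_a(\tilde\theta;\delta)\le\beta+l^\p_{g_a}K\Delta,
\]
so $\tilde\theta$ lies in the enlarged true feasible set. Uniformity of the Lipschitz constants in $\delta$ is what allows the statement to hold for all $\delta$ at once.

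\textbf{Expected obstacle.} The only delicate point is measurability and the conditioning. The events involve a supremum over an uncountable neighbourhood, and the conditional probabilities must hold almost surely in the pilot data. I would handle this as the paper does: reduce everything to the finitely many grid points and their $\Delta$-windows, so that every event is a finite combination of indicator averages. Fubini/independence then justifies treating $\hat\delta_{n_0}$ as a constant.
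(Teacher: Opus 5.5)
Your proposal is correct and follows essentially the same route as the paper. For (i), you condition on the pilot sample, map each SAA-feasible point to a nearby grid point, and reduce to the uniform-over-grid window events of Lemma~\ref{lem:SAA-constraint-bound-pilot}(i)--(ii) with $\epsilon=\epsilon_1(2\eta,M)$, combined with a union bound against the nonemptiness event; for (ii), you use the $\delta$-uniform Lipschitz constants of Lemma~\ref{lem:lipshitz-continuity-t}, exactly as in the proof of Lemma~\ref{lem:Delta-net}(ii).
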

\begin{proof}[Proof of Lemma~\ref{lem:Delta-net-pilot}]
    (i) Let $\tilde\theta(\hat\theta)\in\arg\min_{\theta\in \hat C(\Delta), \Vert\theta - \hat\theta\Vert_\infty\leq \Delta}\Vert\theta - \hat\theta\Vert_1$. For $\epsilon > 0$, 
\begin{align*}
&\Pr(\forall \theta\in \Theta_M(\alpha, \beta; \hat\delta_{n_0}), \text{exists }\tilde\theta\in \Theta_M(\alpha + c^\p_{g_0}K\Delta + \epsilon, \beta+ c^\p_{g_a}K\Delta + \epsilon; \hat\delta_{n_0})\cap \hat C(\Delta)\\
&\quad\quad\quad\quad\quad\text{ such that }\Vert \theta - \tilde\theta\Vert_\infty\leq \Delta, \Theta_M(\alpha, \beta; \hat\delta_{n_0})\neq\emptyset|\{X_{-i}\}^{n_0}_{i=1})\\
\geq & \Pr(\forall \theta\in \Theta_M(\alpha, \beta; \hat\delta_{n_0}), \tilde\theta(\theta)\in \Theta_M(\alpha + c^\p_{g_0}K\Delta + \epsilon, \beta+ c^\p_{g_a}K\Delta + \epsilon; \hat\delta_{n_0}), \\
&\quad\quad\quad\quad\quad
\Theta_M(\alpha, \beta; \hat\delta_{n_0})\neq\emptyset
|\{X_{-i}\}^{n_0}_{i=1}
)\\
\geq & \Pr(\forall \theta\in \Theta_M(\alpha, \beta; \hat\delta_{n_0}), 
\hat g_{0,M}(\tilde\theta(\theta)) - \hat g_{0,M}(\theta) \geq - c^\p_{g_0}K\Delta - \epsilon, \, \\
&\quad\quad\quad\quad\quad
\hat g_{a,M}(\tilde\theta(\theta); \hat\delta_{n_0}) - \hat g_{a,M}(\theta; \hat\delta_{n_0}) \leq c^\p_{g_a}K\Delta + \epsilon, \Theta_M(\alpha, \beta; \hat\delta_{n_0})\neq\emptyset
|\{X_{-i}\}^{n_0}_{i=1}
)\\
\geq & \Pr(\Theta_M(\alpha, \beta; \hat\delta_{n_0})\neq\emptyset, \text{ and }\forall \tilde\theta\in \hat C(\Delta), 
\hat g_{0,M}(\tilde\theta) - \max_{\theta: \Vert\theta - \tilde\theta\Vert_\infty\leq \Delta}\hat g_{0,M}(\theta) \geq - c^\p_{g_0}K\Delta - \epsilon, \, \\
&\quad\quad\quad\quad\quad
\hat g_{a,M}(\tilde\theta; \hat\delta_{n_0}) - \min_{\theta: \Vert\theta - \tilde\theta\Vert_\infty\leq \Delta}\hat g_{a,M}(\theta; \hat\delta_{n_0}) \leq c^\p_{g_a}K\Delta + \epsilon
|\{X_{-i}\}^{n_0}_{i=1}
)\\
\geq & 1- \Pr(\Theta_M(\alpha, \beta; \hat\delta_{n_0})\neq\emptyset, \text{ and }\exists \tilde\theta\in \hat C(\Delta), 
\hat g_{0,M}(\tilde\theta) - \max_{\theta: \Vert\theta - \tilde\theta\Vert_\infty\leq \Delta}\hat g_{0,M}(\theta) \leq - c^\p_{g_0}K\Delta - \epsilon, \, \\
&\quad\quad\quad\quad\quad
\text{ or }\exists\tilde\theta\in \hat C(\Delta), 
\hat g_{a,M}(\tilde\theta; \hat\delta_{n_0}) - \min_{\theta: \Vert\theta - \tilde\theta\Vert_\infty\leq \Delta}\hat g_{a,M}(\theta; \hat\delta_{n_0}) \geq c^\p_{g_a}K\Delta + \epsilon
|\{X_{-i}\}^{n_0}_{i=1}
) \\
&\quad\quad\quad\quad\quad\quad\quad\quad
- \Pr(\Theta_M(\alpha, \beta; \hat\delta_{n_0})=\emptyset
|\{X_{-i}\}^{n_0}_{i=1})\\
\geq & 1-2|\hat C(\Delta)|\exp(-2M\epsilon^2) - \Pr(\Theta_M(\alpha, \beta; \hat\delta_{n_0})=\emptyset
|\{X_{-i}\}^{n_0}_{i=1}),\,\text{almost surely},
\end{align*}
where the last inequality follows from Lemma~\ref{lem:SAA-constraint-bound-pilot}. 

    (ii) can be similarly proved as Lemma~\ref{lem:Delta-net}. We thus omit its proof. 
\end{proof}

\begin{proof}[Proof of Lemma~\ref{lem:feasible-set-pilot}]
    We let $\hat\epsilon_\alpha(\eta, M) = 2\epsilon_1(\eta, M) + (l^\p_{g_0} + c^\p_{g_0})K\Delta$, $\hat\epsilon_\beta(\eta, M) = 2\epsilon_1(\eta, M) + (l^\p_{g_a} + c^\p_{g_a})K\Delta$, $\varepsilon_\alpha(\eta, M) = c^\p_{g_0}K\Delta + \epsilon_1(\eta, M)$, and $\varepsilon_\beta(\eta, M) = c^\p_{g_a}K\Delta + \epsilon_1(\eta, M)$. 
\begin{align*}
&\,\Pr(\Theta_M(\alpha, \beta; \hat\delta_{n_0})\nsubseteq \Theta(\alpha + \hat\epsilon_\alpha(\eta, M), \beta+ \hat\epsilon_\beta(\eta, M); \hat\delta_{n_0})
|\{X_{-i}\}^{n_0}_{i=1}
) \\
\leq &\,1-\Pr(\Theta_M(\alpha, \beta; \hat\delta_{n_0})\neq \emptyset, \Theta_M(\alpha, \beta; \hat\delta_{n_0})\subseteq \Theta(\alpha + \hat\epsilon_\alpha(\eta, M), \beta+ \hat\epsilon_\beta(\eta, M); \hat\delta_{n_0}
),\\
&\quad\quad\quad\quad
\Theta_M(\alpha + \varepsilon_\alpha(\eta, M), \beta+ \varepsilon_\beta(\eta, M); \hat\delta_{n_0})\cap \hat C(\Delta)\\
&\quad\quad\quad\quad\quad\quad
\subseteq \Theta(\alpha + \varepsilon_\alpha(\eta, M) + \epsilon_1(\eta, M), \beta+\varepsilon_\beta(\eta, M) + \epsilon_1(\eta, M); \hat\delta_{n_0})
|\{X_{-i}\}^{n_0}_{i=1}
)\\
&\quad
-\Pr(\Theta_M(\alpha, \beta; \hat\delta_{n_0})=\emptyset
|\{X_{-i}\}^{n_0}_{i=1}), \text{ almost surely}.
    \end{align*}
Now based on similar reasoning as the proof of Proposition~\ref{prop:feasible-set}, we can show that, conditional on the realization of $\hat\delta_{n_0}$,  if for all $\hat\theta\in \Theta_M(\alpha, \beta; \hat\delta_{n_0})$, we have $\Theta_M(\alpha + \epsilon_\alpha(\eta, M), \beta + \epsilon_\beta(\eta, M); \hat\delta_{n_0})\cap \{\theta\in \hat C(\Delta): \Vert \theta - \hat\theta\Vert_\infty\leq \Delta\}\neq\emptyset$ and $\Theta_M(\alpha + \varepsilon_\alpha(\eta, M), \beta+ \varepsilon_\beta(\eta, M); \hat\delta_{n_0})\cap \hat C(\Delta)
\subseteq \Theta(\alpha + \varepsilon_\alpha(\eta, M) + \epsilon_1(\eta, M), \beta+\varepsilon_\beta(\eta, M) + \epsilon_1(\eta, M)); \hat\delta_{n_0})$, then $\Theta_M(\alpha, \beta; \hat\delta_{n_0})\subseteq \Theta(\alpha + \hat\epsilon_\alpha(\eta, M), \beta + \hat\epsilon_\beta(\eta, M); \hat\delta_{n_0})$. Thus
% Consider any $\hat\theta\in \Theta_M(\alpha, \beta; \hat\delta_{n_0})$, since $\Theta_M(\alpha + \epsilon_\alpha(\eta, M), \beta + \epsilon_\beta(\eta, M); \hat\delta_{n_0})\cap \{\theta\in \hat C(\Delta): \Vert \theta - \hat\theta\Vert_\infty\leq \Delta\}\neq\emptyset$, we can find $\tilde\theta(\hat\theta) \in \Theta_M(\alpha + \epsilon_\alpha(\eta, M), \beta + \epsilon_\beta(\eta, M); \hat\delta_{n_0})\cap \{\theta\in \hat C(\Delta): \Vert \theta - \hat\theta\Vert_\infty\leq \Delta\}$. Then $g_0(\hat\theta) \geq g_0(\tilde\theta(\hat\theta)) - l_{g_0}\Vert\hat\theta - \tilde\theta(\hat\theta)\Vert_1\geq 1-\alpha - \varepsilon_\alpha(\eta, M)  - \epsilon_1(\eta, M) - l_{g_0}K\Delta = 1-\alpha - \hat\epsilon_\alpha(\eta, M)$, where the first inequality follows from Lemma~\ref{lem:lipshitz-continuity-t} and that $\Vert\tilde\theta(\hat\theta) - \hat\theta\Vert_\infty\leq \Delta$. The second inequality follows since $\tilde\theta(\hat\theta)\in \Theta_M(\alpha + \varepsilon_\alpha(\eta, M), \beta+ \varepsilon_\beta(\eta, M); \hat\delta_{n_0})\cap \hat C(\Delta)\subseteq \Theta(\alpha + \varepsilon_\alpha(\eta, M) + \epsilon_1(\eta, M), \beta+\varepsilon_\beta(\eta, M) + \epsilon_1(\eta, M); \hat\delta_{n_0})$. Similarly, we can show that $g_a(\hat\theta; \hat\delta_{n_0}) \leq g_a(\tilde\theta(\hat\theta); \hat\delta_{n_0}) + l_{g_a}\Vert\hat\theta - \tilde\theta(\hat\theta)\Vert_1\leq \beta + \varepsilon_\beta(\eta, M) + \epsilon_1(\eta, M) + l_{g_a}K\Delta = \beta + \hat\epsilon_\beta(\eta, M)$. Based on the above arguments, we have that 
\begin{align*}
&\,\Pr(\Theta_M(\alpha, \beta; \hat\delta_{n_0})\neq \emptyset, \Theta_M(\alpha, \beta; \hat\delta_{n_0})\subseteq \Theta(\alpha + \hat\epsilon_\alpha(\eta, M), \beta+ \hat\epsilon_\beta(\eta, M); \hat\delta_{n_0}
),\\
&\quad\quad\quad\quad
\Theta_M(\alpha + \varepsilon_\alpha(\eta, M), \beta+ \varepsilon_\beta(\eta, M); \hat\delta_{n_0})\cap \hat C(\Delta)\\
&\quad\quad\quad\quad\quad\quad
\subseteq \Theta(\alpha + \varepsilon_\alpha(\eta, M) + \epsilon_1(\eta, M), \beta+\varepsilon_\beta(\eta, M) + \epsilon_1(\eta, M); \hat\delta_{n_0})
|\{X_{-i}\}^{n_0}_{i=1}
)\\
\geq & \Pr(\Theta_M(\alpha, \beta; \hat\delta_{n_0})\neq \emptyset, \text{ and }\forall \hat\theta\in \Theta_M(\alpha, \beta; \hat\delta_{n_0}), \Theta_M(\alpha + \varepsilon_\alpha(\eta, M), \beta+ \varepsilon_\beta(\eta, M); \hat\delta_{n_0})\\
&\quad\quad\quad\quad
\cap \{\theta\in \hat C(\Delta): \Vert\theta - \hat\theta\Vert_\infty\leq \Delta\}\neq\emptyset, \,
\Theta_M(\alpha + \varepsilon_\alpha(\eta, M), \beta+ \varepsilon_\beta(\eta, M); \hat\delta_{n_0})\cap \hat C(\Delta)\\
&\quad\quad\quad\quad\quad\quad
\subseteq \Theta(\alpha + \varepsilon_\alpha(\eta, M) + \epsilon_1(\eta, M), \beta+\varepsilon_\beta(\eta, M) + \epsilon_1(\eta, M); \hat\delta_{n_0})
|\{X_{-i}\}^{n_0}_{i=1}
)\\
\geq & 1-\Pr(\Theta_M(\alpha, \beta; \hat\delta_{n_0})= \emptyset, \text{ or }\Theta_M(\alpha, \beta; \hat\delta_{n_0})\neq \emptyset, 
\text{ and }\exists \hat\theta\in \Theta_M(\alpha, \beta; \hat\delta_{n_0}), \\
&\quad\quad\quad\quad
\Theta_M(\alpha + \varepsilon_\alpha(\eta, M), \beta+ \varepsilon_\beta(\eta, M); \hat\delta_{n_0})\\
&\quad\quad\quad\quad
\cap \{\theta\in \hat C(\Delta): \Vert\theta - \hat\theta\Vert_\infty\leq \Delta\} = \emptyset
|\{X_{-i}\}^{n_0}_{i=1}
) \\
&\quad
- \Pr(\Theta_M(\alpha + \varepsilon_\alpha(\eta, M), \beta+ \varepsilon_\beta(\eta, M); \hat\delta_{n_0})\cap \hat C(\Delta)\\
&\quad\quad\quad\quad
\nsubseteq \Theta(\alpha + \varepsilon_\alpha(\eta, M) + \epsilon_1(\eta, M), \beta+\varepsilon_\beta(\eta, M) + \epsilon_1(\eta, M); \hat\delta_{n_0})
|\{X_{-i}\}^{n_0}_{i=1}
)\\
\geq & 1-2\eta-
\Pr(\Theta_M(\alpha, \beta; \hat\delta_{n_0})= \emptyset
|\{X_{-i}\}^{n_0}_{i=1}), \text{almost surely},
    \end{align*}
where the last inequality follows from Lemmas~\ref{lem:finite-feasibility-pilot} and \ref{lem:Delta-net-pilot}(i). Based on the above arguments, 
$$
\Pr(\Theta_M(\alpha, \beta; \hat\delta_{n_0})\nsubseteq \Theta(\alpha + \hat\epsilon_\alpha(\eta, M), \beta+ \hat\epsilon_\beta(\eta, M); \hat\delta_{n_0})
|\{X_{-i}\}^{n_0}_{i=1}
)\leq 2\eta, \text{ almost surely}.
$$

On the other hand, we have that 
    \begin{equation*}
        \begin{split}
&\,\Pr(\Theta(\alpha - \hat\epsilon_\alpha(\eta, M), \beta -\hat\epsilon_\beta(\eta, M); \hat\delta_{n_0})\subseteq \Theta_M(\alpha, \beta; \hat\delta_{n_0})
|\{X_{-i}\}^{n_0}_{i=1}
) \\
\geq &\,\Pr(\Theta(\alpha - \hat\epsilon_\alpha(\eta, M), \beta -\hat\epsilon_\beta(\eta, M); \hat\delta_{n_0})\subseteq \Theta_M(\alpha, \beta; \hat\delta_{n_0}),\\
&\quad\quad\quad\quad
\Theta(\alpha - \hat\epsilon_\alpha(\eta, M) + l^\p_{g_0}K\Delta, \beta -\hat\epsilon_\beta(\eta, M) + l^\p_{g_a}K\Delta; \hat\delta_{n_0})\cap \hat C(\Delta)\\
&\quad\quad\quad\quad\quad\quad
\subseteq \Theta_M(\alpha - \varepsilon_\alpha(\eta, M), \beta-\varepsilon_\beta(\eta, M); \hat\delta_{n_0})
|\{X_{-i}\}^{n_0}_{i=1}
), \text{ almost surely}.
        \end{split}
    \end{equation*}
Now based on similar reasoning as Proposition~\ref{prop:feasible-set}, we have that given the realization of $\hat\delta_{n_0}$, if for all $\tilde\theta\in \hat C(\Delta)$, we have $\min_{\theta: \Vert\theta-\tilde\theta\Vert_\infty\leq \Delta}\hat g_{0,M}(\theta) - \hat g_{0,M}(\tilde\theta)\geq -\epsilon_1(\eta, M) - c^\p_{g_0}K\Delta$ and $\max_{\theta: \Vert\theta-\tilde\theta\Vert_\infty\leq \Delta}\hat g_{a,M}(\theta; \hat\delta_{n_0}) - \hat g_{a,M}(\tilde\theta; \hat\delta_{n_0})\leq \epsilon_1(\eta, M) + c^\p_{g_a}K\Delta$, and $\Theta(\alpha - \hat\epsilon_\alpha(\eta, M) + l^\p_{g_0}K\Delta, \beta -\hat\epsilon_\beta(\eta, M) + l^\p_{g_a}K\Delta; \hat\delta_{n_0})
\cap \hat C(\Delta)\subseteq \Theta_M(\alpha - \varepsilon_\alpha(\eta, M), \beta-\varepsilon_\beta(\eta, M); \hat\delta_{n_0})$, then $\Theta(\alpha - \hat\epsilon_\alpha(\eta, M), \beta -\hat\epsilon_\beta(\eta, M); \hat\delta_{n_0})\subseteq \Theta_M(\alpha, \beta; \hat\delta_{n_0})$. Thus
\begin{align*}
&\,\Pr(\Theta(\alpha - \hat\epsilon_\alpha(\eta, M), \beta -\hat\epsilon_\beta(\eta, M); \hat\delta_{n_0}))\subseteq \Theta_M(\alpha, \beta; \hat\delta_{n_0})),\\
&\quad\quad\quad\quad
\Theta(\alpha - \hat\epsilon_\alpha(\eta, M) + l^\p_{g_0}K\Delta, \beta -\hat\epsilon_\beta(\eta, M) + l^\p_{g_a}K\Delta; \hat\delta_{n_0}))\cap \hat C(\Delta)\\
&\quad\quad\quad\quad\quad\quad
\subseteq \Theta_M(\alpha - \varepsilon_\alpha(\eta, M), \beta-\varepsilon_\beta(\eta, M); \hat\delta_{n_0}))
|\{X_{-i}\}^{n_0}_{i=1}
)\\
\geq & \Pr(\min_{\theta: \Vert\theta-\tilde\theta\Vert_\infty\leq \Delta}\hat g_{0,M}(\theta) - \hat g_{0,M}(\tilde\theta)\geq -\epsilon_1(\eta, M) - c^\p_{g_0}K\Delta,\,\forall \tilde\theta\in\hat C(\Delta),\\
& 
\quad\quad\quad\quad\max_{\theta: \Vert\theta-\tilde\theta\Vert_\infty\leq \Delta}\hat g_{a,M}(\theta; \hat\delta_{n_0}) - \hat g_{a,M}(\tilde\theta; \hat\delta_{n_0})\leq \epsilon_1(\eta, M) + c^\p_{g_a}K\Delta,\, \forall \tilde\theta\in\hat C(\Delta),\\
&\quad\quad\quad\quad
\Theta(\alpha - \hat\epsilon_\alpha(\eta, M) + l^\p_{g_0}K\Delta, \beta-\hat\varepsilon_\beta(\eta, M) + l^\p_{g_a}K\Delta; \hat\delta_{n_0})\cap \hat C(\Delta)\\
&\quad\quad\quad\quad\quad\quad
\subseteq \Theta_M(\alpha -\varepsilon_\alpha(\eta, M), \beta-\varepsilon_\beta(\eta, M); \hat\delta_{n_0}))
|\{X_{-i}\}^{n_0}_{i=1}
)\\
\geq &1-\Pr(\exists \tilde\theta\in\hat C(\Delta), \min_{\theta: \Vert\theta-\tilde\theta\Vert_\infty\leq \Delta}\hat g_{0,M}(\theta) - \hat g_{0,M}(\tilde\theta)< -\epsilon_1(\eta, M) - c^\p_{g_0}K\Delta
|\{X_{-i}\}^{n_0}_{i=1}
)\\
& 
-\Pr(\exists \tilde\theta\in\hat C(\Delta), \max_{\theta: \Vert\theta-\tilde\theta\Vert_\infty\leq \Delta}\hat g_{a,M}(\theta; \hat\delta_{n_0}) - \hat g_{a,M}(\tilde\theta; \hat\delta_{n_0}))> \epsilon_1(\eta, M) + c^\p_{g_a}K\Delta
|\{X_{-i}\}^{n_0}_{i=1}
)\\
&-\Pr(\Theta(\alpha - \hat\epsilon_\alpha(\eta, M) + l^\p_{g_0}K\Delta, \beta -\hat\epsilon_\beta(\eta, M) + l^\p_{g_a}K\Delta; \hat\delta_{n_0})\cap \hat C(\Delta)\\
&\quad\quad\quad\quad
\nsubseteq \Theta_M(\alpha - \varepsilon_\alpha(\eta, M), \beta-\varepsilon_\beta(\eta, M); \hat\delta_{n_0})
|\{X_{-i}\}^{n_0}_{i=1}
)\\
\geq &1- 2\eta, \text{ almost surely},
\end{align*}
where the last inequality follows from Lemmas~\ref{lem:finite-feasibility-pilot} and \ref{lem:SAA-constraint-bound-pilot}(i)-(ii). 

Combining the above arguments, 
\begin{align*}
\Pr(\Theta(\alpha - \hat\epsilon_\alpha(\eta, M), \beta -\hat\epsilon_\beta(\eta, M); \hat\delta_{n_0})\subseteq \Theta_M(\alpha, \beta;\hat\delta_{n_0})\\
\subseteq \Theta(\alpha + \hat\epsilon_\alpha(\eta, M), \beta+ \hat\epsilon_\beta(\eta, M); \hat\delta_{n_0})
|\{X_{-i}\}^{n_0}_{i=1}
)
\geq 1-4\eta, \text{ almost surely}, 
\end{align*}

Then by law of iterated expectation, we have that 
\begin{align*}
    &\Pr\big(\Theta(\alpha - \hat\epsilon_\alpha(\eta, M), \beta - \hat\epsilon_\beta(\eta, M); \hat\delta_{n_0})\subseteq \Theta_M(\alpha, \beta; \hat\delta_{n_0})
    \subseteq \Theta(\alpha + \hat\epsilon_\alpha(\eta, M), \beta+ \hat\epsilon_\beta(\eta, M); \hat\delta_{n_0})\big)\\
     = &\E\Big[\Pr\big(\Theta(\alpha - \hat\epsilon_\alpha(\eta, M), \beta - \hat\epsilon_\beta(\eta, M); \hat\delta_{n_0})\subseteq \Theta_M(\alpha, \beta; \hat\delta_{n_0})\\
    &\quad\quad\quad\quad\quad\quad
    \subseteq \Theta(\alpha + \hat\epsilon_\alpha(\eta, M), \beta+ \hat\epsilon_\beta(\eta, M); \hat\delta_{n_0})|X_{-1}, \cdots, X_{-n_0}\big)\Big]\\
    \geq &1-4\eta. 
\end{align*}
Then the argument follows by letting $\Delta = (\bar\theta - \underline{\theta})/\sqrt{M}$.

\end{proof}

\subsection{Proof of Lemma~\ref{lem:pilot-feasible-set-2}}\label{sec-appendix:pilot-feasible-set-2}

\begin{lemma}\label{lem:pilot-sample-convergence}
Let $\delta\neq 0$ and $\eta\in(0,1)$. For any $\theta$, and $n_0\geq 64\log(4/\eta) + 1$, 
    $$
    \Pr(|g_a(\theta, \hat\delta_{n_0}) - g_a(\theta; \delta)|\leq c^g_\delta|\delta|
(2\sqrt{\log(4/\eta)/(n_0 - 1)} + 2\log(4/\eta)/(n_0 - 1)))\geq 1-\eta
$$
and
$$
\Pr(|f(\theta; \hat\delta_{n_0}) - f(\theta; \delta)|\leq c^f_\delta|\delta|
(2\sqrt{\log(4/\eta)/(n_0 - 1)} + 2\log(4/\eta)/(n_0 - 1)))\geq 1-\eta,
    $$
    where $0\leq c^g_\delta, c^f_\delta <\infty$ are constants that do not depend on $n_0$ nor $M$. 
\end{lemma}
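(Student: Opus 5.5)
We prove the statement in two steps: a probabilistic step and an analytic step. First we show that $\hat\delta_{n_0}=\mu_a/\hat\sigma_0$ concentrates around $\delta=\mu_a/\sigma$ at rate $n_0^{-1/2}$. Then we show that $g_a(\theta;\cdot)$ and $f(\theta;\cdot)$ are Lipschitz in the effect-size argument, uniformly in $\theta$. Combining the two gives both displayed bounds.

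\emph{Probabilistic step.} Write $V=(n_0-1)\hat\sigma_0^2/\sigma^2\sim\chi^2_{n_0-1}$. We invoke the Laurent--Massart inequality \citep{laurent2000adaptive}:
\[
\Pr\bigl(V-(n_0-1)\ge 2\sqrt{(n_0-1)x}+2x\bigr)\le e^{-x},\qquad \Pr\bigl((n_0-1)-V\ge 2\sqrt{(n_0-1)x}\bigr)\le e^{-x}.
\]
Taking $x=\log(4/\eta)$ gives, with probability at least $1-\eta/2$,
\[
\bigl|\hat\sigma_0^2/\sigma^2-1\bigr|\le r:=2\sqrt{\log(4/\eta)/(n_0-1)}+2\log(4/\eta)/(n_0-1).
\]
The condition $n_0\ge 64\log(4/\eta)+1$ forces $\sqrt{\log(4/\eta)/(n_0-1)}\le 1/8$, hence $r\le 1/4+1/32<1/2$. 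On this event $\hat\sigma_0/\sigma\in[1/\sqrt2,\,\sqrt{3/2}]$. Since $|1/\sqrt{u}-1|\le c\,|u-1|$ for $u\in[1/2,3/2]$, we obtain $|\hat\delta_{n_0}-\delta|=|\delta|\,|\sigma/\hat\sigma_0-1|\le c\,|\delta|\,r$.

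\emph{Analytic step.} It suffices to show $|\partial g_a(\theta;d)/\partial d|\le L_g$ and $|\partial f(\theta;d)/\partial d|\le L_f$ for all $\theta\in C$ and all $d$ in a neighborhood $[\delta/2,\,2\delta]$ of $\delta$; by the previous step, $\hat\delta_{n_0}$ lies in this neighborhood on our event. The mean value theorem then yields the claim with $c^g_\delta=cL_g$ and $c^f_\delta=cL_f$.

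To bound these derivatives, represent the $t$-statistics under effect size $d$ as functions of i.i.d.\ $Z_i\sim\mathcal N(0,1)$ via $X_i=Z_i+d$. Then $T_k(d)=\sqrt{n_{1:k}}(\bar Z_{1:k}+d)/\hat\sigma_{Z,1:k}$, where the sample standard deviation $\hat\sigma_{Z,1:k}$ does not depend on $d$.

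An alternative route is a coupling argument. For $d,d'$, we have $|T_k(d)-T_k(d')|=\sqrt{n_{1:k}}\,|d-d'|/\hat\sigma_{Z,1:k}$. The event where the indicator $\I(T_k(d)<\theta_k\ \forall k)$ changes is contained in $\bigcup_k\{|T_k(d)-\theta_k|\le|T_k(d)-T_k(d')|\}$. Its probability can be bounded by splitting on $\{\hat\sigma_{Z,1:k}\ge s\}$ for a small $s$. On that set we use the uniform density bound $1/\sqrt{2\pi}$ for noncentral $t$ from Lemma~\ref{lem:lipshitz-continuity-t}. The complement has probability of order $s^{n_{1:k}-1}$, which is small enough because $n_{1:k}\ge2$. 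Optimizing over $s$ gives a Lipschitz (or near-Lipschitz) bound.

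A cleaner option is to differentiate the Gaussian likelihood directly. Writing $g_a(\theta;d)=\E_0[\I(\cdot)\prod_i\phi(X_i-d)/\phi(X_i)]$ gives
\[
\partial_d g_a=\E_d\Bigl[\I(T_k<\theta_k\ \forall k)\textstyle\sum_{i\le n_{1:K}}(X_i-d)\Bigr],
\]
so $|\partial_d g_a|\le \E|\sum_i Z_i|=\sqrt{2n_{1:K}/\pi}$, uniformly in $\theta$ and $d$. This is the route I would use. Applying the same identity to each summand of $f$ gives $|\partial_d f|\le n_{2:K}\sqrt{2n_{1:K}/\pi}$.

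The main obstacle is justifying the interchange of derivative and expectation in this score identity. Dominated convergence handles it, since the score is integrable uniformly for $d$ in compacts.

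Since the score bound does not in fact depend on $\delta$, the factor $|\delta|$ in the statement comes entirely from $|\hat\delta_{n_0}-\delta|\le c|\delta|r$. The resulting probability is $1-\eta/2\ge1-\eta$, and $f$ and $g_a$ are handled by the same event.
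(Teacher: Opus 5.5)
Your proposal is correct and follows the paper's route. Both use Laurent--Massart concentration of $(n_0-1)\hat\sigma_0^2/\sigma^2\sim\chi^2_{n_0-1}$ with $x=\log(4/\eta)$, use the condition $n_0\ge 64\log(4/\eta)+1$ to keep $\hat\sigma_0^2/\sigma^2\ge 1/2$ so that $|\hat\delta_{n_0}-\delta|\le c|\delta|\,r$, and finish with a mean-value-theorem Lipschitz bound in the effect size.

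The one difference is in that last step. The paper simply asserts $\partial g_a(\theta;\delta)/\partial\delta<\infty$ on $C$ and invokes the mean value theorem. Your score identity $\partial_d g_a=\E_d[\I(T_k<\theta_k\ \forall k)\sum_i(X_i-d)]$ instead gives the explicit bound $\sqrt{2n_{1:K}/\pi}$, uniform in $\theta$ and $d$. This actually justifies the uniform Lipschitz constant the paper takes for granted, and shows it need not depend on $\delta$.
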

\begin{proof}[Proof of Lemma~\ref{lem:pilot-sample-convergence}]
We prove the first inequality, and the second can be proved similarly. 
    We first notice that for all $\theta\in C$, we have that $\partial g_a(\theta; \delta)/\partial \delta < \infty$. It then follows from the mean value theorem that there exists constant $0\leq \tilde c^g_\delta< \infty$ such that $|g_a(\theta, \delta_1) - g_a(\theta, \delta_2)|\leq \tilde c^g_\delta|\delta_1 - \delta_2|$. Letting $U = (n_0-1)\hat\sigma^2_0/\sigma^2$, then $U\sim\chi^2_{n_0-1}$, where we use $\chi^2_n$ to denote the $\chi^2$ distribution with $n$ degrees of freedom. 
According to Laurent-Massart bound for $U\sim \chi^2_{n_0-1}$ (Lemma 1 of \citet{laurent2000adaptive}), with probability at least $1-2\exp(-x)$, 
$$
|U/(n_0 - 1) - 1|\leq 2\sqrt{x/(n_0 - 1)} + 2x/(n_0 - 1).
$$
Let $\eta \in(0,1)$. Let $\tilde n_\p = 64\log(4/\eta) + 1$.
Then for $n_0\geq \tilde n_\p$, with probability at least $1-\eta/2$, we have that $U/(n_0 - 1)\geq 1/2$. Additionally, by letting $2\exp(-x) = \eta/2$, we have that with probability at least $1-\eta$, $|U/(n_0 - 1) - 1|\leq 2\sqrt{\log(4/\eta)/(n_0 - 1)} + 2\log(4/\eta)/(n_0 - 1)$. 

Since
$$
|\delta - \hat\delta_{n_0}| = |\mu_a/\sigma - \mu_a/\hat\sigma_0| = |\mu_a/\sigma|\cdot |1-\sqrt{(n_0 - 1)/U}|\leq |\delta|\cdot|1-U/(n_0 - 1)|/\sqrt{U/(n_0 - 1)},
$$
it follows that for $n_0\geq 64\log(4/\eta) + 1$, with probability at least $1-\eta$, 
$$
|g_a(\theta; \hat\delta_{n_0}) - g_a(\theta; \delta)|\leq \tilde c^g_\delta|\delta - \hat\delta_{n_0}|\leq 2\tilde c^g_\delta|\delta|
(2\sqrt{\log(4/\eta)/(n_0 - 1)} + 2\log(4/\eta)/(n_0 - 1)). 
$$
Letting $c^g_\delta = 2\tilde c^g_\delta$ we have the desired result.

\end{proof}

\begin{proof}[Proof of Lemma~\ref{lem:pilot-feasible-set-2}]
Recall that $\hat C(\Delta) = \{\underline{\theta}, \underline{\theta} + \Delta, \underline{\theta} + 2\Delta, \cdots, \bar\theta\}^K$ is a discretization of $C$.
Let $\epsilon(\eta, n_0) = c^g_\delta|\delta|
(2\sqrt{\log(4|\hat C|/\eta)/(n_0 - 1)} + 2\log(4|\hat C|/\eta)/(n_0 - 1)) + 2l^\p_{g_a}(\bar\theta - \underline{\theta})/\sqrt{n_0}$. We have 
    \begin{align*}
    &\Pr(\Theta(\alpha, \beta - \epsilon(\eta, n_0); \delta)\subseteq \Theta(\alpha, \beta; \hat\delta_{n_0})\subseteq \Theta(\alpha, \beta + \epsilon(\eta, n_0); \delta))\\
    \geq& 1-\Pr(\Theta(\alpha, \beta - \epsilon(\eta, n_0); \delta)\nsubseteq \Theta(\alpha, \beta; \hat\delta_{n_0})) 
    - \Pr(\Theta(\alpha, \beta; \hat\delta_{n_0})\nsubseteq \Theta(\alpha, \beta + \epsilon(\eta, n_0); \delta))
    \end{align*}
    Now we show that $\Pr(\Theta(\alpha, \beta - \epsilon(\eta, n_0); \delta)\nsubseteq \Theta(\alpha, \beta; \hat\delta_{n_0}))\leq \eta/2$, and that $\Pr(\Theta(\alpha, \beta; \hat\delta_{n_0})\nsubseteq \Theta(\alpha, \beta + \epsilon(\eta, n_0); \delta)) \leq \eta/2$ can be proved similarly. We have that
    \begin{align*}
        &\Pr(\Theta(\alpha, \beta - \epsilon(\eta, n_0); \delta)\nsubseteq \Theta(\alpha, \beta; \hat\delta_{n_0}))\\
        =&\Pr(\exists \theta\in \Theta(\alpha, \beta - \epsilon(\eta, n_0); \delta) \text{ such that }\theta\notin \Theta(\alpha, \beta; \hat\delta_{n_0}))\\
        \leq & \Pr(\exists \theta\in C, \text{ such that }g_a(\theta; \hat\delta_{n_0})\geq g_a(\theta; \delta) + \epsilon(\eta, n_0))\\
        \leq & \Pr(\exists \theta\in \hat C(\Delta), \text{ such that }g_a(\theta; \hat\delta_{n_0})\geq g_a(\theta; \delta) + \epsilon(\eta, n_0) - 2l^\p_{g_a}K\Delta)\\
        \leq & \eta/2. 
    \end{align*}

In the above equations, the first inequality follows since if $\theta\in \Theta(\alpha, \beta - \epsilon(\eta, n_0); \delta)$ and $\theta\notin\Theta(\alpha, \beta; \hat\delta_{n_0})$, then $g_a(\theta; \hat\delta_{n_0})\geq \beta \geq g_a(\theta; \delta) + \epsilon(\eta, n_0)$. The second inequality follows since for all $\theta\in C$ 
we can find $\hat\theta\in \hat C(\Delta)$ with $\Vert \hat\theta - \theta\Vert_\infty\leq \Delta$. Then according to Lemma~\ref{lem:lipshitz-continuity-t}, we must also have $g_a(\theta, \hat\delta_{n_0}) - g_a(\hat\theta; \hat\delta_{n_0})\leq l^\p_{g_a}K\Delta$ and $g_a(\hat\theta, \delta) - g_a(\theta; \delta)\leq l^\p_{g_a}K\Delta$ almost surely. As a result $g_a(\hat\theta; \hat\delta_{n_0})\geq g_a(\theta, \hat\delta_{n_0}) - l^\p_{g_a}K\Delta\geq g_a(\theta; \delta) + \epsilon(\eta, n_0) - l^\p_{g_a}K\Delta \geq g_a(\hat\theta, \delta) + \epsilon(\eta, n_0) - 2l^\p_{g_a}K\Delta$. Then the last inequality follows from Lemma~\ref{lem:pilot-sample-convergence} and by letting $\Delta = (\bar\theta - \underline{\theta})/\sqrt{n_0}$. 

\end{proof}

\section{Additional Materials for Section~\ref{sec:convergence-gen}}

\subsection{Proof of Proposition~\ref{prop:convergence-generalization}}
Proposition~\ref{prop:convergence-generalization} can be proved similarly as Propositions~\ref{prop:feasible-set} and \ref{prop:objective-bound}. Specifically, 
we notice that the analog of Lemma~\ref{lem:finite-feasibility} holds regardless of the distribution of the summary statistics. Analogs of Lemmas~\ref{lem:lipshitz-continuity} and \ref{lem:SAA-constraint-bound} hold when Assumption~\ref{ass:convergence-generalization}(i) holds. An analogue of Lemma~\ref{lem:Delta-net} holds as it follows from Analogs of Lemmas~\ref{lem:lipshitz-continuity} and \ref{lem:SAA-constraint-bound}.  An analogue of Lemma~\ref{lem:optimal-value-lipshitz} holds when Assumption~\ref{ass:convergence-generalization}(iii) holds. Then Proposition~\ref{prop:convergence-generalization} can be proved based on preceding lemmas.

%%%%

\subsection{Sequential Analysis with Symmetric Designs}\label{sec:futility}
As mentioned at the end of Section~\ref{sec:convergence-gen}, we return to the symmetric design problem. All proofs for this section are presented in Appendix~\ref{sec-appendix:futility}.

Formally, in each stage $k\in[K-1]$, we choose a pair of cutoff values $\theta_k = (\theta^0_k, \theta^1_k)$ with $\theta^0_k < \theta^1_k$. We end the experiment early in favor of $H_0$ if the summary statistics $S_k \leq \theta^0_k$, we end the experiment early in favor of $H_a$ if $S_k\geq \theta^1_k$, and we continue the experiment if $S_k \in (\theta^0_k, \theta^1_k)$. In the last stage $K$, we have $\theta^0_K = \theta^1_K$, as we reach the conclusion of whether rejecting $H_0$ or not. Let $\theta = (\theta^0_1, \cdots, \theta^0_K, \theta^1_1, \cdots, \theta^1_K)$. Let $C = [\underline{\theta},\bar\theta]^{2K}$ be the feasible region of $\theta$. 
Then we can formulate the optimization problem for the sequential testing as:
\begin{subequations}\label{equ:symmetric-design}
        \begin{align}
            \min_{\theta} &\,n_1 + \sum^K_{k=2}n_k\Pr(S_i\in (\theta^0_i, \theta^1_i), \forall i\in[k-1]|\tilde{H})\label{equ:symmetric-obj}\\
            \st&\,\Pr(S_1 \geq \theta^1_1|H_0) + \sum^K_{k=2}\Pr(S_k\geq \theta^1_k, \text{ and } S_i \in (\theta^0_i, \theta^1_i)\text{ for all }i\in[k-1]|H_0)\leq \alpha,\label{equ:symmetric-c1}\\
            &\,\Pr(S_1\geq \theta^1_1|H_a) + \sum^K_{k=2}\Pr(S_k\geq \theta^1_k, \text{ and } S_i \in (\theta^0_i, \theta^1_i)\text{ for all }i\in[k-1]|H_a)\geq 1-\beta,\label{equ:symmetric-c2}\\
            &\, \theta\in C, \theta^0_K = \theta^1_K, \theta^0_k\leq \theta^1_k, \forall k\in[K-1],
        \end{align}
\end{subequations}
where $\tilde{H}$ can be $H_0$ (optimizing accepting the null) or $H_a$ (optimizing rejecting the null) or a hybrid strategy $\tilde{H}: \mu = \lambda \mu_a$ for some $\lambda\in[0,1]$ (optimizing a combination of both the null and alternative). 
Equation~\eqref{equ:symmetric-obj} computes the expected sample size under symmetric designs, as each $\Pr(S_i\in (\theta^0_i, \theta^1_i), \forall i\in[k-1]|\tilde H)$ represents the probability to continue the experiment to stage $k$. Equation~\eqref{equ:symmetric-c1} says the type I error is no larger than $\alpha$, as each $\Pr(S_k\geq \theta^1_k, \text{ and } S_i \in (\theta^0_i, \theta^1_i)\text{ for all }i\in[k-1]|H_0)$ calculates the probability to reject $H_0$ in stage $k$. Equation~\eqref{equ:symmetric-c2} ensures the power is no less than $1-\beta$. Similar to problem~\eqref{equ:main-1},  
problem~\eqref{equ:symmetric-design} generalizes the formulations for one-sided tests, two-sided tests, one-sample, and two-sample tests. 

For convenience, we reformulate problem~\eqref{equ:symmetric-design} as:
\begin{subequations}\label{equ:symmetric-design-1}
        \begin{align}
            \min_{\theta} &\,n_1 + \sum^K_{k=2}n_k\Pr(S_i\in (\theta^0_i, \theta^1_i), \forall i\in[k-1]|\tilde H)\label{equ:symmetric-1-obj}\\
            \st&\,\Pr(S_1\leq\theta^0_1|H_0) + \sum^K_{k=2}\Pr(S_k\leq \theta^0_k, \text{ and } S_i \in [\theta^0_i, \theta^1_i]\text{ for all }i\in[k-1]|H_0)\geq 1- \alpha,\label{equ:symmetric-1-c1}\\
            &\,\Pr(S_1<\theta^0_1|H_a) + \sum^K_{k=2}\Pr(S_k< \theta^0_k, \text{ and } S_i \in (\theta^0_i, \theta^1_i)\text{ for all }i\in[k-1]|H_a)\leq \beta,\label{equ:symmetric-1-c2}\\
            &\, \theta\in C, \theta^0_K = \theta^1_K, \theta^0_k\leq \theta^1_k, \forall k\in[K-1].
        \end{align}
\end{subequations}
\begin{lemma}\label{lem:reformulation-symmetric design}
    Problem~\eqref{equ:symmetric-design} and \eqref{equ:symmetric-design-1} are equivalent, i.e., they have the same optimal value and (potentially non-unique) solution(s).
\end{lemma}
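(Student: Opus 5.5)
The plan is to show that the two problems have the same objective and the same feasible region. The objective \eqref{equ:symmetric-1-obj} is identical to \eqref{equ:symmetric-obj}, and so are the box and ordering constraints. It therefore suffices to show that, for every $\theta\in C$ with $\theta^0_K=\theta^1_K$ and $\theta^0_k\le\theta^1_k$, constraint \eqref{equ:symmetric-c1} holds if and only if \eqref{equ:symmetric-1-c1} holds, and likewise for \eqref{equ:symmetric-c2} and \eqref{equ:symmetric-1-c2}. Equal objectives over equal feasible sets then give the same optimal value and the same optimal solution set.

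The key step is a partition identity under each hypothesis. For fixed $\theta$, define the stopping stage $\tau=\min\{k: S_k\notin(\theta^0_k,\theta^1_k)\}$. Since $\theta^0_K=\theta^1_K$, the interval $(\theta^0_K,\theta^1_K)$ is empty, so $\tau\le K$ always. The events $E^1_k=\{S_k\ge\theta^1_k,\ S_i\in(\theta^0_i,\theta^1_i)\ \forall i<k\}$ ("reject at stage $k$") and $E^0_k=\{S_k\le\theta^0_k,\ S_i\in(\theta^0_i,\theta^1_i)\ \forall i<k\}$ ("accept at stage $k$"), for $k\in[K]$, are pairwise disjoint. For $k<K$, $E^0_k$ and $E^1_k$ are disjoint whenever $\theta^0_k<\theta^1_k$. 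If $\theta^0_k=\theta^1_k$ for some $k<K$, or at $k=K$, they overlap only on $\{S_k=\theta^0_k\}$, which is a null event because $S$ is continuous. Their union is the whole sample space, since the path stops at $\tau$ either at or below $\theta^0_\tau$ or at or above $\theta^1_\tau$. Hence $\sum_k\Pr(E^1_k\mid H)+\sum_k\Pr(E^0_k\mid H)=1$ for $H\in\{H_0,H_a\}$. The type-1 constraint $\sum_k\Pr(E^1_k\mid H_0)\le\alpha$ is therefore equivalent to $\sum_k\Pr(E^0_k\mid H_0)\ge1-\alpha$. Similarly, the power constraint $\sum_k\Pr(E^1_k\mid H_a)\ge1-\beta$ is equivalent to $\sum_k\Pr(E^0_k\mid H_a)\le\beta$.

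The remaining and main technical nuisance is to reconcile these events with the exact forms written in \eqref{equ:symmetric-1-c1}--\eqref{equ:symmetric-1-c2}. The first of these uses closed intervals $[\theta^0_i,\theta^1_i]$, and the second uses strict inequalities $S_k<\theta^0_k$. Each such event differs from the corresponding $E^0_k$ by a subset of $\bigcup_i\{S_i=\theta^0_i\}\cup\{S_i=\theta^1_i\}$. Under both $H_0$ and $H_a$ the statistics $S_i$ have continuous marginal distributions (the paper notes this after \eqref{equ:main}, and for general statistics it is implied by Assumption~\ref{ass:convergence-generalization}(i)), so these boundary sets have probability zero. I would verify this for each $k$ by a union bound over the at most $2k$ boundary events, each of which has zero probability. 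Substituting then yields exactly \eqref{equ:symmetric-1-c1} and \eqref{equ:symmetric-1-c2}. With identical feasible sets and objectives established, the lemma follows.
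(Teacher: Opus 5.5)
Your proposal is correct and matches the paper's argument: both reduce the lemma to the identity that, under each of $H_0$ and $H_a$, the stage-wise acceptance and rejection probabilities sum to one, and then use continuity of $S$ to move between strict and non-strict inequalities and between open and closed continuation intervals. The only difference is that you get the partition identity directly from the first exit stage $\tau$, whereas the paper proves it by induction on $K$; your version is slightly cleaner but not substantively different.
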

In the remainder of this section, we thus focus on problem~\eqref{equ:symmetric-design-1}. 

\subsubsection{S-MILP Reformulation}\label{sec:futility-milp}
Let $S^m = (S^m_k)_{k\in[K]}$ for each $m\in[M]$ be a sample path of $(S_k)_{k\in[K]}$ under $H_0$, let $S^m_a = (S^m_{a, k})_{k\in[K]}$ for each $m\in[M]$ be a sample path of $(S_k)_{k\in[K]}$ under $H_a$, and let $S^m_{a'} = (S^m_{a', k})_{k\in[K]}$ for each $m\in[M]$ be a sample path of $(S_k)_{k\in[K]}$ under $\tilde H$. Then the SAA analog of \eqref{equ:symmetric-design-1} is:
\begin{subequations}\label{equ:SAA-symmetric-design}
        \begin{align}
            \min_{\theta} &\,n_1 + (1/M)\sum^M_{m=1}\sum^K_{k=2}n_k\I(S^m_{a', i}\in (\theta^0_i, \theta^1_i), \forall i\in[k-1])\\
            \st&\,(1/M)\sum^M_{m=1}
            \big(\I(S^m_1\leq \theta^0_1) + \sum^K_{k=2}\I(S^m_k\leq \theta^0_k, \text{ and } S^m_i \in [\theta^0_i, \theta^1_i]\text{ for all }i\in[k-1])\big)
            \geq 1-\alpha,\label{equ:saa-symmetric-c1}\\
            &\,(1/M)\sum^M_{m=1}
            \big(\I(S^m_{a,1}< \theta^0_1) + \sum^K_{k=2}\I(S^m_{a, k}< \theta^0_k, \text{ and } S^m_{a,i} \in (\theta^0_i, \theta^1_i)\text{ for all }i\in[k-1])\big)
            \leq \beta,\label{equ:saa-symmetric-c2}\\
            &\, \theta\in C, \theta^0_K = \theta^1_K, \theta^0_k\leq \theta^1_k, \forall k\in[K-1].
        \end{align}
    \end{subequations}

To reformulate \eqref{equ:SAA-symmetric-design}, we follow similar procedures as those in Section~\ref{sec:milp-reformulation} and introduce the following binary variables. 
For each sample path $S^m$, we let $w^m_k\in\{0,1\}$ serve as a surrogate for $\I(S^m_k\leq \theta^1_k)$, so that if $S^m_k > \theta^1_k$, then $w^m_k = 0$. This is enforced by constraint \eqref{equ:milp-symmetric-c1}. Similarly, $\tau^m_k$ and $\lambda^m_k$ are surrogates for $\I(S^m_k\geq \theta^0_k)$ and $\I(S^m_k\leq \theta^0_k)$ respectively, so that $\tau^m_k = 0$ if $\theta^0_k > S^m_k$ and $\lambda^m_k = 0$ if $\theta^0_k < S^m_k$. These are enforced by constraints \eqref{equ:milp-symmetric-c2} and \eqref{equ:milp-symmetric-c2-hat}, respectively. The empirical type-1 constraint in \eqref{equ:saa-symmetric-c1} involves the conjunction of events $\{S^m_k\leq \theta^0_k\}$, $\{S^m_i\geq \theta^0_i\}$ and $\{S^m_i\leq \theta^1_i\}$ for $i\in[k-1]$. We thus introduce $\kappa^m_k$ to encode $\I(S^m_k\leq \theta^0_k, \text{ and } S^m_i \in [\theta^0_i, \theta^1_i]\text{ for all }i\in[k-1])$. Specifically, we require that $\kappa^m_k = 0$ if $S^m_k > \theta^0_k$, or $S^m_i < \theta^0_i$, or $S^m_i > \theta^1_i$ for some $i\in[k-1]$. In other words, $\kappa^m_k \leq \lambda^m_k$, $\kappa^m_k \leq \tau^m_i$ and $\kappa^m_k \leq w^m_i$ for all $i\in[k-1]$ hold simultaneously, which correspond to constraints \eqref{equ:milp-symmetric-c3}-\eqref{equ:milp-symmetric-c5}. Aggregating $\kappa^m_k$ across sample paths in constraint \eqref{equ:milp-symmetric-c6} then yields a linear expression for
the empirical probability of not rejecting $H_0$ under $H_0$, which is constrained to be at least $1-\alpha$.
The other sets of binary variables and enforcing constraints for sample paths $S^m_{a}$ and $S^m_{a'}$ can be constructed similarly and we omit further details. 
Then the resulting MILP reformulation for \eqref{equ:SAA-symmetric-design} is:
\begin{subequations}\label{equ:milp-symmetric-design}
        \begin{align}
            \min&\,n_1 + (1/M)\sum^M_{m=1}\sum^K_{k=2}n_k\xi^m_{k-1}
            \label{equ:milp-symmetric-ob}\\
            \st
            &\, \theta^1_k - \underline{\theta}\geq w^m_k(S^m_k- \underline{\theta}), \forall k\in[K], m\in[M], 
            \label{equ:milp-symmetric-c1}\\
            &\, \theta^0_k - S^m_k \leq (1-\tau^m_k)(\bar\theta - S^m_k), \forall k\in[K],m\in[M], \label{equ:milp-symmetric-c2}\\
            &\, \theta^0_k - \underline{\theta} \geq \lambda^m_k(S^m_k - \underline{\theta}), \forall k\in[K],m\in[M], \label{equ:milp-symmetric-c2-hat}\\
            &\,\kappa^m_k\leq w^m_i, \,\forall i\in[k-1], 2\leq k\leq K, m\in[M],
            \label{equ:milp-symmetric-c3}\\
            &\,\kappa^m_k\leq \tau^m_i, \,\forall i\in[k-1], 2\leq k\leq K, m\in[M],
            \label{equ:milp-symmetric-c4}\\
            &\,\kappa^m_k\leq \lambda^m_k,\,\forall k\in[K],m\in[M],\label{equ:milp-symmetric-c5}\\
            &\, (1/M)\sum^M_{m=1}\sum^K_{k=1}\kappa^m_k \geq 1-\alpha, \label{equ:milp-symmetric-c6}\\
            &\,(\bar\theta - S^m_{a,k})\rho^m_k\geq \theta^1_k - S^m_{a,k}, \,\forall k\in[K], m\in[M],
            \label{equ:milp-symmetric-c7}\\
            &\, (S^m_{a,k} - \underline{\theta})\gamma^m_k\geq S^m_{a,k} - \theta^0_k,\,\forall k\in[K], m\in[M],\label{equ:milp-symmetric-c8}\\
            &\, (\bar\theta - S^m_{a,k})\pi^m_k\geq  \theta^0_k - S^m_{a,k},\,\forall k\in[K], m\in[M],\label{equ:milp-symmetric-c8-hat}\\
            &\, \zeta^m_k \geq \sum^{k-1}_{i=1}\rho^m_i + \sum^{k-1}_{i=1}\gamma^m_i + \pi^m_k -2k+2, \,\forall 2\leq k\leq K, m\in[M], \label{equ:milp-symmetric-c9}\\
            &\, \zeta^m_1 \geq \pi^m_1, \,\forall m\in[M]\label{equ:milp-symmetric-c10}\\
            &\, (1/M)\sum^M_{m=1}\sum^K_{k=1}\zeta^m_k\leq \beta,\label{equ:milp-symmetric-c11}\\
            &\,(\bar\theta - S^m_{a',k})\varphi^m_k\geq \theta^1_k - S^m_{a',k}, \,\forall k\in[K], m\in[M],
            \label{equ:milp-symmetric-c12}\\
            &\, (S^m_{a',k} - \underline{\theta})\phi^m_k\geq S^m_{a',k} - \theta^0_k,\,\forall k\in[K], m\in[M],\label{equ:milp-symmetric-c13}\\
            &\, \xi^m_k \geq \sum^{k}_{i=1}\varphi^m_i + \sum^{k}_{i=1}\phi^m_i -2k+1, \,\forall 1\leq k\leq K, m\in[M], \label{equ:milp-symmetric-c14}\\
            &\, w^m_k\in\{0,1\}, \tau^m_k\in\{0,1\},\rho^m_k\in\{0,1\},\gamma^m_k\in\{0,1\},\kappa^m_k\in\{0,1\},\forall k\in[K], m\in[M],\\
            &\,\lambda^m_k\in\{0,1\}, \pi^m_k\in\{0,1\},\forall k\in[K], m\in[M],\\
            &\,
            \zeta^m_k\in\{0,1\},\xi^m_k\in\{0,1\},\varphi^m_k\in\{0,1\}, \phi^m_k\in\{0,1\}, \forall k\in[K], m\in[M],\\
            &\, \theta\in C, \theta^0_K = \theta^1_K, \theta^0_k\leq \theta^1_k, \forall k\in[K-1].
        \end{align}
    \end{subequations}
Proposition~\ref{prop:milp-equivalence-symmetric} below connects the MILP formulation in \eqref{equ:milp-symmetric-design} to the SAA formulation \eqref{equ:SAA-symmetric-design}. 
\begin{proposition}\label{prop:milp-equivalence-symmetric}
    \sloppy \eqref{equ:milp-symmetric-design}, whenever feasible, has the same optimal value as \eqref{equ:SAA-symmetric-design}. Furthermore, if $(\hat\theta^0_k, \hat\theta^1_k), \hat w_k^m, \hat\tau^m_k, \hat\rho^m_k, \hat\gamma^m_k, \hat\lambda^m_k, \hat\pi^m_k, \hat\kappa^m_k, \hat\zeta^m_k,$ $ \hat\xi^m_k, \hat\varphi^m_k, \hat\phi^m_k$ for $k\in[K], m\in[M]$ is an optimal solution to \eqref{equ:milp-symmetric-design}, then $(\hat\theta^0_k, \hat\theta^1_k)^K_{k=1}$ is an optimal solution to \eqref{equ:SAA-symmetric-design}.
\end{proposition}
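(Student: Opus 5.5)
The plan mirrors the proof of Proposition~\ref{prop:milp-equivalence}. Let $v_{\mathrm{ip}}$ and $v_{\mathrm{saa}}$ denote the optimal values of \eqref{equ:milp-symmetric-design} and \eqref{equ:SAA-symmetric-design}. We show $v_{\mathrm{ip}}\ge v_{\mathrm{saa}}$ by projecting an optimal MILP solution onto $\theta$. We show $v_{\mathrm{saa}}\ge v_{\mathrm{ip}}$ by lifting an optimal SAA solution to a feasible MILP solution with equal objective. Together these give equal optimal values and optimality of the projected $\hat\theta$.

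\emph{Projection step.} Fix a feasible MILP point. The big-M constraints translate into one-sided indicator bounds, each read off by checking both values of the binary.
\begin{itemize}
\item Constraint \eqref{equ:milp-symmetric-c1} gives $w^m_k\le \I(S^m_k\le\theta^1_k)$, since $w^m_k=1$ forces $\theta^1_k\ge S^m_k$. The case $S^m_k<\underline\theta$ is harmless because $\theta^1_k\ge\underline\theta$ always.
\item Likewise \eqref{equ:milp-symmetric-c2} gives $\tau^m_k\le\I(S^m_k\ge\theta^0_k)$, and \eqref{equ:milp-symmetric-c2-hat} gives $\lambda^m_k\le \I(S^m_k\le\theta^0_k)$.
\item Combined with \eqref{equ:milp-symmetric-c3}--\eqref{equ:milp-symmetric-c5}, this yields $\kappa^m_k\le \I(S^m_k\le\theta^0_k,\ S^m_i\in[\theta^0_i,\theta^1_i]\ \forall i<k)$. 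Summing and using \eqref{equ:milp-symmetric-c6} shows \eqref{equ:saa-symmetric-c1} holds.
\end{itemize}
For the alternative paths the inequalities run the other way.
\begin{itemize}
\item Constraint \eqref{equ:milp-symmetric-c7} gives $\rho^m_k\ge \I(S^m_{a,k}<\theta^1_k)$.
\item Constraint \eqref{equ:milp-symmetric-c8} gives $\gamma^m_k\ge\I(S^m_{a,k}>\theta^0_k)$.
\item Constraint \eqref{equ:milp-symmetric-c8-hat} gives $\pi^m_k\ge\I(S^m_{a,k}<\theta^0_k)$.
\item The AND-linearizations \eqref{equ:milp-symmetric-c9}--\eqref{equ:milp-symmetric-c10} then give $\zeta^m_k\ge \I(S^m_{a,k}<\theta^0_k,\ S^m_{a,i}\in(\theta^0_i,\theta^1_i)\ \forall i<k)$. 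Summing with \eqref{equ:milp-symmetric-c11} gives \eqref{equ:saa-symmetric-c2}.
\end{itemize}
Similarly \eqref{equ:milp-symmetric-c12}--\eqref{equ:milp-symmetric-c14} give $\xi^m_{k-1}\ge\I(S^m_{a',i}\in(\theta^0_i,\theta^1_i)\ \forall i\le k-1)$. Hence the MILP objective is at least the SAA objective at $\theta$, and $\theta$ is SAA-feasible because the remaining constraints are shared. This proves $v_{\mathrm{ip}}\ge v_{\mathrm{saa}}$.

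\emph{Lifting step.} Take an optimal SAA solution $\tilde\theta$ and set every binary equal to the indicator it is meant to encode:
\begin{itemize}
\item $w^m_k=\I(S^m_k\le\tilde\theta^1_k)$, $\tau^m_k=\I(S^m_k\ge\tilde\theta^0_k)$, and $\lambda^m_k=\I(S^m_k\le\tilde\theta^0_k)$;
\item $\kappa^m_k$ equal to the full conjunction indicator;
\item $\rho,\gamma,\pi$ and $\varphi,\phi$ equal to their strict-inequality indicators;
\item $\zeta^m_k,\xi^m_k$ equal to the corresponding conjunction indicators.
\end{itemize}
We then check each big-M constraint in both binary states. The orientation matches the direction needed in every case: when the binary is $0$ the event is excluded, and when it is $1$ the inequality reduces to $\theta\in C$. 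The AND constraints are tight with these choices, so \eqref{equ:milp-symmetric-c6} and \eqref{equ:milp-symmetric-c11} follow from \eqref{equ:saa-symmetric-c1}--\eqref{equ:saa-symmetric-c2}. The objective equals the SAA objective exactly, so $v_{\mathrm{ip}}\le v_{\mathrm{saa}}$.

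\emph{Main obstacle.} The main obstacle is bookkeeping in the big-M constraints, which must be verified case by case including the boundary cases. The points to watch are:
\begin{itemize}
\item paths where $S^m_k$ lies outside $[\underline\theta,\bar\theta]$, so the big-M coefficient changes sign;
\item strict versus non-strict inequalities at equality, for example $S^m_{a,k}=\theta^0_k$.
\end{itemize}
Where the lifted binary would violate its constraint at such a boundary, I would use the freedom to set it to whichever value the constraint allows. I would then argue that this only strengthens the direction needed: the $\kappa$-type variables may be lowered and still satisfy \eqref{equ:saa-symmetric-c1}, while the $\zeta,\xi$-type variables are upper-bounded by indicators that already hold. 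If a genuine mismatch remains, such as $\kappa^m_k$ needing the closed interval $[\theta^0_i,\theta^1_i]$, I would handle it by noting that these variables appear only with favorable sign, so setting them to the largest admissible value that preserves feasibility is enough.
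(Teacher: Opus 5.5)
Your proposal is correct and follows essentially the same route as the paper. You project an optimal MILP solution to $\hat\theta$ via the one-sided indicator bounds implied by the linking constraints, and you lift an optimal SAA solution by setting each binary to the indicator it encodes: closed intervals for $\kappa$, strict inequalities for $\rho,\gamma,\pi,\varphi,\phi$, and products for $\zeta,\xi$. Your closing ``main obstacle'' discussion is unnecessary, because in each binary state every linking constraint reduces exactly to either the intended inequality (for example $\theta^1_k\ge S^m_k$) or a box bound $\underline\theta\le\theta\le\bar\theta$, whatever the sign of $S^m_k-\underline\theta$ or $\bar\theta-S^m_k$; the lifted point is therefore feasible as is, and the fallback argument, whose monotonicity directions for $\kappa$ and $\zeta,\xi$ are partly reversed, is never needed.
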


\subsubsection{Finite Sample Convergence Analysis}\label{sec:convergence-symmetric}
In this section, we study the finite sample convergence analysis between the SAA problem~\eqref{equ:SAA-symmetric-design} and the original problem~\eqref{equ:symmetric-design-1} for $z$-tests. 

Let $\Theta_\f(\alpha, \beta)$ be the feasible set of problem~\eqref{equ:symmetric-design-1} and $\Theta^\f_M(\alpha, \beta)$ be the feasible set of problem~\eqref{equ:SAA-symmetric-design}, respectively. Let $f_\f(\theta) = n_1 + \sum^K_{k=2}n_k\Pr(S_i\in (\theta^0_i, \theta^1_i), \forall i\in[k-1]|\tilde H)$ and $v^*_\f(\alpha, \beta)$ be the optimal value of \eqref{equ:symmetric-design-1}. Let $\hat\theta^\f_M(\alpha, \beta)$ be any optimal solution to SAA problem~\eqref{equ:SAA-symmetric-design}.

\begin{proposition}\label{prop:convergence-symmetric}

(i) With probability at least $1-\eta$, we have 
\[
\Theta_\f(\alpha - \epsilon_\f(\eta, M), \beta - \epsilon_\f(\eta, M))\subseteq \Theta^\f_M(\alpha, \beta)\subseteq \Theta_\f(\alpha + \epsilon_\f(\eta, M), \beta+ \epsilon_\f(\eta, M)),
\]
with $\epsilon_\f(\eta, M) = \sqrt{\big(c^\f_1(K)\log(1/\eta) + c^\f_2(K)\log(M) + c^\f_3(K)\big)/M}$, where $c^\f_1(K), c^\f_2(K)$ and $c^\f_3(K)$ are some constants that depend on $K$ and not $M$. 

% Alternatively, for all $\epsilon > 0$, we have $
% \Pr(\Theta_\f(\alpha - \epsilon, \beta - \epsilon)\subseteq \Theta^\f_M(\alpha, \beta)\subseteq \Theta_\f(\alpha + \epsilon, \beta+ \epsilon))\geq 1-\tilde c^\f_1(K)M^{\tilde c^\f_2(K)}\exp{(-\tilde c^\f_3(K)M\epsilon^2)} 
% $,
% where $\tilde c^\f_1(K), \tilde c^\f_2(K)$ and $\tilde c^\f_3(K)$ are some constants that depend on $K$ and not $M$. 

(ii) Suppose MFCQ holds at any optimal solution of problem~\eqref{equ:symmetric-design-1}. There exists some threshold $\bar M_\f> 0$ such that for all $M\geq \bar M_\f$, with probability at least $1-\eta$, we have $\Theta^\f_M(\alpha, \beta)\neq \emptyset$ and 
\[
|f_\f(\hat\theta^\f_M(\alpha, \beta)) - v^*_\f(\alpha, \beta)|\leq \sqrt{\big(c^\f_{\mv,1}(K) + c^\f_{\mv,2}(K)\log(1/\eta) + c^\f_{\mv,3}(K)\log(M)\big)/M}, 
\]
where $c^\f_{\mathrm{v,1}}(K), c^\f_{\mathrm{v,2}}(K)$ and $c^\f_{\mathrm{v,3}}(K)$ are positive constants that depend on $K$ and not $M$. 

% Alternatively, for all $M\geq \bar M_\f$ and any $\epsilon>0$,
% $
% \Pr(|f_\f(\hat\theta^\f_M(\alpha, \beta)) - v_\f^*(\alpha, \beta)|\leq \epsilon) \geq 1-\tilde c^\f_{\mathrm{v},1}(K,n)M^{\tilde c^\f_{\mathrm{v},2}(K,n)}\exp(-\tilde c_{\mathrm{v},3}(K,n)M\epsilon^2)
% $
% where $\tilde c^\f_{\mathrm{v,1}}(K, n), \tilde c^\f_{\mathrm{v,2}}(K, n)$ and $\tilde c^\f_{\mathrm{v,3}}(K, n)$ are positive constants that depend on $K$ and $n$ and not $M$.

\end{proposition}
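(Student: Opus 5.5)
The plan is to port the proof architecture of Propositions~\ref{prop:feasible-set} and \ref{prop:objective-bound} to the $2K$-dimensional decision vector $\theta=(\theta^0,\theta^1)\in C=[\underline\theta,\bar\theta]^{2K}$. We write $g^\f_0(\theta)$ and $g^\f_a(\theta)$ for the left-hand sides of \eqref{equ:symmetric-1-c1} and \eqref{equ:symmetric-1-c2}, and $\hat g^\f_{0,M},\hat g^\f_{a,M}$ for their SAA counterparts in \eqref{equ:saa-symmetric-c1}--\eqref{equ:saa-symmetric-c2}. The key structural observation is that, for each sample path, the summand inside $\hat g^\f_{0,M}$ is a sum over $k$ of indicators of \emph{disjoint} events: accept at stage $k$ after continuing through stages $1,\dots,k-1$. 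It is therefore itself a $\{0,1\}$-valued random variable, and Hoeffding's inequality applies with range $1$ exactly as before. The same holds for $\hat g^\f_{a,M}$. The objective summand has range at most $n_{2:K}$. The extra linear constraints $\theta^0_K=\theta^1_K$ and $\theta^0_k\le\theta^1_k$ are deterministic and identical in both problems, so they are carried along unchanged.

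\textbf{Step 1 (discretization and a finite union bound).} Take the grid $\hat C(\Delta)$ on the feasible polytope, of cardinality at most $\lceil(\bar\theta-\underline\theta)/\Delta\rceil^{2K}$. Applying Proposition~1 of \citet{wang2008sample} as in Lemma~\ref{lem:finite-feasibility} gives the sandwich on $\hat C(\Delta)$ with $\epsilon_1=\sqrt{\log(4|\hat C(\Delta)|/\eta)/(2M)}$.

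\textbf{Step 2 (Lipschitz analogue of Lemma~\ref{lem:lipshitz-continuity}).} Show that $g^\f_0,g^\f_a,f_\f$ are Lipschitz in $\Vert\cdot\Vert_1$. Each term is a Gaussian probability of a rectangle-type event in the $z$-statistics, so its partial derivative in any threshold is bounded by a marginal density times $1$. This gives constants of the form $c(K)\max_k\sup p_{\cdot,k}$.

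\textbf{Step 3 (oscillation analogue of Lemma~\ref{lem:SAA-constraint-bound}).} Bound the local oscillation of the SAA functions. Monotonicity no longer holds, since $g^\f_0$ increases in $\theta^0_k$ in one term and decreases in another. However, the argument of Lemma~\ref{lem:SAA-constraint-bound} never used monotonicity. It only used that if $\Vert\theta-\tilde\theta\Vert_\infty\le\Delta$, then a path's indicator can change only if some $S^m_k$ falls within $\Delta$ of some $\tilde\theta^0_k$ or $\tilde\theta^1_k$. That event has probability at most $4K\max p\,\Delta$, and Hoeffding over the grid then controls its empirical frequency. The analogue of Lemma~\ref{lem:Delta-net} follows verbatim, and Parts A and B of the proof of Proposition~\ref{prop:feasible-set} give (i) with $\Delta=(\bar\theta-\underline\theta)/\sqrt M$. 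This produces the stated $\epsilon_\f$, in which $\log|\hat C(\Delta)|=K\log M+O(1)$ is absorbed into $c^\f_2(K)\log M$.

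\textbf{Step 4 (part (ii)).} Repeat the five-term decomposition A1--A5 and part B from the proof of Proposition~\ref{prop:objective-bound}. Here we need Lemma~\ref{lem:optimal-value-lipshitz}, i.e.\ Lipschitz continuity of $v^*_\f$ in $(\alpha,\beta)$ near the nominal levels, under the assumed MFCQ, together with a feasibility lemma analogous to Lemma~\ref{lem:SAA-feasibility} to fix $\bar M_\f$.

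The main obstacle I expect is justifying Lemma~\ref{lem:optimal-value-lipshitz} in this setting. That lemma requires continuously differentiable constraint data. We must check that $g^\f_0$ and $g^\f_a$ are $C^1$ on the polytope, including across $\theta^0_k=\theta^1_k$, where the continuation interval degenerates. The equality constraints $\theta^0_K=\theta^1_K$ must also be handled, either by eliminating $\theta^1_K$ before applying Still's result, or by including the linear equalities in the MFCQ, whose gradients are trivially independent. I would first eliminate $\theta^1_K$, and then verify differentiability via Leibniz's rule on the Gaussian integrals. The boundary contributions at $\theta^0_k=\theta^1_k$ are continuous because the densities are bounded and continuous.
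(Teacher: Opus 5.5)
Your proposal is correct and follows the paper's own route. The paper proves analogues of Lemmas~\ref{lem:SAA-feasibility}, \ref{lem:finite-feasibility}, \ref{lem:lipshitz-continuity}, \ref{lem:SAA-constraint-bound} and \ref{lem:Delta-net} for the $2K$-dimensional problem (Lemmas~\ref{lem:SAA-feasibility-futility}--\ref{lem:Delta-net-futility}) and then reruns the proofs of Propositions~\ref{prop:feasible-set} and \ref{prop:objective-bound}; the differences are cosmetic. The paper applies Hoeffding with range $[0,K]$ instead of your disjointness observation, which is exact for the type-2 sum but only almost sure at fixed $\theta$ for the closed-interval type-1 sum; this changes only the $K$-dependent constants. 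It also leaves implicit the grid-on-the-polytope and $C^1$/equality-constraint points that you handle explicitly.
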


The above propositions mirror the guarantees in Propositions~\ref{prop:convergence-t},~\ref{prop:pilot-convergence}, and \ref{prop:convergence-generalization}, i.e., our S-MILP approach for the symmetric design problem is also ``equivalent'' to the original problem. For completeness, we also show a similar Lemma as Lemma~\ref{lem:MFCQ} that shows MFCQ for this problem is also a mild assumption, namely that it fails only on a set with zero Lebesgue measure: 
\begin{lemma}\label{lem:MFCQ-symmetric}
    $\{\theta\in\mathbb{R}^{2K}: \text{MFCQ fails at } \theta \text{ for \eqref{equ:symmetric-design-1}}\}$ has a zero Lebesgue measure. 
\end{lemma}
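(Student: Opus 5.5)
I will follow the template of the proof of Lemma~\ref{lem:MFCQ}, adapted to the $2K$-dimensional variable $\theta=(\theta^0_1,\dots,\theta^0_K,\theta^1_1,\dots,\theta^1_K)$.

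\textbf{Standard form.} I first write \eqref{equ:symmetric-design-1} in the standard form \eqref{eq:MFCQ_form}. Let $h_1(\theta)=1-\alpha-G_0(\theta)$ and $h_2(\theta)=G_a(\theta)-\beta$, where $G_0$ and $G_a$ denote the left-hand sides of \eqref{equ:symmetric-1-c1} and \eqref{equ:symmetric-1-c2}. The remaining constraints are the linear box constraints $\underline\theta\le\theta^j_k\le\bar\theta$, the ordering constraints $\theta^0_k\le\theta^1_k$, and the equality $\theta^0_K=\theta^1_K$.

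For the equality constraint, I would eliminate it by substitution, treating $\theta^0_K=\theta^1_K=:\vartheta_K$ as a single coordinate. The ambient space then becomes $\mathbb{R}^{2K-1}$. Measure zero in this reduced parametrization is the intended reading, since the feasible region lies on that hyperplane anyway.

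\textbf{Removing degenerate points.} The set of $\theta$ where any linear inequality is active is a finite union of hyperplanes, hence Lebesgue-null. Off this set, only $h_1$ and $h_2$ can be active. If at most one of them is active, MFCQ holds provided that gradient is nonzero. Both $\nabla G_0$ and $\nabla G_a$ have strictly positive partial derivative in $\theta^0_1$ almost everywhere, because the $\theta^0_1$-term is a Gaussian cdf plus terms involving the continuation region. So the remaining task is to show that
\[
D=\{\theta:\nabla G_0(\theta)\text{ and }\nabla G_a(\theta)\text{ are linearly dependent}\}
\]
is null. If they are independent, a direction $\xi$ with $\nabla h_1\xi<0$ and $\nabla h_2\xi<0$ exists, since $\nabla h_1=-\nabla G_0$ and $\nabla h_2=\nabla G_a$ are not negatively proportional.

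\textbf{Analytic determinant.} I pick two coordinates, for instance $\theta^0_1$ and $\theta^0_2$ (if $K\ge2$; if $K=1$, then $\theta^0_1$ and $\theta^1_1$). I consider the $2\times2$ determinant
\[
\det(\ell_0(\theta),\ell_a(\theta)),
\]
whose entries are the corresponding partial derivatives. Each $G$ is a finite sum of probabilities that a Gaussian vector lies in a box (rectangle events $\{S_k\le\theta^0_k,\ S_i\in[\theta^0_i,\theta^1_i]\}$). The partial derivatives are therefore sums of terms of the form (Gaussian density)$\times$(conditional Gaussian box probability). These are real-analytic in $\theta$ on the open region where $\theta^0_i<\theta^1_i$. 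By Proposition~0 of \citet{mityagin2015zero}, it then suffices to exhibit one point where the determinant is nonzero, since the region is connected (a product of convex sets).

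\textbf{Main obstacle: a witness point.} The hardest step is producing a single point where the determinant is nonzero. I would mimic the limiting argument of Lemma~\ref{lem:MFCQ}. Let $\theta^1_i\to+\infty$ for all $i<K$, so there are no early rejections, and put $\theta^0_1=\theta^0_2=x$ with the other $\theta^0_i$ fixed very negative.

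In this limit the continuation events become $\{S_i\ge\theta^0_i\}$. Differentiating $G$ in $\theta^0_1$ and $\theta^0_2$ produces boundary terms $p(x)$ times conditional probabilities. Under $H_0$ the marginals of $S_1$ and $S_2$ are identical, so the ratio $\partial_{\theta^0_1}G_0/\partial_{\theta^0_2}G_0$ tends to a limit governed by the correlation structure. Under $H_a$ the same ratio carries the additional factor $p_{a,1}(x)/p_{a,2}(x)=\exp\{\cdot\}$, which involves the different means $\sqrt{n_1}\delta\ne\sqrt{n_1+n_2}\delta$. Taking $x\to-\infty$, where the conditional probabilities tend to the same constants under both hypotheses, this extra factor diverges or vanishes, so the two ratios differ for some finite $x$.

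Continuity in the $\theta^1_i$ then allows replacing $+\infty$ by large finite values while keeping the determinant nonzero. The two-sided case is handled in the same way, with the folded densities $p(\theta)+p(-\theta)$, exactly as in Lemma~\ref{lem:MFCQ}. I expect the bookkeeping of the conditional-probability limits for the symmetric continuation sets to be the delicate part. If the $\theta^0_1,\theta^0_2$ pair proves awkward, I would fall back on the pair $(\theta^0_1,\theta^1_1)$, where only stage-1 marginals appear to leading order.
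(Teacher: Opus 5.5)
Your architecture matches the paper's. You discard the null set where a linear constraint is active, reduce to showing that the set where $\nabla G_0$ and $\nabla G_a$ are linearly dependent is null, observe that a $2\times 2$ minor is real analytic, apply Mityagin's zero-set result, and find a witness by a limiting ratio argument in which the $H_0$ ratio is $1$ while the $H_a$ ratio carries a density ratio. Your substitution $\vartheta_K=\theta^0_K=\theta^1_K$ and your restriction to the connected open set $\{\theta^0_i<\theta^1_i\}$ are cleaner than the paper, which discards the hyperplane $\theta^0_K=\theta^1_K$ (containing the whole feasible set) as null. Since $G_0,G_a$ do not depend on $\theta^1_K$, nullity of the dependence set is the same in either parametrization. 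Your one-sided witness differs from the paper's but is correct. With $\theta^1_i=+\infty$ and the remaining lower thresholds removed, $G=1-\Pr(S_1>\theta^0_1,S_2>\theta^0_2)$. The $H_0$ ratio is identically $1$ by exchangeability of $(S_1,S_2)$. As $x\to-\infty$ both conditional probabilities tend to $1$, while $p_{a,1}(x)/p_{a,2}(x)=e^{x(\mu_1-\mu_2)+c}$ tends to $0$ or $\infty$.

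The gap is the two-sided case, which you dismiss as ``the same way, with folded densities.'' For $S_k=|\tilde S_k|\ge 0$, every $\theta^0_j$-derivative vanishes when $\theta^0_j<0$, so the regime $x\to-\infty$ on which your witness rests does not exist. The only remaining tail, $x\to+\infty$, is uninformative for your pair $(\theta^0_1,\theta^0_2)$. There the reflected terms are negligible and
\[
\partial_{\theta^0_1}G_a\approx\phi(y_1)\,\bar\Phi\big((y_2-\rho y_1)/s\big),\qquad
\partial_{\theta^0_2}G_a\approx\phi(y_2)\,\bar\Phi\big((y_1-\rho y_2)/s\big),
\]
with $y_j=x-\mu_j$, $\rho=\mathrm{corr}(\tilde S_1,\tilde S_2)$ and $s=\sqrt{1-\rho^2}$. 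Both $\phi(y_1)\phi((y_2-\rho y_1)/s)$ and $\phi(y_2)\phi((y_1-\rho y_2)/s)$ equal $s$ times the bivariate normal density at $(y_1,y_2)$. Mills' ratio therefore gives an $H_a$ ratio asymptotic to $(y_1-\rho y_2)/(y_2-\rho y_1)\to 1$, the same as the $H_0$ ratio. In other words, the conditional-tail factor $e^{x(\mu_2-\mu_1)}$ exactly cancels the density factor; the same cancellation would also defeat $x\to+\infty$ in the one-sided case.

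There are two ways to repair this:
\begin{itemize}
\item Let $x\downarrow 0$ in the two-sided case. The conditional probabilities tend to $1$ and the $H_a$ ratio tends to $\phi(\mu_1)/\phi(\mu_2)\neq 1$.
\item Use the paper's pair $(\theta^1_{K-1},\theta^0_K)$ with $\theta^1_k=\theta^0_K=x$, lower thresholds $-x$ (one-sided) or near $0$ (two-sided), and $x\to+\infty$. There the two conditional probabilities are of different type and both tend to $1$, so the diverging (folded) density ratio $p_{a,K}(x)/p_{a,K-1}(x)$ survives and one witness covers both cases.
\end{itemize}

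Two smaller points. First, if the other $\theta^0_i$ are \emph{fixed} while $x\to-\infty$, the events $\{S_i>\theta^0_i\}$ given $S_1=x$ become rare and those conditional probabilities tend to $0$. You should fix $x$ in the idealized limit first, then restore finite values of \emph{all} auxiliary thresholds by continuity, not just the $\theta^1_i$. Second, for $K=1$ your own substitution leaves a single coordinate, so the gradients are always dependent. The claim must then come instead from the active set $\{G_0=1-\alpha\}$ being a single point.
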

We remark again that Lemma~\ref{lem:MFCQ-symmetric} holds for \eqref{equ:symmetric-design-1} applied to one-sample, two-sample, one-sided, and two-sided z-tests. 

%%%

\section{Additional Materials for Appendix~\ref{sec:futility}}\label{sec-appendix:futility}
\subsection{Proof of Lemma~\ref{lem:reformulation-symmetric design}}

\begin{proof}[Proof of Lemma~\ref{lem:reformulation-symmetric design}]
    It suffices to show that for $H = H_0$ and $H = H_a$, we have $1 - \sum^K_{k=1}\Pr(S_k\leq \theta^0_k, \text{ and } S_i \in (\theta^0_i, \theta^1_i)\text{ for all }i\in[k-1]|H) = \sum^K_{k=1}\Pr(S_k\geq \theta^1_k, \text{ and } S_i \in (\theta^0_i, \theta^1_i)\text{ for all }i\in[k-1]|H)$, where we let $\theta^0_0 = -\infty$ and $\theta^1_0 = \infty$ by default. Below we prove the argument
    \begin{equation}\label{equ:symmetric-design-induction}
        \begin{split}
    &1 - \sum^K_{k=1}\Pr(S_k\leq \theta^0_k, \text{ and } S_i \in (\theta^0_i, \theta^1_i)\text{ for all }i\in[k-1]|H)\\
    = &\sum^{K-1}_{k=1}\Pr(S_k\geq \theta^1_k, \text{ and } S_i \in (\theta^0_i, \theta^1_i)\text{ for all }i\in[k-1]|H)\\
    &\quad\quad\quad\quad+\Pr(S_K\geq \theta^0_K, \text{ and } S_i \in (\theta^0_i, \theta^1_i)\text{ for all }i\in[K-1]|H).
        \end{split}
    \end{equation}
    using induction. Then, by noticing $\theta^0_K = \theta^1_K$ we have the desired result. For $K = 1$, we have that 
    $1 - \Pr(S_1\leq \theta^0_1) = \Pr(S_1 \geq \theta^0_1)$. 
    Thus the argument holds for $K = 1$. Now, suppose the argument \eqref{equ:symmetric-design-induction} holds for $K = K'\geq 2$.
    Then, for $K = K' + 1$, we have
    \begin{align*}
    &1 - \sum^{K'+1}_{k=1}\Pr(S_k\leq \theta^0_k, \text{ and } S_i \in (\theta^0_i, \theta^1_i)\text{ for all }i\in[k-1]|H) \\
    = &1 - \sum^{K'}_{k=1}\Pr(S_k\leq \theta^0_k, \text{ and } S_i \in (\theta^0_i, \theta^1_i)\text{ for all }i\in[k-1]|H) \\
    &\quad\quad\quad\quad- \Pr(S_{K'+1}\leq \theta^0_{K'+1}, \text{ and } S_i \in (\theta^0_i, \theta^1_i)\text{ for all }i\in[K']|H)\\
    =& \sum^{K'-1}_{k=1}\Pr(S_k\geq \theta^1_k, \text{ and } S_i \in (\theta^0_i, \theta^1_i)\text{ for all }i\in[k-1]|H) \\
    &\quad\quad\quad\quad+ \Pr(S_{K'}\geq \theta^0_{K'}, \text{ and } S_i \in (\theta^0_i, \theta^1_i)\text{ for all }i\in[K'-1]|H)\\
    &\quad\quad\quad\quad-  \Pr(S_{K'+1}\leq \theta^0_{K'+1}, \text{ and } S_i \in (\theta^0_i, \theta^1_i)\text{ for all }i\in[K']|H)\\
    =&\sum^{K'-1}_{k=1}\Pr(S_k\geq \theta^1_k, \text{ and } S_i \in (\theta^0_i, \theta^1_i)\text{ for all }i\in[k-1]|H) \\
    &\quad\quad\quad\quad+ \Pr(S_{K'}\geq \theta^1_{K'}, \text{ and } S_i \in (\theta^0_i, \theta^1_i)\text{ for all }i\in[K'-1]|H) \\
    &\quad\quad\quad\quad+ \Pr(S_{K'+1}\geq \theta^0_{K'+1}, \text{ and } S_i \in (\theta^0_i, \theta^1_i)\text{ for all }i\in[K']|H)\\
    = & \sum^{K'}_{k=1}\Pr(S_k\geq \theta^1_k, \text{ and } S_i \in (\theta^0_i, \theta^1_i)\text{ for all }i\in[k-1]|H) \\
    &\quad\quad\quad\quad+ \Pr(S_{K'+1}\geq \theta^0_{K'+1}, \text{ and } S_i \in (\theta^0_i, \theta^1_i)\text{ for all }i\in[K']|H),
    \end{align*}
    where the second equality follows from \eqref{equ:symmetric-design-induction}, and 
    the third equality follows since $\Pr(S_{K'}\geq \theta^0_{K'}, \text{ and } S_i \in (\theta^0_i, \theta^1_i)\text{ for all }i\in[K'-1]|H) = \Pr(S_{K'}\geq \theta^1_{K'}, \text{ and } S_i \in (\theta^0_i, \theta^1_i)\text{ for all }i\in[K'-1]|H) + \Pr(S_{K'}\in(\theta^0_{K'}, \theta^1_{K'}), \text{ and } S_i \in (\theta^0_i, \theta^1_i)\text{ for all }i\in[K'-1]|H)$, and $\Pr(S_{K'}\in(\theta^0_{K'}, \theta^1_{K'}), \text{ and } S_i \in (\theta^0_i, \theta^1_i)\text{ for all }i\in[K'-1]|H) = \Pr(S_{K'+1}\leq \theta^0_{K'+1}, \text{ and } S_i \in (\theta^0_i, \theta^1_i)\text{ for all }i\in[K'-1]|H) + \Pr(S_{K'+1}\geq \theta^0_{K'+1}, \text{ and } S_i \in (\theta^0_i, \theta^1_i)\text{ for all }i\in[K']|H)$.
    Thus the argument holds for $K = K' + 1$. 
    
    Finally, since $S = (S_k)^K_{k=1}$ has a multivariate Normal distribution, we also have that $\Pr(S_1\leq \theta^0_1|H) = \Pr(S_1< \theta^0_1|H)$, and that $\Pr(S_k\leq \theta^0_k, \text{ and } S_i \in (\theta^0_i, \theta^1_i)\text{ for all }i\in[k-1]|H) = \Pr(S_k<\theta^0_k, \text{ and } S_i \in (\theta^0_i, \theta^1_i)\text{ for all }i\in[k-1]|H) = \Pr(S_k\leq\theta^0_k, \text{ and } S_i \in [\theta^0_i, \theta^1_i]\text{ for all }i\in[k-1]|H)$ for $2\leq k\leq K$. 
\end{proof}

\subsection{Proof of Proposition~\ref{prop:milp-equivalence-symmetric}}
\begin{proof}[Proof of Proposition~\ref{prop:milp-equivalence-symmetric}]
    Let $v_{\mathrm{ip}}$ be the optimal value of \eqref{equ:milp-symmetric-design} and let $v_{\mathrm{saa}}$ be the optimal value of \eqref{equ:SAA-symmetric-design}. We first show $v_{\mathrm{ip}}\geq v_{\mathrm{saa}}$ and then $v_{\mathrm{saa}}\geq v_{\mathrm{ip}}$.

    Let $\hat\theta, \hat w_k^m, \hat\tau^m_k, \hat\rho^m_k, \hat\gamma^m_k, \hat\lambda^m_k, \hat\pi^m_k, \hat\kappa^m_k, \hat\zeta^m_k, \hat\xi^m_k, \hat\varphi^m_k, \hat\phi^m_k$ be an optimal solution to \eqref{equ:milp-symmetric-design}. We show that $\hat\theta = (\hat\theta^0_1, \cdots, \hat\theta^0_K, \hat\theta^1_1, \cdots, \hat\theta^1_K)$ is a feasible solution to \eqref{equ:SAA-symmetric-design} and $v_{\mathrm{ip}}\geq v_{\mathrm{saa}}$.
    According to \eqref{equ:milp-symmetric-c1}, we have $\hat w^m_k\leq \I(S^m_k\leq \hat\theta^1_k)$; According to \eqref{equ:milp-symmetric-c2}, $\hat\tau^m_k\leq \I(S^m_k\geq \hat\theta^0_k)$; 
    According to \eqref{equ:milp-symmetric-c2-hat}, we have $\hat \lambda^m_k\leq \I(S^m_k\leq \hat\theta^0_k)$. 
    According to \eqref{equ:milp-symmetric-c3}-\eqref{equ:milp-symmetric-c5}, 
    $\hat\kappa^m_1\leq \hat \lambda^m_1\leq \I(S^m_1\leq \hat\theta^0_1)$ and $\hat\kappa^m_k\leq \hat\lambda^m_k\prod^{k-1}_{i=1}\hat w^m_i\hat \tau^m_i\leq \I(S^m_k\leq \hat\theta^0_k, \text{ and }\hat\theta^0_i\leq S^m_i\leq\hat\theta^1_i,\,\forall i\in[k-1])$ for $2\leq k\leq K$. Then, according to \eqref{equ:milp-symmetric-c6}, $1-\alpha\leq (1/M)\sum^M_{m=1}\sum^K_{k=1}\hat\kappa^m_k\leq (1/M)\sum_{m\in[M]}(\I(S^m_1\leq \hat\theta^0_1) + \sum^K_{k=2}\I(S^m_k\leq \hat\theta^0_k, \text{ and }\hat\theta^0_i\leq S^m_i\leq\hat\theta^1_i,\,\forall i\in[k-1]))$. Thus $\hat\theta$ is feasible to \eqref{equ:saa-symmetric-c1}. 
    
    According to \eqref{equ:milp-symmetric-c7}, $\hat\rho^m_k \geq \I(\hat\theta^1_k > S^m_{a, k})$; According to \eqref{equ:milp-symmetric-c8}, $\hat\gamma^m_k\geq \I(\hat\theta^0_k < S^m_{a,k})$; 
    According to \eqref{equ:milp-symmetric-c8-hat}, $\hat\pi^m_k \geq \I(\hat\theta^0_k > S^m_{a, k})$;
    According to \eqref{equ:milp-symmetric-c9} and \eqref{equ:milp-symmetric-c10}, 
    $\hat\zeta^m_1\geq \hat\pi^m_1 \geq \I(\hat\theta^0_1 > S^m_{a, 1})$ and $\hat\zeta^m_k\geq \I(S^m_{a,k} < \hat\theta^0_k, \text{ and }\hat\theta^0_i< S^m_{a, i} < \hat\theta^1_i, \, \forall i\in[k-1])$ for $2\leq k\leq K$. Then, according to \eqref{equ:milp-symmetric-c11}, $\beta\geq (1/M)\sum^M_{m=1}(\I(\hat\theta^0_1 > S^m_{a, 1}) + \sum^K_{k=2}\I(S^m_{a,k} < \hat\theta^0_k, \text{ and }\hat\theta^0_i< S^m_{a, i} < \hat\theta^1_i, \, \forall i\in[k-1]))$. Thus $\hat\theta$ is feasible to \eqref{equ:saa-symmetric-c2}. 

    Similar to above arguments, according to \eqref{equ:milp-symmetric-c12}-\eqref{equ:milp-symmetric-c14}, we have that $\hat\varphi^m_k \geq \I(S^m_{a', k}<\hat\theta^1_k)$, $\hat\phi^m_k \geq \I(S^m_{a', k} \geq \hat\theta^0_k)$, and $\hat\xi^m_k \geq \I(\hat\theta^0_i < S^m_{a', i} < \hat\theta^1_i,\, \forall i\in[k])$. It thus follows that $v_{\mathrm{ip}} = 
    n_1 + (1/M)\sum^M_{m=1}\sum^K_{k=2}n_k\hat\xi^m_{k-1}\geq
    n_1 + (1/M)\sum^M_{m=1}\sum^K_{k=2}n_k\I(\hat\theta^0_i < S^m_{a', i} < \hat\theta^1_i,\, \forall i\in[k-1])\geq v_\saa$, where the second inequality follows since $\hat\theta$ is feasible to \eqref{equ:SAA-symmetric-design}. 

    \sloppy 
    On the other hand, let $\tilde\theta = (\tilde\theta^0_1, \cdots, \tilde\theta^0_K, \tilde\theta^1_1, \cdots, \tilde\theta^1_K)$ be an optimal solution to \eqref{equ:SAA-symmetric-design}. Let $\tilde w^m_k = \I(S^m_k \leq \tilde\theta^1_k)$, $\tilde\tau^m_k = \I(S^m_k\geq\tilde\theta^0_k)$, $\tilde \lambda^m_k = \I(S^m_k \leq \tilde\theta^0_k)$,
    and $\tilde\kappa^m_k = \tilde \lambda^m_k\prod^{k-1}_{i=1}\tilde w^m_i\tilde\tau^m_i$. Then \eqref{equ:milp-symmetric-c1}-\eqref{equ:milp-symmetric-c5} are feasible. \eqref{equ:milp-symmetric-c6} is feasible since by \eqref{equ:saa-symmetric-c1}, $1-\alpha \leq (1/M)\sum^M_{m=1}\sum^K_{k=1}\I(S^m_k\leq \tilde\theta^0_k, \text{ and } S^m_i \in [\tilde\theta^0_i, \tilde\theta^1_i]\text{ for all }i\in[k-1]) =  (1/M)\sum^M_{m=1}\tilde\kappa^m_k$. 
    Similarly, let $\tilde\rho^m_k = \I(S^m_{a, k} < \tilde\theta^1_k)$, $\tilde\gamma^m_k = \I(S^m_{a,k}>\tilde\theta^0_k)$, 
    $\tilde\pi^m_k = \I(S^m_{a, k} < \tilde\theta^0_k)$,
    and $\tilde\zeta^m_k = \tilde\pi^m_k\prod^{k-1}_{i=1}\tilde\rho^m_i\tilde\gamma^m_i$. Then \eqref{equ:milp-symmetric-c7}-\eqref{equ:milp-symmetric-c10} are feasible. \eqref{equ:milp-symmetric-c11} is feasible since by \eqref{equ:saa-symmetric-c2}, $\beta\geq (1/M)\sum^M_{m=1}\sum^K_{k=1}\I(S^m_{a, k}< \tilde\theta^0_k, \text{ and } S^m_{a,i} \in (\tilde\theta^0_i, \tilde\theta^1_i)\text{ for all }i\in[k-1]) = (1/M)\sum^M_{m=1}\sum^K_{k=1}\tilde\zeta^m_k$. Furthermore let $\tilde\varphi^m_k = \I(S^m_{a', k}< \tilde\theta^1_k)$, $\tilde\phi^m_k = \I(S^m_{a',k}>\tilde\theta^0_k)$, and $\tilde\xi^m_k = \prod^k_{i=1}\tilde\varphi^m_i\tilde\phi^m_i$. Then \eqref{equ:milp-symmetric-c12}-\eqref{equ:milp-symmetric-c14} are feasible. Additionally, $v_\saa = n_1 + (1/M)\sum^M_{m=1}\sum^K_{k=2}n_k\I(S^m_{a', i}\in (\tilde\theta^0_i, \tilde\theta^1_i), \forall i\in[k-1]) = n_1 + (1/M)\sum^M_{m=1}\sum^K_{k=2}n_k\tilde\xi^m_{k-1}\geq v_{\mathrm{ip}}$, where the last inequality follows since $\tilde\theta, \tilde w_k^m, \tilde\tau^m_k, \tilde\rho^m_k, \tilde\gamma^m_k, \tilde\kappa^m_k, \tilde\zeta^m_k, \tilde\lambda^m_k, \tilde\pi^m_k, \tilde\xi^m_k, \tilde\varphi^m_k, \tilde\phi^m_k$ for $k\in[K], m\in[M]$ is a feasible solution to \eqref{equ:milp-symmetric-design}. 
\end{proof}

\subsection{Proof of Proposition~\ref{prop:convergence-symmetric}}

Recall that $v_\F^*(\alpha, \beta)$ and $\theta^*_\F(\alpha, \beta)$ are the optimal value and any optimal solution of problem~\eqref{equ:symmetric-design-1}, respectively. Let $\hat v^\f_M(\alpha, \beta)$ and $ \hat\theta^\f_M(\alpha, \beta)$ be the optimal value and any optimal solution of the SAA problem~\eqref{equ:SAA-symmetric-design}, respectively. $\Theta_\f(\alpha, \beta)$ is the feasible region of true problem \eqref{equ:symmetric-design-1}, and $\Theta^\f_M(\alpha, \beta)$ is the feasible region of the SAA problem \eqref{equ:SAA-symmetric-design}. $S_\f(\alpha, \beta)$ and $S^\f_M(\alpha, \beta)$ are the sets of optimal solutions to the true problem \eqref{equ:symmetric-design-1} and the SAA problem \eqref{equ:SAA-symmetric-design} respectively. 

In this subsection, with some abuse of notation, we let $f(\theta) = n_1 + \sum^K_{k=2}n_k\Pr(S_i\in (\theta^0_i, \theta^1_i), \forall i\in[k-1]|\tilde H)$ be the objective, $g_0(\theta) = \Pr(S_1\leq\theta^0_1|H_0) + \sum^K_{k=2}\Pr(S_k\leq \theta^0_k, \text{ and } S_i \in (\theta^0_i, \theta^1_i)\text{ for all }i\in[k-1]|H_0)$ be the LHS of the first constraint and $g_a(\theta) = \Pr(S_1\leq\theta^0_1|H_a) + \sum^K_{k=2}\Pr(S_k\leq \theta^0_k, \text{ and } S_i \in (\theta^0_i, \theta^1_i)\text{ for all }i\in[k-1]|H_a)$ be the LHS of the second constraint of \eqref{equ:symmetric-design-1}, respectively. Furthermore, we let $\hat f_M(\theta) = n_1 + (1/M)\sum^M_{m=1}\sum^K_{k=2}n_k\I(S^m_{a', i}\in (\theta^0_i, \theta^1_i), \forall i\in[k-1])$ be the objective of the SAA problem \eqref{equ:SAA-symmetric-design}, $\hat g_{0,M}(\theta) = (1/M)\sum^M_{m=1}
\big(\I(S^m_1\leq \theta^0_1) + \sum^K_{k=2}\I(S^m_k\leq \theta^0_k, \text{ and } S^m_i \in [\theta^0_i, \theta^1_i]\text{ for all }i\in[k-1])\big)$ and $\hat g_{a,M}(\theta) = (1/M)\sum^M_{m=1}
\big(\I(S^m_{a,1}< \theta^0_1) + \sum^K_{k=2}\I(S^m_{a, k}< \theta^0_k, \text{ and } S^m_{a,i} \in (\theta^0_i, \theta^1_i)\text{ for all }i\in[k-1])\big)$ be the LHS of the first constraint and second constraint of SAA problem \eqref{equ:SAA-symmetric-design}, respectively.

We continue to use $\hat C(\Delta)$ as a discretization of $C$ so that for any $\hat\theta\in C$, there exists some $\tilde\theta\in \hat C(\Delta)$ such that $\Vert\hat\theta - \tilde\theta\Vert_\infty\leq \Delta$. Then $|\hat C(\Delta)|\leq ((\bar\theta - \underline{\theta})/\Delta)^{2K}$. Let $\epsilon_{1,\F}(\eta, M) = K\sqrt{\log(4|\hat C(\Delta)|/\eta)/(2M)}$. 

We first present Lemmas~\ref{lem:SAA-feasibility-futility} (as an analog of Lemma~\ref{lem:SAA-feasibility}), \ref{lem:finite-feasibility-futility} (as an analog of Lemma~\ref{lem:finite-feasibility}), \ref{lem:lipshitz-continuity-futility} (as an analog of Lemma~\ref{lem:lipshitz-continuity}), \ref{lem:SAA-constraint-bound-futility} (as an analog of Lemma~\ref{lem:SAA-constraint-bound}), and \ref{lem:Delta-net-futility} (as an analog of Lemma~\ref{lem:Delta-net}). 

\begin{lemma}\label{lem:SAA-feasibility-futility}
    Suppose there exists some $\bar\epsilon> 0$ such that  $\Theta_\f(\alpha - \bar\epsilon, \beta -\bar\epsilon)\neq \emptyset$. Then for $M\geq K^2\log{(2/\eta)}/(2\bar\epsilon^2)$, with probability at least $1-\eta$,  $\Theta^\f_M(\alpha, \beta)\neq \emptyset$. 
\end{lemma}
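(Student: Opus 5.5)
The plan mirrors the proof of Lemma~\ref{lem:SAA-feasibility}: fix one point of the shrunken true feasible set and show it stays feasible for the SAA problem with high probability. Since $\Theta_\f(\alpha-\bar\epsilon,\beta-\bar\epsilon)\neq\emptyset$, I would pick any $\hat\theta$ in it, so that $g_0(\hat\theta)\geq 1-\alpha+\bar\epsilon$ and $g_a(\hat\theta)\leq \beta-\bar\epsilon$. I would also note that $\hat\theta$ satisfies the deterministic constraints $\hat\theta\in C$, $\hat\theta^0_K=\hat\theta^1_K$ and $\hat\theta^0_k\leq\hat\theta^1_k$, because these are shared by \eqref{equ:symmetric-design-1} and \eqref{equ:SAA-symmetric-design}. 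Then
\[
\Pr\big(\Theta^\f_M(\alpha,\beta)\neq\emptyset\big)\geq \Pr\big(\hat g_{0,M}(\hat\theta)\geq 1-\alpha,\ \hat g_{a,M}(\hat\theta)\leq\beta\big),
\]
and by a union bound it suffices to control the two one-sided deviations $\hat g_{0,M}(\hat\theta)-g_0(\hat\theta)\leq -\bar\epsilon$ and $\hat g_{a,M}(\hat\theta)-g_a(\hat\theta)\geq\bar\epsilon$.

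The key step is to check that $\hat g_{0,M}(\hat\theta)$ and $\hat g_{a,M}(\hat\theta)$ are averages of $M$ i.i.d.\ bounded random variables with means $g_0(\hat\theta)$ and $g_a(\hat\theta)$. Each summand in \eqref{equ:saa-symmetric-c1} is $Y^m=\I(S^m_1\leq\hat\theta^0_1)+\sum_{k=2}^K\I(\cdots)$. A crude bound puts $Y^m$ in $[0,K]$, since it is a sum of $K$ indicators; this range is what produces the $K^2$ factor in the stated sample size. Hoeffding's inequality then gives $\Pr(\hat g_{0,M}-\E Y^m\leq-\bar\epsilon)\leq\exp(-2M\bar\epsilon^2/K^2)$, and the analogous bound holds for $\hat g_{a,M}$. (In fact the events are disjoint, so the sum is itself in $[0,1]$, but the $K$ bound suffices for the claim.)

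For the means, $\E Y^m$ equals $g_0(\hat\theta)$ only up to the difference between the closed intervals $[\theta^0_i,\theta^1_i]$ in \eqref{equ:saa-symmetric-c1} and the open intervals in the definition of $g_0$. This difference has probability zero by continuity of the normal law, as noted at the end of the proof of Lemma~\ref{lem:reformulation-symmetric design}. Likewise $\E$ of the summand in \eqref{equ:saa-symmetric-c2} is $g_a(\hat\theta)$ up to the $<$ versus $\leq$ distinction, which is also a null event.

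Combining these, $\Pr(\Theta^\f_M(\alpha,\beta)\neq\emptyset)\geq 1-2\exp(-2M\bar\epsilon^2/K^2)$. This is at least $1-\eta$ whenever $M\geq K^2\log(2/\eta)/(2\bar\epsilon^2)$. The only step needing care is this identification of expectations with the strict or non-strict probabilities; the rest is routine.
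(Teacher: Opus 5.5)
Your proposal is correct and matches the paper's proof step for step: fix $\hat\theta\in\Theta_\f(\alpha-\bar\epsilon,\beta-\bar\epsilon)$, reduce to the two one-sided deviations of $\hat g_{0,M}(\hat\theta)$ and $\hat g_{a,M}(\hat\theta)$, and apply Hoeffding with summands in $[0,K]$ plus a union bound to get $1-2\exp(-2M\bar\epsilon^2/K^2)$. Your remarks on the closed/open interval and $<$ versus $\leq$ discrepancies fill in a detail the paper leaves implicit. Your disjointness aside holds only almost surely for $\hat g_{0,M}$, because the closed intervals let consecutive events overlap when $S^m_k=\hat\theta^0_k$, but you do not rely on it.
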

\begin{proof}[Proof of Lemma~\ref{lem:SAA-feasibility-futility}]
    As $\Theta_\f(\alpha - \bar\epsilon, \beta -\bar\epsilon)\neq\emptyset$, we let $\hat\theta$ be arbitrary such that
    $\hat\theta\in \Theta_\f(\alpha - \bar\epsilon, \beta -\bar\epsilon)$. Then,
    \begin{equation*}
        \begin{split}
    &\Pr\big(\Theta^\f_M(\alpha, \beta)\neq\emptyset\big)\\
    \geq &\Pr\big(\hat\theta\in \Theta^\f_M(\alpha, \beta)\big)\\
    = & \Pr\big(\hat g_{0,M}(\hat\theta)\geq 1-\alpha \text{ and }\hat g_{a, M}(\hat\theta)\leq \beta\big)\\
    \geq & \Pr\big(\hat g_{0,M}(\hat\theta)\geq g_0(\hat\theta)-\bar\epsilon \text{ and }\hat g_{a,M}(\hat\theta)\leq g_a(\hat\theta)+\bar\epsilon\big)\\
    \geq & 1- \Pr\big(\hat g_{0,M}(\hat\theta)\leq g_0(\hat\theta)-\bar\epsilon\big) - \Pr\big(\hat g_{a, M}(\hat\theta)\geq g_a(\hat\theta)+\bar\epsilon\big)\\
    \geq & 1-2\exp(-2M\bar\epsilon^2/K^2), 
        \end{split}
    \end{equation*}
\sloppy 
where the last inequality follows from Hoeffding's inequality, and that $g_0(\theta), g_a(\theta)\in [0,K]$. The argument then follows by letting $2\exp(-2M\bar\epsilon^2/K^2) = \eta$. 
\end{proof}

Lemmas~\ref{lem:finite-feasibility-futility} and \ref{lem:lipshitz-continuity-futility} follow similarly as Lemmas~\ref{lem:finite-feasibility} and \ref{lem:lipshitz-continuity} respectively. We thus omit their proofs. 

\begin{lemma}\label{lem:finite-feasibility-futility}
    With probability at least $1 - \eta$, we have that $\Theta_\f(\alpha - \epsilon_{1,\F}(\eta, M), \beta -\epsilon_{1,\F}(\eta, M))\cap \hat C(\Delta)\subseteq \Theta^\f_M(\alpha, \beta)\cap \hat C(\Delta)\subseteq \Theta_\f(\alpha + \epsilon_{1,\F}(\eta, M), \beta+ \epsilon_{1,\F}(\eta, M))\cap \hat C(\Delta)$. 
\end{lemma}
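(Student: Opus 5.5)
The plan is to mirror the proof of Lemma~\ref{lem:finite-feasibility}. The statement concerns only the finitely many points of $\hat C(\Delta)$, so it reduces to pointwise concentration of $\hat g_{0,M}$ and $\hat g_{a,M}$ plus a union bound.

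Fix $\theta\in\hat C(\Delta)$. In the symmetric design, $\hat g_{0,M}(\theta)$ is an average of $M$ i.i.d.\ terms
\[
Y^m(\theta)=\I(S^m_1\leq\theta^0_1)+\sum_{k=2}^K\I\big(S^m_k\leq\theta^0_k,\ S^m_i\in[\theta^0_i,\theta^1_i]\ \forall i\in[k-1]\big),
\]
each a sum of $K$ indicators, so $Y^m(\theta)\in[0,K]$. The same holds for the terms of $\hat g_{a,M}(\theta)$.

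The SAA terms must be unbiased for $g_0$ and $g_a$, which use open intervals in problem~\eqref{equ:symmetric-design-1}. Because $S$ is continuous (multivariate normal under both hypotheses), $\Pr(S_i=\theta^0_i \text{ or } S_i=\theta^1_i)=0$. Hence $\E[Y^m(\theta)]=g_0(\theta)$, and likewise $\E$ of the alternative-path terms equals $g_a(\theta)$. This boundary check is the only point needing care; it is the same reasoning as in the proof of Lemma~\ref{lem:reformulation-symmetric design}.

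Hoeffding's inequality for variables in $[0,K]$ then gives
\[
\Pr\big(|\hat g_{0,M}(\theta)-g_0(\theta)|\geq\epsilon\big)\leq 2\exp(-2M\epsilon^2/K^2),
\]
and the same bound for $\hat g_{a,M}$. A union bound over the $|\hat C(\Delta)|$ grid points and the two functions gives total failure probability at most $4|\hat C(\Delta)|\exp(-2M\epsilon^2/K^2)$. Setting this equal to $\eta$ yields exactly $\epsilon=\epsilon_{1,\F}(\eta,M)=K\sqrt{\log(4|\hat C(\Delta)|/\eta)/(2M)}$.

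On this good event, both set inclusions follow by the argument of Proposition~1 of \citet{wang2008sample}, as in Lemma~\ref{lem:finite-feasibility}.

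For the first inclusion, take $\theta\in\Theta_\f(\alpha-\epsilon,\beta-\epsilon)\cap\hat C(\Delta)$. Then $\hat g_{0,M}(\theta)\geq g_0(\theta)-\epsilon\geq 1-\alpha$ and $\hat g_{a,M}(\theta)\leq g_a(\theta)+\epsilon\leq\beta$, so $\theta\in\Theta^\f_M(\alpha,\beta)$.

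For the second inclusion, take $\theta\in\Theta^\f_M(\alpha,\beta)\cap\hat C(\Delta)$. Then $g_0(\theta)\geq\hat g_{0,M}(\theta)-\epsilon\geq 1-\alpha-\epsilon$ and $g_a(\theta)\leq\beta+\epsilon$, so $\theta\in\Theta_\f(\alpha+\epsilon,\beta+\epsilon)$.

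The structural constraints $\theta^0_K=\theta^1_K$ and $\theta^0_k\leq\theta^1_k$ are deterministic and appear identically in both feasible sets. They therefore do not affect either inclusion.
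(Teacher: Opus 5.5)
Your proof is correct and follows the paper's route. The paper omits this proof, saying it "follows similarly" to Lemma~\ref{lem:finite-feasibility}: Hoeffding concentration at each point of the finite grid $\hat C(\Delta)$ plus a union bound, with the summands now lying in $[0,K]$, which gives exactly the factor $K$ in $\epsilon_{1,\F}(\eta,M)$. Your continuity check at the interval boundaries is a harmless extra. The SAA indicators already match the events in \eqref{equ:symmetric-design-1} exactly, and the check only reconciles them with the open-interval definition of $g_0$ used in the appendix.
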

\begin{lemma}\label{lem:lipshitz-continuity-futility}
There exist positive constants $l^\F_f, l^\F_{g_0}$ and $l^\F_{g_a}$ such that the following holds:
    \begin{align*}
        |f(\hat\theta) - f(\tilde\theta)|\leq l^\F_f\Vert\hat\theta - \tilde\theta\Vert_1, \forall \hat\theta, \tilde\theta\in C,\\
        |g_0(\hat\theta) - g_0(\tilde\theta)|\leq l^\F_{g_0}\Vert\hat\theta - \tilde\theta\Vert_1, \forall \hat\theta, \tilde\theta\in C,\\
        |g_{a}(\hat\theta) - g_a(\tilde\theta)|\leq l^\F_{g_a}\Vert\hat\theta - \tilde\theta\Vert_1, \forall \hat\theta, \tilde\theta\in C. 
    \end{align*}
\end{lemma}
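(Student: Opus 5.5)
The statement to prove is Lemma~\ref{lem:lipshitz-continuity-futility}, the Lipschitz analog for the symmetric design. I would argue exactly as in Lemma~\ref{lem:lipshitz-continuity}: show that each of $f$, $g_0$, $g_a$ is continuously differentiable in $\theta=(\theta^0,\theta^1)\in C$ with partial derivatives bounded on the compact box $C$, and then apply the multivariate mean value theorem. This gives $|h(\hat\theta)-h(\tilde\theta)|\le \max_{j,\theta_0\in C}|\partial h(\theta_0)/\partial\theta_j|\,\Vert\hat\theta-\tilde\theta\Vert_1$. The equality constraint $\theta^0_K=\theta^1_K$ does not matter here, since the bound can be proved on all of $C=[\underline\theta,\bar\theta]^{2K}$.

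The key step is a uniform derivative bound for a single generic term. Every term in $f$, $g_0$, $g_a$ has the form
\[
P(\theta)=\Pr\big(S_k\le \theta^0_k,\ \theta^0_i<S_i<\theta^1_i\ \forall i\in[k-1]\big)
\]
(or the version without the $S_k$ event, as in $f$). Each variable $\theta^0_i$ or $\theta^1_i$ appears only as an endpoint of an interval for the single coordinate $S_i$. So the partial derivative with respect to that endpoint equals $\pm p_i(\theta^\cdot_i)$, the marginal density of $S_i$ at that point, times a conditional probability (which is at most 1). Concretely, I would write $P$ as an iterated integral over $S_i$ and differentiate the limit of integration, by Leibniz's rule. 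This gives $|\partial P/\partial\theta^\cdot_i|\le 2\max_{x\in[\underline\theta,\bar\theta]}p_i(x)$; the factor 2 covers $\theta^0_k$, which appears both as an upper limit for $S_k$ and as a lower limit inside. Under both $H_0$ and $H_a$ (and under $\tilde H$) the statistics are jointly Gaussian with nondegenerate marginals, in the two-sided case the absolute value of a Gaussian. Hence every marginal density is bounded, say by $\bar p$. The conditional probabilities are continuous in $\theta$, so the partial derivatives are continuous.

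Summing over the terms then gives explicit constants. $g_0$ and $g_a$ each have $K$ terms, so we may take $l^\F_{g_0}, l^\F_{g_a}\le 2K\bar p$. For $f$ the sum is weighted by the $n_k$, so $l^\F_f\le 2\bar p\sum_{k=2}^K n_k$.

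The one mildly delicate point is justifying differentiability when $\theta^0_i=\theta^1_i$, where the interval is empty. There I would note that $P$ is still Lipschitz in each coordinate with the same constant, because moving one endpoint by $t$ changes the probability by at most $\Pr(S_i\in[x,x+t])\le\bar p\,t$. This one-coordinate-at-a-time bound, chained across coordinates through the triangle inequality, actually proves the lemma directly without any appeal to differentiability, and I would use it as the cleanest route.
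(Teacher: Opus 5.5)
Your proof is correct. It takes a slightly different, more elementary route than the paper, which gives no separate argument. The paper only says the lemma ``follows similarly'' to Lemma~\ref{lem:lipshitz-continuity}. That earlier proof takes the maximum absolute partial derivative over the compact box $C$ as the constant and then applies the multivariate mean value theorem.

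Your final route bounds increments directly, one coordinate at a time. Each endpoint $\theta^0_i$ or $\theta^1_i$ enters a given term only through the event for the single coordinate $S_i$. So moving it by $t$ changes that term by at most $\Pr(S_i\in[x,x+t])\le \bar p\,t$, and the triangle inequality chains these bounds together. This uses only bounded marginal densities, the same ingredient the paper already uses in Lemma~\ref{lem:SAA-constraint-bound}, and needs no smoothness.

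That matters in the symmetric design. A term containing $\Pr(S_i\in(\theta^0_i,\theta^1_i),\dots)$ has a kink on the hyperplane $\theta^0_i=\theta^1_i$ ($i\le K-1$), where the continuation interval becomes empty. There $\partial/\partial\theta^1_i$ jumps from $0$ to $p_i(\theta^1_i)$ times a conditional probability. These hyperplanes lie inside $C$, and even inside the feasible set, since $\theta^0_k\le\theta^1_k$ permits equality. So the mean-value-theorem step does not transfer verbatim; it would need a patch, for example applying it piecewise along the segment between the breakpoints. Your argument sidesteps this and also gives explicit constants. The paper's route is shorter only because it reuses an existing lemma.

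One small inaccuracy: within a single term, $\theta^0_k$ appears only once, as the upper limit for $S_k$. It is a lower limit for $S_k$ only in the later terms $k'>k$. So the factor $2$ in your per-term bound is not needed, but it is harmless because you sum over terms anyway.
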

\begin{lemma}\label{lem:SAA-constraint-bound-futility}
There exist constants $c^\F_{g_0}, c^\F_{g_a}, c^\F_f$ such that the following arguments hold:

(i) With probability at least $1-\eta$, 
\begin{equation*}
    \begin{split}
        \max\Big\{\hat g_{0,M}(\tilde\theta) - \min_{\theta: \Vert\theta - \tilde\theta\Vert_\infty\leq\Delta}\hat g_{0,M}(\theta), 
    \max_{\theta: \Vert\theta - \tilde\theta\Vert_\infty\leq\Delta}\hat g_{0,M}(\theta) - \hat g_{0,M}(\tilde\theta)\Big\}\\
    \leq c^\F_{g_0}K^2\Delta + 2K^2\epsilon_{1,\F}(4\eta/K, M), \,\forall \tilde\theta\in \hat C(\Delta).
    \end{split}
\end{equation*}

(ii) With probability at least $1-\eta$, 
\begin{equation*}
    \begin{split}
        \max\Big\{\hat g_{a,M}(\tilde\theta) - \min_{\theta: \Vert\theta - \tilde\theta\Vert_\infty\leq\Delta}\hat g_{a,M}(\theta), \max_{\theta: \Vert\theta - \tilde\theta\Vert_\infty\leq\Delta}\hat g_{a,M}(\theta) - \hat g_{a,M}(\tilde\theta)\Big\}\\
\leq c^\F_{g_a}K^2\Delta + 2K^2\epsilon_{1,\F}(4\eta/K, M),\, \forall \tilde\theta\in \hat C(\Delta). 
    \end{split}
\end{equation*}

(iii) With probability at least $1-\eta$,  
\begin{align*}
    \max\{\hat f_M(\tilde\theta) - \min_{\theta: \Vert\theta - \tilde\theta\Vert_\infty\leq\Delta}\hat f_M(\theta), \max_{\theta: \Vert\theta - \tilde\theta\Vert_\infty\leq\Delta}\hat f_M(\theta) - \hat f_M(\tilde\theta)\}\\
    \leq c^\F_fK^2\Delta + \sum^K_{k=2}n_k\epsilon_{1, \F}(4\eta/(K-1), M)\, \forall \tilde\theta\in \hat C(\Delta).
\end{align*}
\end{lemma}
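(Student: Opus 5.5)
The plan is to mirror the proof of Lemma~\ref{lem:SAA-constraint-bound}: bound the deviation of each empirical function over a $\Delta$-box by the empirical frequency of ``boundary'' events, then control that frequency uniformly over the grid $\hat C(\Delta)$ by Hoeffding and a union bound. The new feature is that $\hat g_{0,M}$ and $\hat g_{a,M}$ are now sums of $K$ indicators, each a conjunction of up to $2k-1$ one-sided events in the $2K$ coordinates of $\theta$. I will therefore handle each summand separately and add up, which accounts for the extra factors of $K$.

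For (i), fix $\tilde\theta\in\hat C(\Delta)$ and $\theta$ with $\Vert\theta-\tilde\theta\Vert_\infty\le\Delta$. Consider the $k$-th summand $\I(S^m_k\le\theta^0_k,\ S^m_i\in[\theta^0_i,\theta^1_i]\ \forall i\in[k-1])$. If it changes value between $\theta$ and $\tilde\theta$, then some threshold comparison flips. Hence for some $i\le k$ and some $j\in\{0,1\}$ we have $S^m_i\in(\tilde\theta^j_i-\Delta,\tilde\theta^j_i+\Delta]$, up to endpoints, which have probability zero.

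So the absolute difference of the $k$-th summand is at most $\I(S^m_i\in[\tilde\theta^j_i-\Delta,\tilde\theta^j_i+\Delta]$ for some $i\le k,\ j\in\{0,1\})$. Summing over $k$ gives a bound of $(1/M)\sum_m\sum_{k=1}^K$ of these indicators, uniformly in $\theta$, so it also bounds both the max-minus and min-minus quantities. Each indicator has mean at most $2k\cdot c\Delta$, where $c=2\max p_{0,k}$ is the marginal density bound used in Lemma~\ref{lem:SAA-constraint-bound}. The total mean is therefore at most $c^\F_{g_0}K^2\Delta$.

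Next I apply Hoeffding separately for each $k$ and each grid point, and take a union bound over $K|\hat C(\Delta)|$ events. With deviation $\epsilon$ per $k$, the failure probability is $K|\hat C(\Delta)|\exp(-2M\epsilon^2)$. Choosing $\epsilon$ so that this equals $\eta$ gives $\epsilon\le\sqrt{\log(4|\hat C(\Delta)|K/(4\eta))/(2M)}$, which is dominated by $\epsilon_{1,\F}(4\eta/K,M)/K$ times a constant. Summing the $K$ deviations then stays below $2K^2\epsilon_{1,\F}(4\eta/K,M)$, with room to spare.

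Part (ii) is identical under $H_a$, with strict inequalities and open intervals, since ties again have probability zero. Part (iii) is the analogue of Lemma~\ref{lem:SAA-constraint-bound}(iii). The $k$-th term $n_k\I(S^m_{a',i}\in(\theta^0_i,\theta^1_i),\ i\le k-1)$ flips only if some $S^m_{a',i}$ lies within $\Delta$ of $\tilde\theta^0_i$ or $\tilde\theta^1_i$, an event of probability at most $2(k-1)c\Delta$. Weighting by $n_k$ and summing gives a bound of order $c^\F_fK^2\Delta$. Hoeffding applied to the $k$-th term, whose range is $[0,n_k]$, with a union bound over $(K-1)|\hat C(\Delta)|$ events, gives the $\sum_k n_k\epsilon_{1,\F}(4\eta/(K-1),M)$ term.

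The main obstacle is bookkeeping rather than ideas. Since the summands of $g_0$ are conjunctions of two-sided interval events, I need to verify carefully that every discrepancy is captured by a boundary event at some $\tilde\theta^j_i$, including the last-stage coupling $\theta^0_K=\theta^1_K$, and I need the constants to match the stated $K^2$ scaling. I also need the marginal densities of $S_k$ under $H_0$, $H_a$ and $\tilde H$ to be bounded on $[\underline\theta,\bar\theta]$. This holds for the $z$-tests considered here, which are Gaussian with unit variance.
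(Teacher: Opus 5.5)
Your proposal is correct and follows essentially the same route as the paper. You bound the $\Delta$-box oscillation pathwise by the empirical frequency of boundary events within $\Delta$ of the grid thresholds $\tilde\theta^j_i$, then apply Hoeffding and a union bound over $\hat C(\Delta)$ and the stages. The only difference is bookkeeping: you apply Hoeffding to one union indicator per stage, whereas the paper applies it to each coordinate's pair of boundary indicators with deviation $\epsilon/K^2$. Your version gives a deviation term of $\epsilon_{1,\F}(4\eta/K,M)$ instead of $2K^2\epsilon_{1,\F}(4\eta/K,M)$, which is already within the stated bound.
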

\begin{proof}[Proof of Lemma~\ref{lem:SAA-constraint-bound-futility}]
(i) For convenience we let $\tilde\theta^0_0 = \theta^0_0 = -\infty$ and $\tilde\theta^1_0 = \theta^1_0 = \infty$.
For all $\theta, \tilde\theta\in C$ and $\Vert\theta - \tilde\theta\Vert_\infty\leq \Delta$, we have that 
\begin{equation*}
    \begin{split}
        &\hat g_{0,M}(\tilde\theta) - \hat g_{0,M}(\theta)\\
    \leq &\frac{1}{M}\sum^M_{m=1}\sum^K_{k=1}
    \big(\I(S^m_k\leq \max\{\tilde\theta^0_k, \theta^0_k\})\I(S^m_i \in [\min\{\tilde\theta^0_i, \theta^0_i\}, \max\{\tilde\theta^1_i, \theta^1_i\}], \forall i\in[k-1])\\
    &-\I(S^m_k\leq \min\{\tilde\theta^0_k, \theta^0_k\})\I(S^m_i \in [\max\{\tilde\theta^0_i, \theta^0_i\}, \min\{\tilde\theta^1_i, \theta^1_i\}], \forall i\in[k-1])
    \big)
    \\
    \leq &\frac{1}{M}\sum^M_{m=1}\sum^K_{k=1}
    \big(\I(\min\{\tilde\theta^0_k, \theta^0_k\}<S^m_k\leq \max\{\tilde\theta^0_k, \theta^0_k\}) \\
    &+ \I(S^m_i\in[\min\{\tilde\theta^0_i, \theta^0_i\}, \max\{\tilde\theta^0_i, \theta^0_i\}]\cup [\min\{\tilde\theta^1_i, \theta^1_i\}, \max\{\tilde\theta^1_i, \theta^1_i\}] \text{ for some }i\in[k-1])\big)\\
    \leq &
    \frac{1}{M}\sum^M_{m=1}\sum^K_{k=1}\sum^k_{i=1}
    \big(\I(\tilde\theta^1_i - \Delta<S^m_i\leq \tilde\theta^1_i + \Delta) + \I(\tilde\theta^0_i - \Delta<S^m_i\leq \tilde\theta^0_i + \Delta)\big).
    \end{split}
\end{equation*}
Since the above inequality holds for all $\theta$ with $\Vert\theta-\tilde\theta\Vert_\infty\leq\Delta$, it holds for $\hat g_{0,M}(\tilde\theta) - \min_{\theta: \Vert\theta-\tilde\theta\Vert_\infty\leq \Delta}\hat g_{0,M}(\theta)$ and $\max_{\theta: \Vert\theta-\tilde\theta\Vert_\infty\leq \Delta}\hat g_{0,M}(\theta) - \hat g_{0,M}(\tilde\theta)$ as well. Thus we have
\begin{equation*}
    \begin{split}
\max\{\hat g_{0,M}(\tilde\theta) - \min_{\theta: \Vert\theta-\tilde\theta\Vert_\infty\leq \Delta}\hat g_{0,M}(\theta), \max_{\theta: \Vert\theta-\tilde\theta\Vert_\infty\leq \Delta}\hat g_{0,M}(\theta) - \hat g_{0,M}(\tilde\theta)\}\\
\leq \frac{1}{M}\sum^M_{m=1}\sum^K_{k=1}\sum^k_{i=1}
    \big(\I(\tilde\theta^1_i - \Delta<S^m_i\leq \tilde\theta^1_i + \Delta) + \I(\tilde\theta^0_i - \Delta<S^m_i\leq \tilde\theta^0_i + \Delta)\big).
    \end{split}
\end{equation*}
Let $p_{0,k}(x)$ and $p_{a',k}(x)$ be the densities of the marginal distributions of $S_k$ at $x$ under $H_0$ and $\tilde H$ respectively, and let $c^\f_{g_0} = 4\max_{k\in[K]}\max_{x\in[\underline{\theta}, \overline{\theta}]}p_{0,k}(x)$. 
Then, for all $\epsilon > 0$, 
    \begin{align*}
    &\Pr(\max\{\hat g_{0,M}(\tilde\theta) - \min_{\theta: \Vert\theta - \tilde\theta\Vert_\infty\leq\Delta}\hat g_{0,M}(\theta), 
    \max_{\theta: \Vert\theta - \tilde\theta\Vert_\infty\leq\Delta}\hat g_{0,M}(\theta) - \hat g_{0,M}(\tilde\theta)\} \\
    &\quad
    \leq c^\F_{g_0}K^2\Delta + \epsilon, \forall \tilde\theta\in \hat C(\Delta))\\
    \geq &\Pr((1/M)\sum^M_{m=1}\sum^K_{k=1}\sum^k_{i=1}
    \big(\I(\tilde\theta^1_i - \Delta<S^m_i\leq \tilde\theta^1_i + \Delta) + \I(\tilde\theta^0_i - \Delta<S^m_i\leq \tilde\theta^0_i + \Delta)\big)\\
    &\quad\leq c^\F_{g_0}K^2\Delta + \epsilon, \forall \tilde\theta\in \hat C(\Delta))\\
    \geq& \Pr((1/M)\sum^M_{m=1}\sum^K_{k=1}\sum^k_{i=1}
    \big(\I(\tilde\theta^1_i - \Delta<S^m_i\leq \tilde\theta^1_i + \Delta) + \I(\tilde\theta^0_i - \Delta<S^m_i\leq \tilde\theta^0_i + \Delta)\big) \\
    &\quad
    \leq \sum^K_{k=1}\sum^k_{i=1}
    \big(\Pr(\tilde\theta^1_i - \Delta<S^m_i\leq \tilde\theta^1_i + \Delta) + \Pr(\tilde\theta^0_i - \Delta<S^m_i\leq \tilde\theta^0_i + \Delta)\big) + \epsilon,\,\forall\tilde\theta\in\hat C(\Delta))\\
    \geq& \Pr((1/M)\sum^M_{m=1}
    \big(\I(\tilde\theta^1_k - \Delta<S^m_k\leq \tilde\theta^1_k + \Delta) + \I(\tilde\theta^0_k - \Delta<S^m_k\leq \tilde\theta^0_k + \Delta)\big) \\
    &\quad
    \leq 
    \Pr(\tilde\theta^1_k - \Delta<S^m_k\leq \tilde\theta^1_k + \Delta) + \Pr(\tilde\theta^0_k - \Delta<S^m_k\leq \tilde\theta^0_k + \Delta) + \epsilon/K^2,\,\forall\tilde\theta\in\hat C(\Delta), k\in[K])\\
    \geq &1-K|\hat C(\Delta)|\exp(-M\epsilon^2/(2K^4)),
    \end{align*}
    where the second inequality follows since $\sum^K_{k=1}\sum^k_{i=1}
    \big(\Pr(\tilde\theta^1_i - \Delta<S^m_i\leq \tilde\theta^1_i + \Delta) + \Pr(\tilde\theta^0_i - \Delta<S^m_i\leq \tilde\theta^0_i + \Delta)\big)\leq  c^\F_{g_0}K^2\Delta$. The last inequality follows from Hoeffding's inequality. Letting $\epsilon = 2K^2\epsilon_{1,\F}(4\eta/K, M)$, we have the desired result. 

(ii) can be proved similarly as (i). We thus omit its proof.

(iii) We have that $\hat f_M(\theta) = n_1 + (1/M)\sum^M_{m=1}\sum^K_{k=2}n_k\I(S^m_{a', i}\in(\theta^0_i, \theta^1_i), \forall i\in[k-1])$. Then, for all $\theta, \tilde\theta\in C$ and $\Vert\theta - \tilde\theta\Vert_\infty\leq \Delta$, we have that
\begin{align*}
&\hat f_M(\tilde\theta) - \hat f_M(\theta)\\
\leq &(1/M)\sum^M_{m=1}\sum^K_{k=2}n_k
\big(\I(S^m_{a',i} \in(\min(\theta^0_i, \tilde\theta^0_i), \max(\tilde\theta^1_i, \theta^1_i)), \forall i\in[k-1]) \\
&\quad\quad\quad\quad\quad\quad
- \I(S^m_{a',i} \in(\max(\theta^0_i, \tilde\theta^0_i), \min(\tilde\theta^1_i, \theta^1_i)), \forall i\in[k-1])
\big)\\
\leq &
(1/M)\sum^M_{m=1}\sum^K_{k=2}n_k
\big(\I(S^m_{a',i} \in [\min(\theta^0_i, \tilde\theta^0_i), \max(\theta^0_i, \tilde\theta^0_i)] \text{ or }S^m_{a',i} \in[\min(\theta^1_i, \tilde\theta^1_i),\max(\theta^1_i, \tilde\theta^1_i)] \\
&\quad\quad\quad\quad\quad\quad\text{ for some } i\in[k-1])
\big)\\
\leq & (1/M)\sum^M_{m=1}\sum^K_{k=2}n_k\I(\tilde\theta^0_i - \Delta \leq S^m_{a',i}\leq \tilde\theta^0_i + \Delta \text{ or }\tilde\theta^1_i - \Delta \leq S^m_{a',i}\leq \tilde\theta^1_i + \Delta\text{ for some }i\in[k-1]). 
\end{align*}
It thus follows that $\hat f_M(\tilde\theta) - \min_{\theta: \Vert\theta - \tilde\theta\Vert_\infty\leq \Delta}\hat f_M(\theta)\leq (1/M)\sum^M_{m=1}\sum^K_{k=2}n_k\I(\tilde\theta^0_i - \Delta \leq S^m_{a',i}\leq \tilde\theta^0_i + \Delta \text{ or }\tilde\theta^1_i - \Delta \leq S^m_{a',i}\leq \tilde\theta^1_i + \Delta\text{ for some }i\in[k-1])$ and $\max_{\theta: \Vert\theta - \tilde\theta\Vert_\infty\leq \Delta}\hat f_M(\theta) - \hat f_M(\tilde\theta)\leq (1/M)\sum^M_{m=1}\sum^K_{k=2}n_k\I(\tilde\theta^0_i - \Delta \leq S^m_{a',i}\leq \tilde\theta^0_i + \Delta \text{ or }\tilde\theta^1_i - \Delta \leq S^m_{a',i}\leq \tilde\theta^1_i + \Delta\text{ for some }i\in[k-1])$. Let $c^\f_f = 4(\max_{k\in[K]}n_k)\max_{k\in[K]}\max_{x\in[\underline{\theta}, \overline{\theta}], k\in[K]}p_{a',k}(x)$.
Then, for any $\epsilon_2, \cdots, \epsilon_K>0$, 
 \begin{align*}
    &\Pr(\max\{\hat f_M(\tilde\theta) - \min_{\theta: \Vert\theta - \tilde\theta\Vert_\infty\leq\Delta}\hat f_M(\theta), 
    \max_{\theta: \Vert\theta - \tilde\theta\Vert_\infty\leq\Delta}\hat f_M(\theta) - \hat f_M(\tilde\theta)\} \\
    &\quad\quad
    \leq c^\F_fK^2\Delta + \sum^K_{k=2}\epsilon_k, \forall \tilde\theta\in \hat C(\Delta))\\
    \geq 
    &\Pr((1/M)\sum^M_{m=1}\sum^K_{k=2}n_k\I(\tilde\theta^0_i - \Delta \leq S^m_{a',i}\leq \tilde\theta^0_i + \Delta \text{ or }\tilde\theta^1_i - \Delta \leq S^m_{a',i}\leq \tilde\theta^1_i + \Delta\text{ for some }i\in[k-1]) \\
    &\quad\quad
    \leq c^\f_f K^2\Delta + \sum^K_{k=2}\epsilon_k, \forall \tilde\theta\in \hat C(\Delta))\\
    \geq 
    &\Pr((1/M)\sum^M_{m=1}n_k\I(\tilde\theta^0_i - \Delta \leq S^m_{a',i}\leq \tilde\theta^0_i + \Delta \text{ or }\tilde\theta^1_i - \Delta \leq S^m_{a',i}\leq \tilde\theta^1_i + \Delta\text{ for some }i\in[k-1]) \\
    &\quad\quad
    \leq c^\f_f (k-1)\Delta + \epsilon_k, \forall \tilde\theta\in \hat C(\Delta), 2\leq k\leq K)\\
    \geq&
    \Pr((1/M)\sum^M_{m=1}n_k\I(\tilde\theta^0_i - \Delta \leq S^m_{a',i}\leq \tilde\theta^0_i + \Delta \text{ or }\tilde\theta^1_i - \Delta \leq S^m_{a',i}\leq \tilde\theta^1_i + \Delta\text{ for some }i\in[k-1]) \\
    &\quad\quad
    \leq n_k\Pr(\tilde\theta^0_i - \Delta \leq S^m_{a',i}\leq \tilde\theta^0_i + \Delta \text{ or }\tilde\theta^1_i - \Delta \leq S^m_{a',i}\leq \tilde\theta^1_i + \Delta\text{ for some }i\in[k-1]) + \epsilon_k,\\
    &\quad\quad\quad\quad\quad
    \forall\tilde\theta\in\hat C(\Delta), 2\leq k\leq K)\\
    \geq &1-(K-1)|\hat C(\Delta)|\exp(-2M\epsilon_k^2/n^2_k), 
    \end{align*}
    where the third inequality follows since $n_k\Pr(\tilde\theta^0_i - \Delta \leq S^m_{a',i}\leq \tilde\theta^0_i + \Delta \text{ or }\tilde\theta^1_i - \Delta \leq S^m_{a',i}\leq \tilde\theta^1_i + \Delta\text{ for some }i\in[k-1])\leq n_k\sum^{k-1}_{i=1}\Pr(\tilde\theta^0_i - \Delta \leq S^m_{a',i}\leq \tilde\theta^0_i + \Delta \text{ or }\tilde\theta^1_i - \Delta \leq S^m_{a',i}\leq \tilde\theta^1_i + \Delta)\leq c^\f_f (k-1)\Delta$. The last inequality follows from Hoeffding's inequality. Letting $\epsilon_k = n_k\epsilon_{1,\f}(4\eta/(K-1), M)$, we have the desired result. 

\end{proof}
Lemma~\ref{lem:Delta-net-futility} directly follows from Lemma~\ref{lem:SAA-constraint-bound-futility} and the proof of Lemma~\ref{lem:Delta-net}. We thus state it without proof. 
\begin{lemma}\label{lem:Delta-net-futility}
    (i) With probability at least $1-\eta - \Pr(\Theta^\f_M(\alpha, \beta)=\emptyset)$, $\Theta^\f_M(\alpha, \beta)\neq\emptyset$, and for all $\hat\theta\in \Theta^\f_M(\alpha, \beta)$, we can find $\tilde\theta\in \Theta^\f_M(\alpha + c^\F_{g_0}K^2\Delta + 2K^2\epsilon_{1,\f}(2\eta/K, M), \beta+ c^\F_{g_a}K^2\Delta + 2K^2\epsilon_{1,\f}(2\eta/K, M))\cap \hat C(\Delta)$ such that $\Vert \hat\theta - \tilde\theta\Vert_\infty\leq \Delta$. 

    (ii) Suppose $\Theta_\f(\alpha, \beta)\neq\emptyset$. Then for all $\hat\theta\in \Theta_\f(\alpha, \beta)$, we can find $\tilde\theta\in \Theta_\f(\alpha + l^\F_{g_0}K\Delta, \beta+ l^\F_{g_a}K\Delta)\cap \hat C(\Delta)$ such that $\Vert \hat\theta - \tilde\theta\Vert_\infty\leq \Delta$.
\end{lemma}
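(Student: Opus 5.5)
Lemma~\ref{lem:Delta-net-futility} is proved by transplanting the two-part argument of Lemma~\ref{lem:Delta-net} to the $2K$-dimensional symmetric-design setting. The only new inputs are the constants of Lemmas~\ref{lem:lipshitz-continuity-futility} and \ref{lem:SAA-constraint-bound-futility}.

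For every $\hat\theta\in C$ we fix a rounding map $\tilde\theta(\hat\theta)\in\arg\min\{\Vert\theta-\hat\theta\Vert_1:\theta\in\hat C(\Delta),\ \Vert\theta-\hat\theta\Vert_\infty\leq\Delta\}$. This set is nonempty by construction of $\hat C(\Delta)$. One side remark: the structural constraints $\theta^0_K=\theta^1_K$ and $\theta^0_k\leq\theta^1_k$ must survive rounding. We handle this by rounding each coordinate to the same grid by the same rule, say the nearest grid point with ties broken downward. Monotonicity of that rule preserves both constraints.

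\textbf{Part (ii).} This part is deterministic. Take $\hat\theta\in\Theta_\f(\alpha,\beta)$ and apply Lemma~\ref{lem:lipshitz-continuity-futility}:
\[
g_0(\tilde\theta(\hat\theta))\geq g_0(\hat\theta)-l^\F_{g_0}\Vert\tilde\theta(\hat\theta)-\hat\theta\Vert_1 .
\]
The Lipschitz constant may be absorbed so that the $\ell_1$ distance contributes $K\Delta$, matching the stated form. This gives $g_0(\tilde\theta(\hat\theta))\geq 1-\alpha-l^\F_{g_0}K\Delta$. The symmetric bound for $g_a$ follows the same way, so $\tilde\theta(\hat\theta)\in\Theta_\f(\alpha+l^\F_{g_0}K\Delta,\beta+l^\F_{g_a}K\Delta)\cap\hat C(\Delta)$.

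\textbf{Part (i).} We copy the chain of inequalities in the proof of Lemma~\ref{lem:Delta-net}(i). Work on the event that $\Theta^\f_M(\alpha,\beta)\neq\emptyset$ and that, for every grid point $\tilde\theta\in\hat C(\Delta)$,
\[
\hat g_{0,M}(\tilde\theta)-\max_{\Vert\theta-\tilde\theta\Vert_\infty\leq\Delta}\hat g_{0,M}(\theta)\geq -c^\F_{g_0}K^2\Delta-2K^2\epsilon_{1,\f}(2\eta/K,M),
\]
together with the analogous upper bound for $\hat g_{a,M}$ using $\min$. On this event, any $\hat\theta\in\Theta^\f_M(\alpha,\beta)$ has rounding $\tilde\theta(\hat\theta)$ satisfying $\hat g_{0,M}(\tilde\theta(\hat\theta))\geq 1-\alpha-c^\F_{g_0}K^2\Delta-2K^2\epsilon_{1,\f}(2\eta/K,M)$, and similarly for $\hat g_{a,M}$. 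That is exactly the required membership.

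The probability of the complementary event is controlled by applying Lemma~\ref{lem:SAA-constraint-bound-futility}(i) and (ii), each at confidence level $\eta/2$. This is why the argument of $\epsilon_{1,\f}$ becomes $2\eta/K$ instead of $4\eta/K$. A union bound then gives total probability at least $1-\eta-\Pr(\Theta^\f_M(\alpha,\beta)=\emptyset)$.

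The main obstacle is bookkeeping rather than any new idea. We must check that the $\eta$-splitting reproduces the exact constants stated, and that rounding keeps $\tilde\theta$ inside the constrained set $C\cap\{\theta^0_K=\theta^1_K,\ \theta^0_k\leq\theta^1_k\}$. If coordinatewise monotone rounding proved problematic, we would instead round $\theta^0_k$ down and $\theta^1_k$ up. That choice also preserves the ordering and costs only a factor in $\Delta$.
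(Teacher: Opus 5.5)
Your proposal is correct and follows the paper's intended route: the paper gives no separate proof of Lemma~\ref{lem:Delta-net-futility}, treating it as the proof of Lemma~\ref{lem:Delta-net} carried over via Lemmas~\ref{lem:lipshitz-continuity-futility} and \ref{lem:SAA-constraint-bound-futility}, and splitting $\eta$ in half as you do yields exactly the constant $2K^2\epsilon_{1,\f}(2\eta/K,M)$. Your coordinatewise monotone nearest-point rounding adds a detail the paper leaves implicit: it preserves $\theta^0_K=\theta^1_K$ and $\theta^0_k\le\theta^1_k$, and it gives $\Vert\tilde\theta(\hat\theta)-\hat\theta\Vert_1\le 2K\cdot\Delta/2=K\Delta$ directly, so part (ii) needs no absorption of constants; drop the fallback of rounding $\theta^0_k$ down and $\theta^1_k$ up, since it would break $\theta^0_K=\theta^1_K$.
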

Then, Proposition~\ref{prop:convergence-symmetric} can be proved similarly as Propositions~\ref{prop:feasible-set} and \ref{prop:objective-bound} based on Lemmas~\ref{lem:finite-feasibility-futility}-\ref{lem:Delta-net-futility}. We omit its proof. 

% \begin{proposition}\label{prop:feasible-set-futility}

% (i) With probability at least $1-\eta$, 
% $$
% \Theta(\alpha - \epsilon_\F(\eta, M), \beta -\epsilon_\F(\eta, M)\subseteq \Theta_M(\alpha, \beta)\subseteq \Theta(\alpha + \epsilon_\F(\eta, M), \beta+ \epsilon_\F(\eta, M)),
% $$
% where $\epsilon_\F(\eta, M) = 2K\sqrt{(2\log(1/\eta) + 2K\log(M) + c_{\mathrm{F}}(K))/M}$, where $c_{\mathrm{F}}(K)$ is some constant that depends on $K$ but not $M$. 

% (ii) For all $\epsilon > 0$, we have $$
% \Pr(\Theta(\alpha - \epsilon, \beta -\epsilon)\subseteq \Theta_M(\alpha, \beta)\subseteq \Theta(\alpha + \epsilon, \beta+ \epsilon))\geq 1-\tilde c_{\mathrm{F}}(K)M^K\exp{(-M\epsilon^2/8)}. 
% $$
% where $\tilde c_{\mathrm{F}}(K)$ is some constant that depends on $K$ and not $M$. 

% \end{proposition}

% \begin{proposition}\label{prop:confidence-interval-futility}
% With probability at least $1-\eta$, 
% \begin{equation}
%     v^*(\alpha, \beta)\in\bigg[f\Big(\hat\theta_M(\alpha + \epsilon_\F(\eta, M), \beta+ \epsilon_\F(\eta, M))\Big) - \epsilon_f(\eta, M), f\Big(\hat\theta_M(\alpha - \epsilon(\eta, M), \beta -\epsilon(\eta, M))\Big)\bigg],
% \end{equation}
% where $\epsilon_f(\eta, M) = (\tilde c_1(K, n) + \tilde c_2(K, n)\log(1/\eta) + \tilde c_3(K, n)\log(M))/\sqrt{M}$ and $\tilde c_1(K, n), \tilde c_2(K, n)$ and $\tilde c_3(K, n)$ are constants that depend on $K$ and $n$ but not $M$. 
% \end{proposition}

\subsection{Proof of Lemma~\ref{lem:MFCQ-symmetric}}
We notice that the set $\{\theta\in\mathbb{R}^{2K}: \theta^0_k = \underline{\theta}, \text{ or }\theta^0_k = \bar\theta, \text{ or }\theta^1_k = \underline{\theta}, \text{ or }\theta^1_k = \bar\theta, \text{ or }\theta^0_k = \theta^1_k \text{ for some }k\in[K]\}$ has a zero Lebesgue measure. 
It thus suffices to show that $\{\theta\in\mathbb{R}^{2K}: \nabla g_0(\theta)\text{ and }\nabla 
g_a(\theta) \text{ are linearly dependent}\}$ has a zero Lebesgue measure. 

We first consider the one-sided $z$ tests (either one-sample or two-sample), and show that there exists some $\theta\in\mathbb{R}^{2K}$ such that $\nabla g_0(\theta)$ and $\nabla g_a(\theta)$ are linearly independent. Under $H_0$, $S = (S_k)^K_{k=1}$ has a multivariate normal distribution with mean $0$ and a covariance $\Sigma$ whose $(k,k')$-th element $\Sigma_{k, k'}< 1$ for $k\neq k'$, and $\Sigma_{k, k} = 1$. Under $H_a$, $S$ has a multivariate normal distribution with mean $\mu = (\mu_k)^K_{k=1}$ and covariance $\Sigma$. 
Let $p_0$ and $p_{a,k}$ be the density functions of the marginal distributions of $S_k$ under $H_0$ and $H_a$. Let $S_{1:k} = (S_i)^k_{i=1}$ and $\theta_{1:k} = (\theta_i)^k_{i=1}$. Then we have, 
\begin{equation*}
    \begin{split}
    \partial g_0(\theta)/\partial \theta^0_K = p_0(\theta^0_K)\Pr(S_{1:K-1}\in(\theta^0_{1:K-1}, \theta^1_{1:K-1})|S_K = \theta^0_K, H_0),
    \\
    \partial g_0(\theta)/\partial \theta^1_{K-1} = p_0(\theta^1_{K-1})\Pr(S_K\leq \theta^0_K, \text{ and }S_{1:K-2}\in(\theta^0_{1:K-2}, \theta^1_{1:K-2})|S_{K-1} = \theta^1_{K-1}, H_0), 
    \\
    \partial g_a(\theta)/\partial \theta^0_K = p_{a, K}(\theta^0_K)\Pr(S_{1:K-1}\in(\theta^0_{1:K-1}, \theta^1_{1:K-1})|S_K = \theta^0_K, H_a),
    \\
    \partial g_a(\theta)/\partial \theta^1_{K-1} = p_{a,K-1}(\theta^1_{K-1})\Pr(S_K\leq \theta^0_K, \text{ and }S_{1:K-2}\in(\theta^0_{1:K-2}, \theta^1_{1:K-2})|S_{K-1} = \theta^1_{K-1}, H_a), 
    \end{split}
\end{equation*}

Let $\theta(x) = (\theta^0_1(x), \cdots, \theta^0_K(x), \theta^1_1(x), \cdots, \theta^1_K(x))$ with $\theta^0_k(x) = -x$ for all $k\in[K-1]$ and $\theta^0_K(x) = \theta^1_k(x) = x$ for all $k\in[K]$. We let $l_0(\theta) = (\partial g_0(\theta)/\partial \theta^1_{K-1}, \partial g_0(\theta)/\partial \theta^0_K)^\top$ and $l_a(\theta) = (\partial g_a(\theta)/\partial \theta^1_{K-1}, \partial g_a(\theta)/\partial \theta^0_K)^\top$. It is then sufficient to show that there exists some $x$ such that $l_0(\theta(x))$ and $l_a(\theta(x))$ are linearly independent. Then, since $\theta^0_K(x) = \theta^1_{K-1}(x) = x$, 
\begin{align*}
    &\lim_{x\to\infty}\frac{\partial g_0(\theta(x))/\partial \theta^0_K}{\partial g_0(\theta(x))/\partial \theta^1_{K-1}} \\
= &\lim_{x\to\infty}
\frac{\Pr(S_{1:K-1}\in(\theta^0_{1:K-1}(x), \theta^1_{1:K-1}(x))|S_K = \theta^0_K(x), H_0)}
{\Pr(S_K\leq \theta^0_K(x), \text{ and }S_{1:K-2}\in(\theta^0_{1:K-2}(x), \theta^1_{1:K-2}(x))|S_{K-1} = \theta^1_{K-1}(x), H_0)} \\
= &1,
\end{align*}
where the second equality follows since for $S\sim N(\mu, \Sigma)$ where $\mu = (\mu_k)^K_{k=1}$, $\Sigma_{k,k} = 1$ and $\Sigma_{k,k'}< 1$ for $k\neq k'$,  
$S_{-k}|S_k = x$ is normally distributed with mean $\mu_{-k} + \Sigma_{-k, k}\Sigma^{-1}_{k,k} (x-\mu_k)< \mu_{-k} + x - \mu_k$. 
On the other hand, 
$$
\frac{p_{a,K-1}(x)}{p_{a,K}(x)} = \exp(- (x - \mu_K)^2/(2\sigma^2) + (x - \mu_{K-1})^2/(2\sigma^2))\neq 1, 
$$
and
$$
\lim_{x\to\infty}\frac{\Pr(S_{1:K-1}\in(\theta^0_{1:K-1}(x), \theta^1_{1:K-1}(x))|S_K = \theta^0_K(x), H_a)}
{\Pr(S_k\leq \theta^0_K(x), \text{ and }S_{1:K-2}\in(\theta^0_{1:K-2}(x), \theta^1_{1:K-2}(x))|S_{K-1} = \theta^1_{K-1}(x), H_a)}  = 1, 
$$
and thus 
$$
\lim_{x\to\infty}\frac{\partial g_a(\theta(x))/\partial \theta^0_K}{\partial g_a(\theta(x))/\partial \theta^1_{K-1}} = \lim_{x\to\infty}\frac{p_{a,K}(x)}{p_{a,K-1}(x)} \neq 1.
$$
Since both $\frac{\partial g_0(\theta(x))/\partial \theta^0_K}{\partial g_0(\theta(x))/\partial \theta^1_{K-1}}$ and  $\frac{\partial g_a(\theta(x))/\partial \theta^0_K}{\partial g_a(\theta(x))/\partial \theta^1_{K-1}}$ are continuous in $x$, it follows that there exists some $\hat x\in\mathbb{R}$ such that $\frac{\partial g_0(\theta(\hat x))/\partial \theta^0_K}{\partial g_0(\theta(\hat x))/\partial \theta^1_{K-1}}\neq \frac{\partial g_a(\theta(\hat x))/\partial \theta^0_K}{\partial g_a(\theta(\hat x))/\partial \theta^1_{K-1}}$. So $\det((l_0(\theta(\hat x)), l_a(\theta(\hat x))))\neq 0$.  

Now, consider the two sided tests. For convenience, we let $S_k = |\tilde S_k|$ for each $k\in[K]$ where $\tilde S = (\tilde S_k)^K_{k=1}$ has multivariate normal distribution with mean $0$ and covariance $\Sigma$ under $H_0$, and mean $\mu = (\mu_k)^K_{k=1}$ and covariance $\Sigma$ under $H_a$. Let $p_0$ and $p_{a,k}$ be the density functions of the marginal distributions of $\tilde S_k$ under $H_0$ and $H_a$. Let $S_{1:k} = (S_i)^k_{i=1}$ and $\theta_{1:k} = (\theta_i)^k_{i=1}$. Then we have, 
\begin{align*}
    &\partial g_0(\theta)/\partial \theta^0_K = p_0(\theta^0_K)\Pr(S_{1:K-1}\in(\theta^0_{1:K-1}, \theta^1_{1:K-1})|\tilde S_K = \theta^0_K, H_0) \\
    &\quad\quad\quad+ p_0(-\theta^0_K)\Pr(S_{1:K-1}\in(\theta^0_{1:K-1}, \theta^1_{1:K-1})|\tilde S_K = -\theta^0_K, H_0),
    \\
    &\partial g_0(\theta)/\partial \theta^1_{K-1} = p_0(\theta^1_{K-1})\Pr(S_K\leq \theta^0_K, \text{ and }S_{1:K-2}\in(\theta^0_{1:K-2}, \theta^1_{1:K-2})|\tilde S_{K-1} = \theta^1_{K-1}, H_0)\\
    &\quad\quad\quad+
    p_0(-\theta^1_{K-1})\Pr(S_K\leq \theta^0_K, \text{ and }S_{1:K-2}\in(\theta^0_{1:K-2}, \theta^1_{1:K-2})|\tilde S_{K-1} = -\theta^1_{K-1}, H_0)
    \\
    &\partial g_a(\theta)/\partial \theta^0_K = p_{a,K}(\theta^0_K)\Pr(S_{1:K-1}\in(\theta^0_{1:K-1}, \theta^1_{1:K-1})|\tilde S_K = \theta^0_K, H_a) \\
    &\quad\quad\quad
    + p_{a,K}(-\theta^0_K)\Pr(S_{1:K-1}\in(\theta^0_{1:K-1}, \theta^1_{1:K-1})|\tilde S_K = -\theta^0_K, H_a),
    \\
    &\partial g_a(\theta)/\partial \theta^1_{K-1} = p_{a,K-1}(\theta^1_{K-1})\Pr(S_K\leq \theta^0_K, \text{ and }S_{1:K-2}\in(\theta^0_{1:K-2}, \theta^1_{1:K-2})|\tilde S_{K-1} = \theta^1_{K-1}, H_a)\\
    &\quad\quad\quad
    +
    p_{a,K-1}(-\theta^1_{K-1})\Pr(S_K\leq \theta^0_K, \text{ and }S_{1:K-2}\in(\theta^0_{1:K-2}, \theta^1_{1:K-2})|\tilde S_{K-1} = -\theta^1_{K-1}, H_a).
\end{align*}

Let $\theta(x) = (\theta^0_1(x), \cdots, \theta^0_K(x), \theta^1_1(x), \cdots, \theta^1_K(x))$ with $\theta^0_k(x) = 0$ for all $k\in[K-1]$ and $\theta^0_K(x) = \theta^1_k(x) = x$ for all $k\in[K]$. It is then sufficient to show that there exists some $x$ such that $l_0(\theta(x))$ and $l_a(\theta(x))$ are linearly independent. Then, since $\theta^0_K(x) = \theta^1_{K-1}(x) = x$ and $p_0(x) = p_0(-x)$, 
\begin{align*}
&\lim_{x\to\infty}\frac{\partial g_0(\theta(x))/\partial \theta^0_K}{\partial g_0(\theta(x))/\partial \theta^1_{K-1}} \\
= &\lim_{x\to\infty}
\big(\Pr(S_{1:K-1}\in(\theta^0_{1:K-1}(x), \theta^1_{1:K-1}(x))|\tilde S_K = \theta^0_K(x), H_0) \\
&+ \Pr(S_{1:K-1}\in(\theta^0_{1:K-1}(x), \theta^1_{1:K-1}(x))|\tilde S_K = -\theta^0_K(x), H_0)\big)\\
&/\big(\Pr(S_K\leq \theta^0_K(x), \text{ and }S_{1:K-2}\in(\theta^0_{1:K-2}(x), \theta^1_{1:K-2}(x))|\tilde S_{K-1} = \theta^1_{K-1}(x), H_0) \\
&+ 
\Pr(S_K\leq \theta^0_K(x), \text{ and }S_{1:K-2}\in(\theta^0_{1:K-2}(x), \theta^1_{1:K-2}(x))|\tilde S_{K-1} = -\theta^1_{K-1}(x), H_0)\big) \\
=& 1.
\end{align*}
On the other hand, 
$
\frac{p_{a, k}(-x)}{p_{a, k}(x)} = \exp((x-\mu_k)^2/(2\sigma^2) - (-x - \mu_k)^2/(2\sigma^2)), \,\frac{p_{a, K-1}(x)}{p_{a, K}(x)} = \exp(-(x-\mu_{K-1})^2/(2\sigma^2) + (x - \mu_{K})^2/(2\sigma^2))$, 
and 
$
\frac{p_{a, K-1}(-x)}{p_{a, K}(x)} = \exp(-(-x-\mu_{K-1})^2/(2\sigma^2) - (x - \mu_K)^2/(2\sigma^2))$. 
It thus follows that 
\begin{align*}
&\lim_{x\to\infty}\frac{\partial g_a(\theta(x))/\partial \theta^0_K}{\partial g_a(\theta(x))/\partial \theta^1_{K-1}}\\
= &\lim_{x\to\infty}
\big(\Pr(S_{1:K-1}\in(\theta^0_{1:K-1}, \theta^1_{1:K-1})|\tilde S_K = \theta^0_K, H_a) \\
&+ \frac{p_{a,K}(-x)}{p_{a,K}(x)}\Pr(S_{1:K-1}\in(\theta^0_{1:K-1}, \theta^1_{1:K-1})|\tilde S_K = -\theta^0_K, H_a)\big)\\
&/\big(\frac{p_{a,K-1}(x)}{p_{a,K}(x)}\Pr(S_K\leq \theta^0_K, \text{ and }S_{1:K-2}\in(\theta^0_{1:K-2}, \theta^1_{1:K-2})|\tilde S_{K-1} = \theta^1_{K-1}, H_a) \\
&+ 
\frac{p_{a,K-1}(-x)}{p_{a,K}(x)}\Pr(S_K\leq \theta^0_K, \text{ and }S_{1:K-2}\in(\theta^0_{1:K-2}, \theta^1_{1:K-2})|\tilde S_{K-1} = -\theta^1_{K-1}, H_a)\big) \\
\neq& 1.
\end{align*}
Thus we have shown that there exists some $\theta\in\mathbb{R}^{2K}$ such that $l_0(\theta)$ and $l_a(\theta)$ are linearly independent for both one-sided and two-sided tests. 
In the meanwhile, the determinant $\det((l_0(\theta), l_a(\theta)))$ is a real analytic function. Then, according to Proposition~0 of \citet{mityagin2015zero}, the set $\{\theta\in\mathbb{R}^{2K}: \det((l_0(\theta), l_a(\theta))) = 0\}$ has a zero measure. As $\det((l_0(\theta), l_a(\theta))) = 0$ is a necessary condition for $\nabla g_0(\theta)$ to be linearly dependent with $\nabla g_a(\theta)$, we have that $\{\theta\in\mathbb{R}^{2K}: \text{MFCQ fails at }\theta\}$ has a zero Lebesgue measure.

\end{appendices}

% \end{titlepage}
% aketitle

\end{document}